\DeclareMathOperator{\sgn}{sgn}
\newtheorem{theorem}{Theorem}
\newtheorem{claim}{Claim}
\newtheorem{observation}{\bf Observation}
\newtheorem{lemma}{Lemma}
\newtheorem{proposition}{Proposition}
\newtheorem{definition}{Definition}
\newtheorem{corollary}{Corollary}
\newdefinition{remark}{Remark}
\journal{Journal of Mathematical Economics}
\begin{document}

\begin{frontmatter}

\title{Optimal Mechanisms for Selling Two Items to a Single Buyer Having Uniformly Distributed Valuations\footnote{A preliminary version of this paper appeared in {\em Proceedings of the 12th International Conference on Web and Internet Economics} (WINE), 2016, Montreal, Canada \cite{TRN16}.}}
\author[a1]{D.~Thirumulanathan\corref{cor1}}
\ead{thirumulanathan@gmail.com}
\author[a2]{Rajesh Sundaresan}
\ead{rajeshs@iisc.ac.in}
\author[a3]{Y Narahari}
\ead{narahari@iisc.ac.in}

\cortext[cor1]{Corresponding author}

\address[a1]{Department of Electrical Communication Engineering, Indian Institute of Science, Bengaluru, India 560012}
\address[a2]{Department of Electrical Communication Engineering, and Robert Bosch Centre for Cyber-Physical Systems, Indian Institute of Science, Bengaluru, India 560012}
\address[a3]{Department of Computer Science and Automation, Indian Institute of Science, Bengaluru, India 560012}

\begin{abstract}
We consider the design of a revenue-optimal mechanism when two items are available to be sold to a single buyer whose valuation is uniformly distributed over an arbitrary rectangle $[c_1,c_1+b_1]\times[c_2,c_2+b_2]$ in the positive quadrant. We provide an explicit, complete solution for arbitrary nonnegative values of $(c_1,c_2,b_1,b_2)$. We identify eight simple structures, each with at most $4$ (possibly stochastic) menu items, and prove that the optimal mechanism has one of these eight structures. We also characterize the optimal mechanism as a function of $(c_1,c_2,b_1,b_2)$. The structures indicate that the optimal mechanism involves (a) an interplay of individual sale and a bundle sale when $c_1$ and $c_2$ are low, (b) a bundle sale when $c_1$ and $c_2$ are high, and (c) an individual sale when one of them is high and the other is low. To the best of our knowledge, our results are the first to show the existence of optimal mechanisms with no exclusion region. We further conjecture, based on promising preliminary results, that our methodology can be extended to a wider class of distributions.
\end{abstract}

\begin{keyword}
Game theory \sep Economics \sep Optimal Auctions \sep Stochastic Orders \sep Convex Optimization.
\end{keyword}

\end{frontmatter}

\section{Introduction}
This paper studies the design of a revenue optimal mechanism for selling two items. While the solution to the problem of designing an optimal mechanism for selling a single item is well-known (\citet{Mye81}), optimal mechanism design for selling multiple items is a harder problem. Though there are many partial results available in the literature, finding a general solution, even in the simplest setting with two heterogeneous items and one buyer, remains open.

In this paper, we consider the problem of optimal mechanism design in the two-item one-buyer setting, when the valuations of the buyer are uniformly distributed in arbitrary rectangles in the positive quadrant. In the two-item one-buyer setting, \citet{DDT13,DDT15,DDT17} considered the most general class of distributions till date, and gave the optimal solution when the distribution gives rise to a so-called ``well-formed'' canonical partition (to be described in Section \ref{sec:zero}) of the support set. \citet{GK15} provided closed form solutions when $m=2$ and the distribution of $z$ satisfies some sufficient conditions. The papers \cite{DDT13,DDT15,DDT17} and \cite{GK15} rely on a result of \citet{Roc87} that transforms the search for an optimal mechanism into a search for the utility function of the buyer. This function represents the expected value of the lottery the buyer receives minus the payment to the seller. The expected payment to the seller is maximized, subject to the utility function being positive, increasing, convex, and $1$-Lipschitz. The above papers then identify a dual problem, solve it, and exploit this solution to identify a primal solution. We call this approach the {\em duality approach} in this paper.

The duality approach developed in these papers crucially uses the assumption that the support set $D$ of the distribution is $[0,b_1]\times[0,b_2]$. We are aware of only two examples where the support sets $[c,c+1]^2, c>0$, and $[4,16]\times[4,7]$, for which the exact solutions are known, are not bordered by the coordinate axes. These were considered by \citet{Pav11} and \citet{DDT17}, respectively. \citet{DDT13,DDT15,DDT17} do consider other distributions but these distributions must satisfy $f(z) (z\cdot n(z))=0$ on the boundaries of $D$, where $n(z)$ is the normal to the boundary at $z$. The uniform distribution on arbitrary rectangles (which we consider in this paper) has $f(z) (z\cdot n(z))<0$ in general on the left and bottom boundaries, and this requires additional nontrivial care in its handling.

%Observe that this is a setting that occurs often in practice. For example, consider a setting where two paintings are being sold. The seller is aware of a minimum and a maximum value that each painting would be sold at, and takes the buyer's valuations to be uniform in the rectangle formed by those intervals. We compute the optimal mechanism for all rectangles that can possibly occur.

\subsection{Our contributions}

\begin{figure}
\centering
\begin{tabular}{cccc}
\subfloat[]{\label{fig:a1}\begin{tikzpicture}[scale=0.2,font=\scriptsize,axis/.style={very thick, -}]
%x-axis
\node at (1.5,-1) {\tiny$(c_1,c_2)$};
\draw [axis,thick,-] (0,0)--(12,0);
\node at (9,-1) {\tiny$(c_1+b_1,c_2)$};
%y-axis
\draw [axis,thick,-] (0,0)--(0,12);
\node [above] at (3,12) {\tiny$(c_1,c_2+b_2)$};
%x-axis
\draw [axis,thick,-] (0,12)--(12,12);
%y-axis
\draw [axis,thick,-] (12,0)--(12,12);
%x-axis
\draw [axis,thick,-] (0,8)--(4,7);
\draw [axis,thick,-] (7,4)--(4,7);
\draw [axis,thick,-] (7,4)--(8,0);
\draw [axis,thick,dotted] (4,7)--(4,12);
\draw [axis,thick,dotted] (7,4)--(12,4);
\node at (2.5,2.5) {$(0,0)$};
\node at (9.75,1.75) {$(1,a_2)$};
\node at (2,9) {$(a_1,1)$};
\node at (8,8) {$(1,1)$};
\end{tikzpicture}}&
\subfloat[]{\label{fig:b1}\begin{tikzpicture}[scale=0.2,font=\scriptsize,axis/.style={very thick, -}]
%x-axis
\node at (1.5,-1) {\tiny$(c_1,c_2)$};
\draw [axis,thick,-] (0,0)--(12,0);
\node at (9,-1) {\tiny$(c_1+b_1,c_2)$};
%y-axis
\draw [axis,thick,-] (0,0)--(0,12);
\node [above] at (3,12) {\tiny$(c_1,c_2+b_2)$};
%x-axis
\draw [axis,thick,-] (0,12)--(12,12);
%y-axis
\draw [axis,thick,-] (12,0)--(12,12);
%x-axis
\draw [axis,thick,-] (0,6.5)--(4,5.5);
\draw [axis,thick,-] (9.5,0)--(4,5.5);
\draw [axis,thick,dotted] (4,5.5)--(4,12);
\node at (2.5,2.5) {$(0,0)$};
\node at (2,8.5) {$(a_1,1)$};
\node at (9,6) {$(1,1)$};
\end{tikzpicture}}&
\subfloat[]{\label{fig:f1}\begin{tikzpicture}[scale=0.2,font=\scriptsize,axis/.style={very thick, -}]
%x-axis
\node at (1.5,-1) {\tiny$(c_1,c_2)$};
\draw [axis,thick,-] (0,0)--(12,0);
\node at (9,-1) {\tiny$(c_1+b_1,c_2)$};
%y-axis
\draw [axis,thick,-] (0,0)--(0,12);
\node [above] at (3,12) {\tiny$(c_1,c_2+b_2)$};
%x-axis
\draw [axis,thick,-] (0,12)--(12,12);
%y-axis
\draw [axis,thick,-] (12,0)--(12,12);
%x-axis
\draw [axis,thick,-] (0,9.5)--(5.5,4);
\draw [axis,thick,-] (5.5,4)--(6.5,0);
\draw [axis,thick,dotted] (5.5,4)--(12,4);
\node at (2.5,2.5) {$(0,0)$};
\node at (9,2) {$(1,a_2)$};
\node at (6,9) {$(1,1)$};
\end{tikzpicture}}&
\subfloat[]{\label{fig:c1}\begin{tikzpicture}[scale=0.2,font=\scriptsize,axis/.style={very thick, -}]
%x-axis
\node at (1.5,-1) {\tiny$(c_1,c_2)$};
\draw [axis,thick,-] (0,0)--(12,0);
\node at (9,-1) {\tiny$(c_1+b_1,c_2)$};
%y-axis
\draw [axis,thick,-] (0,0)--(0,12);
\node [above] at (3,12) {\tiny$(c_1,c_2+b_2)$};
%x-axis
\draw [axis,thick,-] (0,12)--(12,12);
%y-axis
\draw [axis,thick,-] (12,0)--(12,12);
%x-axis
\draw [axis,thick,-] (0,8)--(8,0);
\node at (2.5,2.5) {$(0,0)$};
\node at (8,8) {$(1,1)$};
\end{tikzpicture}}\\
\subfloat[]{\label{fig:d1}\begin{tikzpicture}[scale=0.2,font=\scriptsize,axis/.style={very thick, -}]
%x-axis
\node at (1.5,-1) {\tiny$(c_1,c_2)$};
\draw [axis,thick,-] (0,0)--(12,0);
\node at (9,-1) {\tiny$(c_1+b_1,c_2)$};
%y-axis
\draw [axis,thick,-] (0,0)--(0,12);
\node [above] at (3,12) {\tiny$(c_1,c_2+b_2)$};
%x-axis
\draw [axis,thick,-] (0,12)--(12,12);
%y-axis
\draw [axis,thick,-] (12,0)--(12,12);
%x-axis
\draw [axis,thick,-] (0,4)--(5,0);
\draw [axis,thick,-] (6,0)--(6,12);
\node at (1.5,1) {\tiny$(0,0)$};
\node at (3,6) {$(a_1,1)$};
\node at (9,6) {$(1,1)$};
\end{tikzpicture}}&
\subfloat[]{\label{fig:e1}\begin{tikzpicture}[scale=0.2,font=\scriptsize,axis/.style={very thick, -}]
%x-axis
\node at (1.5,-1) {\tiny$(c_1,c_2)$};
\draw [axis,thick,-] (0,0)--(12,0);
\node at (9,-1) {\tiny$(c_1+b_1,c_2)$};
%y-axis
\draw [axis,thick,-] (0,0)--(0,12);
\node [above] at (3,12) {\tiny$(c_1,c_2+b_2)$};
%x-axis
\draw [axis,thick,-] (0,12)--(12,12);
%y-axis
\draw [axis,thick,-] (12,0)--(12,12);
%x-axis
\draw [axis,thick,-] (6,0)--(6,12);
\node at (3,6) {$(0,1)$};
\node at (9,6) {$(1,1)$};
\end{tikzpicture}}&
\subfloat[]{\label{fig:g1}\begin{tikzpicture}[scale=0.2,font=\scriptsize,axis/.style={very thick, -}]
%x-axis
\node at (1.5,-1) {\tiny$(c_1,c_2)$};
\draw [axis,thick,-] (0,0)--(12,0);
\node at (9,-1) {\tiny$(c_1+b_1,c_2)$};
%y-axis
\draw [axis,thick,-] (0,0)--(0,12);
\node [above] at (3,12) {\tiny$(c_1,c_2+b_2)$};
%x-axis
\draw [axis,thick,-] (0,12)--(12,12);
%y-axis
\draw [axis,thick,-] (12,0)--(12,12);
%x-axis
\draw [axis,thick,-] (0,5)--(4,0);
\draw [axis,thick,-] (0,6)--(12,6);
\node at (1.5,1) {\tiny$(0,0)$};
\node at (6,3) {$(1,a_2)$};
\node at (6,9) {$(1,1)$};
\end{tikzpicture}}&
\subfloat[]{\label{fig:h1}\begin{tikzpicture}[scale=0.2,font=\scriptsize,axis/.style={very thick, -}]
%x-axis
\node at (1.5,-1) {\tiny$(c_1,c_2)$};
\draw [axis,thick,-] (0,0)--(12,0);
\node at (9,-1) {\tiny$(c_1+b_1,c_2)$};
%y-axis
\draw [axis,thick,-] (0,0)--(0,12);
\node [above] at (3,12) {\tiny$(c_1,c_2+b_2)$};
%x-axis
\draw [axis,thick,-] (0,12)--(12,12);
%y-axis
\draw [axis,thick,-] (12,0)--(12,12);
%x-axis
\draw [axis,thick,-] (0,6)--(12,6);
\node at (6,3) {$(1,0)$};
\node at (6,9) {$(1,1)$};
\end{tikzpicture}}
\end{tabular}
\caption{An illustration of all possible structures that an optimal mechanism can have.}\label{fig:gen-structure}
\end{figure}
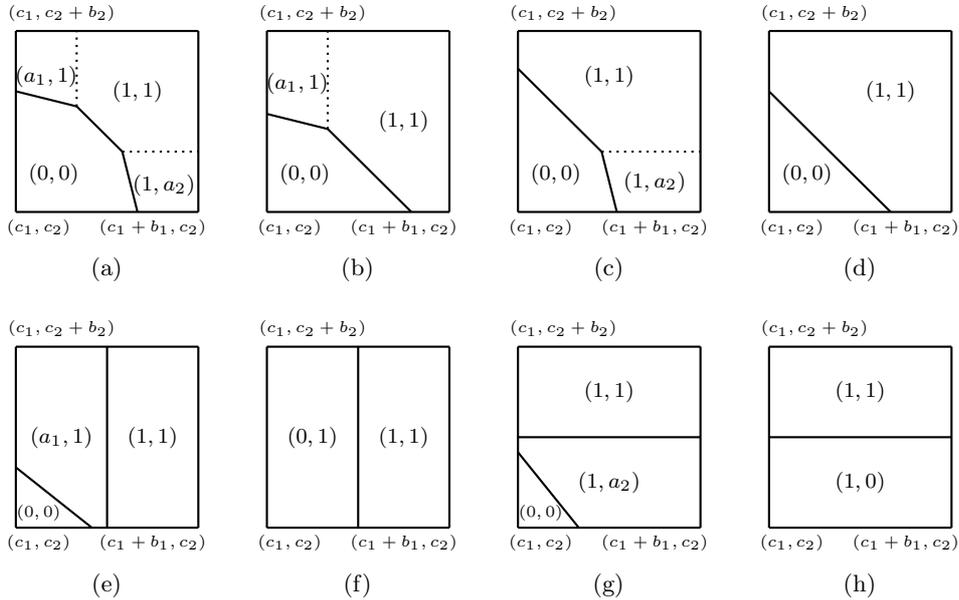

Our contributions can be summarized as follows.
\begin{enumerate}
\item[(i)] We solve the two-item single-buyer optimal mechanism design problem when $z\sim\mbox{Unif}[c_1,c_1+b_1]\times[c_2,c_2+b_2]$, for arbitrary nonnegative values of $(c_1,c_2,b_1,b_2)$. We compute the exact solution using a nontrivial extension of the duality approach designed in \cite{DDT17}. The extension takes care of cases where $f(z) (z\cdot n(z))<0$ on the left and bottom boundaries.
\item[(ii)] We prove that the structure of the optimal solution falls within a class of eight simple structures, each with at most four menu items\footnote{The phrase ``menu items'' refers to regions of constant allocation.} (see Figure \ref{fig:gen-structure}). This is in agreement with the result of \citet{WT14}.
\item[(iii)] To the best of our knowledge, our results are the first to show the existence of optimal mechanisms without an exclusion region, i.e., valuations that result in no sale (see Figures \ref{fig:e1} and \ref{fig:h1}). The results in \citet{Arm96} and \citet{PBBK14} assert that the optimal multi-dimensional mechanism has an exclusion region, but under some sufficient conditions on the distributions and the utility functions. \citet{Arm96} assumes that the support set of the distribution of valuations is strictly convex, and \citet{PBBK14} assume that the utility function is strictly concave in the allocations. Neither of these assumptions hold in our setting.
\item[(iv)] We also establish another interesting property of the optimal mechanism: given any value of $c_1$, we find a threshold value of $c_2$ beyond which the mechanism becomes deterministic (see Section \ref{sec:sol-space}).
\end{enumerate}

Some qualitative results on the structure of optimal mechanisms were already known when $z$'s distribution satisfied certain conditions. For instance, Pavlov \cite{Pav10} considers distributions with negative power rate\footnote{A distribution $f$ is said to have negative power rate when $\Delta(z_1,z_2)=-3-z_1f_1'(z_1)/f_1(z_1)-z_2f_2'(z_2)/f_2(z_2)\leq 0$ for every $(z_1,z_2)$ in the support set of $f$. It is said to have negative constant power rate if in addition to being negative, $\Delta$ is also a constant for every $(z_1,z_2)$ in the support set of $f$.}, while Wang and Tang \cite{WT14} consider distributions with negative constant power rate. Our work considers uniform distributions which is a special case of the settings in both \cite{Pav10} and \cite{WT14}. So the optimal mechanisms in our setting have allocations only of the form $(0,0)$, $(a_1,1)$ and $(1,a_2)$, in accordance with Pavlov's result, and have at most four menu items, in accordance with Wang and Tang's result. In our special case, we are able to prove stronger results. We show that the optimal mechanisms cannot have a structure other than one of the eight structures depicted in Figure \ref{fig:gen-structure}. Our results bring out some less expected structures like those in Figures \ref{fig:e1} and \ref{fig:h1}. Beyond qualitative descriptions, our results enable computation of the optimal mechanism for the uniform distribution on any arbitrary rectangle.

An alternative approach is to use the results by \citet{WT14} and \citet{Pav11} in the following way. First, the optimal mechanism can be parametrized to have at most four constant allocation regions with allocation probabilities $(0,0)$, $(a_1,1)$, $(1,a_2)$, and $(1,1)$, and then the revenue can be optimized over the allocation and payment variables. The approach appears to be simpler than the duality approach because it optimizes only over five variables as opposed to the infinite-dimensional optimization in the duality approach. We show in \ref{app:e} that this approach leads to non-concave objective functions whose global optima are generally harder to compute. We further show that the first order conditions of this optimization problem are structure-specific. Moreover, transitions between various structures (in Figure \ref{fig:gen-structure}, for example) are not captured using this method. See Proposition \ref{prop:opt-specific} and the discussions following the proposition in \ref{app:e}. On the other hand, the duality approach that we use in this paper gives a certificate of global optimality and captures transitions between various structures.

The optimal mechanism for various values of $(c_1,c_2,b_1,b_2)$ are detailed in Theorems \ref{thm:str-1}-\ref{thm:figc}. The phase diagram in Figure \ref{fig:partition_uni} represents how the optimal mechanism varies when the parameters $(c_1,c_2,b_1,b_2)$ vary. We interpret the results as follows.
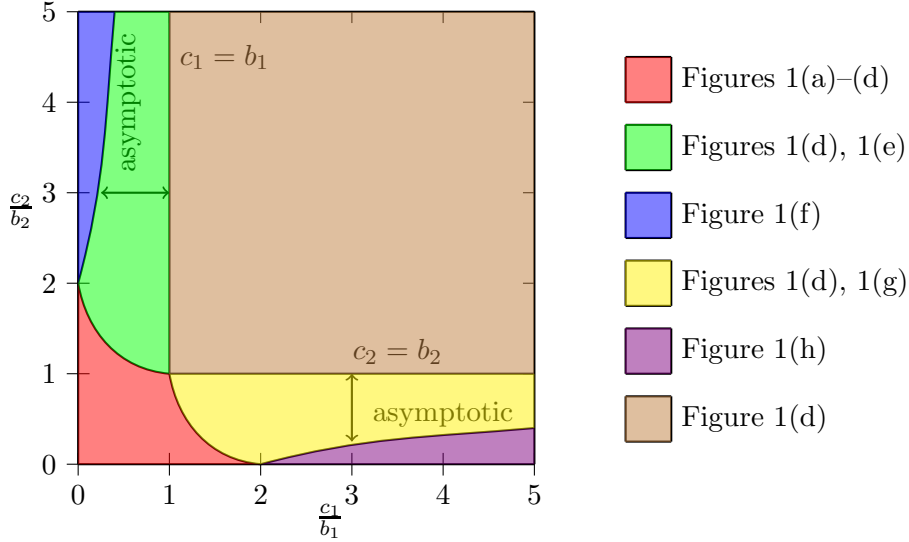
\begin{figure}[t]
\centering
\begin{tikzpicture}[scale=0.6,font=\normalsize,axis/.style={very thick, ->, >=stealth'}]
%x-axis
\draw [axis,thick,-] (0,0)--(0,10);
\node [right] at (5,-1.25) {$\frac{c_1}{b_1}$};
%y-axis
\draw [axis,thick,-] (0,0)--(10,0);
\node [above] at (-1.25,5) {$\frac{c_2}{b_2}$};
%other sides
\draw [axis,thick,-] (10,0)--(10,10);
\draw [axis,thick,-] (0,10)--(10,10);
%Mark points 0,1,2,3,4,5 on the axes
\draw [thin,-] (-0.25,0) -- (0.25,0);
\node [left] at (-0.25,0) {$0$};
\draw [thin,-] (-0.25,2) -- (0.25,2);
\node [left] at (-0.25,2) {$1$};
\draw [thin,-] (-0.25,4) -- (0.25,4);
\node [left] at (-0.25,4) {$2$};
\draw [thin,-] (-0.25,6) -- (0.25,6);
\node [left] at (-0.25,6) {$3$};
\draw [thin,-] (-0.25,8) -- (0.25,8);
\node [left] at (-0.25,8) {$4$};
\draw [thin,-] (-0.25,10) -- (0.25,10);
\node [left] at (-0.25,10) {$5$};
\draw [thin,-] (0,-0.25) -- (0,0.25);
\node [below] at (0,-0.25) {$0$};
\draw [thin,-] (2,-0.25) -- (2,0.25);
\node [below] at (2,-0.25) {$1$};
\draw [thin,-] (4,-0.25) -- (4,0.25);
\node [below] at (4,-0.25) {$2$};
\draw [thin,-] (6,-0.25) -- (6,0.25);
\node [below] at (6,-0.25) {$3$};
\draw [thin,-] (8,-0.25) -- (8,0.25);
\node [below] at (8,-0.25) {$4$};
\draw [thin,-] (10,-0.25) -- (10,0.25);
\node [below] at (10,-0.25) {$5$};
\draw [thin,-] (9.75,2) -- (10,2);
\draw [thin,-] (9.75,4) -- (10,4);
\draw [thin,-] (9.75,6) -- (10,6);
\draw [thin,-] (9.75,8) -- (10,8);
\draw [thin,-] (9.75,10) -- (10,10);
\draw [thin,-] (2,9.75) -- (2,10);
\draw [thin,-] (4,9.75) -- (4,10);
\draw [thin,-] (6,9.75) -- (6,10);
\draw [thin,-] (8,9.75) -- (8,10);
\draw [thin,-] (10,9.75) -- (10,10);
%% region 1 -TBD%
\draw [thick,-] (0,4) to[out=-80,in=175] (2,2);
\draw [thick,-] (4,0) to[out=175,in=-80] (2,2);
%% region 2
\draw [thick,-] (2,2)--(2,10);
\node [right] at (2,9) {$c_1=b_1$};
\draw [thick,-] (2,2)--(10,2);
\node [above] at (7,2) {$c_2=b_2$};
%%region 3
\draw [thick,-] (4,0) to[out=15,in=-175] (10,0.8);
\node [right] at (0.6,8) {\rotatebox{90}{asymptotic}};
\draw [thick,<->] (6,0.5) to (6,2);
\draw [thick,-] (0,4) to[out=75,in=-95] (0.8,10);
\node [above] at (8,0.6) {asymptotic};
\draw [thick,<->] (0.5,6) to (2,6);

%%shading

\path[fill=red,opacity=.5] (0,4) to[out=-80,in=175] (2,2) to (2,2) to[out=-80,in=175] (4,0) to (0,0) -- (0,4) ;

\path[fill=green,opacity=.5] (0,4) to[out=-80,in=175] (2,2) to (2,2) to (2,10) to (0.8,10) to[out=-95,in=75] (0,4);
 
\path[fill=blue,opacity=.5] (0,4) to[out=75,in=-95] (0.8,10)--(0,10)--(0,4);

\path[fill=yellow,opacity=.5] (4,0) to[out=175,in=-80] (2,2) to (2,2) to (10,2) to (10,0.8) to[out=-175,in=15] (4,0);
 
\path[fill=violet,opacity=.5] (4,0) to[out=15,in=-175] (10,0.8)--(10,0)--(4,0);
 
\path[fill=brown,opacity=.5] (2,10)--(2,2)--(10,2)--(10,10)--(2,10);

%legends

\draw [thick,-] (12,8) to (12,9);
\draw [thick,-] (12,9) to (13,9);
\draw [thick,-] (13,8) to (13,9);
\draw [thick,-] (12,8) to (13,8);
\path[fill=red,opacity=.5] (12,8) to (13,8) to (13,9) to (12,9) to (12,8);
\node [right] at (13,8.5) {Figures \ref{fig:gen-structure}(a)--(d)};

\draw [thick,-] (12,6.5) to (12,7.5);
\draw [thick,-] (12,7.5) to (13,7.5);
\draw [thick,-] (13,6.5) to (13,7.5);
\draw [thick,-] (12,6.5) to (13,6.5);
\path[fill=green,opacity=.5] (12,6.5) to (13,6.5) to (13,7.5) to (12,7.5) to (12,7.5);
\node [right] at (13,7) {Figures \ref{fig:gen-structure}(d), \ref{fig:gen-structure}(e)};

\draw [thick,-] (12,5) to (12,6);
\draw [thick,-] (12,6) to (13,6);
\draw [thick,-] (13,5) to (13,6);
\draw [thick,-] (12,5) to (13,5);
\path[fill=blue,opacity=.5] (12,5) to (13,5) to (13,6) to (12,6) to (12,5);
\node [right] at (13,5.5) {Figure \ref{fig:gen-structure}(f)};

\draw [thick,-] (12,3.5) to (12,4.5);
\draw [thick,-] (12,4.5) to (13,4.5);
\draw [thick,-] (13,3.5) to (13,4.5);
\draw [thick,-] (12,3.5) to (13,3.5);
\path[fill=yellow,opacity=.5] (12,3.5) to (13,3.5) to (13,4.5) to (12,4.5) to (12,3.5);
\node [right] at (13,4) {Figures \ref{fig:gen-structure}(d), \ref{fig:gen-structure}(g)};

\draw [thick,-] (12,2) to (12,3);
\draw [thick,-] (12,3) to (13,3);
\draw [thick,-] (13,2) to (13,3);
\draw [thick,-] (12,2) to (13,2);
\path[fill=violet,opacity=.5] (12,2) to (13,2) to (13,3) to (12,3) to (12,2);
\node [right] at (13,2.5) {Figure \ref{fig:gen-structure}(h)};

\draw [thick,-] (12,0.5) to (12,1.5);
\draw [thick,-] (12,1.5) to (13,1.5);
\draw [thick,-] (13,0.5) to (13,1.5);
\draw [thick,-] (12,0.5) to (13,0.5);
\path[fill=brown,opacity=.5] (12,0.5) to (13,0.5) to (13,1.5) to (12,1.5) to (12,0.5);
\node [right] at (13,1) {Figure \ref{fig:gen-structure}(d)};
\end{tikzpicture}
\caption{A phase diagram for the optimal mechanism}\label{fig:partition_uni}
\end{figure}

\begin{itemize}
 \item Consider the case when $(c_1,c_2)$ is low. Then, the seller knows that the buyer could have very low valuations, and thus sets a high threshold to sell any item $i$. Further, he finds it optimal to set the threshold for the bundle of two items to be less than the sum of the thresholds for the individual items (see Figure \ref{fig:a1}). Thus the case of low valuations turns out to be an interplay between selling the items individually and selling the items as a bundle.
 \item When $c_1$ is low but $c_2$ is high, the seller offers item $2$ for the least valuation $c_2$, but sets a threshold to sell item $1$ (see Figure \ref{fig:e1}). This is because the expected revenue gained by exclusion of certain valuations is always dominated by the revenue lost from them, and thus the seller finds no reason to withhold item $2$ for any valuation profile; see Section \ref{sec:sol-space} for a more precise explanation. A similar case occurs when $c_2$ is low but $c_1$ is high. In both of these cases, the seller optimizes revenue by selling the items individually.
 \item When $(c_1,c_2)$ is high, the seller sets a threshold close to $c_1+c_2$ to sell the bundle (see Figure \ref{fig:c1}). The threshold converges to the least possible valuation profile for the bundle, $c_1+c_2$, when $(c_1,c_2)\rightarrow\infty$. In this case, the seller optimizes revenue by selling the items as a bundle.
 \item Starting with $(c_1,c_2)=(0,0)$, consider moving the support set rectangle to infinity either horizontally or vertically or both. Then, the optimal mechanism starts as an interplay between individual sale and bundle sale, and ends up either as a bundle sale or as an individual sale, based on whether both $(c_1,c_2)$ are high, or only one of them is high. All the other structures (Figures \ref{fig:a1}, \ref{fig:b1}, \ref{fig:f1}, \ref{fig:d1}, and \ref{fig:g1}) are intermediates on the way to the bundle sale or the individual sale mechanisms.
\end{itemize}

The wide range of structures that we obtain for various values of $(c_1,c_2)$ indicates why support sets that are not bordered by the coordinate axes require significant analytical effort. In a companion paper \cite{TRN17b}, we prove that the structure of optimal mechanism in the two-item unit-demand setting falls within a class of five simple structures, when $z\sim\mbox{Unif}[c,c+b_1]\times[c,c+b_2]$.

\subsection{Our method}

Our method is as follows. From \cite{DDT15} and \cite{DDT17}, we know that the dual problem is an optimal transport problem that transfers mass from the support set $D$ to itself, subject to the constraint that the difference between the mass densities transferred out of the set and transferred into the set {\em convex dominates} a signed measure defined by the distribution of the valuations. When $(c_1,c_2)=(0,0)$, \citet{DDT13} provided a solution where the difference between the densities not just convex dominates the signed measure, but equals it. In another work, \citet{DDT17} provided an example ($z\sim \mbox{Unif}[4,16]\times[4,7]$) where the difference between the densities convex dominates but does not equal the signed measure. They construct a line measure that convex dominates $0$, add it to the signed measure, and then solve the example using the same method that they used to solve the $(c_1,c_2)=(0,0)$ case. They call this line measure as the ``shuffling measure''.

Their method can be used to find the optimal solution for more general distributions, provided we know the structure of the appropriate shuffling measure that yields the solution. Prior to our work, it was not clear what shuffling measure would work, even for the restricted setting of uniform distributions but over arbitrary rectangles. We prove that for the setting of uniform distributions, the optimal solution is always arrived at by using shuffling (line) measures of certain specific forms. We do the following.
\begin{itemize}
 \item[$\bullet$] We start with the shuffling (line) measure in \cite{DDT17}. We parametrize this measure using its depth, slope, and the length, and find the relation between these parameters so that the line measure convex dominates $0$.
 \item[$\bullet$] We then derive the conditions that these parameters must satisfy in order to solve the optimal transport problem. The conditions turn out to be polynomial equations of degree at most $4$.
 \item[$\bullet$] We identify conditions on the parameters $(c_1, c_2, b_1, b_2)$ so that the solutions to the polynomials yield a valid partition of $D$ (allocation probability is at most $1$ and the canonical partition is within $D$). We thus arrive at eight different structures. We then prove that the structure of the optimal mechanism is one of the eight, for any $(c_1,c_2,b_1,b_2)\geq 0$.
 \item[$\bullet$] We then generalize the shuffling measure, and use it to solve an example linear density function. This leads us to the conjecture that this class of shuffling measures will help identify the optimal mechanism for distributions with negative constant power rate.
\end{itemize}

Our work thus provides a method to construct an appropriate shuffling measure, and hence to arrive at the optimal solution, for various distributions whose support sets are not bordered by the coordinate axes. In our view, this is an important step towards understanding optimal mechanisms in multi-item settings. Furthermore, we feel that the special cases we have solved could act as guidelines to solve the problem of computing optimal mechanisms for general distributions or to suggest good menus for practical settings.

\subsection{Prior work}\label{sub:prior}

Several works have attempted to provide partial characterizations of optimal mechanisms in multi-dimensional settings. \citet{HN13} considered the case when valuations are correlated, and proved that if the size of the menu\footnote{The phrase ``size of the menu'' refers to the number of regions of constant allocation.} is bounded, then there exist joint distributions such that the revenue obtained from the mechanism is an arbitrarily small fraction of the optimal revenue. On the other hand, \citet{WT14}, \cite{TW17} proved that the optimal mechanisms have simple structures with at most four menu items, when the distribution of the buyer's valuation satisfies a certain power rate condition (which is satisfied by the uniform distribution). However, the exact structures and associated allocations were left open. \citet{HR15} provided examples of cases where the buyer's values for the items increase but the optimal revenue decreases. \citet{HH15} did a reverse mechanism design; they constructed a mechanism, identified conditions under which there exists a virtual valuation, and thereby established that the mechanism is optimal.

%\citet{Arm96} characterized the exclusion region, i.e., valuations that result in no sale, in multi-dimensional mechanisms. \citet{Arm96} showed that the exclusion region for an optimal mechanism in a multi-dimensional mechanism will have positive Lebesgue measure when the support set of the joint distribution is strictly convex. See \citet{PBBK14} for a more recent extension. As we will see later, our setting will provide examples with no exclusion region.

In the recent literature, more number of explicit solutions have become available. \citet{MV06,MV07} derived the optimal deterministic mechanism when the support set of the distribution of buyer's valuation $z$ is $[0,1]^2$. Further, they derived conditions under which the optimal deterministic mechanism is also overall optimal. \citet{GK14} provided a solution when the number of items $m\leq 6$ and the buyer's valuations $z=(z_1,\ldots,z_m)\sim\mbox{Unif}[0,1]^m$. In another work, \citet{Gian15} considered the case of arbitrary number of items, and proved that when $z_i\sim\mbox{exp}(\lambda)$ for all $i$, and are independent, then bundling maximizes revenue. \citet{MHJ15} proved that bundling is optimal when (i) the virtual valuations are nonnegative for all possible types, and (ii) the distribution satisfies a certain power rate condition.

In the two-item setting, exact solutions were provided by \citet{DDT13,DDT15,DDT17} and \citet{GK15} for a rich class of distributions. \citet{KM16} identified the dual when the valuation space is convex and the space of allocations are restricted. They also solved examples when the allocations are restricted to satisfy either the unit-demand constraint or the deterministic constraint.

Exact solutions for the setting we consider in this paper have largely been computed using the duality approach of \cite{DDT15}. Significant exceptions are the works by \citet{Pav06,Pav10,Pav11}. Pavlov's works computed the optimal solution when $z\sim\mbox{Unif}[c,c+1]^2, c>0$, using the {\em virtual valuation} approach.

There has also been significant interest in finding approximately optimal solutions when the distribution of the buyer's valuation satisfies certain conditions. See \cite{BGN17}, \cite{BILW14}, \cite{Bhat10, BCKW10, CDW12a, CDW12b, CDW13, CDW16, Cai13, CZ17}, \cite{CMS15}, \cite{CM16}, \cite{DW11}, \cite{DW12}, \cite{HN12}, \cite{Yao14} for relevant literature on approximate solutions. In this paper, however, our focus shall be on exact solutions.

%Recall that the papers \cite{DDT13,DDT15,DDT17}, \cite{GK14,GK15} use the duality approach to compute the optimal solution in the two-item one-buyer setting. To compute the solution, the assumption that the support set $D$ of the distribution is $[0,b_1]\times[0,b_2]$ is crucially used. We are aware of only two examples where the support sets $[c,c+1]^2, c>0$, and $[4,16]\times[4,7]$, for which the exact solutions are known, are not bordered by the coordinate axes. These were considered by \citet{Pav11} and \citet{DDT17}, respectively. \citet{DDT13,DDT15,DDT17} do consider other distributions but these distributions must satisfy $f(z) (z\cdot n(z))=0$ on the boundaries of $D$, where $n(z)$ is the normal to the boundary at $z$. The uniform distribution on arbitrary rectangles (which we consider in this paper) has $f(z) (z\cdot n(z))<0$ in general on the left and bottom boundaries, and this requires additional nontrivial care in its handling.

The rest of the paper is organized as follows. In Section \ref{sec:prelim}, we formulate an optimization problem that describes the two-item single-buyer optimal mechanism. In Section \ref{sec:zero}, we revisit the case (solved in \cite{DDT13}) when the left-bottom corner of the support set is at $(0,0)$. This is done to set up the context and notation for the rest of the paper. In Section \ref{sec:sol-space}, we discuss the space of solutions and highlight a few interesting outcomes. In Section \ref{sec:gencase}, we define the shuffling measures required to find the optimal mechanism for the uniform distribution on arbitrary rectangles. We prove that the optimal solution has one of the eight simple structures. In Section \ref{sec:extension}, we discuss the conjecture and possible extension to other classes of distributions. In Section \ref{sec:conclude}, we provide some concluding remarks. In a companion paper \cite{TRN17b}, we address the unit-demand setting.

\section{Preliminaries}\label{sec:prelim}
Consider a two-item one-buyer setting. The buyer's valuation is $z=(z_1,z_2)$ for the two items, sampled according to the joint density $f(z)=f_1(z_1)f_2(z_2)$, where $f_1(z_1)$ and $f_2(z_2)$ are marginal densities. The support set of $f$ is defined as $D:=\{z:f(z)>0\}$. Throughout this paper, we restrict attention to an arbitrary rectangle $D=[c_1,c_1+b_1]\times[c_2,c_2+b_2]$, where $c_1, c_2, b_1, b_2$ are all nonnegative.

Our aim is to design an optimal mechanism. By the revelation principle \cite[Prop.~9.25]{NRTV07}, it suffices to focus on direct mechanisms. Further, we focus on mechanisms where the buyer has a quasilinear utility. Specifically, we assume an allocation function $q:D\rightarrow[0,1]^2$ and a payment function $t:D \rightarrow \mathbb{R}_+$ that represent, respectively, the probabilities of allocation of the items to the buyer and the amount that the buyer pays. In other words, for a reported valuation vector $\hat{z} = (\hat{z}_1, \hat{z}_2)$, item $i$ is allocated with probability $q_i(\hat{z})$, where $(q_1(\hat{z}), q_2(\hat{z})) = q(\hat{z})$, and the seller collects a revenue of $t(\hat{z})$ from the buyer. If the buyer's true valuation is $z$, and he reports $\hat{z}$, then his (quasilinear) utility is $\hat{u}(z, \hat{z}) := z \cdot q(\hat{z}) - t(\hat{z})$, which is the expected value of the lottery he receives minus the payment.

A mechanism $(q,t)$ is {\em incentive compatible} when truth telling is a weakly dominant strategy for the buyer, i.e., $\hat{u}(z,z) \geq \hat{u}(z,\hat{z})$ for every $z,\hat{z}\in D$. In this case the buyer's realized utility is
\begin{equation}
  \label{eqn:u(z)}
  u(z) := \hat{u}(z, z) = z\cdot q(z)-t(z).
\end{equation}
 The following result is well known:
\begin{theorem}\cite{Roc87}\/.\label{thm:Rochet}
A mechanism $(q,t)$, with $u(z) = z\cdot q(z)-t(z)$, is incentive compatible iff $u$ is continuous, convex and $\nabla u(z)=q(z)$ for a.e. $z\in D$.
\end{theorem}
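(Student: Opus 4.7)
The plan is to exploit the geometric interpretation of incentive compatibility as a subgradient inequality, which converts the statement into a standard fact from convex analysis. The first move I would make is to rewrite the IC constraint $\hat{u}(z,z) \geq \hat{u}(z,\hat{z})$ using the identity $\hat{u}(z,\hat{z}) = z\cdot q(\hat{z}) - t(\hat{z}) = u(\hat{z}) + (z-\hat{z})\cdot q(\hat{z})$, which is obtained by adding and subtracting $\hat{z}\cdot q(\hat{z})$. The IC constraint then takes the equivalent form
\[ u(z) \geq u(\hat{z}) + q(\hat{z}) \cdot (z-\hat{z}) \quad \text{for every } z,\hat{z} \in D, \]
which says exactly that $q(\hat{z})$ lies in the subdifferential of $u$ at $\hat{z}$ for every $\hat{z}\in D$.

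For the ``only if'' direction, the displayed inequality immediately writes $u$ as a pointwise supremum of affine functions (one for each fixed $\hat{z}$), so $u$ is convex; any convex function on an open convex set is automatically continuous on its interior and, by the classical theorem for convex functions on $\mathbb{R}^2$, differentiable at almost every interior point. At each point $z$ where $u$ is differentiable, the subdifferential collapses to the singleton $\{\nabla u(z)\}$, so the subgradient $q(z)$ identified above must coincide with $\nabla u(z)$, giving $\nabla u(z) = q(z)$ for a.e.\ $z\in D$.

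For the ``if'' direction, assume $u$ is continuous, convex, and $\nabla u(z) = q(z)$ a.e., and set $t(z) := z\cdot q(z) - u(z)$ to be consistent with the stated identity. At every point $\hat{z}$ where $u$ is differentiable, the standard supporting-hyperplane inequality for convex functions reads $u(z) \geq u(\hat{z}) + \nabla u(\hat{z})\cdot(z-\hat{z})$; substituting $\nabla u(\hat{z}) = q(\hat{z})$ and rearranging yields $z\cdot q(z) - t(z) \geq z\cdot q(\hat{z}) - t(\hat{z})$, which is precisely the IC inequality. Thus IC holds for all $z$ and for a.e.\ $\hat{z}$.

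The only nontrivial step is handling the Lebesgue-null set on which $u$ is not differentiable. There one either redefines $q(\hat{z})$ to be any element of the (nonempty) subdifferential $\partial u(\hat{z})$, which does not affect the mechanism up to a measure-zero modification, or one appeals to continuity: since $u$ is continuous and the right-hand side of the IC inequality is upper semicontinuous in $\hat{z}$ along a sequence of differentiability points approaching any $\hat{z}$, the inequality passes to the limit. I expect this bookkeeping on the exceptional set to be the main, and only, technical obstacle; the substance of the theorem is the dictionary between incentive compatibility and the subgradient inequality for convex functions.
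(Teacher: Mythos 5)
The paper does not actually prove this statement: it is quoted verbatim from Rochet (1987) and used as a black box, so there is no in-paper argument to compare against. Your proof is the standard convex-analysis argument (IC $\Leftrightarrow$ the subgradient inequality $u(z)\geq u(\hat z)+q(\hat z)\cdot(z-\hat z)$ for all $z,\hat z$), and in substance it is the right one: the ``only if'' direction via the supremum-of-affine-functions representation and a.e.\ differentiability of convex functions, and the ``if'' direction via the supporting-hyperplane inequality at differentiability points, are both correct. One small repair: continuity on all of the closed rectangle $D$ (not just its interior) should be obtained from the fact that the affine minorants have slopes $q(\hat z)\in[0,1]^2$, so $u$ is a finite supremum of $1$-Lipschitz functions (in $\|\cdot\|_1$) and hence $1$-Lipschitz on $D$; interior continuity of convex functions alone does not cover the boundary.

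The genuine weak point is your second proposed fix for the exceptional null set. Passing to the limit along differentiability points $\hat z_n\to\hat z$ gives $u(z)\geq u(\hat z)+g\cdot(z-\hat z)$ for some cluster point $g$ of $\nabla u(\hat z_n)$, and such cluster points lie in $\partial u(\hat z)$ --- but there is no reason they equal $q(\hat z)$, since $q$ is only tied to $\nabla u$ almost everywhere. Indeed, the literal ``if'' direction is false as stated: modify $q$ at a single point $\hat z_0$ to a value in $[0,1]^2\setminus\partial u(\hat z_0)$ and adjust $t(\hat z_0)$ so that $u(\hat z_0)=\hat z_0\cdot q(\hat z_0)-t(\hat z_0)$ still holds; then $u$ is unchanged (continuous, convex, $\nabla u=q$ a.e.), yet some type strictly prefers to report $\hat z_0$, so IC fails. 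Hence the semicontinuity argument cannot work, and the correct resolution is your first one: the ``iff'' must be read modulo modifications of $(q,t)$ on a Lebesgue-null set (equivalently, one requires $q(\hat z)\in\partial u(\hat z)$ at every point, which is how Rochet's characterization is usually stated). With that convention made explicit, your proof is complete; without it, the last step has a gap that no continuity argument can close.
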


An incentive compatible mechanism is {\em individually rational} if the buyer is not worse off by participating in the mechanism, i.e., $u(z) \geq 0$ for every $z \in D$, with zero being the buyer's utility if he chooses not to participate.

An optimal mechanism is one that maximizes the expected revenue to the seller subject to incentive compatibility and individual rationality.

Define a measure $\bar{\mu}$, supported on set $D$, as follows.
$$
  \bar{\mu}(A)=\int_D\mathbf{1}_A(z)\mu(z)\,dz+\int_{\partial D}\mathbf{1}_A(z)\mu_s(z)\,dz+\mu_p(A\cap(c_1,c_2))
$$
for all measurable sets $A$, where the functions $\mu$, $\mu_s$ and $\mu_p$ are defined as
\begin{align*}
\mu(z)&:=-z\cdot\nabla f(z)-3f(z),\\
\mu_s(z)&:=\begin{cases}(z\cdot n(z))f(z)&z\in\partial D\\0&\mbox{otherwise},\end{cases}\\
\mu_p(z)&:=\delta_{(c_1,c_2)}(z).
\end{align*}
The vector $n(z)$ denotes the normal to the surface $\partial D$ at $z$, and the notation $\delta$ in $\mu_p(z)$ denotes the Dirac-delta function. So $\mu_p(z)=1$ if $z=(c_1,c_2)$ and $0$ otherwise. We regard $\mu$ as the density of a signed measure that is absolutely continuous with the two-dimensional Lebesgue measure, and $\mu_s$ as the density of a signed measure that is absolutely continuous with the one-dimensional Lebesgue measure. We shall refer to both Lebesgue measures as $dz$. Observe that $\bar{\mu}$ is a signed Radon measure on $D$, and that $\mu$ and $\mu_s$ are the Radon-Nikodym derivatives of respective components of $\bar{\mu}$ w.r.t. the two-dimensional and one-dimensional Lebesgue measures respectively. We now state the following theorem.

\begin{theorem}\cite[Thm.~1]{DDT17}
The problem of determining the optimal incentive compatible and individually rational mechanism for a single additive buyer whose values for the goods are distributed according to $f:D\rightarrow\mathbb{R}$ is equivalent to solving the optimization problem
\begin{align}\label{eqn:primal}
  &\max_u \int_D u\,d\bar{\mu} \\
  \mbox{ subject to }\hspace*{.5in}
  &(a)\,u\mbox{ continuous, convex, increasing},\nonumber\\
  &(b)\,u(z)-u(z')\leq\|z-z'\|_1\,\forall z,z'\in D.\nonumber
\end{align}
\end{theorem}

We now recall the definition of the convex ordering relation. A function $f$ is increasing if $z\geq z'$ component-wise implies $f(z)\geq f(z')$.
\begin{definition} (See for e.g., \cite{DDT17}\/)
 Let $\alpha$ and $\beta$ be measures defined on a set $D$. We say $\alpha$ {\em first-order dominates} $\beta$ ($\alpha\succeq_{1}\beta$) if $\int_Df\,d\alpha\geq\int_Df\,d\beta$ for all continuous and increasing $f$. We say $\alpha$ {\em convex-dominates} $\beta$ ($\alpha\succeq_{cvx}\beta$) if $\int_Df\,d\alpha\geq\int_Df\,d\beta$ for all continuous, convex and increasing $f$.
\end{definition}

The dual of problem (\ref{eqn:primal}) is (see \cite[Thm.~2]{DDT17}):
\begin{align}\label{eqn:dual}
  &\inf_{\gamma} \int_{D\times D}\|z-z'\|_1\,d\gamma(z,z') \\
  \mbox{ subject to }\hspace*{.5in}
  &(a)\,\gamma\in Radon_+(D\times D)\nonumber\\
  &(b)\,\gamma(\cdot,D)=\gamma_1(\cdot),\,\gamma(D,\cdot)=\gamma_2(\cdot),\,\gamma_1-\gamma_2\succeq_{cvx}\bar{\mu}.\nonumber
\end{align}
By $\gamma\in Radon_+(D\times D)$, we mean that $\gamma$ is an unsigned Radon measure in $D\times D$. We derive the weak duality result in \ref{app:d} to provide an understanding of how the dual arises and why $\gamma$ may be interpreted as prices for violating the primal constraint.

The next lemma gives a sufficient condition for strong duality.
\begin{lemma}\cite[Cor.~1]{DDT17}\/.\label{lem:conditions}
Let $u^*$ and $\gamma^*$ be feasible for the aforementioned primal (\ref{eqn:primal}) and dual (\ref{eqn:dual}) problems, respectively\/. Then the objective functions of (\ref{eqn:primal}) and (\ref{eqn:dual}) with $u=u^*$ and $\gamma=\gamma^*$ are equal iff (i) $\int_Du^*\,d(\gamma_1^*-\gamma_2^*)=\int_Du^*\,d\bar{\mu}$, and (ii) $u^*(z)-u^*(z')=\|z-z'\|_1, \gamma^*-$a.e.
\end{lemma}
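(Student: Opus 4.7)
The plan is to establish the lemma by chasing the standard weak-duality inequality chain and identifying exactly when each step is tight. Specifically, I would sandwich the two objective values as
\begin{equation*}
\int_D u^*\,d\bar{\mu} \;\leq\; \int_D u^*\,d(\gamma_1^*-\gamma_2^*) \;\leq\; \int_{D\times D}\|z-z'\|_1\,d\gamma^*(z,z'),
\end{equation*}
so that the primal value lies below the dual value, and then observe that the two conditions (i) and (ii) in the statement are precisely the conditions that turn the first and second inequalities, respectively, into equalities.

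To justify the first inequality, I would invoke primal feasibility of $u^*$: by (\ref{eqn:primal}), $u^*$ is continuous, convex, and increasing on $D$, and by dual feasibility, $\gamma_1^*-\gamma_2^*\succeq_{cvx}\bar{\mu}$. The definition of convex dominance applied to the test function $u^*$ gives the inequality $\int_D u^*\,d(\gamma_1^*-\gamma_2^*)\geq\int_D u^*\,d\bar{\mu}$, with equality being exactly condition (i). For the second inequality, I would rewrite the middle term using the marginals of $\gamma^*$:
\begin{equation*}
\int_D u^*\,d(\gamma_1^*-\gamma_2^*) = \int_{D\times D}\bigl(u^*(z)-u^*(z')\bigr)\,d\gamma^*(z,z'),
\end{equation*}
and then use the $1$-Lipschitz property $u^*(z)-u^*(z')\leq\|z-z'\|_1$ from primal feasibility, together with $\gamma^*\in\mathrm{Radon}_+(D\times D)$, to obtain the second inequality.

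The reverse (``only if'') direction then reduces to a short measure-theoretic argument: since the integrand $\|z-z'\|_1-(u^*(z)-u^*(z'))$ is pointwise nonnegative and $\gamma^*\geq 0$, the integral of this difference vanishes iff $u^*(z)-u^*(z')=\|z-z'\|_1$ holds for $\gamma^*$-almost every $(z,z')$, which is exactly condition (ii). Combined with the equality version of the convex-dominance step, this establishes the equivalence. The main conceptual obstacle is noticing that although convex dominance is a much weaker condition than, say, equality of measures, it produces a genuine inequality when tested against the \emph{specific} function $u^*$ that is already constrained to be convex and increasing; so the ``slack'' in the dominance constraint is precisely what condition (i) controls. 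Once this is recognized, the rest is standard complementary-slackness bookkeeping for an optimal-transport linear program, and no additional regularity assumptions on $u^*$ or $\gamma^*$ beyond primal/dual feasibility are needed.
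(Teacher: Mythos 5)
Your proposal is correct. The two-step sandwich
$\int_D u^*\,d\bar{\mu} \leq \int_D u^*\,d(\gamma_1^*-\gamma_2^*) \leq \int_{D\times D}\|z-z'\|_1\,d\gamma^*$,
with the first inequality coming from testing the convex-dominance constraint against the feasible (hence continuous, convex, increasing) $u^*$ and the second from the marginal decomposition plus the $1$-Lipschitz constraint and $\gamma^*\geq 0$, is exactly the right mechanism, and your observation that equality of the outer terms forces both inequalities to be tight, with tightness of the second equivalent to $\gamma^*$-a.e.\ equality of a nonnegative integrand, correctly delivers the ``iff.'' Note, however, that the paper does not prove this lemma at all: it is imported verbatim as Corollary 1 of \cite{DDT17}, and the paper only sketches the related weak-duality inequality in \ref{app:d} via a Lagrangian saddle-point rewriting of the primal. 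So your argument is not a different route from the paper so much as a self-contained complementary-slackness proof of a cited result; it is consistent with, and slightly more complete than, the weak-duality discussion in the appendix (the only implicit ingredients you rely on, finiteness of $\gamma^*$ on the compact set $D\times D$ so that the marginal decomposition is legitimate, and continuity of $u^*$, are both available from feasibility).
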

\begin{remark}
Condition (i) requires the integrals of the utility function with respect to (w.r.t.) $\gamma_1^*-\gamma_2^*$ and w.r.t. $\bar{\mu}$  be equal. Observe that 
since $u^*$ and $\gamma^*$ are feasible for (\ref{eqn:primal}) 
and (\ref{eqn:dual}), respectively, the integral w.r.t. $\gamma_1^*-\gamma_2^*$ is at least the integral w.r.t. $\bar{\mu}$. Condition (i) imposes that they are equal.
\end{remark}

\begin{remark}\label{rem:condition-ii}
The constraint of the primal (\ref{eqn:primal}) imposes that $u^*(z)-u^*(z') \leq \|z-z'\|_1$ for all $z, z' \in D$. Condition (ii) requires equality for $\gamma^*$-almost every $(z,z')$.
\end{remark}

Our problem now reduces to that of finding a $\gamma$ such that $\gamma_1-\gamma_2$ convex-dominates $\bar{\mu}$ and satisfies the conditions stated in Lemma \ref{lem:conditions}. The key, nontrivial technical contribution of our paper is to identify such a $\gamma$ when $z\sim\mbox{Unif}[c_1,c_1+b_1]\times[c_2,c_2+b_2]$, for all $(c_1,c_2,b_1,b_2)\geq 0$.

\section{Revisiting the Case When $(c_1,c_2)=(0,0)$}\label{sec:zero}
We first solve the case when $z\sim\mbox{Unif}[0,b_1]\times[0,b_2]$. The solution proceeds exactly according to the general characterization in \cite[Sect.~7]{DDT17} that uses the technique of optimal transport. We provide it here to set up the notation for the more general case that will follow in later sections. In particular, the dual variable $\gamma$ obtained here and the method to describe it will be used in later sections. Observe that

\begin{align}
  \mbox{(area density) } \, & \mu(z)=-3/(b_1b_2),\quad z\in D \nonumber \\
  \mbox{(line density) } \, & \mu_s(z)=(b_2\mathbf{1}\{z_2=b_2\}+b_1\mathbf{1}\{z_1=b_1\})/(b_1b_2)\quad z\in\partial D \nonumber\\
  \mbox{(point measure) } \, & \mu_p(\{0,0\})=1. \label{eqn:mu-zero}
\end{align}
In other words, $\mu$ is the density of a two-dimensional measure with density $-3/(b_1b_2)$ everywhere, $\mu_s$ is the density of a one-dimensional measure of line densities $b_2/(b_1b_2)$ and $b_1/(b_1b_2)$ on the top and right boundaries, respectively, and $\mu_p$ is a point measure of $1$ on $(0,0)$.

We now enumerate the steps to construct the so-called {\em canonical partition} of the support set $D$, with respect to $\bar{\mu}$-measure.
\begin{enumerate}
\item[(a)] Define the {\em outer boundary functions} $s_i:[0,b_i)\rightarrow[0,b_{-i})$, $i=1,2$, as
\begin{equation}\label{eqn:si-zi}
  s_i(z_i):=\max\left\{z_{-i}^*\in[0,b_{-i}):\int_{z_{-i}^*}^{b_{-i}}\mu(z)\,dz_{-i}+\mu_s(z_i,b_{-i})=0\right\}.
\end{equation}
Compute the functions $s_1(z_1)$ for all $z_1\in[0,b_1)$, and $s_2(z_2)$ for all $z_2\in[0,b_2)$\footnote{The functions $s_i(\cdot)$ are constructed so that the integrals of the components of $\bar{\mu}(z_1,z_2)$ over $z_2\in[s_1(z_1),b_2]$ equal zero for each $z_1$, and the integrals over $[s_2(z_2),b_1]$ equal zero for each $z_2$. This enables us to transfer mass from points of excess to those with deficit, and meet the excess exactly with the deficit. This will soon become clear when we construct the dual measure $\gamma$.}.
\item[(b)] Construct the {\em exclusion set} $Z$ to be the set formed by the intersection of $\{(z_1,z_2):z_2\leq s_1(z_1)\}$, $\{(z_1,z_2):z_1\leq s_2(z_2)\}$, and $\{(z_1,z_2):z_1+z_2\leq p\}$, where $p$ is chosen such that $\bar{\mu}(Z)=0$\footnote{Again, $\bar{\mu}(Z)=0$ enables us to transfer mass from points of excess to those with deficit.}. We call $p$ the {\em critical price}.
\item[(c)] Let the point of intersection between the curves $\{z_2= s_1(z_1)\}$ and $\{z_1+z_2=p\}$ be denoted by $P$, and the point of intersection between $\{z_1=s_2(z_2)\}$ and $\{z_1+z_2=p\}$ be denoted by $Q$. Let $P=(P_1,P_2)$ and $Q=(Q_1,Q_2)$ denote the respective co-ordinates of $P$ and $Q$. We call these points $P$ and $Q$ as the {\em critical points}. We now partition $D\backslash Z$ as follows.
\begin{itemize}
 \item[$\bullet$] $A:=([c_1,P_1]\times[P_2,c_2+b_2])\backslash Z$
 \item[$\bullet$] $B:=([Q_1,c_1+b_1]\times[c_2,Q_2])\backslash Z$
 \item[$\bullet$] $W:=D\backslash(Z\cup A\cup B)$.
\end{itemize}
\item[(d)] The partition of $D$ into $A$, $B$, $W$ and $Z$, as explained in items (a)-(c), is termed a canonical partition\footnote{The terms {\em outer boundary function}, {\em critical price}, {\em critical points}, and {\em canonical partition} are defined based on \cite[Def.~12]{DDT17} and \cite[Def.~13]{DDT17}.}.
\end{enumerate}

We now define the allocation and the payment functions $(q(z),t(z))$, considering each region of the canonical partition constructed above as a constant allocation region.
\begin{equation}\label{eqn:q-full}
  (q_1(z),q_2(z),t(z))=\begin{cases}(0,0,0)&\mbox{if }z\in Z\\(-s_1'(z_1),1,s_1(z_1)-z_1s_1'(z_1))&\mbox{if }z\in A\\(1,-s_2'(z_2),s_2(z_2)-z_2s_2'(z_2))&\mbox{if }z\in B\\(1,1,p)&\mbox{if }z\in W.\end{cases}
\end{equation}

\begin{figure}[h!]
\centering\begin{tabular}{ccc}
\subfloat[]{\begin{tikzpicture}[scale=0.4,font=\small,axis/.style={very thick, ->, >=stealth'}]
\draw [axis,thick,-] (1,1)--(9,1);
\draw [axis,thick,-] (9,1)--(9,9);
\draw [axis,thick,-] (9,9)--(1,9);
\draw [axis,thick,-] (1,9)--(1,1);
\draw [thick,dotted] (1,19/3)--(9,19/3);
\draw [thick,dotted] (19/3,1)--(19/3,9);
\node [above] at (4,19/3) {$s_1(z_1)$};
\node [above,rotate=90] at (19/3,4) {$s_2(z_2)$};
\end{tikzpicture}}&
\subfloat[]{\begin{tikzpicture}[scale=0.4,font=\small,axis/.style={very thick, ->, >=stealth'}]
\draw [axis,thick,-] (1,1)--(9,1);
\draw [axis,thick,-] (9,1)--(9,9);
\draw [axis,thick,-] (9,9)--(1,9);
\draw [axis,thick,-] (1,9)--(1,1);
\draw [thick,dotted] (1,19/3)--(9,19/3);
\draw [thick,dotted] (19/3,1)--(19/3,9);
\draw [axis,thick,-] (2.5,19/3)--(19/3,2.5);
\node [above,rotate=-45] at (53/12,53/12) {\scriptsize$z_1+z_2=p$};
\path[fill=gray!50,opacity=.9] (1,19/3) to (2.5,19/3) to (19/3,2.5) to (19/3,1) to (1,1) to (1,19/3);
\node at (3,3) {\Large$Z$};
\end{tikzpicture}}&
\subfloat[]{\begin{tikzpicture}[scale=0.4,font=\small,axis/.style={very thick, ->, >=stealth'}]
\draw [axis,thick,-] (1,1)--(9,1);
\draw [axis,thick,-] (9,1)--(9,9);
\draw [axis,thick,-] (9,9)--(1,9);
\draw [axis,thick,-] (1,9)--(1,1);
\draw [thick,dotted] (1,19/3)--(9,19/3);
\draw [thick,dotted] (19/3,1)--(19/3,9);
\draw [axis,thick,-] (2.5,19/3)--(19/3,2.5);
\path[fill=gray!50,opacity=.9] (1,19/3) to (2.5,19/3) to (19/3,2.5) to (19/3,1) to (1,1) to (1,19/3);
\node at (3,3) {\Large$Z$};
\draw [axis,thick,-] (2.5,19/3)--(2.5,9);
\draw [axis,thick,-] (19/3,2.5)--(9,2.5);
\path[fill=gray!50,opacity=.6] (1,19/3) to (2.5,19/3) to (2.5,9) to (1,9) to (1,19/3);
\node at (1.75,23/3) {\large$A$};
\path[fill=gray!50,opacity=.6] (19/3,1) to (19/3,2.5) to (9,2.5) to (9,1) to (19/3,1);
\node at (23/3,1.75) {\large$B$};
\path[fill=gray!50,opacity=.3] (2.5,19/3) to (19/3,2.5) to (9,2.5) to (9,9) to (2.5,9) to (2.5,19/3);
\node at (19/3,19/3) {\huge$W$};
\draw[black,fill=black] (2.5,19/3) circle (1ex);
\node [below] at (2.5,19/3) {$P$};
\draw[black,fill=black] (19/3,2.5) circle (1ex);
\node [left] at (19/3,2.5) {$Q$};
\end{tikzpicture}}
\end{tabular}
\caption{When $1\leq b_1/b_2\leq 2$, we illustrate (a) the construction of $s_i(z_i)$, (b) the construction of exclusion set $Z$ such that $\bar{\mu}(Z)=0$, and (c) the partition of $D\backslash Z$ into $A$, $B$, and $W$.}\label{fig:illust-sec3}
\end{figure}
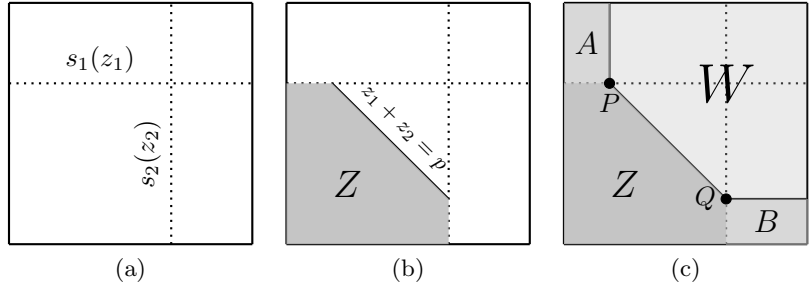

We now construct the canonical partition of $D$ using the steps enumerated above in (a)-(d), when the valuation $z\sim\mbox{Unif}[0,b_1]\times[0,b_2]$, with $1\leq b_1/b_2\leq 2$. The construction is illustrated in Figure \ref{fig:illust-sec3}.
\begin{enumerate}
\item[(a)] Using the values of $\mu$, $\mu_s$, and $\mu_p$ in (\ref{eqn:mu-zero}) and the formula for outer boundary functions $s_i(z_i)$ in (\ref{eqn:si-zi}), it is easy to see that $s_1(z_1)=2b_2/3$ for every $z_1\in[0,b_1)$, and $s_2(z_2)=2b_1/3$ for every $z_2\in[0,b_2)$.
\item[(b)] We now choose the critical price $p$ so that $\bar{\mu}(Z)=0$. It is easy to see that choosing
$$
  p=\frac{2(b_1+b_2)-\sqrt{2b_1b_2}}{3}
$$
makes $\bar{\mu}(Z)=0$.
\item[(c)] We now compute the critical points $P$ and $Q$. They are given by
$$
  P=((2b_1-\sqrt{2b_1b_2})/3,2b_2/3),\,Q=(2b_1/3,(2b_2-\sqrt{2b_1b_2})/3).
$$
But the points $P$ and $Q$ are valid only when they fall within $[0,b_1]\times[0,b_2]$. Note that $P_1\geq 0$ holds since $b_1\geq b_2$, and that $Q_2\geq 0$ holds since $(2b_2-\sqrt{2b_1b_2})/3\geq 0$ whenever $b_1/b_2\leq 2$. We thus have the following canonical partition of $D\backslash Z$, when $1 \leq b_1/b_2 \leq 2$:
$$
  A=[0,P_1]\times[2b_2/3,b_2],\,B=[2b_1/3,b_1]\times[0,Q_2],\,W=D\backslash(A\cup B\cup Z).
$$
\end{enumerate}

\begin{figure}[h!]
\centering
\begin{tabular}{ccc}
\subfloat[]{\begin{tikzpicture}[scale=0.35,font=\small,axis/.style={very thick, ->, >=stealth'}]
\draw [axis,thick,-] (1,1)--(9,1);
\draw [axis,thick,-] (9,1)--(9,7);
\draw [axis,thick,-] (9,7)--(1,7);
\draw [axis,thick,-] (1,7)--(1,1);
\draw [thick,dotted] (1,5)--(9,5);
\draw [thick,dotted] (19/3,1)--(19/3,7);
\node [above] at (4,5) {$s_1(z_1)$};
\node [above,rotate=90] at (19/3,3) {$s_2(z_2)$};
\end{tikzpicture}}&
\subfloat[]{\begin{tikzpicture}[scale=0.35,font=\small,axis/.style={very thick, ->, >=stealth'}]
\draw [axis,thick,-] (1,1)--(9,1);
\draw [axis,thick,-] (9,1)--(9,7);
\draw [axis,thick,-] (9,7)--(1,7);
\draw [axis,thick,-] (1,7)--(1,1);
\draw [thick,dotted] (1,5)--(9,5);
\draw [thick,dotted] (19/3,1)--(19/3,7);
\draw [axis,thick,-] (2,5)--(6,1);
\node [above,rotate=-45] at (4,3) {\scriptsize$z_1+z_2=p$};
\path[fill=gray!50,opacity=.9] (1,5) to (2,5) to (6,1) to (1,1) to (1,5);
\node at (2.5,2.5) {\Large$Z$};
\end{tikzpicture}}&
\subfloat[]{\begin{tikzpicture}[scale=0.35,font=\small,axis/.style={very thick, ->, >=stealth'}]
\draw [axis,thick,-] (1,1)--(9,1);
\draw [axis,thick,-] (9,1)--(9,7);
\draw [axis,thick,-] (9,7)--(1,7);
\draw [axis,thick,-] (1,7)--(1,1);
\draw [thick,dotted] (1,5)--(9,5);
\draw [thick,dotted] (19/3,1)--(19/3,7);
\draw [axis,thick,-] (2,5)--(6,1);
\path[fill=gray!50,opacity=.9] (1,5) to (2,5) to (6,1) to (1,1) to (1,5);
\node at (2.5,2.5) {\Large$Z$};
\draw [axis,thick,-] (2,5)--(2,7);
\path[fill=gray!50,opacity=.6] (1,5) to (2,5) to (2,7) to (1,7) to (1,5);
\node at (1.5,6) {\large$A$};
\path[fill=gray!50,opacity=.3] (2,5) to (6,1) to (9,1) to (9,7) to (2,7) to (2,5);
\node at (19/3,5) {\huge$W$};
\draw[black,fill=black] (2,5) circle (1ex);
\node [below] at (2,5) {$P$};
\end{tikzpicture}}
\end{tabular}
\caption{When $b_1/b_2>2$, we illustrate (a) the construction of $s_i(z_i)$, (b) the construction of exclusion set $Z$ such that $\bar{\mu}(Z)=0$, and (c) the partition of $D\backslash Z$ into $A$ and $W$.}\label{fig:illust-sec3-2}
\end{figure}
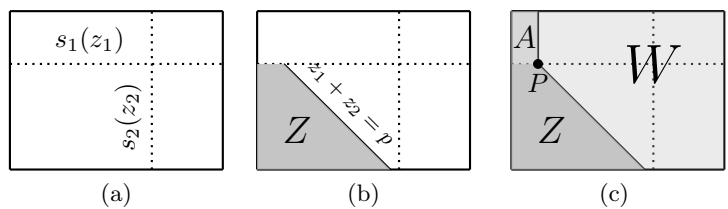

When $b_1/b_2=2$, we have $Q_2=0$, reducing region $B$ to a null.

When $b_1/b_2>2$, we construct a canonical partition of $D$ by taking $B$ to be null. The construction is illustrated in Figure \ref{fig:illust-sec3-2}.
\begin{enumerate}
\item[(a)] The outer boundary functions $s_i(z_i)$ remain the same as in the earlier case.
\item[(b)] To construct the exclusion set $Z$, we need to construct $p$ such that the $\bar{\mu}$-measure of $Z=\{(z_1,z_2):z_2\leq 2b_2/3, z_1+z_2\leq p\}$ is zero. It is easy to derive the critical price $p=b_1/2+b_2/3$.
\item[(c)] The critical point $P$ remains the same as in the earlier case. We now have the following canonical partition of $D\backslash Z$, when $b_1/b_2 > 2$:
$$
  A=[0,P_1]\times[2b_2/3,b_2],\,W=D\backslash(A\cup Z).
$$
\end{enumerate}

Observe that the allocation function turns out to be as depicted in Figure \ref{fig:figa} when $1\leq b_1/b_2\leq 2$, and as in Figure \ref{fig:figb} when $b_1/b_2>2$. 

{\bf The outer boundary functions $s_i(\cdot)$ and the critical price $p$:} Recall from the characterization results of \citet{Pav11} and \citet{WT14} that the optimal mechanism for the uniform distribution has at most four constant allocation regions, and that the allocations must be of the form $(0,0)$, $(a_1,1)$, and $(1,a_2)$. From the expressions of $(q,t)$ in (\ref{eqn:q-full}), the outer boundary function $s_1$ determines the price of the allocation $(a_1,1)$, and its slope determines the value of $a_1$. Similarly, the outer boundary function $s_2$ determines the price of allocation $(1,a_2)$, and its slope determines the value of $a_2$. The critical price $p$ used to construct the exclusion set $Z$ determines the price of the allocation $(1,1)$. Thus all the five parameters in the optimal mechanism -- two allocation parameters and three price parameters -- are determined by the outer boundary functions $s_1$, $s_2$, and the critical price $p$.

We now proceed to prove that our constructions are indeed the optimal mechanisms.

\begin{proposition}\label{prop:known}
 Consider $z\sim\mbox{Unif }[0,b_1]\times[0,b_2]$ and suppose that $b_1\geq b_2>0$. Then the optimal mechanism is as depicted in Figure \ref{fig:figa} when $1 \leq b_1/b_2 \leq 2$, and is as depicted in Figure \ref{fig:figb} when $b_1/b_2 \geq 2$.
\end{proposition}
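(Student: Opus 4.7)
The plan is to establish optimality via the strong-duality criterion of Lemma \ref{lem:conditions}. First I would take as the candidate primal $u^\ast$ the utility function whose subgradient is the allocation $q$ prescribed in (\ref{eqn:q-full}); concretely, $u^\ast(z)=0$ on $Z$, $u^\ast(z)=z_2-s_1(z_1)$ on $A$, $u^\ast(z)=z_1-s_2(z_2)$ on $B$, and $u^\ast(z)=z_1+z_2-p$ on $W$. Since $s_1$ and $s_2$ are constants in this case, $u^\ast$ is a maximum of four affine functions, which makes it automatically continuous, convex, increasing and $1$-Lipschitz in $\|\cdot\|_1$, so it is feasible for (\ref{eqn:primal}). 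The formulas also make plain that in each of $A$, $B$, $W$ the gradient $\nabla u^\ast$ lies in the allowed set $\{(0,1),(1,0),(1,1)\}$, so condition (b) of the primal holds.

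Next I would construct a feasible dual $\gamma^\ast$ as an optimal transport plan that is compatible with the canonical partition. Schematically, $\gamma^\ast$ has four pieces: (i) mass in $Z$ is shipped to the corner $(0,0)$ along the direction of steepest descent of $u^\ast$, absorbing the point mass $\mu_p$; (ii) mass in $A$ is shipped purely vertically onto the horizontal segment $\{z_2=s_1(z_1)\}$, absorbing also the line density $\mu_s$ on the top edge; (iii) mass in $B$ is shipped purely horizontally onto $\{z_1=s_2(z_2)\}$, absorbing the line density $\mu_s$ on the right edge; (iv) mass in $W$ is shipped in direction $-(1,1)$ onto the line $\{z_1+z_2=p\}$. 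In each piece the shipping direction is parallel to a facet of $\nabla u^\ast$, so along $\gamma^\ast$-a.e. pair one has $u^\ast(z)-u^\ast(z')=\|z-z'\|_1$; this is condition (ii) of Lemma \ref{lem:conditions} (cf.\ Remark \ref{rem:condition-ii}).

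The heart of the verification is showing that $\gamma_1^\ast-\gamma_2^\ast\succeq_{cvx}\bar{\mu}$ and that the integrals agree (condition (i)). By construction, the fiber-wise balance baked into the definitions of $s_1,s_2,p$ gives that within each piece the net out-transport equals the restriction of $\bar{\mu}$ to that piece: for example in $A$, the choice $s_1(z_1)=2b_2/3$ is exactly the one that makes $\int_{s_1(z_1)}^{b_2}\bar{\mu}(z_1,dz_2)=0$ (cf.\ (\ref{eqn:si-zi})), so shipping each column onto $\{z_2=s_1(z_1)\}$ produces $\gamma_1^\ast-\gamma_2^\ast=\bar{\mu}$ on $A$. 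The same holds on $B$; on $Z$, the definition of $p$ ensures $\bar{\mu}(Z)=0$ so that the mass funneled to $(0,0)$ exactly cancels $\mu_p$; on $W$, a straightforward computation using (\ref{eqn:mu-zero}) shows that shipping $-3/(b_1b_2)\cdot \mathbf{1}_W$ in direction $-(1,1)$ onto $\{z_1+z_2=p\}$ cancels the uniform density $-3/(b_1b_2)$ in $W$ together with the remaining top/right line densities. Thus in fact $\gamma_1^\ast-\gamma_2^\ast=\bar{\mu}$ as signed measures, which a fortiori gives $\gamma_1^\ast-\gamma_2^\ast\succeq_{cvx}\bar{\mu}$ and condition (i). Both Lemma \ref{lem:conditions} conditions hold, so $u^\ast$ is optimal.

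The one step I expect to be genuinely delicate is the case split $b_1/b_2\leq 2$ versus $b_1/b_2>2$. In the first regime both critical points $P$ and $Q$ lie inside $D$ and the four-region picture of Figure \ref{fig:figa} applies directly. In the second regime the formula for $Q$ leaves $D$, region $B$ collapses to a null set, and the critical price must be recomputed from the single constraint $\bar{\mu}(\{z_2\leq 2b_2/3, z_1+z_2\leq p\})=0$, giving $p=b_1/2+b_2/3$; one has to redo the mass accounting in $W$ with the new $p$ and check that the transport to $\{z_1+z_2=p\}$ still balances the top edge contribution on the portion of $\partial D$ not swept by region $A$. Beyond that, the argument is essentially bookkeeping, and the whole proof is a direct specialization of the canonical-partition construction of \cite{DDT17} to the rectangle anchored at the origin.
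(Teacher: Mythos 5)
Your overall strategy (take the $u^*$ induced by (\ref{eqn:q-full}), build an explicit transport $\gamma^*$, and certify optimality via Lemma \ref{lem:conditions}) is the same as the paper's, but your treatment of the exclusion set $Z$ contains a genuine gap that breaks the certificate. You propose to ship mass within $Z$ to the corner and then assert that $\gamma_1^*-\gamma_2^*=\bar{\mu}$ exactly; neither can work. Since $u^*\equiv 0$ on $Z$, any $\gamma^*$-mass on pairs $(z,z')\in Z\times Z$ with $z\neq z'$ has $u^*(z)-u^*(z')=0<\|z-z'\|_1$, so condition (ii) of Lemma \ref{lem:conditions} fails (your statement that the shipping direction is ``parallel to a facet of $\nabla u^*$'' is vacuous on $Z$, where $\nabla u^*=(0,0)$). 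Worse, exact equality $\gamma_1^*-\gamma_2^*=\bar{\mu}$ is incompatible with condition (ii) altogether: it forces a net out-transport of the unit atom $\mu_p=\delta_{(0,0)}$, i.e.\ positive $\gamma^*$-mass on pairs $((0,0),z')$ with $z'\neq(0,0)$, and (ii) would then require $u^*(0,0)-u^*(z')=\|z'\|_1>0$, impossible since $u^*\geq 0$ and $u^*(0,0)=0$. The paper's proof avoids this precisely by \emph{not} transporting inside $Z$: it sets $\gamma(\cdot\,|\,z)=\delta_z$ for $z\in Z$, so that $(\gamma_1-\gamma_2)^Z=0$, and then exploits the slack in the dual constraint by checking $0\succeq_{cvx}\bar{\mu}^Z$ (the positive part of $\bar{\mu}^Z$ is the atom at the componentwise-minimal point $(0,0)$ and $\bar{\mu}(Z)=0$), with condition (i) recovered because $u^*=0$ on $Z$. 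This use of convex dominance rather than equality on $Z$ is the missing idea in your plan; equality holds only on $D\setminus Z$.

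A second, smaller problem is your transport on $W$. Shipping ``in direction $-(1,1)$ onto $\{z_1+z_2=p\}$'' does not balance: $\bar{\mu}$ places no mass on that line to act as a sink, and nothing guarantees that the boundary excess on each diagonal matches the area deficit along it — the definitions in (\ref{eqn:si-zi}) give per-column and per-row balance, not per-diagonal balance. The paper instead splits $W$ into $W_1,W_2,W_3$, ships vertically from the top edge over $A\cup W_1$, horizontally from the right edge over $B\cup W_2$, and spreads the mass of $W_3\cap\partial D^+$ over $W_3$, verifying $u(z)-u(z')=\|z-z'\|_1$ case by case; this bookkeeping, and its adaptation in the regime $b_1/b_2>2$ where $B$ is empty and $p=b_1/2+b_2/3$, is exactly the part your outline leaves unestablished.
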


\begin{figure}
\centering
\begin{tabular}{cc}
\subfloat[]{\label{fig:figa}\includegraphics[scale=.15]{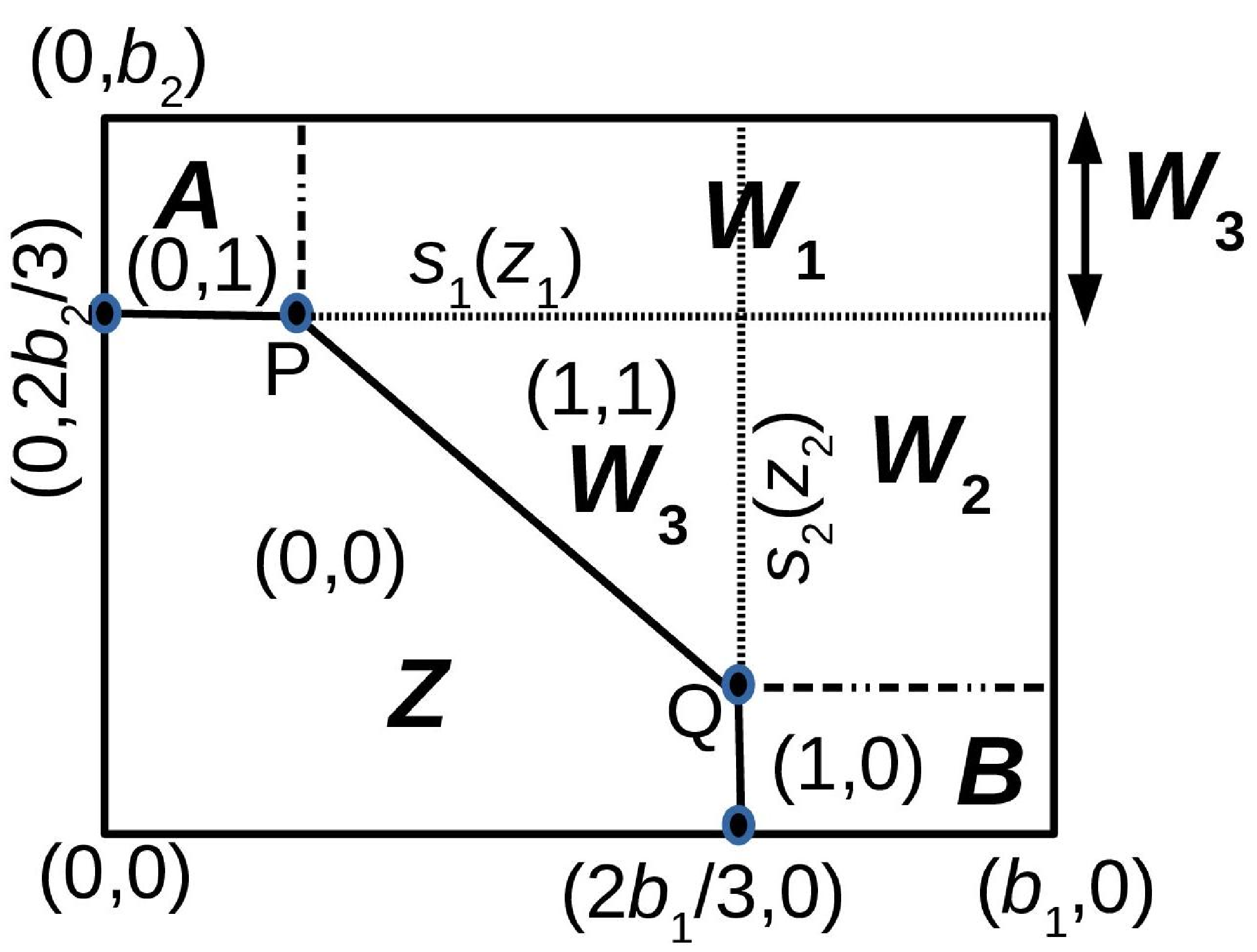}}&
\subfloat[]{\label{fig:figb}\includegraphics[scale=.15]{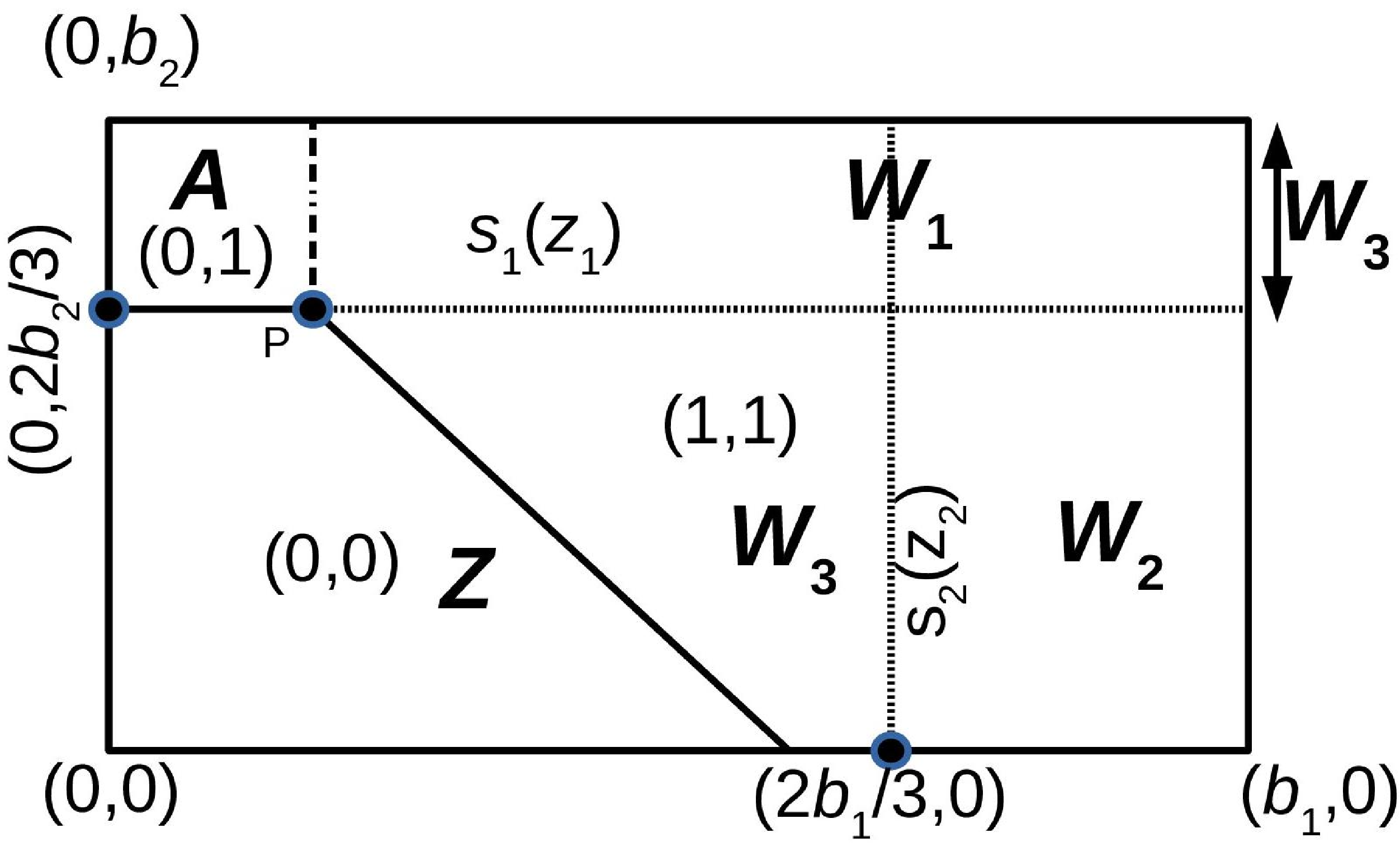}}
\end{tabular}
\caption{The optimal mechanism for the case $z\sim\mbox{Unif}[0,b_1]\times[0,b_2]$, when (a) $1\leq b_1/b_2\leq 2$ and (b) $b_1/b_2\geq 2$.}\label{fig:zero-solution}
\end{figure}

\begin{proof}
To prove this theorem, we must find a feasible $u$ and a feasible $\gamma$, and show that they satisfy the constraints of Lemma \ref{lem:conditions}. We define $u$ through the allocation function $q$ defined in (\ref{eqn:q-full}) and the relation $\nabla u=q$.

We now proceed to find the dual variable $\gamma$. We first partition $W=D\backslash(Z\cup A\cup B)$ into three regions:
\begin{multline*}
 W_1:[P_1,b_1)\times[s_1(z_1),b_2], \quad W_2:[s_2(z_2),b_1]\times[Q_2,s_1(z_1)], \\ W_3:\mbox{ the remaining region}
\end{multline*}
when the mechanism is as in Figure \ref{fig:figa}, and the same three regions but replacing $Q_2$ by $0$, when the mechanism is as in Figure \ref{fig:figb}. Observe that $W_3$ contains a portion of the right boundary adjacent to $W_1$. We define the set $\partial D^+$ to be the set of points in top and right boundaries of $D$, i.e., $\partial D^+:=\{z:(z_1,z_2)\in D:z_1=b_1\mbox{ or }z_2=b_2\}$.

We now set up the dual variable $\gamma$ measure as follows. First, let $\gamma_1:=\gamma_1^Z+\gamma_1^{D\backslash Z}$, with $\gamma_1^Z=\bar{\mu}^Z$, the $\bar{\mu}$ measure restricted to $Z$, and $\gamma_1^{D\backslash Z}=\bar{\mu}_+^{D\backslash Z}$. So $\gamma_1$ is supported on $(\partial D^+\cup Z)$. Next we specify the transition kernel $\gamma(\cdot~|~z)$ for $z\in(\partial D^+\cup Z)$.
\begin{enumerate}
 \item[(a)] For $z\in Z$, we define $\gamma(\cdot~|~z)=\delta_z(\cdot)$. We interpret this as the mass being retained at each $z\in Z$.
 \item[(b)] For $z\in(A\cup W_1)\cap\partial D^+$, $\gamma(\cdot~|~z)$ is defined by the uniform probability density on the line $\{z_1\}\times[s_1(z_1),b_2)$, and zero elsewhere. We interpret this as a transfer of $\mu_s(z)$ from the boundary to the above line.
 \item[(c)] For $z\in(B\cup W_2)\cap\partial D^+$, $\gamma(\cdot~|~z)$ is defined by the uniform probability density on the line $[s_2(z_2),b_1)\times\{z_2\}$, and zero elsewhere. Again, we interpret this as a transfer of $\mu_s(z)$ from the boundary to the above line.
 \item[(d)] For $z\in W_3\cap\partial D^+$, $\gamma(\cdot~|~z)$ is defined as follows. The total mass $\mu_s(W_3\cap\partial D^+)$ is spread uniformly on $W_3\backslash\partial D^+$ with equal contribution from each $z$ in $W_3\cap\partial D^+$. (Note that $W_3\cap\partial D^+$ is non-empty; see the right boundary in Figures \ref{fig:figa} and \ref{fig:figb}.)
\end{enumerate}

We then define $\gamma(F)=\int_{(z,z')\in F}\gamma_1(dz)\gamma(dz'~|~z)$ for any measurable $F\in D\times D$. It is then easy to check, by virtue of the choices of $s_1(z_1)$, $s_2(z_2)$, $Z$, and the matchings in (a)--(d), that $\gamma_2^Z=\gamma(Z,\cdot)=\bar{\mu}^Z$, and $\gamma_2^{D\backslash Z}=\gamma(D\backslash Z,\cdot)=\bar{\mu}_-^{D\backslash Z}$. Thus $(\gamma_1-\gamma_2)^Z=0$ and $(\gamma_1-\gamma_2)^{D\backslash Z}=\bar{\mu}^{D\backslash Z}$.

We now verify if $\gamma$ is feasible. To prove $\gamma_1-\gamma_2\succeq_{cvx}\bar{\mu}$, it suffices to show that (i) $(\gamma_1-\gamma_2)^Z\succeq_{cvx}\bar{\mu}^Z$, and (ii) $(\gamma_1-\gamma_2)^{D\backslash Z}\succeq_{cvx}\bar{\mu}^{D\backslash Z}$. Part (ii) holds trivially. We now show that $0\succeq_{cvx}\bar{\mu}^Z$. Observe that the components of $\bar{\mu}^Z$ are positive only at $(c_1,c_2)$, the left-bottom corner, and are negative elsewhere. Further, we have $\bar{\mu}_+(Z)=\bar{\mu}_-(Z)=1$, since the exclusion set $Z$ was constructed to have $\bar{\mu}(Z)=0$. So for any increasing function $f$, we have $\int_Zf(z)\,d\bar{\mu}^Z\leq 0$. This proves that $0\succeq_{cvx}\bar{\mu}^Z$, and also that $\gamma_1-\gamma_2\succeq_{cvx}\bar{\mu}$.

We now verify condition (i) in Lemma \ref{lem:conditions}.
\begin{multline*}
  \int_D u\,d(\gamma_1-\gamma_2)=\int_Z u\,d(\gamma_1-\gamma_2)^Z+\int_{D\backslash Z} u\,d(\gamma_1-\gamma_2)^{D\backslash Z}\\=\int_{D\backslash Z} u\,d\bar{\mu}=\int_D u\,d\bar{\mu}
\end{multline*}
where the second and the last equalities follow because $u(z)=0$ when $z\in Z$.

We now verify condition (ii) in Lemma \ref{lem:conditions}. To see why $u(z)-u(z')=\|z-z'\|_1$ holds $\gamma$-a.e., it suffices to check this for those $z$ on $(\partial D^+\cup Z)$ and the corresponding $z'$ for which $\gamma(\cdot~|~z)$ is nonzero, as in the four cases (a)--(d) above. For $z,z'$ in (a), $u(z)-u(z')=0$. In (b), $u(z)-u(z')=z_2-z_2'=\|z-z'\|_1$. In (c), $u(z)-u(z')=z_1-z_1'=\|z-z'\|_1$. In (d), $u(z)-u(z')=(z_1-z_1')+(z_2-z_2')=\|z-z'\|_1$. Thus $u(z)-u(z')=\|z-z'\|_1$ holds $\gamma$-a.e.
\end{proof}

In this example, the dual variable $\gamma$ was constructed so that $(\gamma_1 - \gamma_2)^{D\backslash Z} = \bar{\mu}^{D\backslash Z}$. In the more general case to be considered in the rest of the paper, we must shuffle the mass $\bar{\mu}+\bar{\alpha}$ (in place of $\bar{\mu}$), for some $\bar{\alpha}$ that convex-dominates the zero measure. So the dual variable $\gamma$ will be such that $(\gamma_1 - \gamma_2)^{D\backslash Z} = \bar{\mu}^{D\backslash Z} + \bar{\alpha}$.

\section{A Discussion of Optimal Solutions for the General Case}\label{sec:sol-space}
We now discuss the space of solutions for any nonnegative $c_1,c_2,b_1,b_2$. The main result of this paper is that the optimal mechanism is given as follows. For a summarizing phase diagram, see Figure \ref{fig:partition_uni}. To see a portrayal of all eight structures, see Figure \ref{fig:gen-structure}.

\subsection{Optimal solutions}
We describe the optimal solutions via decision trees. In Section \ref{sub:discussion}, we provide a discussion on the solutions. The formal statements are in Section \ref{sec:gencase}.

\subsubsection{$\frac{c_1}{b_1}$ and $\frac{c_2}{b_2}$ small}
When the values of $\frac{c_1}{b_1}$ and $\frac{c_2}{b_2}$ are small, in the sense that
\begin{multline}\label{eqn:c1-c2-small}
  \left\{c_1\leq b_1, c_2\leq 2b_2(b_1+c_1)/(b_1+3c_1)\right\}\\\cup\left\{c_2\leq b_2, c_1\leq 2b_1(b_2+c_2)/(b_2+3c_2)\right\},
\end{multline}
the optimal mechanism has one of the four structures in Figure \ref{fig:gen-structure-1}. Defining
\begin{align*}
p_{a_i}^*&:=\frac{2b_{-i}-c_{-i}}{3}-\frac{4c_i}{9}+\frac{2}{9}\sqrt{2c_i(2c_i+3(b_{-i}+c_{-i}))},\\
r_i&:=\frac{2b_{-i}(2b_i+5c_i)-c_{-i}(2b_i-3c_i)}{3(2b_i+3c_i)},
\end{align*}
the exact structure is given as follows.
\begin{enumerate}
 \item[(a)] The optimal mechanism is as in Figure \ref{fig:a}, if there exists $p_{a_i}\in[r_i,p_{a_i}^*]$, $i=1,2$, solving the following two equations simultaneously: 
\begin{multline}\label{eqn:fig5a-initial-1}
  (c_1-c_2-2(b_1-b_2)-(p_{a_1}-p_{a_2}))(c_1-2b_1+3p_{a_2})(c_2-2b_2+3p_{a_1})  \\
  +8c_1(b_2-p_{a_1})(c_1-2b_1+3p_{a_2})-8c_2(b_1-p_{a_2})(c_2-2b_2+3p_{a_1})     
  =  0,
\end{multline}
\begin{multline}\label{eqn:fig5a-initial-2}
 3\,\prod_{i=1}^2(b_i(c_{-i}-2b_{-i}+3p_{a_i})-4c_i(b_{-i}-p_{a_i}))\\-\sum_{i=1}^2((b_i(c_{-i}-2b_{-i}+3p_{a_i})-4c_i(b_{-i}-p_{a_i}))(c_i-2b_i+3p_{a_{-i}})(c_{-i}+b_{-i}))\\-\frac{3}{8}\,\prod_{i=1}^2((2b_{-i}-c_{-i}-p_{a_i})(c_i-2b_i+3p_{a_{-i}})-4c_{-i}(b_i-p_{a_{-i}}))=0.
\end{multline}
 \item[(b)] The optimal mechanism is as in Figure \ref{fig:b}, if (i) the condition in (a) does not hold, and (ii) there exists $p_{a_1}\in[r_1,p_{a_1}^*]$ that solves the following equation:
  \begin{multline}\label{eqn:fig5b-initial}
   -8c_1b_2^2+(c_2b_1-b_2b_1-b_2c_1)(c_2-2b_2)\\+\left(c_2/2-b_1\right)(c_2-2b_2)^2-(3/8)(c_2-2b_2)^3\\+\left(c_1(4c_2-3b_2)+3b_1b_2+2(c_2-2c_1)(c_2-2b_2)-15/8(c_2-2b_2)^2\right)p_{a_1}\\+\left(3c_2/2-21/8(c_2-2b_2)\right)p_{a_1}^2-(9/8)p_{a_1}^3=0.
 \end{multline}
 \item[(c)] The optimal mechanism is as in Figure \ref{fig:f}, if (i) the conditions in (a)-(b) do not hold, and (ii) there exists $p_{a_2}\in[r_2,p_{a_2}^*]$ that solves the following equation:
  \begin{multline}\label{eqn:fig5c-initial}
   -8c_2b_1^2+(c_1b_2-b_2b_1-b_1c_2)(c_1-2b_1)\\+\left(c_1/2-b_2\right)(c_1-2b_1)^2-(3/8)(c_1-2b_1)^3\\+\left(c_2(4c_1-3b_1)+3b_1b_2+2(c_1-2c_2)(c_1-2b_1)-15/8(c_1-2b_1)^2\right)p_{a_2}\\+\left(3c_1/2-21/8(c_1-2b_1)\right)p_{a_2}^2-(9/8)p_{a_2}^3=0.
 \end{multline}
 \item[(d)] The optimal mechanism is as in Figure \ref{fig:c}, if the conditions in $(a)$--$(c)$ do not hold.
\end{enumerate}
The decision tree in Figure \ref{fig:dec-tree-1} summarizes the procedure to find the exact structure. We discuss this in detail in Section \ref{SUB:GC1}. The results are formally stated in Theorem \ref{thm:str-1} of that section.

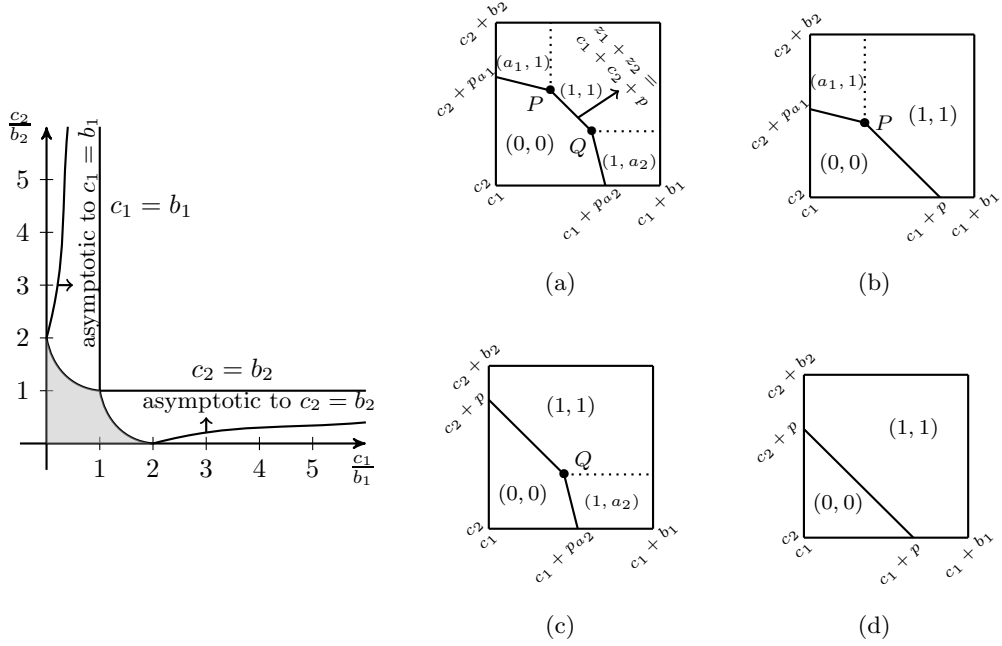
\begin{figure}[t!]
\centering
\begin{minipage}{.42\textwidth}
\centering
\begin{tikzpicture}[scale=0.35,font=\small,axis/.style={very thick, ->, >=stealth'}]
%x-axis
\draw [axis,thick,->] (0,-1)--(0,12);
\node [right] at (11,-1) {$\frac{c_1}{b_1}$};
%y-axis
\draw [axis,thick,->] (-1,0)--(12,0);
\node [above] at (-1,11) {$\frac{c_2}{b_2}$};
%Mark points 1,2,3,4,5 on the axes
\draw [thin,-] (-0.25,2) -- (0.25,2);
\node [left] at (-0.25,2) {$1$};
\draw [thin,-] (-0.25,4) -- (0.25,4);
\node [left] at (-0.25,4) {$2$};
\draw [thin,-] (-0.25,6) -- (0.25,6);
\node [left] at (-0.25,6) {$3$};
\draw [thin,-] (-0.25,8) -- (0.25,8);
\node [left] at (-0.25,8) {$4$};
\draw [thin,-] (-0.25,10) -- (0.25,10);
\node [left] at (-0.25,10) {$5$};
\draw [thin,-] (2,-0.25) -- (2,0.25);
\node [below] at (2,-0.25) {$1$};
\draw [thin,-] (4,-0.25) -- (4,0.25);
\node [below] at (4,-0.25) {$2$};
\draw [thin,-] (6,-0.25) -- (6,0.25);
\node [below] at (6,-0.25) {$3$};
\draw [thin,-] (8,-0.25) -- (8,0.25);
\node [below] at (8,-0.25) {$4$};
\draw [thin,-] (10,-0.25) -- (10,0.25);
\node [below] at (10,-0.25) {$5$};
%% region 1 -TBD%
\draw [thick,-] (0,4) to[out=-80,in=175] (2,2);
\draw [thick,-] (4,0) to[out=175,in=-80] (2,2);
%% region 2
\draw [thick,-] (2,2)--(2,12);
\node [right] at (2,9) {$c_1=b_1$};
\draw [thick,-] (2,2)--(12,2);
\node [above] at (7,2) {$c_2=b_2$};
%%region 3
\draw [thick,-] (4,0) to[out=15,in=-175] (12,0.8);
\draw [thick,->] (0.4,6) to (1,6);
\node [right] at (0.8,8) {\rotatebox{90}{\footnotesize asymptotic to $c_1=b_1$}};
\draw [thick,-] (0,4) to[out=75,in=-95] (0.8,12);
\draw [thick,->] (6,0.4) to (6,1);
\node [above] at (8,0.8) {\footnotesize asymptotic to $c_2=b_2$};

\path[fill=gray!50,opacity=.5] (0,4) to[out=-80,in=175] (2,2) to (2,2) to[out=-80,in=175] (4,0) to (0,0) -- (0,4) ;
\end{tikzpicture}
\end{minipage}
\begin{minipage}{.54\textwidth}
\centering
\begin{tabular}{cc}
\subfloat[]{\label{fig:a}\begin{tikzpicture}[scale=0.18,font=\scriptsize,axis/.style={very thick, -}]
%x-axis
\node [rotate=45] at (0,-1) {\tiny$c_1$};
\node [rotate=45] at (-1,0) {\tiny$c_2$};
\draw [axis,thick,-] (0,0)--(12,0);
\node [rotate=45] at (12,-1.2) {\tiny$c_1+b_1$};
%y-axis
\draw [axis,thick,-] (0,0)--(0,12);
\node [rotate=45] at (-1,12) {\tiny$c_2+b_2$};
%x-axis
\draw [axis,thick,-] (0,12)--(12,12);
%y-axis
\draw [axis,thick,-] (12,0)--(12,12);
%x-axis
\draw [axis,thick,-] (0,8)--(4,7);
\node [rotate=45] at (-2.2,7) {\tiny$c_2+p_{a_1}$};
\draw [axis,thick,-] (7,4)--(4,7);
\draw [axis,thick,-] (7,4)--(8,0);
\node [rotate=45] at (7,-2) {\tiny$c_1+p_{a_2}$};
\foreach \Point in {(4,7),(7,4)}{
 \node at \Point {\textbullet};}
\draw [axis,thick,dotted] (4,7)--(4,12);
\draw [axis,thick,dotted] (7,4)--(12,4);
\node [left] at (4.25,6) {$P$};
\node [below] at (6,4.25) {$Q$};
\draw [axis,thick,->] (6,5)--(9,7);
\node [rotate=-45] at (9.5,9.5) {\tiny$z_1+z_2=$};
\node [rotate=-45] at (8.5,8.5) {\tiny$c_1+c_2+p$};
\node at (2.5,3) {$(0,0)$};
\node at (9.75,1.75) {\tiny$(1,a_2)$};
\node at (2,9) {\tiny$(a_1,1)$};
\node at (6.25,7) {\tiny$(1,1)$};
\end{tikzpicture}}&
\subfloat[]{\label{fig:b}\begin{tikzpicture}[scale=0.18,font=\scriptsize,axis/.style={very thick, -}]
%x-axis
\node [rotate=45] at (0,-1) {\tiny$c_1$};
\node [rotate=45] at (-1,0) {\tiny$c_2$};
\draw [axis,thick,-] (0,0)--(12,0);
\node [rotate=45] at (12,-1.2) {\tiny$c_1+b_1$};
%y-axis
\draw [axis,thick,-] (0,0)--(0,12);
\node [rotate=45] at (-1,12) {\tiny$c_2+b_2$};
%x-axis
\draw [axis,thick,-] (0,12)--(12,12);
%y-axis
\draw [axis,thick,-] (12,0)--(12,12);
%x-axis
\draw [axis,thick,-] (0,6.5)--(4,5.5);
\node [rotate=45] at (-2.2,5.5) {\tiny$c_2+p_{a_1}$};
\draw [axis,thick,-] (9.5,0)--(4,5.5);
\node [rotate=45] at (8.5,-1.8) {\tiny$c_1+p$};
\draw [axis,thick,dotted] (4,5.5)--(4,12);
\foreach \Point in {(4,5.5),(4,5.5)}{
 \node at \Point {\textbullet};}
\node [right] at (4,5.5) {$P$};
\node at (2.5,2.5) {$(0,0)$};
\node at (2,8.5) {\tiny$(a_1,1)$};
\node at (9,6) {$(1,1)$};
\end{tikzpicture}}\\
\subfloat[]{\label{fig:f}\begin{tikzpicture}[scale=0.18,font=\scriptsize,axis/.style={very thick, -}]
%x-axis
\node [rotate=45] at (0,-1) {\tiny$c_1$};
\node [rotate=45] at (-1,0) {\tiny$c_2$};
\draw [axis,thick,-] (0,0)--(12,0);
\node [rotate=45] at (12,-1.2) {\tiny$c_1+b_1$};
%y-axis
\draw [axis,thick,-] (0,0)--(0,12);
\node [rotate=45] at (-1,12) {\tiny$c_2+b_2$};
%x-axis
\draw [axis,thick,-] (0,12)--(12,12);
%y-axis
\draw [axis,thick,-] (12,0)--(12,12);
%x-axis
\draw [axis,thick,-] (0,9.5)--(5.5,4);
\node [rotate=45] at (-1.8,8.5) {\tiny$c_2+p$};
\draw [axis,thick,-] (5.5,4)--(6.5,0);
\node [rotate=45] at (5.5,-2) {\tiny$c_1+p_{a_2}$};
\draw [axis,thick,dotted] (5.5,4)--(12,4);
\foreach \Point in {(5.5,4),(5.5,4)}{
 \node at \Point {\textbullet};}
\node [right] at (5.5,5) {$Q$};
\node at (2.5,2.5) {$(0,0)$};
\node at (9,2) {\tiny$(1,a_2)$};
\node at (6,9) {$(1,1)$};
\end{tikzpicture}}&
\subfloat[]{\label{fig:c}\begin{tikzpicture}[scale=0.18,font=\scriptsize,axis/.style={very thick, -}]
%x-axis
\node [rotate=45] at (0,-1) {\tiny$c_1$};
\node [rotate=45] at (-1,0) {\tiny$c_2$};
\draw [axis,thick,-] (0,0)--(12,0);
\node [rotate=45] at (12,-1.2) {\tiny$c_1+b_1$};
%y-axis
\draw [axis,thick,-] (0,0)--(0,12);
\node [rotate=45] at (-1,12) {\tiny$c_2+b_2$};
%x-axis
\draw [axis,thick,-] (0,12)--(12,12);
\node [rotate=45] at (-2,7) {\tiny$c_2+p$};
%y-axis
\draw [axis,thick,-] (12,0)--(12,12);
\node [rotate=45] at (7,-2) {\tiny$c_1+p$};
%x-axis
\draw [axis,thick,-] (0,8)--(8,0);
\node at (2.5,2.5) {$(0,0)$};
\node at (8,8) {$(1,1)$};
\end{tikzpicture}}
\end{tabular}
\end{minipage}
\caption{When $c_1,c_2,b_1, b_2$ fall in the shaded region on the left side, the optimal mechanism has one of the four structures described on the right side. $q_1, q_2$ in each structure indicate the corresponding allocation probabilities.}\label{fig:gen-structure-1}
\end{figure}
\begin{figure}[t!]
\centering
\begin{tikzpicture}[scale=0.42,font=\normalsize,axis/.style={very thick, ->, >=stealth'}]
\draw [axis,thick,-] (1.25,0)--(17.25,0);
\draw [axis,thick,-] (1.25,0)--(1.25,-2);
\draw [axis,thick,-] (1.25,-2)--(17.25,-2);
\draw [axis,thick,-] (17.25,-2)--(17.25,0);
\node at (9.25,-1) {The case when $(c_1,c_2,b_1,b_2)$ satisfy (\ref{eqn:c1-c2-small})};
\draw [axis,thick,->] (5.75,-2)--(5.75,-3.5);
\draw [axis,thick,-] (1.5,-3.5)--(10,-3.5);
\draw [axis,thick,-] (1.5,-3.5)--(1.5,-6.5);
\draw [axis,thick,-] (1.5,-6.5)--(10,-6.5);
\draw [axis,thick,-] (10,-6.5)--(10,-3.5);
\node at (5.75,-4.5) {Does $\exists p_{a_i}\in[r_i,p_{a_i}^*]$};
\node at (5.75,-5.5) {solving (\ref{eqn:fig5a-initial-1}) and (\ref{eqn:fig5a-initial-2})?};
\draw [axis,thick,->] (10,-5)--(12.5,-5);
\node at (11.25,-4) {\bf Yes};
\draw [axis,thick,-] (12.5,-3.5)--(17,-3.5);
\draw [axis,thick,-] (12.5,-3.5)--(12.5,-6.5);
\draw [axis,thick,-] (12.5,-6.5)--(17,-6.5);
\draw [axis,thick,-] (17,-6.5)--(17,-3.5);
\node at (14.75,-4.5) {Fig. \ref{fig:a} is};
\node at (14.75,-5.5) {optimal};
\draw [axis,thick,->] (5.75,-6.5)--(5.75,-8);
\node at (7,-7.25) {\bf No};
\draw [axis,thick,-] (1.5,-8)--(10,-8);
\draw [axis,thick,-] (1.5,-8)--(1.5,-11);
\draw [axis,thick,-] (1.5,-11)--(10,-11);
\draw [axis,thick,-] (10,-11)--(10,-8);
\node at (5.75,-9) {Does $\exists p_{a_1}\in[r_1,p_{a_1}^*]$};
\node at (5.75,-10) {solving (\ref{eqn:fig5b-initial})?};
\draw [axis,thick,->] (10,-9.5)--(12.5,-9.5);
\node at (11.25,-8.5) {\bf Yes};
\draw [axis,thick,-] (12.5,-8)--(17,-8);
\draw [axis,thick,-] (12.5,-8)--(12.5,-11);
\draw [axis,thick,-] (12.5,-11)--(17,-11);
\draw [axis,thick,-] (17,-11)--(17,-8);
\node at (14.75,-9) {Fig. \ref{fig:b} is};
\node at (14.75,-10) {optimal};
\draw [axis,thick,->] (5.75,-11)--(5.75,-12.5);
\node at (7,-11.75) {\bf No};
\draw [axis,thick,-] (1.5,-12.5)--(10,-12.5);
\draw [axis,thick,-] (1.5,-12.5)--(1.5,-15.5);
\draw [axis,thick,-] (1.5,-15.5)--(10,-15.5);
\draw [axis,thick,-] (10,-15.5)--(10,-12.5);
\node at (5.75,-13.5) {Does $\exists p_{a_2}\in[r_2,p_{a_2}^*]$};
\node at (5.75,-14.5) {solving (\ref{eqn:fig5c-initial})?};
\draw [axis,thick,->] (10,-14)--(12.5,-14);
\node at (11.25,-13) {\bf Yes};
\draw [axis,thick,-] (12.5,-12.5)--(17,-12.5);
\draw [axis,thick,-] (12.5,-12.5)--(12.5,-15.5);
\draw [axis,thick,-] (12.5,-15.5)--(17,-15.5);
\draw [axis,thick,-] (17,-15.5)--(17,-12.5);
\node at (14.75,-13.5) {Fig. \ref{fig:f} is};
\node at (14.75,-14.5) {optimal};
\draw [axis,thick,->] (5.75,-15.5)--(5.75,-17);
\node at (7,-16.25) {\bf No};
\draw [axis,thick,-] (4,-17)--(8,-17);
\draw [axis,thick,-] (4,-17)--(4,-20);
\draw [axis,thick,-] (4,-20)--(8,-20);
\draw [axis,thick,-] (8,-20)--(8,-17);
\node at (6,-18) {Fig. \ref{fig:c} is};
\node at (6,-19) {optimal};
\end{tikzpicture}
\caption{Decision tree illustrating the structure of optimal mechanism when $c_1,c_2,b_1,b_2$ fall in the shaded region in Figure \ref{fig:gen-structure-1}, i.e., when $\frac{c_1}{b_1}$ and $\frac{c_2}{b_2}$ are small.}\label{fig:dec-tree-1}
\end{figure}
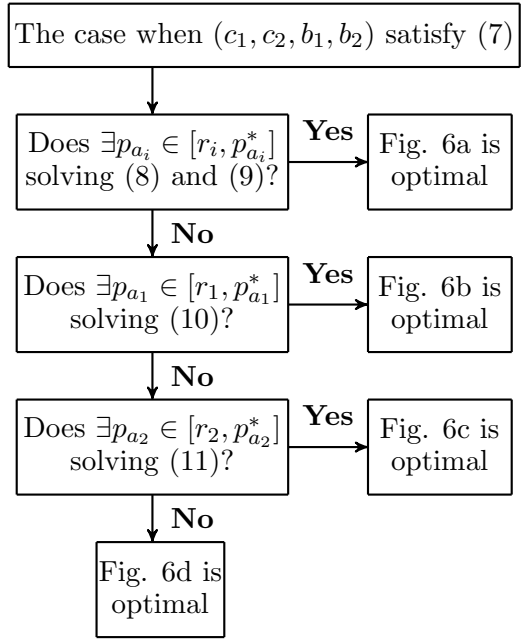

\subsubsection{$\frac{c_1}{b_1}$ is small but $\frac{c_2}{b_2}$ is large}
When the value of $\frac{c_2}{b_2}$ is large but $\frac{c_1}{b_1}$ is small, in the sense that
\begin{equation}\label{eqn:c1-small-c2-large}
  \left\{c_1\leq b_1, c_2\in[2b_2(b_1+c_1)/(b_1+3c_1),2b_2(b_1/(b_1-c_1))^2]\right\},
\end{equation}
the optimal mechanism has one of the two structures depicted in Figure \ref{fig:gen-structure-2}. The exact structures are given as follows.
\begin{enumerate}
\item[(a)] The optimal mechanism is as depicted in Figure \ref{fig:d}, if there exists $(p_{a_1}\in[(2b_2-c_2)_+,b_2],\,a_1\leq1)$ that solves the following two equations simultaneously:
\begin{equation}\label{eqn:fig6-initial}
  a_1=p_{a_1}\left(\frac{\frac{3}{2}p_{a_1}+c_2}{b_1b_2-c_1p_{a_1}}\right);\quad a_1=\frac{p_{a_1}}{b_1-c_1}\sqrt{\frac{2(p_{a_1}+c_2)}{b_2}}.
\end{equation}
\item[(b)] The optimal mechanism is as depicted in Figure \ref{fig:c'}, if conditions in (a) fail.
\end{enumerate}
The decision tree in Figure \ref{fig:dec-tree-2} summarizes the procedure to find the exact structure. We discuss this in detail in Section \ref{SUB:GC2}. The results are formally stated in Theorem \ref{thm:gc4} of that section.

\begin{figure}[h!]
\centering
\begin{minipage}{.42\textwidth}
\centering
\begin{tikzpicture}[scale=0.35,font=\small,axis/.style={very thick, ->, >=stealth'}]
%x-axis
\draw [axis,thick,->] (0,-1)--(0,12);
\node [right] at (11,-1) {$\frac{c_1}{b_1}$};
%y-axis
\draw [axis,thick,->] (-1,0)--(12,0);
\node [above] at (-1,11) {$\frac{c_2}{b_2}$};
%Mark points 1,2,3,4,5 on the axes
\draw [thin,-] (-0.25,2) -- (0.25,2);
\node [left] at (-0.25,2) {$1$};
\draw [thin,-] (-0.25,4) -- (0.25,4);
\node [left] at (-0.25,4) {$2$};
\draw [thin,-] (-0.25,6) -- (0.25,6);
\node [left] at (-0.25,6) {$3$};
\draw [thin,-] (-0.25,8) -- (0.25,8);
\node [left] at (-0.25,8) {$4$};
\draw [thin,-] (-0.25,10) -- (0.25,10);
\node [left] at (-0.25,10) {$5$};
\draw [thin,-] (2,-0.25) -- (2,0.25);
\node [below] at (2,-0.25) {$1$};
\draw [thin,-] (4,-0.25) -- (4,0.25);
\node [below] at (4,-0.25) {$2$};
\draw [thin,-] (6,-0.25) -- (6,0.25);
\node [below] at (6,-0.25) {$3$};
\draw [thin,-] (8,-0.25) -- (8,0.25);
\node [below] at (8,-0.25) {$4$};
\draw [thin,-] (10,-0.25) -- (10,0.25);
\node [below] at (10,-0.25) {$5$};
%% region 1 -TBD%
\draw [thick,-] (0,4) to[out=-80,in=175] (2,2);
\draw [thick,-] (4,0) to[out=175,in=-80] (2,2);
%% region 2
\draw [thick,-] (2,2)--(2,12);
\node [right] at (2,9) {$c_1=b_1$};
\draw [thick,-] (2,2)--(12,2);
\node [above] at (7,2) {$c_2=b_2$};
%%region 3
\draw [thick,-] (4,0) to[out=15,in=-175] (12,0.8);
\draw [thick,->] (0.4,6) to (1,6);
\node [right] at (0.8,8) {\rotatebox{90}{\footnotesize asymptotic to $c_1=b_1$}};
\draw [thick,-] (0,4) to[out=75,in=-95] (0.8,12);
\draw [thick,->] (6,0.4) to (6,1);
\node [above] at (8,0.8) {\footnotesize asymptotic to $c_2=b_2$};

\path[fill=gray!50,opacity=.5] (0,4) to[out=-80,in=175] (2,2) to (2,2) to (2,12) to (0.8,12) to[out=-95,in=75] (0,4);
\end{tikzpicture}
\end{minipage}
\begin{minipage}{.54\textwidth}
\centering
\begin{tabular}{cc}
\subfloat[]{\label{fig:d}\begin{tikzpicture}[scale=0.18,font=\scriptsize,axis/.style={very thick, -}]
%x-axis
\node [rotate=45] at (0,-1) {\tiny$c_1$};
\node [rotate=45] at (-1,0) {\tiny$c_2$};
\draw [axis,thick,-] (0,0)--(12,0);
\node [rotate=45] at (12,-1.2) {\tiny$c_1+b_1$};
%y-axis
\draw [axis,thick,-] (0,0)--(0,12);
\node [rotate=45] at (-1,12) {\tiny$c_2+b_2$};
%x-axis
\draw [axis,thick,-] (0,12)--(12,12);
\node [rotate=45] at (-2,3) {\tiny$c_2+p_{a_1}$};
%y-axis
\draw [axis,thick,-] (12,0)--(12,12);
\node [rotate=45] at (3,-2) {\tiny$c_1+\frac{p_{a_1}}{a_1}$};
%x-axis
\draw [axis,thick,-] (0,4)--(5,0);
\draw [axis,thick,-] (6,0)--(6,12);
\node [rotate=45] at (6,-2) {\tiny$\frac{c_1+b_1}{2}$};
\node at (1.5,1) {\tiny$(0,0)$};
\node at (3,6) {$(a_1,1)$};
\node at (9,6) {$(1,1)$};
\end{tikzpicture}}&
\subfloat[]{\label{fig:c'}\begin{tikzpicture}[scale=0.18,font=\scriptsize,axis/.style={very thick, -}]
%x-axis
\node [rotate=45] at (0,-1) {\tiny$c_1$};
\node [rotate=45] at (-1,0) {\tiny$c_2$};
\draw [axis,thick,-] (0,0)--(12,0);
\node [rotate=45] at (12,-1.2) {\tiny$c_1+b_1$};
%y-axis
\draw [axis,thick,-] (0,0)--(0,12);
\node [rotate=45] at (-1,12) {\tiny$c_2+b_2$};
%x-axis
\draw [axis,thick,-] (0,12)--(12,12);
\node [rotate=45] at (-2,7) {\tiny$c_2+p$};
%y-axis
\draw [axis,thick,-] (12,0)--(12,12);
\node [rotate=45] at (7,-2) {\tiny$c_1+p$};
%x-axis
\draw [axis,thick,-] (0,8)--(8,0);
\node at (2.5,2.5) {$(0,0)$};
\node at (8,8) {$(1,1)$};
\end{tikzpicture}}
\end{tabular}
\end{minipage}
\caption{When $c_1,c_2,b_1,b_2$ fall in the shaded region on the left side, the optimal mechanism has one of the two structures described on the right side.}\label{fig:gen-structure-2}
\end{figure}

\begin{figure}[h!]
\centering
\begin{tikzpicture}[scale=0.4,font=\normalsize,axis/.style={very thick, ->, >=stealth'}]
\draw [axis,thick,-] (-0.5,0)--(16.5,0);
\draw [axis,thick,-] (-0.5,0)--(-0.5,-2);
\draw [axis,thick,-] (-0.5,-2)--(16.5,-2);
\draw [axis,thick,-] (16.5,-2)--(16.5,0);
\node at (8,-1) {The case when $(c_1,c_2,b_1,b_2)$ satisfy (\ref{eqn:c1-small-c2-large})};
\draw [axis,thick,->] (8,-2)--(8,-3.5);
\draw [axis,thick,-] (2,-3.5)--(14,-3.5);
\draw [axis,thick,-] (2,-3.5)--(2,-6.5);
\draw [axis,thick,-] (2,-6.5)--(14,-6.5);
\draw [axis,thick,-] (14,-6.5)--(14,-3.5);
\node at (8,-4.5) {Does $\exists p_{a_1}\in[(2b_2-c_2)_+,b_2]$};
\node at (8,-5.5) {and $a_1 \leq 1$ solving (\ref{eqn:fig6-initial})?};
\draw [axis,thick,->] (7,-6.5)--(5,-8);
\node at (4.5,-7.25) {\bf Yes};
\draw [axis,thick,-] (3,-8)--(7,-8);
\draw [axis,thick,-] (3,-8)--(3,-11);
\draw [axis,thick,-] (3,-11)--(7,-11);
\draw [axis,thick,-] (7,-11)--(7,-8);
\node at (5,-9) {Fig. \ref{fig:d} is};
\node at (5,-10) {optimal};
\draw [axis,thick,->] (9,-6.5)--(11,-8);
\node at (11.5,-7.25) {\bf No};
\draw [axis,thick,-] (9,-8)--(13,-8);
\draw [axis,thick,-] (9,-8)--(9,-11);
\draw [axis,thick,-] (9,-11)--(13,-11);
\draw [axis,thick,-] (13,-11)--(13,-8);
\node at (11,-9) {Fig. \ref{fig:c'} is};
\node at (11,-10) {optimal};
\end{tikzpicture}
\caption{Decision tree illustrating the structure of optimal mechanism when $c_1,c_2,b_1, b_2$ fall in the shaded region in Figure
\ref{fig:gen-structure-2}, i.e., when $\frac{c_1}{b_1}$ is small and $\frac{c_2}{b_2}$ is large.}\label{fig:dec-tree-2}
\end{figure}
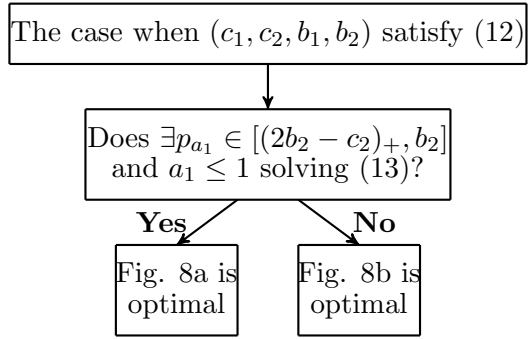

\subsubsection{$\frac{c_1}{b_1}$ is small but $\frac{c_2}{b_2}$ is very large}
When the value of $\frac{c_2}{b_2}$ is very large but $\frac{c_1}{b_1}$ is small, in the sense that
$$
  \left\{c_1\leq b_1, c_2\geq 2b_2(b_1/(b_1-c_1))^2\right\},
$$
the optimal mechanism is as in Figure \ref{fig:e}. We discuss this in detail in Section \ref{SUB:GC2}. The result is formally stated in Theorem \ref{thm:gc5} of that section.

\begin{figure}[h!]
\centering
\begin{minipage}{.48\textwidth}
\centering
\begin{tikzpicture}[scale=0.4,font=\small,axis/.style={very thick, ->, >=stealth'}]
%x-axis
\draw [axis,thick,->] (0,-1)--(0,12);
\node [right] at (11,-1) {$\frac{c_1}{b_1}$};
%y-axis
\draw [axis,thick,->] (-1,0)--(12,0);
\node [above] at (-1,11) {$\frac{c_2}{b_2}$};
%Mark points 1,2,3,4,5 on the axes
\draw [thin,-] (-0.25,2) -- (0.25,2);
\node [left] at (-0.25,2) {$1$};
\draw [thin,-] (-0.25,4) -- (0.25,4);
\node [left] at (-0.25,4) {$2$};
\draw [thin,-] (-0.25,6) -- (0.25,6);
\node [left] at (-0.25,6) {$3$};
\draw [thin,-] (-0.25,8) -- (0.25,8);
\node [left] at (-0.25,8) {$4$};
\draw [thin,-] (-0.25,10) -- (0.25,10);
\node [left] at (-0.25,10) {$5$};
\draw [thin,-] (2,-0.25) -- (2,0.25);
\node [below] at (2,-0.25) {$1$};
\draw [thin,-] (4,-0.25) -- (4,0.25);
\node [below] at (4,-0.25) {$2$};
\draw [thin,-] (6,-0.25) -- (6,0.25);
\node [below] at (6,-0.25) {$3$};
\draw [thin,-] (8,-0.25) -- (8,0.25);
\node [below] at (8,-0.25) {$4$};
\draw [thin,-] (10,-0.25) -- (10,0.25);
\node [below] at (10,-0.25) {$5$};
%% region 1 -TBD%
\draw [thick,-] (0,4) to[out=-80,in=175] (2,2);
\draw [thick,-] (4,0) to[out=175,in=-80] (2,2);
%% region 2
\draw [thick,-] (2,2)--(2,12);
\node [right] at (2,9) {$c_1=b_1$};
\draw [thick,-] (2,2)--(12,2);
\node [above] at (7,2) {$c_2=b_2$};
%%region 3
\draw [thick,-] (4,0) to[out=15,in=-175] (12,0.8);
\draw [thick,->] (0.4,6) to (1,6);
\node [right] at (0.8,8) {\rotatebox{90}{\footnotesize asymptotic to $c_1=b_1$}};
\draw [thick,-] (0,4) to[out=75,in=-95] (0.8,12);
\draw [thick,->] (6,0.4) to (6,1);
\node [above] at (8,0.8) {\footnotesize asymptotic to $c_2=b_2$};
\path[fill=gray!50,opacity=.5] (0,4) to[out=75,in=-95] (0.8,12)--(0,12)--(0,4);
\end{tikzpicture}
\end{minipage}
\begin{minipage}{.48\textwidth}
\centering
\begin{tikzpicture}[scale=0.3,font=\footnotesize,axis/.style={very thick, -}]
%x-axis
\node [rotate=45] at (0,-1) {\scriptsize$c_1$};
\node [rotate=45] at (-1,0) {\scriptsize$c_2$};
\draw [axis,thick,-] (0,0)--(12,0);
\node [rotate=45] at (12,-1.2) {\scriptsize$c_1+b_1$};
%y-axis
\draw [axis,thick,-] (0,0)--(0,12);
\node [rotate=45] at (-1,12) {\scriptsize$c_2+b_2$};
%x-axis
\draw [axis,thick,-] (0,12)--(12,12);
%y-axis
\draw [axis,thick,-] (12,0)--(12,12);
\node [rotate=45] at (5,-1) {\scriptsize$\frac{c_1+b_1}{2}$};
%x-axis
\draw [axis,thick,-] (6,0)--(6,12);
\node at (3,6) {$(0,1)$};
\node at (9,6) {$(1,1)$};
\end{tikzpicture}
\end{minipage}
\caption{When $c_1,c_2,b_1,b_2$ fall in the shaded region on the left side, the optimal mechanism has the structure described on the right side.}\label{fig:e}
\end{figure}

\subsubsection{$\frac{c_2}{b_2}$ is small but $\frac{c_1}{b_1}$ is large}
When the value of $\frac{c_1}{b_1}$ is large but $\frac{c_2}{b_2}$ is small, in the sense that
$$
  \left\{c_2\leq b_2, c_1\in[2b_1(b_2+c_2)/(b_2+3c_2),2b_1(b_2/(b_2-c_2))^2]\right\},
$$
the optimal mechanism has the structure as depicted either in Figure \ref{fig:g}, or in Figure \ref{fig:c''}. Notice that this is the symmetric counterpart of the case where $\frac{c_2}{b_2}$ is large but $\frac{c_1}{b_1}$ is small. The exact structures are symmetric counterparts of Figures \ref{fig:d} and \ref{fig:c'}. See Theorem \ref{thm:gc4-sym} of Section \ref{SUB:GC3}.

\subsubsection{$\frac{c_2}{b_2}$ is small but $\frac{c_1}{b_1}$ is very large}
When the value of $\frac{c_1}{b_1}$ is very large but $\frac{c_2}{b_2}$ is small, in the sense that
$$
  \left\{c_2\leq b_2, c_1\geq 2b_1(b_2/(b_2-c_2))^2\right\},
$$
the optimal mechanism is as in Figure \ref{fig:h}.  See Theorem \ref{thm:gc5-sym} of Section \ref{SUB:GC3}.

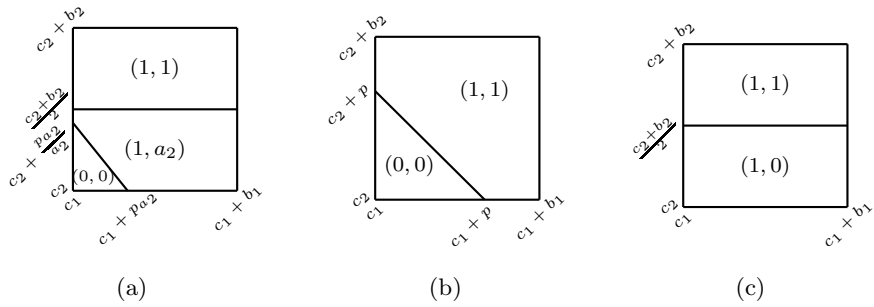
\begin{figure}[h!]
\centering
\begin{tabular}{ccc}
\subfloat[]{\label{fig:g}\begin{tikzpicture}[scale=0.18,font=\scriptsize,axis/.style={very thick, -}]
%x-axis
\node [rotate=45] at (0,-1) {\tiny$c_1$};
\node [rotate=45] at (-1,0) {\tiny$c_2$};
\draw [axis,thick,-] (0,0)--(12,0);
\node [rotate=45] at (12,-1.2) {\tiny$c_1+b_1$};
%y-axis
\draw [axis,thick,-] (0,0)--(0,12);
\node [rotate=45] at (-1,12) {\tiny$c_2+b_2$};
%x-axis
\draw [axis,thick,-] (0,12)--(12,12);
\node [rotate=45] at (-2.5,2.5) {\tiny$c_2+\frac{p_{a_2}}{a_2}$};
\node [rotate=45] at (-1.8,6) {\tiny$\frac{c_2+b_2}{2}$};
%y-axis
\draw [axis,thick,-] (12,0)--(12,12);
\node [rotate=45] at (4,-2) {\tiny$c_1+p_{a_2}$};
%x-axis
\draw [axis,thick,-] (0,5)--(4,0);
\draw [axis,thick,-] (0,6)--(12,6);
\node at (1.5,1) {\tiny$(0,0)$};
\node at (6,3) {$(1,a_2)$};
\node at (6,9) {$(1,1)$};
\end{tikzpicture}}&
\subfloat[]{\label{fig:c''}\begin{tikzpicture}[scale=0.18,font=\scriptsize,axis/.style={very thick, -}]
%x-axis
\node [rotate=45] at (0,-1) {\tiny$c_1$};
\node [rotate=45] at (-1,0) {\tiny$c_2$};
\draw [axis,thick,-] (0,0)--(12,0);
\node [rotate=45] at (12,-1.2) {\tiny$c_1+b_1$};
%y-axis
\draw [axis,thick,-] (0,0)--(0,12);
\node [rotate=45] at (-1,12) {\tiny$c_2+b_2$};
%x-axis
\draw [axis,thick,-] (0,12)--(12,12);
\node [rotate=45] at (-2,7) {\tiny$c_2+p$};
%y-axis
\draw [axis,thick,-] (12,0)--(12,12);
\node [rotate=45] at (7,-2) {\tiny$c_1+p$};
%x-axis
\draw [axis,thick,-] (0,8)--(8,0);
\node at (2.5,2.5) {$(0,0)$};
\node at (8,8) {$(1,1)$};
\end{tikzpicture}}&
\subfloat[]{\label{fig:h}\begin{tikzpicture}[scale=0.18,font=\scriptsize,axis/.style={very thick, -}]
%x-axis
\node [rotate=45] at (0,-1) {\tiny$c_1$};
\node [rotate=45] at (-1,0) {\tiny$c_2$};
\draw [axis,thick,-] (0,0)--(12,0);
\node [rotate=45] at (12,-1.2) {\tiny$c_1+b_1$};
%y-axis
\draw [axis,thick,-] (0,0)--(0,12);
\node [rotate=45] at (-1,12) {\tiny$c_2+b_2$};
%x-axis
\draw [axis,thick,-] (0,12)--(12,12);
\node [rotate=45] at (-2,5) {\tiny$\frac{c_2+b_2}{2}$};
%y-axis
\draw [axis,thick,-] (12,0)--(12,12);
%x-axis
\draw [axis,thick,-] (0,6)--(12,6);
\node at (6,3) {$(1,0)$};
\node at (6,9) {$(1,1)$};
\end{tikzpicture}}
\end{tabular}
\caption{The structures of the optimal mechanism when $\frac{c_2}{b_2}$ is small and  $\frac{c_1}{b_1}$ is large.}\label{fig:gen-structure-3}
\end{figure}

\subsubsection{$\frac{c_1}{b_1}$ and $\frac{c_2}{b_2}$ large}
In the remaining cases when $\frac{c_1}{b_1}$ and $\frac{c_2}{b_2}$ are large, in the sense that
$$
  \left\{c_1\geq b_1, c_2\geq b_2\right\},
$$
the optimal mechanism is given by pure bundling as in Figure \ref{fig:gen-structure-4}.

\begin{figure}[h!]
\centering
\begin{minipage}{.48\textwidth}
\centering
\begin{tikzpicture}[scale=0.4,font=\small,axis/.style={very thick, ->, >=stealth'}]
%x-axis
\draw [axis,thick,->] (0,-1)--(0,12);
\node [right] at (11,-1) {$\frac{c_1}{b_1}$};
%y-axis
\draw [axis,thick,->] (-1,0)--(12,0);
\node [above] at (-1,11) {$\frac{c_2}{b_2}$};
%Mark points 1,2,3,4,5 on the axes
\draw [thin,-] (-0.25,2) -- (0.25,2);
\node [left] at (-0.25,2) {$1$};
\draw [thin,-] (-0.25,4) -- (0.25,4);
\node [left] at (-0.25,4) {$2$};
\draw [thin,-] (-0.25,6) -- (0.25,6);
\node [left] at (-0.25,6) {$3$};
\draw [thin,-] (-0.25,8) -- (0.25,8);
\node [left] at (-0.25,8) {$4$};
\draw [thin,-] (-0.25,10) -- (0.25,10);
\node [left] at (-0.25,10) {$5$};
\draw [thin,-] (2,-0.25) -- (2,0.25);
\node [below] at (2,-0.25) {$1$};
\draw [thin,-] (4,-0.25) -- (4,0.25);
\node [below] at (4,-0.25) {$2$};
\draw [thin,-] (6,-0.25) -- (6,0.25);
\node [below] at (6,-0.25) {$3$};
\draw [thin,-] (8,-0.25) -- (8,0.25);
\node [below] at (8,-0.25) {$4$};
\draw [thin,-] (10,-0.25) -- (10,0.25);
\node [below] at (10,-0.25) {$5$};
%% region 1 -TBD%
\draw [thick,-] (0,4) to[out=-80,in=175] (2,2);
\draw [thick,-] (4,0) to[out=175,in=-80] (2,2);
%% region 2
\draw [thick,-] (2,2)--(2,12);
\node [right] at (2,9) {$c_1=b_1$};
\draw [thick,-] (2,2)--(12,2);
\node [above] at (7,2) {$c_2=b_2$};
%%region 3
\draw [thick,-] (4,0) to[out=15,in=-175] (12,0.8);
\draw [thick,->] (0.4,6) to (1,6);
\node [right] at (0.8,8) {\rotatebox{90}{\footnotesize asymptotic to $c_1=b_1$}};
\draw [thick,-] (0,4) to[out=75,in=-95] (0.8,12);
\draw [thick,->] (6,0.4) to (6,1);
\node [above] at (8,0.8) {\footnotesize asymptotic to $c_2=b_2$};
\path[fill=gray!50,opacity=.5] (2,12)--(2,2)--(12,2)--(12,12)--(2,12);
\end{tikzpicture}
\end{minipage}
\begin{minipage}{.48\textwidth}
\centering
\begin{tikzpicture}[scale=0.3,font=\footnotesize,axis/.style={very thick, -}]
%x-axis
\node [rotate=45] at (0,-1) {\scriptsize$c_1$};
\node [rotate=45] at (-1,0) {\scriptsize$c_2$};
\draw [axis,thick,-] (0,0)--(12,0);
\node [rotate=45] at (12,-1.2) {\scriptsize$c_1+b_1$};
%y-axis
\draw [axis,thick,-] (0,0)--(0,12);
\node [rotate=45] at (-1,12) {\scriptsize$c_2+b_2$};
%x-axis
\draw [axis,thick,-] (0,12)--(12,12);
\node [rotate=45] at (-1.25,7) {\scriptsize$c_2+p$};
%y-axis
\draw [axis,thick,-] (12,0)--(12,12);
\node [rotate=45] at (7,-1.25) {\scriptsize$c_1+p$};
%x-axis
\draw [axis,thick,-] (0,8)--(8,0);
\node at (2.5,2.5) {$(0,0)$};
\node at (8,8) {$(1,1)$};
\end{tikzpicture}
\end{minipage}
\caption{When $c_1,c_2,b_1,b_2$ fall in the shaded region on the left side, the optimal mechanism has the structure described on the right side.}\label{fig:gen-structure-4}
\end{figure}

Figure \ref{fig:partition_uni} provides a self-explanatory `phase diagram' of optimal mechanisms as a function of $\frac{c_1}{b_1}$ and $\frac{c_2}{b_2}$. Interesting cases occur when either $c_1\leq b_1$ or $c_2\leq b_2$.  See Theorem \ref{thm:figc} of Section \ref{SUB:GC4}.

\subsection{Discussion}
\label{sub:discussion}

{\em Further refinements}. Observe that some of the regions in the phase diagram are not completely partitioned to map each figure to its corresponding region of optimality. For example, consider the region when $(c_1,c_2,b_1,b_2)$ satisfies (\ref{eqn:c1-c2-small}). The phase diagram does not indicate the sub-region where \ref{fig:a} is optimal. This is because the explicit map is not only dependent on the ratios $c_1/b_1$ and $c_2/b_2$, but also on the individual values of $(c_1,c_2,b_1,b_2)$.

We illustrate this dependence in Figure \ref{fig:full-split}. We consider three examples of $(b_1,b_2)$, numerically compute the optimal mechanisms, and provide a complete partition of the $\frac{c_1}{b_1},\frac{c_2}{b_2}$ space. Observe that in each example, the region of optimality for the structures is different. However, the mechanisms are scale-invariant, and thus we can fix $b_2=1$, without loss of generality. These examples also prove the `existence result', i.e., for each of these eight structures, there exists some $(c_1,c_2,b_1,b_2)$ for which that structure is optimal.

\begin{figure}[h!]
\centering
\begin{tabular}{cc}
\subfloat[]{\label{fig:b1-3-by-5}\includegraphics[height=6cm, width=6cm]{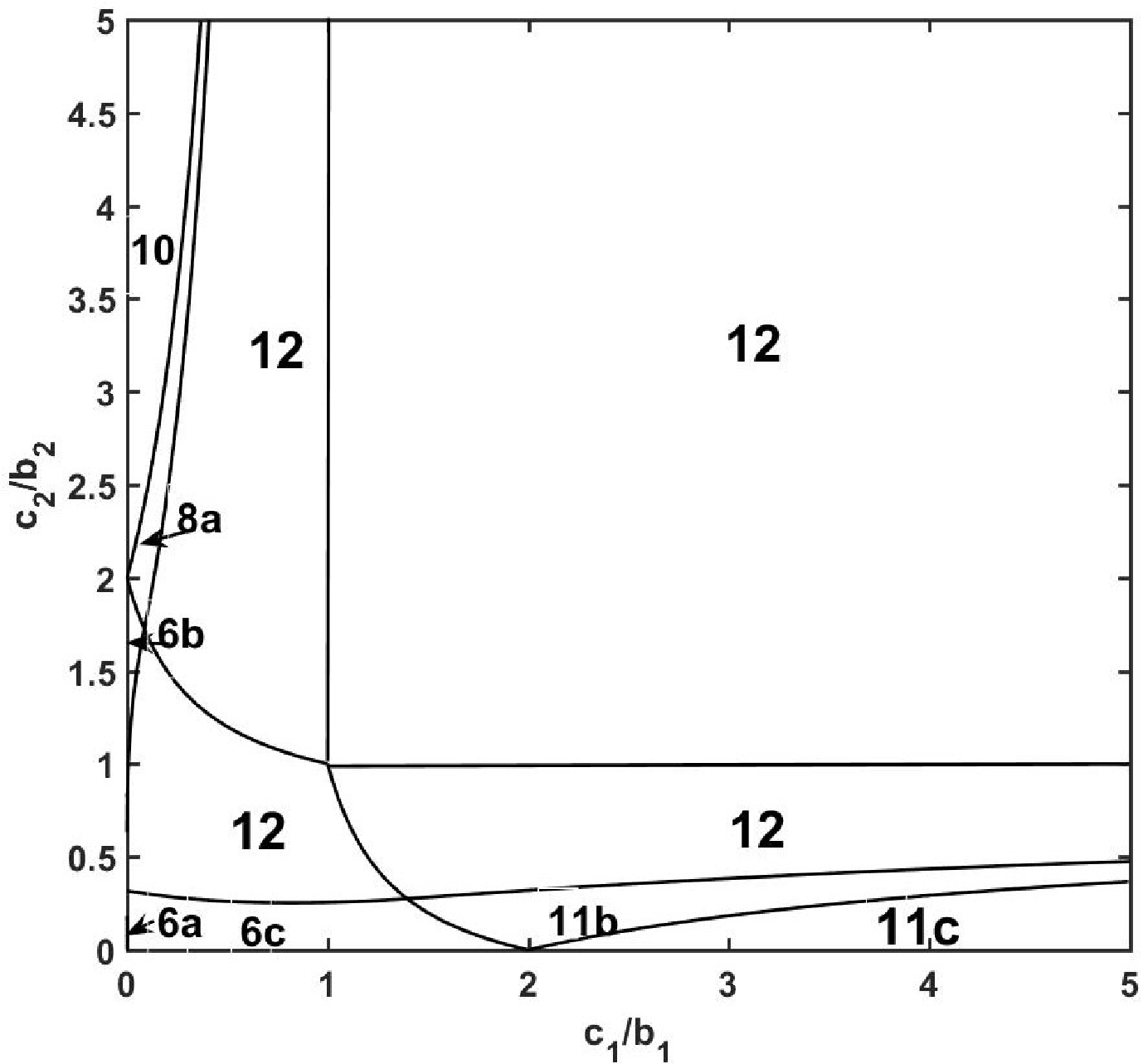}}&
\subfloat[]{\label{fig:b1-3-by-2}\includegraphics[height=6cm, width=6cm]{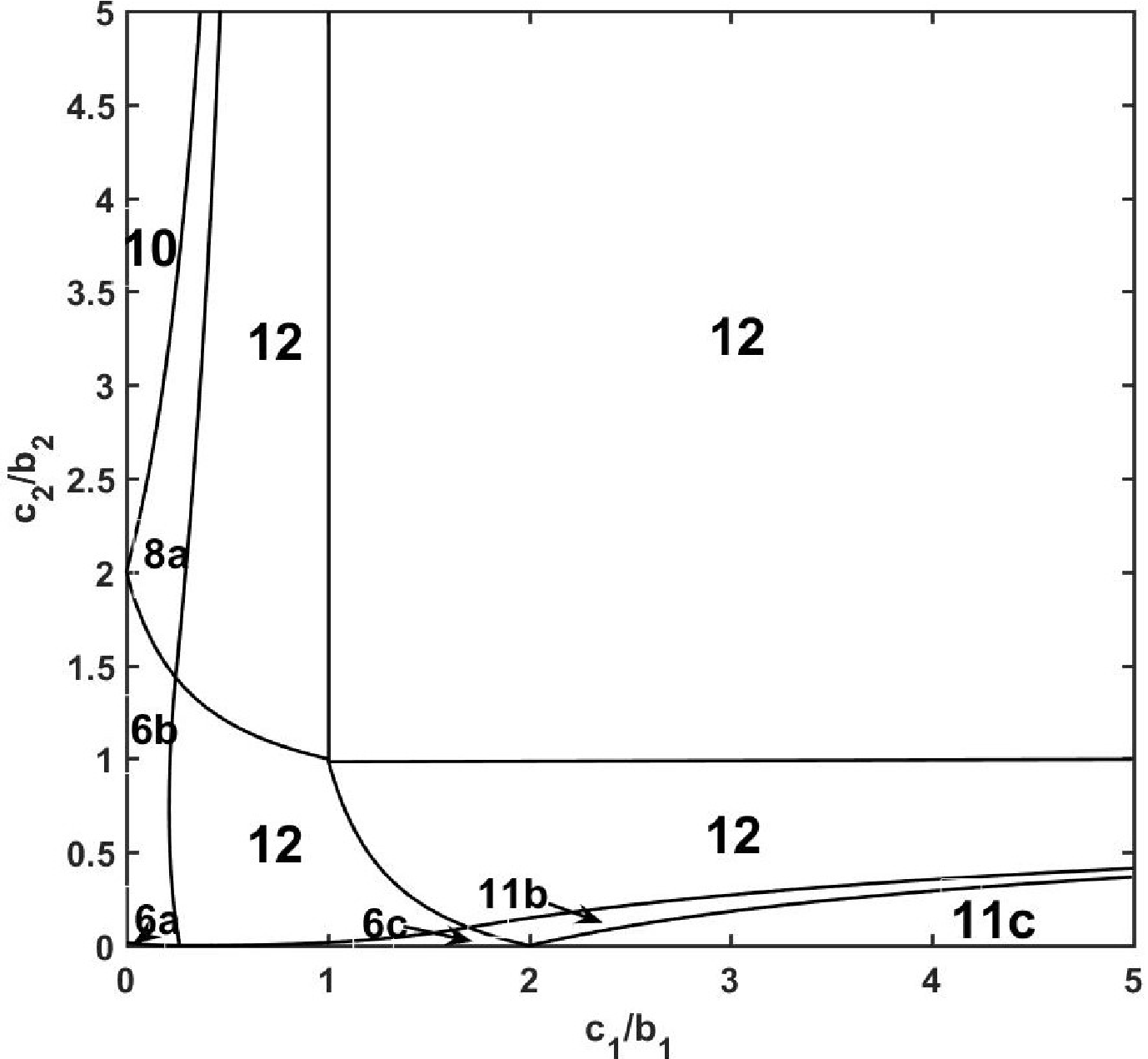}}\\
\multicolumn{2}{c}{\subfloat[]{\label{fig:b1-1}\includegraphics[height=6cm, width=6cm]{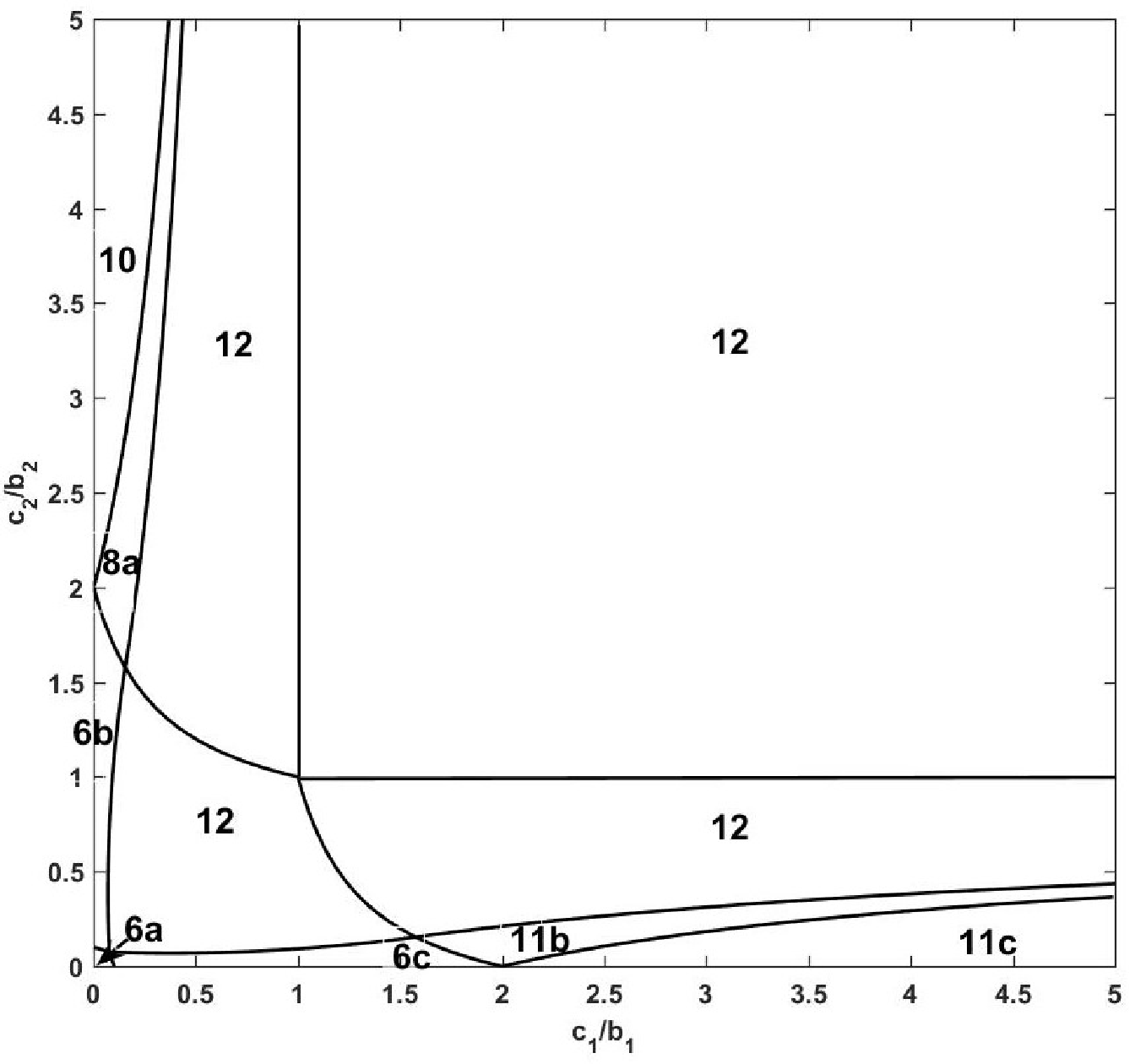}}}
\end{tabular}
\caption{The optimal mechanisms when (a) $b_1=0.6; b_2=1$ (b) $b_1=1.5; b_2=1$ (c) $b_1=1; 
b_2=1$.}\label{fig:full-split}
\end{figure}

{\em Four menu items}. Recall from footnote 2 that the power rate $\Delta:D\rightarrow\mathbb{R}$ is 
$$\Delta(z_1,z_2):=-3-\frac{z_1f_1'(z_1)}{f_1(z_1)}-\frac{z_2f_2'(z_2)}{f_2(z_2)}.$$ 
\citet{WT14} proved that the optimal mechanism for a distribution with $\Delta$ equaling a negative constant, has at most four menu items. The uniform distribution has power rate $-3$ for all $z\in D$, and it is easy to verify that in each of the structures in Figure \ref{fig:gen-structure}, the number of menu items is at most four, in agreement with the result in \cite{WT14}.

{\em Revenue monotonicity}. Not all structures with four menu items are possible. We now elaborate on this. \citet{HR15} showed that, in general, the optimal mechanism may have $z\geq z'\nRightarrow t(z)\geq t(z')$, i.e., the optimal revenue could decrease despite an increase in the buyer's valuations for both the items. Consider the hypothetical structures depicted in Figure \ref{fig:anomaly}. Each has four menu items in accordance with \cite{WT14}. But there exist $z$ and $z'$ with $z \geq z'$ and yet $t(z)<t(z')$, if the prices for the allocations $(a_1,1)$ and $(1,a_2)$ are different. From our results in this paper, in the case of uniformly distributed valuations, the optimal mechanisms are as in Figure \ref{fig:gen-structure} and satisfy $z\geq z'\Rightarrow q(z)\geq q(z')$ and thus $z\geq z'\Rightarrow t(z)\geq t(z')$ which is revenue monotonicity. In particular, the hypothetical structures in Figure \ref{fig:anomaly} cannot occur.

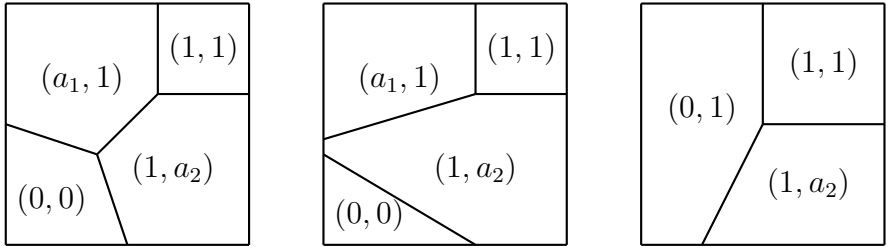
\begin{figure}[h!]
\centering
\begin{minipage}{.32\textwidth}
\centering
\begin{tikzpicture}[scale=0.4,font=\small,axis/.style={very thick, ->, >=stealth'}]
\draw [axis,thick,-] (1,1)--(9,1);
\draw [axis,thick,-] (9,1)--(9,9);
\draw [axis,thick,-] (9,9)--(1,9);
\draw [axis,thick,-] (1,9)--(1,1);
\draw [axis,thick,-] (1,5)--(4,4);
\draw [axis,thick,-] (4,4)--(5,1);
\draw [axis,thick,-] (4,4)--(6,6);
\draw [axis,thick,-] (6,6)--(6,9);
\draw [axis,thick,-] (6,6)--(9,6);
\node at (2.5,2.5) {\large$(0,0)$};
\node at (6.5,3.5) {\large$(1,a_2)$};
\node at (3.5,6.5) {\large$(a_1,1)$};
\node at (7.5,7.5) {\large$(1,1)$};
\end{tikzpicture}
\end{minipage}
\begin{minipage}{.32\textwidth}
\centering
\begin{tikzpicture}[scale=0.4,font=\small,axis/.style={very thick, ->, >=stealth'}]
\draw [axis,thick,-] (1,1)--(9,1);
\draw [axis,thick,-] (9,1)--(9,9);
\draw [axis,thick,-] (9,9)--(1,9);
\draw [axis,thick,-] (1,9)--(1,1);
\draw [axis,thick,-] (1,4)--(6,1);
\draw [axis,thick,-] (1,4.5)--(6,6);
\draw [axis,thick,-] (6,6)--(6,9);
\draw [axis,thick,-] (6,6)--(9,6);
\node at (2.5,2) {\large$(0,0)$};
\node at (6,3.5) {\large$(1,a_2)$};
\node at (3.5,6.5) {\large$(a_1,1)$};
\node at (7.5,7.5) {\large$(1,1)$};
\end{tikzpicture}
\end{minipage}
\begin{minipage}{.32\textwidth}
\centering
\begin{tikzpicture}[scale=0.4,font=\small,axis/.style={very thick, ->, >=stealth'}]
\draw [axis,thick,-] (1,1)--(9,1);
\draw [axis,thick,-] (9,1)--(9,9);
\draw [axis,thick,-] (9,9)--(1,9);
\draw [axis,thick,-] (1,9)--(1,1);
\draw [axis,thick,-] (3,1)--(5,5);
\draw [axis,thick,-] (5,5)--(5,9);
\draw [axis,thick,-] (5,5)--(9,5);
\node at (6.5,3) {\large$(1,a_2)$};
\node at (3,5.5) {\large$(0,1)$};
\node at (7,7) {\large$(1,1)$};
\end{tikzpicture}
\end{minipage}
\caption{An illustration of the structures having at most four menu items and not satisfying the condition $z\geq z'\Rightarrow t(z)\geq t(z')$.}\label{fig:anomaly}
\end{figure}

{\em No exclusion region}. When $\frac{c_1}{b_1}$ is small and $\frac{c_2}{b_2}$ is very large, in the sense that $\{c_1\leq b_1, c_2\geq 2b_2(b_1/(b_1-c_1))^2\}$, we show (in Theorem \ref{thm:gc5}) that the optimal mechanism is to sell according to the menu depicted in Figure \ref{fig:e}, i.e., to sell the second item with probability $1$ for the least valuation $c_2$, and sell item $1$ at a reserve price as indicated by Myerson's revenue maximizing mechanism. Similar is the case when $\frac{c_2}{b_2}$ is small and $\frac{c_1}{b_1}$ is very large.

This is interesting because it shows the existence of an optimal multi-dimensional mechanism without an exclusion region. An intuitive explanation of why we do not have an exclusion region in Figure \ref{fig:e} is as follows. Consider the case where the seller offers each allocation with a small increase in price, say $\epsilon$. The seller then loses a revenue of $c_2$ from the valuations $\{z:u(z)\leq\epsilon\}$, and gains an extra revenue of $\epsilon$ from the valuations $\{z:u(z)\geq\epsilon\}$. The mechanism will have no exclusion region when the loss dominates the gain. For small values of $\epsilon$, observe that the expected loss in revenue is 
$$c_2\cdot Pr(\{z:u(z)\leq\epsilon\})=c_2\frac{\epsilon(b_1-c_1)/2}{b_1b_2}+((c_1+b_1)/2+c_2)\frac{\epsilon^2}{2}\approx c_2\frac{\epsilon(b_1-c_1)/2}{b_1b_2},$$ 
and that the expected gain in revenue is 
$$\epsilon\cdot Pr(\{z:u(z)\geq\epsilon\})=\epsilon\cdot(1-Pr(\{z:u(z)\leq\epsilon\}))\approx\epsilon.$$ 
The loss dominates the gain when $c_2\geq 2b_2\frac{b_1}{b_1-c_1}$. (The actual threshold will depend on more precise calculations than our order estimates.) Both the loss and the gain are of the order of $\epsilon$, which explains the possibility of loss dominating gain at very high values of $c_2$. Figure \ref{fig:h} has no exclusion region due to a symmetric reasoning.

{\em Deterministic sale}. When both $\frac{c_1}{b_1}$ and $\frac{c_2}{b_2}$ are large, $\{c_1\geq b_1, c_2\geq b_2\}$, the optimal mechanism is to bundle the two items and sell the bundle at the reserve price. The reserve price $c_1+c_2+p$ is such that $p=\left(\sqrt{(c_1+c_2)^2+6b_1b_2}-c_1-c_2\right)/3$. So as $c_1+c_2\rightarrow\infty$, the reserve price is $c_1+c_2+O(b_1b_2/(c_1+c_2))$. When $c_1<b_1$, the mechanism is deterministic for any $c_2\geq 2b_2(b_1/(b_1-c_1))^2$, and when $c_1\geq b_1$, it is deterministic for any $c_2\geq b_2$.

\section{The Solution for the Uniform Density on a Rectangle}\label{sec:gencase}
In this section, we formally identify the optimal mechanism when $z\sim\mbox{Unif}[c_1,c_1+b_1]\times[c_2,c_2+b_2]$. We compute the components of $\bar{\mu}$ (i.e., $\mu,\mu_s,\mu_p$), with $f(z)=\frac{1}{b_1b_2}$ for $z\in D=[c_1,c_1+b_1]\times[c_2,c_2+b_2]$, are as follows:
\begin{align*}
 \mbox{(area density) }\, & \mu(z)=-3/(b_1b_2),\quad z \in D\\
 \mbox{(line density) }\, & \mu_s(z)=\sum_{i=1}^2(-c_i\mathbf{1}(z_i=c_i)+(c_i+b_i)\mathbf{1}(z_i=c_i+b_i))/(b_1b_2),\\ &\hspace*{3.2in} z\in\partial D\\
 \mbox{(point measure) }\, & \mu_p(\{c_1,c_2\})=1.
\end{align*}

To construct the canonical partition of $D$, we first need to compute the outer boundary functions $s_i(\cdot)$. When $c_i>0$, we have $\mu_s(c_i,z_{-i})=-c_i/(b_1b_2)<0$ for all $z_{-i}$. A natural extension of the definition of $s_i(z_i)$ in (\ref{eqn:si-zi}) to accommodate the line density $\mu_s$ at $z_i=c_i$ is to write
\begin{multline*}
  s_i(c_i)=\max\left\{z_{-i}\in[c_{-i},c_{-i}+b_{-i}):\right.\\
  \left.\int_{z_{-i}^*}^{c_{-i}+b_{-i}}\mu_s(c_i,z_{-i})\,dz_{-i}+\mu_p(c_i,c_{-i}+b_{-i})=0\right\}.
\end{multline*}
However, this is inadequate since $\mu_s(c_i,z_{-i})<0$ for all $z_{-i}$ and $\mu_p(c_i,c_{-i}+b_{-i})=0$, and the integral that goes to define $s_i(c_i)$ never equals 0. So the method of construction of canonical partition explained in Section \ref{sec:zero}, especially that of computing the outer boundary functions $s_i$ at $z_i=c_i$, cannot be used. The construction method needs an extension. Specifically, when $(c_1,c_2)\ne (0,0)$, we need to shuffle $\bar{\mu}$ so that the function $s_i$ is defined everywhere in $[c_i,c_i+b_i)$. We add to $\bar{\mu}$ a ``shuffling measure'' $\bar{\alpha}$ that convex dominates $0$. We then find $s_i(z_i)$ using $\bar{\mu}+\bar{\alpha}$, instead of $\bar{\mu}$.

We now investigate the possible structure of such a shuffling measure. We know (from \cite{Pav11}) that the optimal mechanism is as in Figure \ref{fig:a} when $z\sim\mbox{Unif}[c,c+1]^2$, $c\leq 0.077$, and also that $a_1=a_2=0$ when $z\sim\mbox{Unif}[0,1]^2$. In other words, the outer boundary functions $s_i(z_i)$ are constant when $c=0$, and linear when $c\in(0,0.077]$. Thus we anticipate that adding a shuffling measure whose density is linear on the top-left and bottom-right boundaries, with point measures at the top-left and bottom-right corners to neutralize the negative line densities at $z_i=c_i$, would yield the solution.

\citet{DDT17} identify such a shuffling measure when they solve for $D = [4,16]\times[4,7]$. In this paper, we identify shuffling measures for all rectangles $D$ on the positive quadrant. Specifically, we do the following.
\begin{enumerate}
\item[(i)] We suitably parametrize the shuffling measure $\bar{\alpha}$, and find the conditions on the parameters for $\bar{\alpha}\succeq_{cvx}0$ to hold.
\begin{itemize}
\item The shuffling measure is defined using six parameters: $(p_{a_1},a_1,m_1)$ and $(p_{a_2},a_2,m_2)$. The parameters $(p_{a_1},a_1,m_1)$ are respectively the functions of the start point, the slope, and the length of the line segment $P'P$ when projected on to the x-axis; see Figure \ref{fig:illust-parameters}. The description of $(p_{a_2},a_2,m_2)$ is similar, and $m_2$ is the length of the line segment $QQ'$ when projected on to the y-axis. See Figure \ref{fig:illust-parameters}.
\item Using the conditions for $\bar{\alpha}\succeq_{cvx}0$, we write $(a_i,m_i)$ in terms of $p_{a_i}$. We then have only two parameters, $p_{a_1}$ and $p_{a_2}$, to evaluate.
\end{itemize}
\item[(ii)] We then identify the canonical partition with respect to $\bar{\mu}+\bar{\alpha}$. The outer boundary functions $s_i(\cdot)$, the critical price $p$, and the critical points $P$ and $Q$, are expressed in terms of $p_{a_1}$ and $p_{a_2}$. We use a slightly modified procedure to identify the canonical partition (as explained below), instead of the procedure described in Section \ref{sec:zero}.
\begin{itemize}
\item We first compute the outer boundary functions $s_i(\cdot)$ using the components of $\bar{\mu}+\bar{\alpha}$, exactly as enumerated in bullet (a), in Section \ref{sec:zero}.
\item We then compute the critical points $P$ and $Q$ before computing the parameter $p$. This is because the critical points can be fixed using the parameters $m_1$ and $m_2$ computed in the previous bullet.
\item We then compute the critical price $p$ which must be such that, in addition to satisfying $\bar{\mu}(Z)=0$, the critical points $P$ and $Q$ in Figure \ref{fig:illust-parameters} must be connected by a $135^\circ$ line.
\item We finally compute $p_{a_1}$ and $p_{a_2}$ that simultaneously solve the polynomials that arise out of the two constraints.
\end{itemize}
\item[(iii)] Then, for each rectangle $D$, we identify the parameters of $\bar{\alpha}$ so that the canonical partition becomes a valid partition. The dual variable is then computed as in the proof of Proposition \ref{prop:known}.
\begin{itemize}
\item The parameters thus computed need not give rise to a valid partition for all $(c_1,c_2,b_1,b_2)\geq 0$. Specifically, the probability of allocation, $a_i$, could be more than $1$ or the critical points $P$ and $Q$ could be outside $D$. Recall from Section \ref{sec:zero} that the point $Q$ fell outside $D$ when $b_1/b_2>2$, during our initial analysis.
\item In Sections 5.1.1--5.1.4, we consider the case when $(c_i,b_i)$ satisfies (\ref{eqn:c1-c2-small}). We show that the optimal mechanism is one of the four structures in Figure \ref{fig:a} -- \ref{fig:c}, depending on $a_i\leq 1$ or $a_i>1$, $i=1,2$.
\item In Section 5.2, we consider the case when $\frac{c_1}{b_1}$ is large and $\frac{c_2}{b_2}$ is small. We show that the critical point $P$ moves below the bottom boundary of $D$ if we continue to use $\bar{\alpha}$ as the shuffling measure. We therefore switch to $\bar{\beta}$, a variant of $\bar{\alpha}$. Using similar arguments as in Section 5.1, we show that the optimal mechanism is one of the three structures in Figures \ref{fig:gen-structure-2} and \ref{fig:e}. A symmetric argument is employed in Section 5.3.
\item In Section 5.4, we consider the case when both $\frac{c_1}{b_1}$ and $\frac{c_2}{b_2}$ are large. We prove that pure bundling is optimal, using some sufficient conditions derived in the earlier sections.
\end{itemize}
\end{enumerate}
We now fill in the details. We begin by describing the shuffling measure $\bar{\alpha}$.
\begin{figure}
\centering
\includegraphics[height=5.5cm,width=5.5cm]{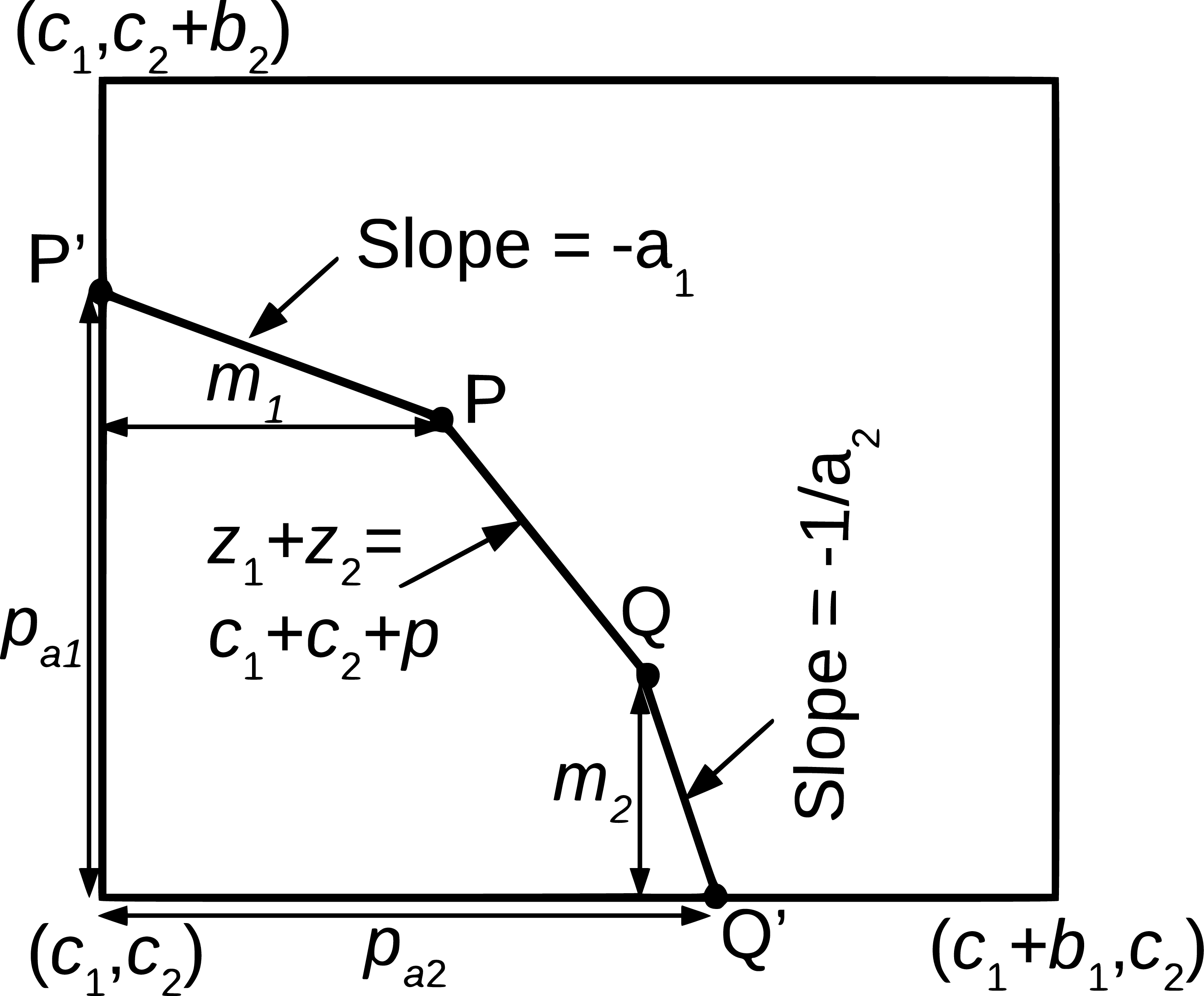}
\caption{An illustration of the parameters in the shuffling measure $\bar{\alpha}$.}\label{fig:illust-parameters}
\end{figure}

\subsection{Optimal mechanisms when $\frac{c_1}{b_1}$ and $\frac{c_2}{b_2}$ are small}\label{SUB:GC1}
For $m_1>0$, define $\tilde{D}^{(1)}:[c_1,c_1+b_1]\times\{c_2+b_2\}$, an interval on the top boundary of $D$ starting from the top-left corner. For $a_1, p_{a_1} > 0$, define a linear function $\alpha_s^{(1)}:\tilde{D}^{(1)}\rightarrow\mathbb{R}$, given by
\begin{equation}\label{eqn:alpha_s}
\alpha_s^{(1)}(x,c_2+b_2):=(2b_2-c_2-3p_{a_1}+3a_1(x-c_1))/(b_1b_2), \quad x \in [c_1,c_1+m_1].
\end{equation}
Define $\alpha_p^{(1)}:=c_1(b_2-p_{a_1})/(b_1b_2)\delta_{(c_1,c_2+b_2)}$, a point measure of mass $c_1(b_2-p_{a_1})/(b_1b_2)$ at location $(c_1,c_2+b_2)$. Finally, define the measure
$$
\bar{\alpha}^{(1)}(A):=\int_{c_1}^{c_1+m_1}\mathbf{1}_A(x,c_2+b_2)\alpha_s^{(1)}(x,c_2+b_2)\,dx+\alpha_p^{(1)}(A\cap(c_1,c_2+b_2))
$$
for all measurable sets $A\subseteq\tilde{D}^{(1)}$. See Figure \ref{fig:measure}.

Observe that the measure $\bar{\alpha}^{(1)}$ is characterized by three parameters: $p_{a_1}$, $a_1$, $m_1$. As the discussion proceeds, we will observe that these parameters determine respectively the price of the allocation $(a_1,1)$, the probability of allocation of the first good $a_1$, and the point of transition between the allocation regions $(a_1,1)$ and $(1,1)$.

We now discuss a property of measures that convex dominates zero. All proofs of assertions in this subsection are relegated to \ref{app:a}.
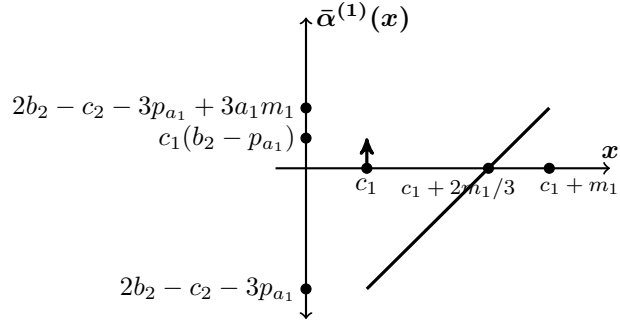
\begin{figure}
\centering
\begin{tikzpicture}[scale=0.4,font=\small,axis/.style={very thick, ->, >=stealth'}]
\draw[thick,<->] (0,-5)--(0,5);
\draw[thick,->] (-1,0)--(10,0);
\node[right] at (0,5) {$\bm{\bar{\alpha}^{(1)}(x)}$};
\node[above] at (10,0) {$\bm{x}$};
\draw[axis,->] (2,0)--(2,1);
\node[left] at (0,1) {$c_1(b_2-p_{a_1})$};
\draw[black,fill=black] (0,1) circle (1ex);
\draw[axis,-] (2,-4)--(8,2);
\node[left] at (0,-4) {$2b_2-c_2-3p_{a_1}$};
\draw[black,fill=black] (0,-4) circle (1ex);
\node[left] at (0,2) {$2b_2-c_2-3p_{a_1}+3a_1m_1$};
\draw[black,fill=black] (0,2) circle (1ex);
\node[below] at (2,0) {$c_1$};
\draw[black,fill=black] (2,0) circle (1ex);
\node[below] at (5,0) {\scriptsize$c_1+2m_1/3$};
\draw[black,fill=black] (6,0) circle (1ex);
\node[below] at (9,0) {\scriptsize$c_1+m_1$};
\draw[black,fill=black] (8,0) circle (1ex);
\end{tikzpicture}
\caption{The measure $\bar{\alpha}^{(1)}$.}\label{fig:measure}
\end{figure}
\begin{proposition}\label{prop:cvx}
Consider a measure $\alpha$ defined in the interval $[c_1,c_1+m_1]$, whose density is nonnegative in the intervals $[c_1,c_1+l_1]$ and $[c_1+l_2,c_1+m_1]$, and is nonpositive in the interval $[c_1+l_1,c_1+l_2]$, for some $l_1 < l_2$. If $\alpha([c_1,c_1+m_1])=\int_{[c_1,c_1+m_1]}x\,\alpha(dx)=0$, then $\alpha\succeq_{cvx}0$.
\end{proposition}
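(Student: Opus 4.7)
The plan is to apply the standard ``subtract an affine function'' trick that turns the two vanishing-moment hypotheses into a positivity statement for convex test functions. Let $f:[c_1,c_1+m_1]\to\mathbb{R}$ be any continuous, convex, increasing function, and consider the chord
\[
 g(x) := f(c_1+l_1) + \frac{f(c_1+l_2)-f(c_1+l_1)}{l_2-l_1}\bigl(x-(c_1+l_1)\bigr),
\]
i.e., the unique affine function that agrees with $f$ at $x=c_1+l_1$ and $x=c_1+l_2$. Because $g$ is affine, we may write $g(x) = A + Bx$ for constants $A,B$, and the two moment hypotheses $\alpha([c_1,c_1+m_1])=0$ and $\int x\,\alpha(dx)=0$ give
\[
 \int_{[c_1,c_1+m_1]} g(x)\,\alpha(dx) = A\cdot 0 + B\cdot 0 = 0,
\]
so that $\int f\,d\alpha = \int (f-g)\,d\alpha$.

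The point of choosing $g$ to be this particular chord is that convexity of $f$ gives $f(x)\le g(x)$ for $x\in[c_1+l_1,c_1+l_2]$ and $f(x)\ge g(x)$ for $x\in[c_1,c_1+l_1]\cup[c_1+l_2,c_1+m_1]$. These inequalities are matched exactly by the sign of $\alpha$ in the respective subintervals: $\alpha$ is nonpositive on the middle interval, and nonnegative on the two outer intervals. Therefore the integrand $(f-g)$ and the (signed) measure $\alpha$ have the same sign on each piece, so the density of $(f-g)\alpha$ is nonnegative throughout $[c_1,c_1+m_1]$.

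Integrating, $\int (f-g)\,d\alpha \ge 0$, and combining with the previous display yields $\int f\,d\alpha \ge 0$ for every continuous, convex, increasing $f$. This is precisely the assertion $\alpha\succeq_{cvx}0$. The only step that might require slight elaboration is the case where $f$ is not differentiable at $c_1+l_1$ or $c_1+l_2$; but since $g$ is defined via function values of $f$ at those two points (not via derivatives), and the inequalities $f\le g$ on $[c_1+l_1,c_1+l_2]$ and $f\ge g$ outside follow from the definition of convexity through the chord condition, no regularity beyond continuity and convexity of $f$ is needed. There is no genuine obstacle; the proposition is essentially a one-line consequence of the vanishing of the zeroth and first moments combined with the sign pattern of $\alpha$.
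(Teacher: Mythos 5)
Your proof is correct and is essentially the paper's argument: both exploit that affine functions integrate to zero against $\alpha$ (by the two vanishing moments) and that convexity forces the residual to have the same sign as the density of $\alpha$ on each of the three subintervals, so the integral of any convex increasing test function is nonnegative. The only cosmetic difference is that you subtract the chord of $f$ through $c_1+l_1$ and $c_1+l_2$, whereas the paper affinely rescales the test function to agree with the identity at those two points; your variant is marginally cleaner, since it needs neither monotonicity of $f$ nor the nondegeneracy $f(c_1+l_1)\neq f(c_1+l_2)$ implicit in the paper's requirement $\beta_1>0$.
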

\begin{corollary}\label{cor:cvx}
Suppose $m_1=\frac{4c_1(b_2-p_{a_1})}{c_2-2b_2+3p_{a_1}}$ and $a_1=\frac{(c_2-2b_2+3p_{a_1})^2}{8c_1(b_2-p_{a_1})}$. Then, $\bar{\alpha}^{(1)}(\tilde{D}^{(1)})=\int_{[c_1,c_1+m_1]}x\,\bar{\alpha}^{(1)}(dx,c_2+b_2)=0$, and hence $\int_{\tilde{D}^{(1)}}f\,d\bar{\alpha}^{(1)}=0$ for any affine function $f$ on $\tilde{D}^{(1)}$. Furthermore, $\bar{\alpha}^{(1)}\succeq_{cvx}0$.
\end{corollary}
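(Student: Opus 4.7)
The plan is to verify the two moment identities by direct computation, deduce from them the claim for affine functions, and then invoke Proposition~\ref{prop:cvx} to obtain convex dominance. To keep the algebra tidy, set $v := c_2 - 2b_2 + 3p_{a_1}$ and $w := b_2 - p_{a_1}$ (both positive under the standing assumption that $m_1, a_1 > 0$). Then $m_1 = 4c_1 w/v$ and $a_1 = v^2/(8c_1 w)$, and the two useful identities $a_1 m_1 = v/2$ and $a_1 m_1^2 = 2c_1 w$ fall out immediately; they drive every cancellation below.

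For the mass, I would write
$$\bar\alpha^{(1)}([c_1,c_1+m_1]) \;=\; \frac{c_1 w}{b_1 b_2} \;+\; \frac{1}{b_1 b_2}\int_{c_1}^{c_1+m_1}\!\!\bigl(-v + 3a_1(x-c_1)\bigr)\,dx \;=\; \frac{1}{b_1 b_2}\bigl(c_1 w - v m_1 + \tfrac{3}{2} a_1 m_1^2\bigr),$$
and substitute $v m_1 = 4 c_1 w$ together with $\tfrac{3}{2} a_1 m_1^2 = 3 c_1 w$ to obtain zero. For the first moment I would perform the same integration after the substitution $y = x - c_1$, obtaining a quadratic-plus-cubic expression in $m_1$, and again collapse it to $-c_1^2 w/(b_1 b_2)$ using the two identities; this exactly cancels the point-mass contribution $c_1 \cdot c_1 w/(b_1 b_2)$. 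Once $\int 1\,d\bar\alpha^{(1)} = 0$ and $\int x\,d\bar\alpha^{(1)} = 0$ are in hand, $\int f\,d\bar\alpha^{(1)} = 0$ for any affine $f(x) = \alpha + \beta x$ follows by linearity.

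It remains to apply Proposition~\ref{prop:cvx}. I would record the sign pattern: the point mass at $c_1$ is nonnegative, while the linear density $-v + 3a_1(x-c_1)$ vanishes at $x_0 = c_1 + v/(3 a_1) = c_1 + 2m_1/3$, is nonpositive on $[c_1, x_0]$ and nonnegative on $[x_0, c_1 + m_1]$. This matches the hypothesis of Proposition~\ref{prop:cvx} with the point mass at $c_1$ playing the role of the initial nonnegative piece (effectively $l_1 = 0$) and $l_2 = 2m_1/3$. The only conceptual care needed is to treat the atom at $c_1$ on the same footing as an absolutely continuous density in Proposition~\ref{prop:cvx}, which the proof of that proposition accommodates since it only uses the overall sign pattern together with the two vanishing moments; everything else is algebraic bookkeeping driven by $a_1 m_1 = v/2$.
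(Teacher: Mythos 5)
Your proposal is correct and follows essentially the same route as the paper: the paper obtains the stated $m_1$ and $a_1$ by solving the two vanishing-moment equations $\bar{\alpha}^{(1)}([c_1,c_1+m_1])=0$ and $\int (x-c_1)\,\bar{\alpha}^{(1)}(dx)=0$ (equivalent to your moment about the origin once the mass vanishes), deduces the affine case by linearity, and invokes Proposition~\ref{prop:cvx}. Your explicit remark that the atom at $c_1$ can be treated as the initial nonnegative piece (taking $l_1=0$, $l_2=2m_1/3$) is exactly how the paper implicitly applies Proposition~\ref{prop:cvx}, whose proof indeed uses only the sign pattern and the two vanishing moments.
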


We now define a similar interval $\tilde{D}^{(2)}:=\{c_1+b_1\}\times[c_2+m_2]$ for some $m_2>0$ on the right boundary of $D$, starting from the bottom-right corner. For $a_2, p_{a_2} > 0$, we define a similar line measure $\alpha_s^{(2)}$ at the interval $\tilde{D}^{(2)}$, a point measure $\alpha_p^{(2)}$ at $(c_1+b_1,c_2)$, and
$$
\bar{\alpha}^{(2)}(A):=\int_{c_2}^{c_2+m_2}\mathbf{1}_A(c_1+b_1,x)\alpha_s^{(2)}(c_1+b_1,x)\,dx+\alpha_p^{(2)}(A\cap(c_1+b_1,c_2)).
$$
As we will later see, the parameters $p_{a_2}$, $a_2$, and $m_2$ that characterize $\bar{\alpha}^{(2)}$ will respectively determine the price of allocation $(1,a_2)$, the probability $a_2$, and the point of transition between $(1,a_2)$ and $(1,1)$. Define $\bar{\alpha}:=\bar{\alpha}^{(1)}+\bar{\alpha}^{(2)}$ on $D$.

We now identify the canonical partition of $D$ with respect to $\bar{\mu}+\bar{\alpha}$, using the steps enumerated in bullet (ii), Section \ref{sec:gencase}. The construction is illustrated in Figure \ref{fig:illust-sec5}.
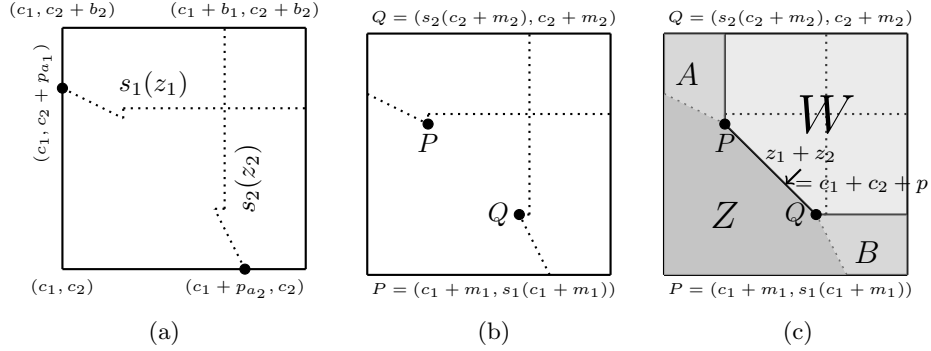
\begin{figure}[h!]
\centering
\begin{tabular}{ccc}
\subfloat[]{\begin{tikzpicture}[scale=0.4,font=\small,axis/.style={very thick, ->, >=stealth'}]
\draw [axis,thick,-] (1,1)--(9,1);
\draw [axis,thick,-] (9,1)--(9,9);
\draw [axis,thick,-] (9,9)--(1,9);
\draw [axis,thick,-] (1,9)--(1,1);
\draw [thick,dotted] (1,7)--(3,6);
\draw [thick,dotted] (3,6)--(3,19/3);
\draw [thick,dotted] (3,19/3)--(9,19/3);
\draw [thick,dotted] (7,1)--(6,3);
\draw [thick,dotted] (19/3,3)--(6,3);
\draw[thick,dotted] (19/3,3)--(19/3,9);
\node [above] at (4,19/3) {$s_1(z_1)$};
\node [below,rotate=90] at (19/3,4) {$s_2(z_2)$};
\node [below] at (1,1) {\tiny$(c_1,c_2)$};
\node [above] at (7,9) {\tiny$(c_1+b_1,c_2+b_2)$};
\node [above] at (1,9) {\tiny$(c_1,c_2+b_2)$};
\draw[black,fill=black] (7,1) circle (1ex);
\node [below] at (7,1) {\tiny$(c_1+p_{a_2},c_2)$};
\draw[black,fill=black] (1,7) circle (1ex);
\node [above,rotate=90] at (1,6.5) {\tiny$(c_1,c_2+p_{a_1})$};
\end{tikzpicture}}&
\subfloat[]{\begin{tikzpicture}[scale=0.4,font=\small,axis/.style={very thick, ->, >=stealth'}]
\draw [axis,thick,-] (1,1)--(9,1);
\draw [axis,thick,-] (9,1)--(9,9);
\draw [axis,thick,-] (9,9)--(1,9);
\draw [axis,thick,-] (1,9)--(1,1);
\draw [thick,dotted] (1,7)--(3,6);
\draw [thick,dotted] (3,6)--(3,19/3);
\draw [thick,dotted] (3,19/3)--(9,19/3);
\draw [thick,dotted] (7,1)--(6,3);
\draw [thick,dotted] (19/3,3)--(6,3);
\draw[thick,dotted] (19/3,3)--(19/3,9);
\draw[black,fill=black] (3,6) circle (1ex);
\node [below] at (3,6) {$P$};
\draw[black,fill=black] (6,3) circle (1ex);
\node [left] at (6,3) {$Q$};
\node [right] at (0.8,0.5) {\tiny$P=(c_1+m_1,s_1(c_1+m_1))$};
\node [right] at (0.8,9.5) {\tiny$Q=(s_2(c_2+m_2),c_2+m_2)$};
\end{tikzpicture}}&
\subfloat[]{\begin{tikzpicture}[scale=0.4,font=\small,axis/.style={very thick, ->, >=stealth'}]
\draw [axis,thick,-] (1,1)--(9,1);
\draw [axis,thick,-] (9,1)--(9,9);
\draw [axis,thick,-] (9,9)--(1,9);
\draw [axis,thick,-] (1,9)--(1,1);
\draw [thick,dotted] (1,7)--(3,6);
\draw [thick,dotted] (3,6)--(3,19/3);
\draw [thick,dotted] (3,19/3)--(9,19/3);
\draw [thick,dotted] (7,1)--(6,3);
\draw [thick,dotted] (19/3,3)--(6,3);
\draw[thick,dotted] (19/3,3)--(19/3,9);
\path[fill=gray!50,opacity=.9] (1,7) to (3,6) to (6,3) to (7,1) to (1,1) to (1,7);
\node at (3,3) {\Large$Z$};
\draw [axis,thick,-] (3,6)--(3,9);
\draw [axis,thick,-] (6,3)--(9,3);
\draw [axis,thick,-] (3,6)--(6,3);
\path[fill=gray!50,opacity=.6] (1,7) to (3,6) to (3,9) to (1,9) to (1,7);
\node at (1.75,23/3) {\large$A$};
\path[fill=gray!50,opacity=.6] (7,1) to (6,3) to (9,3) to (9,1) to (7,1);
\node at (23/3,1.75) {\large$B$};
\path[fill=gray!50,opacity=.3] (3,6) to (6,3) to (9,3) to (9,9) to (3,9) to (3,6);
\node at (19/3,19/3) {\huge$W$};
\node at (5.5,5) {\scriptsize$z_1+z_2$};
\node at (7.5,4) {\scriptsize$=c_1+c_2+p$};
\draw [thick,->] (5.5,4.5)--(5,4);
\draw[black,fill=black] (3,6) circle (1ex);
\node [below] at (3,6) {$P$};
\draw[black,fill=black] (6,3) circle (1ex);
\node [left] at (6,3) {$Q$};
\node [right] at (0.8,0.5) {\tiny$P=(c_1+m_1,s_1(c_1+m_1))$};
\node [right] at (0.8,9.5) {\tiny$Q=(s_2(c_2+m_2),c_2+m_2)$};
\end{tikzpicture}}
\end{tabular}
\caption{An illustration of (a) the construction of $s_i(z_i)$ (b) the computation of $P$ and $Q$ and (c) the construction of exclusion set and the canonical partition.}\label{fig:illust-sec5}
\end{figure}

\begin{enumerate}
\item[(a)] The outer boundary functions $s_i(\cdot)$ with respect to $\bar{\mu}+\bar{\alpha}$ are computed using the following expressions that are similar to (\ref{eqn:si-zi}).
\begin{align}
  s_i(z_i)&=\max\left\{z_{-i}\in[c_{-i},c_{-i}+b_{-i}):\int_{z_{-i}^*}^{c_{-i}+b_{-i}}\mu(c_i,z_{-i})\,dz_{-i}\right.\nonumber\\
  &\hspace{0.75in}\left.+(\mu_s+\alpha_s^{(i)})(c_i,c_{-i}+b_{-i})=0\right\},\,\forall z_i\in(c_i,c_i+b_i),\nonumber\\
  s_i(c_i)&=\max\left\{z_{-i}\in[c_{-i},c_{-i}+b_{-i}):\right.\nonumber\\
  &\hspace*{0.75in}\left.\int_{z_{-i}^*}^{c_{-i}+b_{-i}}\mu_s(c_i,z_{-i})\,dz_{-i}+\alpha_p^{(i)}(c_i,c_{-i}+b_{-i})=0\right\}.\label{eqn:si-zi-small}
\end{align}
Observe that the components of $\bar{\mu}$ in (\ref{eqn:si-zi}) have been replaced by the components of $(\bar{\mu}+\bar{\alpha})$ in the above equations. The values of $s_i(\cdot)$ are given by
$$
  s_i(z_i)=\begin{cases}c_{-i}+p_{a_i}-a_i(z_i-c_i),&c_i\leq z_i\leq c_i+m_i\\\frac{2}{3}(c_{-i}+b_{-i}),&c_i+m_i<z_i\leq c_i+b_i.\end{cases}
$$
Observe that $s_i$ is discontinuous at $z_i=c_i+m_i$. The values of $a_i$ and $m_i$ are as given in the statement of Corollary \ref{cor:cvx}. They are also given below.
\begin{align}
  a_1&=\frac{(c_2-2b_2+3p_{a_1})^2}{8c_1(b_2-p_{a_1})},\quad a_2=\frac{(c_1-2b_1+3p_{a_2})^2}{8c_2(b_1-p_{a_2})},\label{eqn:for-a1-a2}\\
  m_1&=\frac{4c_1(b_2-p_{a_1})}{c_2-2b_2+3p_{a_1}},\quad m_2=\frac{4c_2(b_1-p_{a_2})}{c_1-2b_1+3p_{a_2}}.\nonumber
\end{align}
\item[(b)] The critical points $P$ and $Q$ are given by $P=(c_1+m_1,s_1(c_1+m_1))$, and $Q=(s_2(c_2+m_2),c_2+m_2)$. Substituting the values of $s_i(c_i+m_i)$, $a_i$, and $m_i$ from the previous bullet, we have
\begin{align}
  P&=\left(c_1+4c_1(b_2-p_{a_1})/(c_2-2b_2+3p_{a_1}),c_2+(2b_2-c_2-p_{a_1})/2\right), \nonumber\\
  Q&=\left(c_1+(2b_1-c_1-p_{a_2})/2,c_2+4c_2(b_1-p_{a_2})/(c_1-2b_1+3p_{a_2})\right). \label{eqn:P-Q-new}
\end{align}
\item[(c)] We now choose the critical price $p$ so that (i) $\bar{\mu}(Z)=0$, and (ii) the critical points $P$ and $Q$ lie on the line $z_1+z_2=c_1+c_2+p$. The latter constraint holds when $P_1+P_2=Q_1+Q_2$ holds. Substituting the expressions for $P$ and $Q$ (from (\ref{eqn:P-Q-new})) and simplifying, we have
\begin{multline}\label{eqn:pa1-pa2}
  (c_1-c_2-2(b_1-b_2)-(p_{a_1}-p_{a_2}))(c_1-2b_1+3p_{a_2})(c_2-2b_2+3p_{a_1})\\+8c_1(b_2-p_{a_1})(c_1-2b_1+3p_{a_2})-8c_2(b_1-p_{a_2})(c_2-2b_2+3p_{a_1})=0.
\end{multline}
The constraint $\bar{\mu}(Z)=0$ is the same as $\bar{\mu}(W)=0$, since (i) the regions $A$ and $B$ have been constructed to satisfy $\bar{\mu}(A)=\bar{\mu}(B)=0$, and (ii) it is easy to verify that $\bar{\mu}(D)=0$. From an examination of Figure \ref{fig:a} and simple geometry considerations, it is immediate that $-\bar{\mu}(W)=0$ iff
\begin{multline*}
  3(b_1-(P_1-c_1))(b_2-(Q_2-c_2))-\frac{3}{2}(P_2-Q_2)(Q_1-P_1)\\-(c_2+b_2)(b_1-(P_1-c_1))-(c_1+b_1)(b_2-(Q_2-c_2))=0.
\end{multline*}
This upon substitution of the expressions for $P$ and $Q$ yields
\begin{multline}\label{eqn:big-W}
 3\,\prod_{i=1}^2(b_i(c_{-i}-2b_{-i}+3p_{a_i})-4c_i(b_{-i}-p_{a_i}))\\-\sum_{i=1}^2((b_i(c_{-i}-2b_{-i}+3p_{a_i})-4c_i(b_{-i}-p_{a_i}))(c_i-2b_i+3p_{a_{-i}})(c_{-i}+b_{-i}))\\-(3/8)\,\prod_{i=1}^2((2b_{-i}-c_{-i}-p_{a_i})(c_i-2b_i+3p_{a_{-i}})-4c_{-i}(b_i-p_{a_{-i}}))=0.
\end{multline}
The values of $(p_{a_1},p_{a_2})$ are computed by solving (\ref{eqn:pa1-pa2}) and (\ref{eqn:big-W}) simultaneously. Note that these equations are the same as (\ref{eqn:fig5a-initial-1}) and (\ref{eqn:fig5a-initial-2}) respectively. The critical price $p$ can then be computed from $P_1+P_2=Q_1+Q_2=c_1+c_2+p$, from which we derive $p$ as
\begin{equation}\label{eqn:for-p1}
  p=\frac{4c_1(b_2-p_{a_1})}{c_2-2b_2+3p_{a_1}}+\frac{2b_2-c_2-p_{a_1}}{2}=\frac{2b_1-c_1-p_{a_2}}{2}+\frac{4c_2(b_1-p_{a_2})}{c_1-2b_1+3p_{a_2}}.
\end{equation}
\item[(d)] We thus have the following canonical partition of $D$.
\begin{align*}
Z&=\{(z_1,z_2):z_2\leq s_1(z_1)\}\cap\{(z_1,z_2):z_1\leq s_2(z_2)\}\\
 &\hspace*{1in}\cap\{(z_1,z_2):z_1+z_2\leq c_1+c_2+p\},\\
A&=([c_1,P_1]\times[s_1(z_1),c_2+b_2])\backslash Z,\\
B&=([s_2(z_2),c_1+b_1]\times[c_2,Q_2])\backslash Z,\\
W&=D\backslash(A\cup B\cup Z).
\end{align*}
\end{enumerate}

The allocation and the payment functions $(q(z),t(z))$ are given by (\ref{eqn:q-full}).

We now verify if the values of $(p_{a_1},p_{a_2})$ that solve (\ref{eqn:pa1-pa2}) and (\ref{eqn:big-W}) simultaneously, give rise to a valid partition. The partition is valid, if $(p_{a_1},p_{a_2})$ are such that (i) $p_{a_i}\in[0,b_{-i}]$, (ii) $a_i\in[0,1]$, (iii) the critical points $P$ and $Q$ lie in $D$, and (iv) $P$ lies to the north-west of $Q$.

To prove the validity of the partition, we do the following.
\begin{itemize}
\item We first analyze how the values of $a_i$ and the points $P$ and $Q$ vary when $p_{a_i}$ varies.
\item We then compute $r_i$ such that the critical points coincide when we fix $p_{a_i}=r_i$. This constitutes a start point for our exploration. We also compute $p_{a_i}^*$ such that $a_i=1$ when we fix $p_{a_i}=p_{a_i}^*$. This constitutes the end point for our exploration.
\item Start at $p_{a_i}=r_i$. Move the values of $p_{a_1}$ and $p_{a_2}$ in such a way that $P_1+P_2=Q_1+Q_2$ holds, and stop either when $p_{a_i}=p_{a_i}^*$ for some $i$, or when $\bar{\mu}(W)=0$. We consider four cases based on whether $r_i\leq p_{a_i}^*$ or not.
\item A case-by-case analysis will then establish that the optimal mechanism falls within one of the four structures depicted in Figure \ref{fig:gen-structure-1}, when $c_i, b_i$ satisfy (\ref{eqn:c1-c2-small}).
\end{itemize}

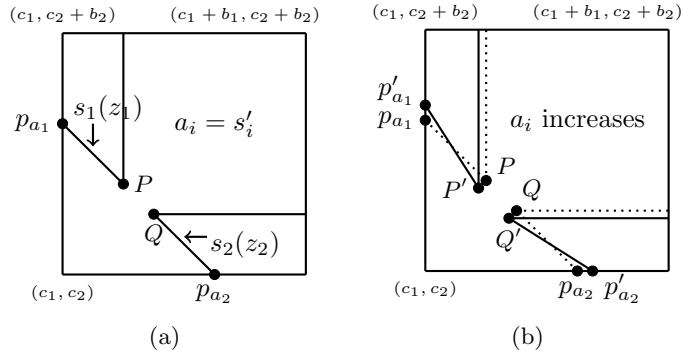
\begin{figure}[h!]
\centering
\begin{tabular}{cc}
\subfloat[]{\label{fig:obs-1}\begin{tikzpicture}[scale=0.4,font=\small,axis/.style={very thick, ->, >=stealth'}]
\draw [axis,thick,-] (1,1)--(9,1);
\draw [axis,thick,-] (9,1)--(9,9);
\draw [axis,thick,-] (9,9)--(1,9);
\draw [axis,thick,-] (1,9)--(1,1);
\draw [thick] (1,6)--(3,4);
\draw [thick] (6,1)--(4,3);
\draw [thick] (3,4)--(3,9);
\draw [thick] (4,3)--(9,3);
\node [below] at (1,1) {\tiny$(c_1,c_2)$};
\node [above] at (7,9) {\tiny$(c_1+b_1,c_2+b_2)$};
\node [above] at (1,9) {\tiny$(c_1,c_2+b_2)$};
\draw[black,fill=black] (6,1) circle (1ex);
\node [below] at (6,1) {$p_{a_2}$};
\draw[black,fill=black] (1,6) circle (1ex);
\node [left] at (1,6) {$p_{a_1}$};
%\node at (6,6) {\large$W$};
%\node at (2.5,2.5) {\large$Z$};
\draw[black,fill=black] (3,4) circle (1ex);
\node [right] at (3,4) {\footnotesize$P$};
\draw[black,fill=black] (4,3) circle (1ex);
\node [below] at (4,3) {\footnotesize$Q$};
\node at (2.5,6.5) {$s_1(z_1)$};
\draw [thick,->] (2,6)--(2,5.25);
\node at (7,2) {$s_2(z_2)$};
\draw [thick,->] (5.75,2.25)--(5,2.25);
\node at (6,6) {$a_i=s_i'$};
%\node at (5.75,3.25) {\normalsize$B$};
%\node [right] at (6,3) {\footnotesize$(1,a_2)$};
%\node at (2,8) {\normalsize$A$};
%\node at (2,6.5) {\footnotesize$(a_1,1)$};
%\node [right] at (6,2) {\footnotesize$a_2=1$};
\end{tikzpicture}}&
\subfloat[]{\label{fig:obs-2}\begin{tikzpicture}[scale=0.4,font=\small,axis/.style={very thick, ->, >=stealth'}]
\draw [axis,thick,-] (1,1)--(9,1);
\draw [axis,thick,-] (9,1)--(9,9);
\draw [axis,thick,-] (9,9)--(1,9);
\draw [axis,thick,-] (1,9)--(1,1);
\draw [thick,dotted] (1,6)--(3,4);
\draw [thick,dotted] (6,1)--(4,3);
\draw [thick,dotted] (3,4)--(3,9);
\draw [thick,dotted] (4,3)--(9,3);
\draw [thick] (1,6.5)--(2.75,3.75);
\draw [thick] (6.5,1)--(3.75,2.75);
\draw [thick] (2.75,3.75)--(2.75,9);
\draw [thick] (3.75,2.75)--(9,2.75);
\node [below] at (1,1) {\tiny$(c_1,c_2)$};
\node [above] at (7,9) {\tiny$(c_1+b_1,c_2+b_2)$};
\node [above] at (1,9) {\tiny$(c_1,c_2+b_2)$};
\draw[black,fill=black] (6,1) circle (1ex);
\node [below] at (6,1) {$p_{a_2}$};
\draw[black,fill=black] (1,6) circle (1ex);
\node [left] at (1,6) {$p_{a_1}$};
\draw[black,fill=black] (6.5,1) circle (1ex);
\node [below] at (7.5,1.25) {$p_{a_2}'$};
\draw[black,fill=black] (1,6.5) circle (1ex);
\node [left] at (1,7) {$p_{a_1}'$};
%\node at (6,6) {\large$W$};
%\node at (2.5,2.5) {\large$Z$};
\draw[black,fill=black] (3,4) circle (1ex);
\node [right] at (3,4.5) {\footnotesize$P$};
\draw[black,fill=black] (4,3) circle (1ex);
\node [above] at (4.5,3) {\footnotesize$Q$};
\draw[black,fill=black] (2.75,3.75) circle (1ex);
\node [left] at (2.75,3.75) {\footnotesize$P'$};
\draw[black,fill=black] (3.75,2.75) circle (1ex);
\node [below] at (3.75,2.75) {\footnotesize$Q'$};
\node at (6,6) {$a_i$ increases};
\end{tikzpicture}}
\end{tabular}
\caption{(a) An illustration of observation \ref{obs:pa1-function} -- the values of $p_{a_i}$ determine the partition of the type space. (b) An illustration of observation \ref{obs:pa1-increase} -- an increase in $p_{a_i}$ to $p_{a_i}'$ increases $a_i$, and also moves the critical points towards south-west.}\label{fig:obs-1-2}
\end{figure}

We begin with the following observations.
\begin{observation}\label{obs:pa1-function}
The value of $p_{a_1}$ determines the value of $a_1$, the critical point $P$, and the outer boundary function $s_1(z_1)$. Similarly, the value of $p_{a_2}$ determines the value of $a_2$, the critical point $Q$, and the outer boundary function $s_2(z_2)$. This can be observed from (\ref{eqn:for-a1-a2}) and (\ref{eqn:P-Q-new}). See the illustration in Figure \ref{fig:obs-1}.
\end{observation}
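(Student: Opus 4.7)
The plan is to read off the dependence structure directly from the parametric formulas already established in items (a)--(b) of the canonical partition construction, and from Corollary \ref{cor:cvx}. The observation is essentially a claim that the six free parameters $(p_{a_1}, a_1, m_1, p_{a_2}, a_2, m_2)$ collapse into the two free parameters $(p_{a_1}, p_{a_2})$ once we impose $\bar{\alpha}^{(1)}, \bar{\alpha}^{(2)} \succeq_{cvx} 0$, and moreover that this decoupling is clean in the sense that the ``$1$-indexed'' geometry is controlled entirely by $p_{a_1}$ and the ``$2$-indexed'' geometry entirely by $p_{a_2}$.

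First, I would apply Corollary \ref{cor:cvx} to $\bar{\alpha}^{(1)}$ and its symmetric counterpart for $\bar{\alpha}^{(2)}$. This yields the closed-form expressions in (\ref{eqn:for-a1-a2}), namely
\begin{equation*}
 a_1 = \frac{(c_2-2b_2+3p_{a_1})^2}{8c_1(b_2-p_{a_1})}, \qquad m_1 = \frac{4c_1(b_2-p_{a_1})}{c_2-2b_2+3p_{a_1}},
\end{equation*}
together with the symmetric formulas for $(a_2, m_2)$ in terms of $p_{a_2}$. In both pairs the fixed constants are $(c_1,c_2,b_1,b_2)$, so $(a_1, m_1)$ is a function of $p_{a_1}$ alone and $(a_2, m_2)$ is a function of $p_{a_2}$ alone.

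Next, for the outer boundary functions, the construction in bullet (a) gives
\begin{equation*}
 s_1(z_1) = \begin{cases} c_2 + p_{a_1} - a_1(z_1 - c_1), & c_1 \leq z_1 \leq c_1 + m_1, \\ \tfrac{2}{3}(c_2+b_2), & c_1 + m_1 < z_1 \leq c_1+b_1, \end{cases}
\end{equation*}
and only involves $p_{a_1}$, $a_1$, $m_1$ (plus fixed constants). Since $a_1$ and $m_1$ are themselves functions of $p_{a_1}$, so is $s_1$. The symmetric argument shows $s_2$ depends only on $p_{a_2}$.

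Finally, the critical points are defined as $P = (c_1+m_1,\, s_1(c_1+m_1))$ and $Q = (s_2(c_2+m_2),\, c_2+m_2)$; substituting the formulas from Corollary \ref{cor:cvx} reproduces (\ref{eqn:P-Q-new}), in which the coordinates of $P$ involve only $p_{a_1}$ and those of $Q$ only $p_{a_2}$. No real obstacle arises here: the observation is a bookkeeping statement about the parametrization, and the single thing to verify is that the substitutions from Corollary \ref{cor:cvx} are indeed what eliminate $a_1, m_1, a_2, m_2$ in favor of $p_{a_1}, p_{a_2}$. Illustrating the claim with Figure \ref{fig:obs-1} completes the observation.
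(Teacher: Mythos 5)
Your proposal is correct and follows exactly the paper's route: the observation is justified there by reading off the dependence from (\ref{eqn:for-a1-a2}) and (\ref{eqn:P-Q-new}), which is precisely the bookkeeping you carry out via Corollary \ref{cor:cvx}, the formula for $s_i$, and the definitions $P=(c_1+m_1,s_1(c_1+m_1))$, $Q=(s_2(c_2+m_2),c_2+m_2)$. Nothing is missing.
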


So if $p_{a_i}$ are fixed such that $P_1+P_2=Q_1+Q_2$, then the partition of the type space is also fixed.
\begin{observation}\label{obs:pa1-increase}
Assume $p_{a_i}\in[(2b_{-i}-c_{-i})/3,b_{-i}]$. Then, an increase in $p_{a_1}$ increases $a_1$, and moves $P$ towards south-west (i.e., decreases both $P_1$ and $P_2$). Similarly, an increase in $p_{a_2}$ increases $a_2$, and moves $Q$ towards south-west. See the illustration in Figure \ref{fig:obs-2}.
\end{observation}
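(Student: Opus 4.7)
The plan is to verify each of the three claimed monotonicities ($a_1$ increases, $P_1$ decreases, $P_2$ decreases under an increase in $p_{a_1}$) by direct differentiation of the closed forms in \eqref{eqn:for-a1-a2} and \eqref{eqn:P-Q-new}, using the hypothesis $p_{a_1}\in[(2b_2-c_2)/3,b_2]$ to control the signs of the key factors. The analogous claims for $p_{a_2}$ then follow by the obvious symmetry $1\leftrightarrow 2$ in the formulas.

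Concretely, I would introduce the shorthand $u:=c_2-2b_2+3p_{a_1}$ and $v:=b_2-p_{a_1}$, and observe that on the stated interval $u\ge 0$ (with equality only at the left endpoint) and $v\ge 0$ (with equality only at the right endpoint), while $du/dp_{a_1}=3$ and $dv/dp_{a_1}=-1$. For $P_2=c_2+(2b_2-c_2-p_{a_1})/2$, monotone decrease is immediate since the derivative is $-1/2$. For $P_1=c_1+4c_1 v/u$, the derivative equals $4c_1(-u-3v)/u^2<0$ on the interior of the interval, so $P_1$ strictly decreases. For $a_1=u^2/(8c_1 v)$, the quotient rule yields
\begin{equation*}
\frac{d a_1}{d p_{a_1}}=\frac{1}{8c_1}\cdot\frac{6uv+u^2}{v^2}=\frac{u(u+6v)}{8c_1 v^2},
\end{equation*}
which is nonnegative on the stated interval, and strictly positive in its interior. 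This establishes the claim for $p_{a_1}$; repeating the identical computation with indices swapped establishes it for $p_{a_2}$.

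I do not anticipate a substantive obstacle here; the observation is essentially a sign check on three explicit rational functions of $p_{a_i}$, and the purpose of restricting to $p_{a_i}\in[(2b_{-i}-c_{-i})/3,b_{-i}]$ is precisely to guarantee that both $u$ and $v$ are nonnegative so that every derivative above has an unambiguous sign. The only care needed is at the endpoints of the interval, where either $u=0$ (giving $a_1=0$ and marginal zero derivative) or $v=0$ (where $a_1$ blows up); these are the degenerate cases already flagged in the surrounding discussion, and they do not disturb the monotonicity statement on the open interval.
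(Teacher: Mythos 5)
Your computation is correct and is essentially the argument the paper intends: the paper treats this observation as an immediate sign check on the explicit formulas (\ref{eqn:for-a1-a2}) and (\ref{eqn:P-Q-new}), which is exactly what your differentiation of $u^2/(8c_1v)$, $c_1+4c_1v/u$, and $c_2+(2b_2-c_2-p_{a_1})/2$ makes precise, with the interval $[(2b_{-i}-c_{-i})/3,\,b_{-i}]$ guaranteeing $u,v\geq 0$. Your handling of the degenerate endpoints ($u=0$, $v=0$) is consistent with the paper's usage, so no gap remains.
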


We now define $p_{a_i}^*:=\frac{2b_{-i}-c_{-i}}{3}-\frac{4c_i}{9}+\frac{2}{9}\sqrt{2c_i(2c_i+3(b_{-i}+c_{-i}))}$ and $r_i:=\frac{2b_{-i}(2b_i+5c_i)-c_{-i}(2b_i-3c_i)}{3(2b_i+3c_i)}$. Observe that $a_i=1$ when $p_{a_i}=p_{a_i}^*$, and that $p_{a_i}^*\in[(2b_{-i}-c_{-i})/3,b_{-i}]$. We consider four cases based on the values of $r_i$ and $p_{a_i}^*$.

\subsubsection{{\bf Case 1: $r_i\leq p_{a_i}^*$, $i=1,2$}}

In this case, we have $a_i\leq 1$ at $p_{a_i}=r_i$. We now make a series of claims, proofs of which are in \ref{app:a}.

\begin{enumerate}
 \item[1.] When $c_i,b_i$ satisfy (\ref{eqn:c1-c2-small}), then $r_i\in[(2b_{-i}-c_{-i})/3,b_{-i}]$, $i=1,2$. When $p_{a_i}=r_i$, we have $P=Q$, i.e., the critical points $P$ and $Q$ coincide. Moreover, $\bar{\mu}(W)\geq 0$.
 \item[2.] Fixing $p_{a_i}=r_i$, we now increase $p_{a_1}$, and adjust $p_{a_2}$ so that $P_1+P_2=Q_1+Q_2$ holds (i.e., (\ref{eqn:pa1-pa2}) is satisfied). Then (i) $p_{a_2}$ increases and (ii) $P$ lies to the north-west of $Q$ if $p_{a_i}\in[r_i,p_{a_i}^*]$.
 \item[3.] If $P$ is to the north-west of $Q$, then $\bar{\mu}(W)$ decreases with increase in $p_{a_i}$.
\end{enumerate}

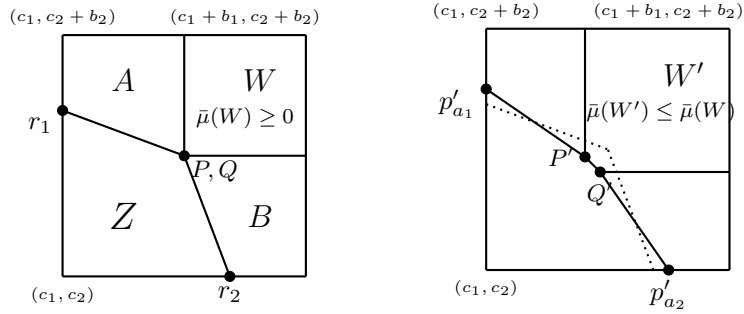
\begin{figure}[h!]
\centering
\begin{minipage}{.32\textwidth}
\centering
\begin{tikzpicture}[scale=0.4,font=\small,axis/.style={very thick, ->, >=stealth'}]
\draw [axis,thick,-] (1,1)--(9,1);
\draw [axis,thick,-] (9,1)--(9,9);
\draw [axis,thick,-] (9,9)--(1,9);
\draw [axis,thick,-] (1,9)--(1,1);
\draw [thick] (1,6.5)--(5,5);
\draw [thick] (6.5,1)--(5,5);
\draw [thick] (5,5)--(5,9);
\draw [thick] (5,5)--(9,5);
\node [below] at (1,1) {\tiny$(c_1,c_2)$};
\node [above] at (7,9) {\tiny$(c_1+b_1,c_2+b_2)$};
\node [above] at (1,9) {\tiny$(c_1,c_2+b_2)$};
\draw[black,fill=black] (6.5,1) circle (1ex);
\node [below] at (6.5,1) {$r_2$};
\draw[black,fill=black] (1,6.5) circle (1ex);
\node [left] at (1,6) {$r_1$};
\node at (3,3) {\Large$Z$};
\node at (3,7.5) {\large$A$};
\node at (7.5,3) {\large$B$};
\node at (7.5,7.5) {\large$W$};
\node at (7,6.25) {\scriptsize$\bar{\mu}(W)\geq 0$};
\draw[black,fill=black] (5,5) circle (1ex);
\node at (6,4.5) {\footnotesize$P,Q$};
%\node [rotate=90] at (5,3) {\scriptsize$p_{a_2}=r_2$};
%node at (3,5) {\scriptsize$p_{a_1}=r_1$};
\end{tikzpicture}
\end{minipage}
\begin{minipage}{.1\textwidth}
\hspace*{.1\textwidth}
\end{minipage}
\begin{minipage}{.32\textwidth}
\centering
\begin{tikzpicture}[scale=0.4,font=\small,axis/.style={very thick, ->, >=stealth'}]
\draw [axis,thick,-] (1,1)--(9,1);
\draw [axis,thick,-] (9,1)--(9,9);
\draw [axis,thick,-] (9,9)--(1,9);
\draw [axis,thick,-] (1,9)--(1,1);
\draw [thick,dotted] (1,6.5)--(5,5);
\draw [thick,dotted] (6.5,1)--(5,5);
\draw [thick] (1,7)--(4.25,4.75);
\draw [thick] (7,1)--(4.75,4.25);
\draw [thick] (4.25,4.75)--(4.75,4.25);
\draw [thick] (4.25,4.75)--(4.25,9);
\draw [thick] (4.75,4.25)--(9,4.25);
\node [below] at (1,1) {\tiny$(c_1,c_2)$};
\node [above] at (7,9) {\tiny$(c_1+b_1,c_2+b_2)$};
\node [above] at (1,9) {\tiny$(c_1,c_2+b_2)$};
\draw[black,fill=black] (7,1) circle (1ex);
\node [below] at (7,1) {$p_{a_2}'$};
\draw[black,fill=black] (1,7) circle (1ex);
\node [left] at (1,6.5) {$p_{a_1}'$};
\node at (7.5,7.5) {\large$W'$};
\node at (6.75,6.25) {\scriptsize$\bar{\mu}(W')\leq\bar{\mu}(W)$};
\draw[black,fill=black] (4.25,4.75) circle (1ex);
\node [left] at (4.25,4.75) {\footnotesize$P'$};
\draw[black,fill=black] (4.75,4.25) circle (1ex);
\node [below] at (4.75,4.25) {\footnotesize$Q'$};
\end{tikzpicture}
\end{minipage}
\caption{An illustration of claims 1--3. (L) If $p_{a_i}=r_i$, then by claim 1, we have $P=Q$, and $\bar{\mu}(W)\geq 0$; (R) If $p_{a_1}$ increases from $r_1$ to $p_{a_1}'$, and if $p_{a_2}$ is adjusted to satisfy (\ref{eqn:pa1-pa2}), then by claim 2, we have $p_{a_2}$ increasing from $r_2$ to $p_{a_2}'$ (say), and $P$ lying on the northwest of $Q$. By claim 3, we have $\bar{\mu}(W')\leq\bar{\mu}(W)$.}\label{fig:illust-sec5-1}
\end{figure}

Figure \ref{fig:illust-sec5-1} provides an illustration of claims 1--3. We continue to increase $p_{a_1}$ and adjust $p_{a_2}$ so that (\ref{eqn:pa1-pa2}) is satisfied. We stop when either (i) $\bar{\mu}(W)=0$ or (ii) $p_{a_2}=p_{a_2}^*$ or (iii) $p_{a_1}=p_{a_1}^*$. We have three cases based on which of them occurs first. We make the following claims regarding the structures of optimal mechanism in each of these cases. A summary of these claims is given in the form of a decision tree in Figure \ref{fig:dec-tree}.

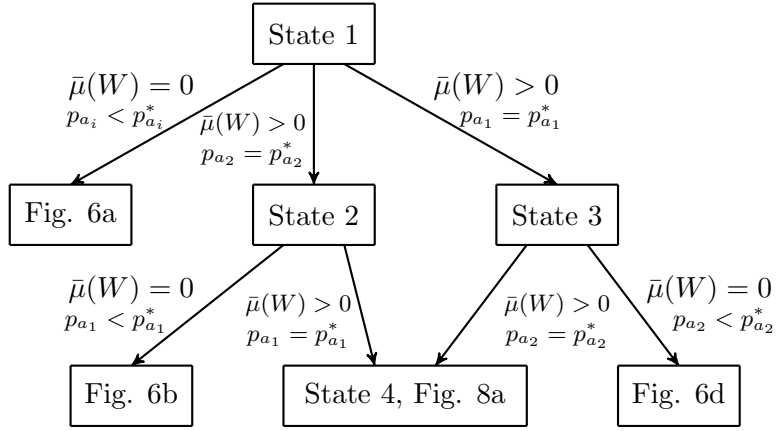
\begin{figure}[h!]
\centering
\begin{tikzpicture}[scale=0.4,font=\normalsize,axis/.style={very thick, ->, >=stealth'}]
\draw [axis,thick,-] (8,0)--(12,0);
\draw [axis,thick,-] (8,0)--(8,-2);
\draw [axis,thick,-] (8,-2)--(12,-2);
\draw [axis,thick,-] (12,-2)--(12,0);
\node at (10,-1) {State 1};
\draw [axis,thick,->] (9,-2)--(2,-6);
\node at (4,-2.75) {$\bar{\mu}(W)=0$};
\node at (3.5,-3.75) {\footnotesize$p_{a_i}<p_{a_i}^*$};
\draw [axis,thick,-] (0,-6)--(4,-6);
\draw [axis,thick,-] (0,-6)--(0,-8);
\draw [axis,thick,-] (0,-8)--(4,-8);
\draw [axis,thick,-] (4,-8)--(4,-6);
\node at (2,-7) {Fig. \ref{fig:a}};
\draw [axis,thick,->] (10,-2)--(10,-6);
\node at (8,-4) {\footnotesize$\bar{\mu}(W)>0$};
\node at (8,-5) {\footnotesize$p_{a_2}=p_{a_2}^*$};
\draw [axis,thick,-] (8,-6)--(12,-6);
\draw [axis,thick,-] (8,-6)--(8,-8);
\draw [axis,thick,-] (8,-8)--(12,-8);
\draw [axis,thick,-] (12,-8)--(12,-6);
\node at (10,-7) {State 2};
\draw [axis,thick,->] (11,-2)--(18,-6);
\node at (16,-2.75) {$\bar{\mu}(W)>0$};
\node at (16.5,-3.75) {\footnotesize$p_{a_1}=p_{a_1}^*$};
\draw [axis,thick,-] (16,-6)--(20,-6);
\draw [axis,thick,-] (16,-6)--(16,-8);
\draw [axis,thick,-] (16,-8)--(20,-8);
\draw [axis,thick,-] (20,-8)--(20,-6);
\node at (18,-7) {State 3};
\draw [axis,thick,->] (9,-8)--(4,-12);
\node at (4,-9.5) {$\bar{\mu}(W)=0$};
\node at (3.5,-10.5) {\footnotesize$p_{a_1}<p_{a_1}^*$};
\draw [axis,thick,-] (2,-12)--(6,-12);
\draw [axis,thick,-] (2,-12)--(2,-14);
\draw [axis,thick,-] (2,-14)--(6,-14);
\draw [axis,thick,-] (6,-14)--(6,-12);
\node at (4,-13) {Fig. \ref{fig:b}};
\draw [axis,thick,->] (11,-8)--(12,-12);
\node at (9.5,-10) {\footnotesize$\bar{\mu}(W)>0$};
\node at (9.5,-11) {\footnotesize$p_{a_1}=p_{a_1}^*$};
\draw [axis,thick,-] (9,-12)--(17,-12);
\draw [axis,thick,-] (9,-12)--(9,-14);
\draw [axis,thick,-] (9,-14)--(17,-14);
\draw [axis,thick,-] (17,-14)--(17,-12);
\node at (13,-13) {State 4, Fig. \ref{fig:d}};
%\node at (2,-14.5) {optimal};
\draw [axis,thick,->] (17,-8)--(14,-12);
\node at (18,-10) {\footnotesize$\bar{\mu}(W)>0$};
\node at (18,-11) {\footnotesize$p_{a_2}=p_{a_2}^*$};
\draw [axis,thick,->] (19,-8)--(22,-12);
\node at (23,-9.5) {$\bar{\mu}(W)=0$};
\node at (23.5,-10.5) {\footnotesize$p_{a_2}<p_{a_2}^*$};
\draw [axis,thick,-] (20,-12)--(24,-12);
\draw [axis,thick,-] (20,-12)--(20,-14);
\draw [axis,thick,-] (20,-14)--(24,-14);
\draw [axis,thick,-] (24,-14)--(24,-12);
\node at (22,-13) {Fig. \ref{fig:c}};
\end{tikzpicture}
\caption{The decision tree illustrating how to choose the structure of the optimal mechanism in Figure \ref{fig:gen-structure-1}. Case $i$ in Section 5.1.i starts from State $i$, $i=1,2,3,4$.}\label{fig:dec-tree}
\end{figure}
\begin{enumerate}
 \item[4.] Suppose that $\bar{\mu}(W)=0$ occurs when $p_{a_i}<p_{a_i}^*$. Then we claim that the optimal mechanism is as in Figure \ref{fig:a}. The proof is similar to the proof of Proposition \ref{prop:known}.
 \item[5.] Suppose instead that $p_{a_2}=p_{a_2}^*$ occurs first, i.e., we have $\bar{\mu}(W)>0$ and $p_{a_1}<p_{a_1}^*$ when $p_{a_2}$ equals $p_{a_2}^*$. Then, $a_2$ equals $1$, and we stop further increase in $p_{a_2}$. The structure in Figure \ref{fig:a} coincides with that in Figure \ref{fig:b}, and the regions $B$ and $W$ of the canonical partition merge. The picture is as in Figure \ref{fig:4a-4b-(a)}. As we increase $p_{a_1}$ to $p_{a_1}'$ further, we fix $p_{a_2}$ corresponding to the $p_{a_1}$ that led to $a_2=1$; by Observation \ref{obs:pa1-function}, the critical point $Q$ and the outer boundary function $s_2(z_2)$ also stand fixed. The partition thus moves from Figure \ref{fig:4a-4b-(a)} to \ref{fig:4a-4b-(b)}, when we increase $p_{a_1}$. Observe that the condition $P_1+P_2=Q_1+Q_2$ is no longer true, and thus we fix $p=P_1+P_2$ to complete the canonical partition. We now claim that increasing $p_{a_1}$ moves $P$ towards south-west, and also decreases $\bar{\mu}(W)$.
\end{enumerate}

Continue to increase $p_{a_1}$ until either (i) $\bar{\mu}(W)=0$, or (ii) $p_{a_1}=p_{a_1}^*$. We have two cases based on which of them occurs first.
\begin{enumerate}
 \item[6.] Suppose that $\bar{\mu}(W)=0$ when $p_{a_1}<p_{a_1}^*$, then the optimal mechanism is as in Figure \ref{fig:b}. The proof again, is similar to the proof of Proposition \ref{prop:known}. Equating $-\bar{\mu}(W)=0$ for Figure \ref{fig:b}, and substituting for $a_1$ and $p$, we have
 \begin{multline}\label{eqn:pa1-(b)}
   -8c_1b_2^2+(c_2b_1-b_2b_1-b_2c_1)(c_2-2b_2)\\+\left(c_2/2-b_1\right)(c_2-2b_2)^2-(3/8)(c_2-2b_2)^3\\+\left(c_1(4c_2-3b_2)+3b_1b_2+2(c_2-2c_1)(c_2-2b_2)-15/8(c_2-2b_2)^2\right)p_{a_1}\\+\left(3c_2/2-21/8(c_2-2b_2)\right)p_{a_1}^2-(9/8)p_{a_1}^3=0.
 \end{multline}
 \item[7.] Suppose instead that $p_{a_1}$ equals $p_{a_1}^*$, but still $\bar{\mu}(W)>0$. Then $a_1$ equals $1$, and we stop further increase in $p_{a_1}$. The regions $A$ and $W$ in the canonical partition merge, and the structure in Figure \ref{fig:b} coincides with that in Figure \ref{fig:c}. The value of $p$ equals $P_1+P_2=p_{a_1}^*$. We fix $p_{a_1}=p_{a_1}^*$, and by Observation \ref{obs:pa1-function}, the critical point $P$ and the outer boundary function $s_1(z_1)$ stand fixed. The partition thus moves from Figure \ref{fig:4b-4c-(a)} to \ref{fig:4b-4c-(b)}. We claim that a decrease in $p$ (from $p_{a_1}^*$) decreases $\bar{\mu}(W)$. Furthermore, it is easy to see that $\bar{\mu}(W)<0$ when $p=0^+$. Hence, by the continuity of $\bar{\mu}(W)$ in the $p$ parameter, there must be a $p$ when $\bar{\mu}(W)=0$. This happens when $p=p^*:=\left(\sqrt{(c_1+c_2)^2+6b_1b_2}-c_1-c_2\right)/3$. The resulting partition is optimal, and the mechanism is as in Figure \ref{fig:c}.
 \item[8.] If in contrast to what happens in point 5, suppose that $p_{a_1}=p_{a_1}^*$ occurs first, i.e., we have $\bar{\mu}(W)>0$ and $p_{a_2}<p_{a_2}^*$ when $p_{a_1}$ equals $p_{a_1}^*$. Then analogous arguments hold by symmetry, and the mechanism is as in Figure \ref{fig:f} instead of \ref{fig:b}.
\end{enumerate}
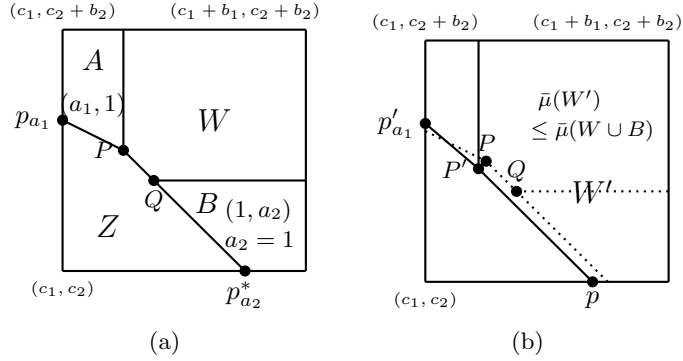
\begin{figure}[h!]
\centering
\begin{tabular}{cc}
\subfloat[]{\label{fig:4a-4b-(a)}\begin{tikzpicture}[scale=0.4,font=\small,axis/.style={very thick, ->, >=stealth'}]
\draw [axis,thick,-] (1,1)--(9,1);
\draw [axis,thick,-] (9,1)--(9,9);
\draw [axis,thick,-] (9,9)--(1,9);
\draw [axis,thick,-] (1,9)--(1,1);
\draw [thick] (1,6)--(3,5);
\draw [thick] (7,1)--(3,5);
\draw [thick] (3,5)--(3,9);
\draw [thick] (4,4)--(9,4);
\node [below] at (1,1) {\tiny$(c_1,c_2)$};
\node [above] at (7,9) {\tiny$(c_1+b_1,c_2+b_2)$};
\node [above] at (1,9) {\tiny$(c_1,c_2+b_2)$};
\draw[black,fill=black] (7,1) circle (1ex);
\node [below] at (7,1) {$p_{a_2}^*$};
\draw[black,fill=black] (1,6) circle (1ex);
\node [left] at (1,6) {$p_{a_1}$};
\node at (6,6) {\large$W$};
\node at (2.5,2.5) {\large$Z$};
\draw[black,fill=black] (3,5) circle (1ex);
\node [left] at (3,5) {\footnotesize$P$};
\draw[black,fill=black] (4,4) circle (1ex);
\node [below] at (4,4) {\footnotesize$Q$};
\node at (5.75,3.25) {\normalsize$B$};
\node [right] at (6,3) {\footnotesize$(1,a_2)$};
\node at (2,8) {\normalsize$A$};
\node at (2,6.5) {\footnotesize$(a_1,1)$};
\node [right] at (6,2) {\footnotesize$a_2=1$};
\end{tikzpicture}}&
\subfloat[]{\label{fig:4a-4b-(b)}\begin{tikzpicture}[scale=0.4,font=\small,axis/.style={very thick, ->, >=stealth'}]
\draw [axis,thick,-] (1,1)--(9,1);
\draw [axis,thick,-] (9,1)--(9,9);
\draw [axis,thick,-] (9,9)--(1,9);
\draw [axis,thick,-] (1,9)--(1,1);
\draw [thick] (1,6.25)--(2.75,4.75);
\draw [thick] (6.5,1)--(2.75,4.75);
\draw [thick,dotted] (1,6)--(3,5);
\draw [thick,dotted] (7,1)--(3,5);
\draw [thick] (2.75,4.75)--(2.75,9);
\draw [thick,dotted] (4,4)--(9,4);
\node [below] at (1,1) {\tiny$(c_1,c_2)$};
\node [above] at (7,9) {\tiny$(c_1+b_1,c_2+b_2)$};
\node [above] at (1,9) {\tiny$(c_1,c_2+b_2)$};
\draw[black,fill=black] (1,6.25) circle (1ex);
\node [left] at (1,6.25) {$p_{a_1}'$};
\draw[black,fill=black] (6.5,1) circle (1ex);
\node [below] at (6.5,1) {$p$};
\node at (6.5,4) {\large$W'$};
\node at (5.75,7) {\scriptsize$\bar{\mu}(W')$};
\node at (6.5,6) {\scriptsize$\leq\bar{\mu}(W\cup B)$};
\draw[black,fill=black] (2.75,4.75) circle (1ex);
\node [left] at (2.75,4.75) {\footnotesize$P'$};
\draw[black,fill=black] (3,5) circle (1ex);
\node [above] at (3,5) {\footnotesize$P$};
\draw[black,fill=black] (4,4) circle (1ex);
\node [above] at (4,4) {\footnotesize$Q$};
\end{tikzpicture}}
\end{tabular}
\caption{Illustration of the transition from Figure \ref{fig:a} to Figure \ref{fig:b}. If we increase $p_{a_1}$ to $p_{a_1}'$, then by claim 5, the point $P$ moves south-west, and the value of $\bar{\mu}(W)$ decreases.}\label{fig:4a-4b}
\end{figure}
\begin{figure}[h!]
\centering
\begin{tabular}{cc}
\subfloat[]{\label{fig:4b-4c-(a)}\begin{tikzpicture}[scale=0.4,font=\small,axis/.style={very thick, ->, >=stealth'}]
\draw [axis,thick,-] (1,1)--(9,1);
\draw [axis,thick,-] (9,1)--(9,9);
\draw [axis,thick,-] (9,9)--(1,9);
\draw [axis,thick,-] (1,9)--(1,1);
\draw [thick] (1,5)--(5,1);
\draw [thick,dotted] (7,1)--(4,4);
\draw [thick,dotted] (3,3)--(3,9);
\draw [thick,dotted] (4,4)--(9,4);
\node [below] at (1,1) {\tiny$(c_1,c_2)$};
\node [above] at (7,9) {\tiny$(c_1+b_1,c_2+b_2)$};
\node [above] at (1,9) {\tiny$(c_1,c_2+b_2)$};
\draw[black,fill=black] (1,5) circle (1ex);
\node [left] at (1,5) {$p_{a_1}*$};
\node at (6.5,4) {\large$W$};
\node at (2,2) {\large$Z$};
\node at (2,8) {\normalsize$A$};
\node at (2,6.5) {\footnotesize$(a_1,1)$};
\node at (2.25,5.5) {\footnotesize$a_1=1$};
\draw[black,fill=black] (3,3) circle (1ex);
\node [left] at (3,3) {\footnotesize$P$};
\draw[black,fill=black] (4,4) circle (1ex);
\node [above] at (4,4) {\footnotesize$Q$};
\end{tikzpicture}}&
\subfloat[]{\label{fig:4b-4c-(b)}\begin{tikzpicture}[scale=0.4,font=\small,axis/.style={very thick, ->, >=stealth'}]
\draw [axis,thick,-] (1,1)--(9,1);
\draw [axis,thick,-] (9,1)--(9,9);
\draw [axis,thick,-] (9,9)--(1,9);
\draw [axis,thick,-] (1,9)--(1,1);
\draw [thick,dotted] (1,5)--(3,3);
\draw [thick] (1,4.5)--(4.5,1);
\draw [thick,dotted] (7,1)--(4,4);
\draw [thick,dotted] (3,3)--(3,9);
\draw [thick,dotted] (4,4)--(9,4);
\node [below] at (1,1) {\tiny$(c_1,c_2)$};
\node [above] at (7,9) {\tiny$(c_1+b_1,c_2+b_2)$};
\node [above] at (1,9) {\tiny$(c_1,c_2+b_2)$};
\draw[black,fill=black] (1,4.5) circle (1ex);
\node [left] at (1,4.5) {$p*$};
\node at (3,6) {\large$W'$};
\node at (5.75,7) {\scriptsize$\bar{\mu}(W')$};
\node at (6.5,6) {\scriptsize$\leq\bar{\mu}(W\cup A)$};
\draw[black,fill=black] (3,3) circle (1ex);
\node [right] at (3,3) {\footnotesize$P$};
\draw[black,fill=black] (4,4) circle (1ex);
\node [above] at (4,4) {\footnotesize$Q$};
\end{tikzpicture}}
\end{tabular}
\caption{Illustration of the transition from Figure \ref{fig:b} to Figure \ref{fig:c}. If we decrease $p_{a_1}$, then by claim 7, $\bar{\mu}(W)$ decreases.}\label{fig:4b-4c}
\end{figure}
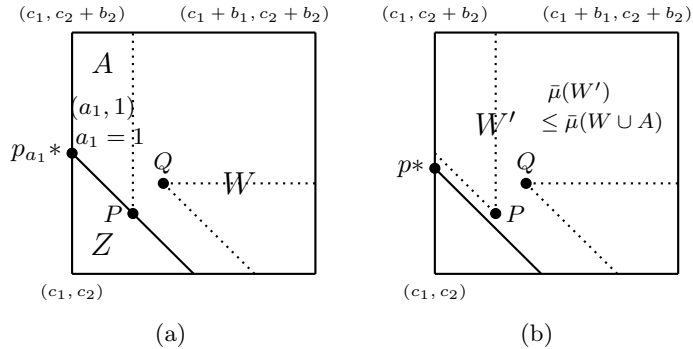

The above claims prove that whenever $r_i\leq p_{a_i}^*$, $i=1,2$, the solution $(p_{a_1},p_{a_2})$ obtained by solving (\ref{eqn:pa1-pa2}) and (\ref{eqn:big-W}) gives rise to a valid partition of the support set $D$. Furthermore, the structure of the optimal mechanism turns out to be one of the four structures depicted in Figure \ref{fig:gen-structure-1}, based on whether $a_i$ is less than $1$ or not.

\subsubsection{{\bf Case 2: $r_1\leq p_{a_1}^*$ but $r_2>p_{a_2}^*$}}

In this case, we have $a_1\leq 1$ and $a_2>1$ at $p_{a_i}=r_i$. Thus the technique of increasing both $p_{a_i}$ (from $r_i$) does not help here, because by Observation \ref{obs:pa1-increase}, increasing $p_{a_2}$ only increases $a_2$ further. So we use an alternate method. We fix $p_{a_2}=r_2$ (and thus fixing the corresponding $Q$ and $s_2(z_2)$), but change the line joining $Q$ and $p_{a_2}$ to a line of slope $-1$, as illustrated in Figure \ref{fig:case2}. In other words, we modify the structure, which was as in Figure \ref{fig:a} with $p_{a_i}=r_i$, to that in Figure \ref{fig:b} with $p_{a_1}=r_1$ and $p=P_1+P_2$. We claim that $\bar{\mu}(W)\geq 0$ for this case. This can be verified as follows. Substituting $p_{a_1}=r_1$ in (\ref{eqn:pa1-(b)}), the LHS$=-\bar{\mu}(W)$, becomes
\begin{multline*}
  -\bar{\mu}(W)=-(2b_2(b_1+c_1)-c_2(b_1+3 c_1))\\(b_2(2b_1+3c_1)^2+8b_2^2(b_1+c_1)+16b_1b_2c_2+12b_2c_1c_2+6b_1c_2^2)
\end{multline*}
which is nonpositive when $c_2\leq2b_2\frac{b_1+c_1}{b_1+3c_1}$. Thus for $c_i,b_i$ satisfying (\ref{eqn:c1-c2-small}), $\bar{\mu}(W)\geq 0$ holds when $p_{a_1}=r_1$.

We now increase $p_{a_1}$ starting from $r_1$. Notice that the case is now similar to Case 1, from bullet 6 onwards. The optimal mechanism is as in Figure \ref{fig:b} if $\bar{\mu}(W)=0$ occurs when $p_{a_1}\leq p_{a_1}^*$, and is as in Figure \ref{fig:c} otherwise. Thus, when we start from State 2 in the decision tree in Figure \ref{fig:dec-tree}, we obtain the procedure to find the optimal mechanism as mentioned above.
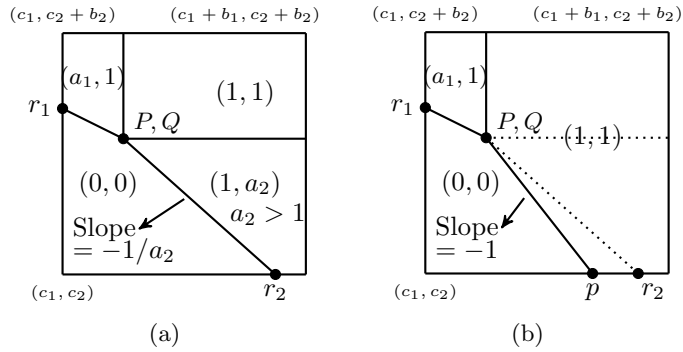
\begin{figure}[h!]
\centering
\begin{tabular}{ccc}
\subfloat[]{\label{fig:case2-(a)}\begin{tikzpicture}[scale=0.4,font=\small,axis/.style={very thick, ->, >=stealth'}]
\draw [axis,thick,-] (1,1)--(9,1);
\draw [axis,thick,-] (9,1)--(9,9);
\draw [axis,thick,-] (9,9)--(1,9);
\draw [axis,thick,-] (1,9)--(1,1);
\draw [thick] (1,6.5)--(3,5.5);
\draw [thick] (3,5.5)--(8,1);
%\draw [thick,dotted] (3.5,3.5)--(6,1);
\draw [thick] (3,5.5)--(3,9);
\draw [thick] (3,5.5)--(9,5.5);
\node [below] at (1,1) {\tiny$(c_1,c_2)$};
\node [above] at (7,9) {\tiny$(c_1+b_1,c_2+b_2)$};
\node [above] at (1,9) {\tiny$(c_1,c_2+b_2)$};
\draw[black,fill=black] (8,1) circle (1ex);
\node [below] at (8,1) {$r_2$};
%\draw[black,fill=black] (6,1) circle (1ex);
%\node [below] at (6,1) {$p$};
\draw[black,fill=black] (1,6.5) circle (1ex);
\node [left] at (1,6.5) {$r_1$};
\node at (7,7) {$(1,1)$};
\node at (2.5,4) {$(0,0)$};
\draw[black,fill=black] (3,5.5) circle (1ex);
\node [right] at (3,6) {\footnotesize$P,Q$};
\node [right] at (1,2.5) {Slope};
\node [right] at (1,1.75) {$=-1/a_2$};
%\node [right] at (1.25,1) {$<-1$};
\draw [axis,thick,->] (5,3.5)--(3.5,2.5);
%\node [right] at (5.5,4) {Slope$=$};
%\node [right] at (5.75,3) {$-1/a_2$};
%\node [right] at (6.5,2) {$<-1$};
%\draw [axis,thick,->] (5.25,3.5)--(5.75,4);
\node at (2,7.5) {\footnotesize$(a_1,1)$};
%\node at (6.5,4.5) {\large$B$};
\node at (7,4) {$(1,a_2)$};
\node at (7.75,3) {$a_2>1$};
\end{tikzpicture}}&
\subfloat[]{\label{fig:case2-(b)}\begin{tikzpicture}[scale=0.4,font=\small,axis/.style={very thick, ->, >=stealth'}]
\draw [axis,thick,-] (1,1)--(9,1);
\draw [axis,thick,-] (9,1)--(9,9);
\draw [axis,thick,-] (9,9)--(1,9);
\draw [axis,thick,-] (1,9)--(1,1);
\draw [thick] (1,6.5)--(3,5.5);
\draw [thick,dotted] (3,5.5)--(8,1);
\draw [thick,dotted] (3,5.5)--(9,5.5);
\draw [thick] (3,5.5)--(6.5,1);
\draw [thick] (3,5.5)--(3,9);
\node [below] at (1,1) {\tiny$(c_1,c_2)$};
\node [above] at (7,9) {\tiny$(c_1+b_1,c_2+b_2)$};
\node [above] at (1,9) {\tiny$(c_1,c_2+b_2)$};
\draw[black,fill=black] (8,1) circle (1ex);
\node [below] at (8.5,1) {$r_2$};
\draw[black,fill=black] (6.5,1) circle (1ex);
\node [below] at (6.5,1) {$p$};
\draw[black,fill=black] (1,6.5) circle (1ex);
\node [left] at (1,6.5) {$r_1$};
\node at (6.5,5.5) {$(1,1)$};
%\node at (6.5,4.5) {$(1,1)$};
\node at (2.5,4) {$(0,0)$};
\draw[black,fill=black] (3,5.5) circle (1ex);
\node [right] at (3,6) {\footnotesize$P,Q$};
\node [right] at (1,2.5) {Slope};
\node [right] at (1,1.75) {$=-1$};
%\node [right] at (1.25,1) {$<-1$};
\draw [axis,thick,->] (4.25,3.5)--(3.5,2.5);
%\node [right] at (5.5,4) {Slope$=$};
%\node [right] at (5.75,3) {$-1/a_2$};
%\node [right] at (6.5,2) {$<-1$};
%\draw [axis,thick,->] (5.25,3.5)--(5.75,4);
\node at (2,7.5) {\footnotesize$(a_1,1)$};
%\node at (6.5,4.5) {\large$B$};
%\node at (7,3.375) {$(a_2,1)$};
%\node at (7.75,2.25) {$a_2>1$};
\end{tikzpicture}}
\end{tabular}
\caption{An illustration of change of slope in Case 2. The structure is as in Figure \ref{fig:a} in (a), and as in Figure \ref{fig:b} in (b).}\label{fig:case2}
\end{figure}

\subsubsection{{\bf Case 3: $r_2\leq p_{a_2}^*$ but $r_1>p_{a_1}^*$}}

This case is the same as Case 2, except that the optimal mechanism is as in Figure \ref{fig:f} instead of Figure \ref{fig:b}. We start from State 3 in the decision tree in Figure \ref{fig:dec-tree} to obtain the procedure to find the optimal mechanism.

\subsubsection{{\bf Case 4: $r_i>p_{a_i}^*$, $i=1,2$}}

Fixing $p_{a_i}=r_i$ yields $a_i>1$. Thus the technique of increasing $p_{a_i}$ from $r_i$ is not useful here because, by Observation \ref{obs:pa1-increase}, increasing $p_{a_i}$ only increases $a_i$ further. We thus use an alternate method. We first assume $p_{a_1}^*\leq p_{a_2}^*$, and do the following.  (The argument is symmetric when $p_{a_2}^*<p_{a_1}^*$.)

\begin{itemize}
 \item We force the line joining $Q$ and $p_{a_2}$ to have a slope of $-1$, as in Case 2. See the illustration in Figure \ref{fig:case-4-(a)}. Then, $\bar{\mu}(W)$ remains nonnegative by the same arguments in Case 2. Also, the structure obtained is that depicted in Figure \ref{fig:b}, with $p_{a_1}=r_1$ and $p=P_1+P_2$. But we have $a_1>1$ because $p_{a_1}>p_{a_1}^*$.
 \item We now fix $p_{a_2}=r_2$ (which fixes the corresponding $Q$ and $s_2(z_2)$), and decrease $p_{a_1}$ from $r_1$ to $p_{a_1}^*$. See the illustration in Figure \ref{fig:case-4-(b)}. This decrease moves $P$ towards north-east, decreases $a_1$ to $1$, and increases $\bar{\mu}(W)$. The first two assertions in the previous sentence follow from Observation \ref{obs:pa1-increase} and the last assertion can be shown using a proof similar to the proof of claim 3 in Case 1. Observe that the line $z_1+z_2=c_1+c_2+p_{a_1}^*$ remains within $D$ because $p_{a_1}^*\leq p_{a_2}^*\leq b_1$. The structure obtained is that depicted in Figure \ref{fig:c}, with $p=p_{a_1}^*$.
 \item We now fix $p_{a_1}=p_{a_1}^*$ (thus fixing the corresponding $P$ and $s_1(z_1)$), and decrease $p$ from $p_{a_1}^*$ to $p^*$. Then, the partition moves from Figure \ref{fig:case-4-(b)} to \ref{fig:case-4-(c)}. Furthermore, by the same argument as in claim 7, Case 1, the value of $\bar{\mu}(W)$ decreases, and equals $0$ at $p=p^*$.
\end{itemize}

\begin{figure}[h!]
\centering
\begin{tabular}{ccc}
\subfloat[]{\label{fig:case-4-(a)}\begin{tikzpicture}[scale=0.32,font=\footnotesize,axis/.style={very thick, ->, >=stealth'}]
\draw [axis,thick,-] (1,1)--(9,1);
\draw [axis,thick,-] (9,1)--(9,9);
\draw [axis,thick,-] (9,9)--(1,9);
\draw [axis,thick,-] (1,9)--(1,1);
\draw [thick] (1,8)--(3.5,3.5);
\draw [thick] (3.5,3.5)--(8,1);
\draw [thick,dotted] (3.5,3.5)--(6,1);
\draw [thick] (3.5,3.5)--(3.5,9);
\draw [thick] (3.5,3.5)--(9,3.5);
\node [below] at (1,1) {\tiny$(c_1,c_2)$};
\node [above] at (7,9) {\tiny$(c_1+b_1,c_2+b_2)$};
\node [above] at (1,9) {\tiny$(c_1,c_2+b_2)$};
\draw[black,fill=black] (8,1) circle (1ex);
\node [below] at (8,1) {$r_2$};
\draw[black,fill=black] (6,1) circle (1ex);
\node [below] at (6,1) {$p$};
\draw[black,fill=black] (1,8) circle (1ex);
\node [left] at (1,8) {$r_1$};
\node at (6,6) {\large$W$};
\node at (2,3.5) {\large$Z$};
\draw[black,fill=black] (3.5,3.5) circle (1ex);
\node [right] at (3.5,4) {\footnotesize$P,Q$};
\node [right] at (1.25,2) {\tiny Slope};
\node [right] at (1.25,1.5) {\tiny$=-1$};
\draw [axis,thick,->] (4.75,2)--(3.75,1.5);
\node [right] at (5.5,3) {\tiny Slope$=$};
\node [right] at (5.75,2.25) {\tiny$-1/a_2$};
\node [right] at (6.75,1.5) {\tiny$<-1$};
\draw [axis,thick,->] (5.25,2.5)--(5.75,3);
\node at (2.75,6.5) {\large$A$};
\end{tikzpicture}}&
\subfloat[]{\label{fig:case-4-(b)}\begin{tikzpicture}[scale=0.32,font=\footnotesize,axis/.style={very thick, ->, >=stealth'}]
\draw [axis,thick,-] (1,1)--(9,1);
\draw [axis,thick,-] (9,1)--(9,9);
\draw [axis,thick,-] (9,9)--(1,9);
\draw [axis,thick,-] (1,9)--(1,1);
\draw [thick,dotted] (1,8)--(3.5,3.5);
\draw [thick,dotted] (3.5,3.5)--(8,1);
\draw [thick] (4,4)--(4,9);
\draw [thick,dotted] (3.5,3.5)--(9,3.5);
\draw [thick] (7,1)--(1,7);
\node [below] at (1,1) {\tiny$(c_1,c_2)$};
\node [above] at (7,9) {\tiny$(c_1+b_1,c_2+b_2)$};
\node [above] at (1,9) {\tiny$(c_1,c_2+b_2)$};
\draw[black,fill=black] (1,7) circle (1ex);
\node [left] at (1,7) {$p_{a_1}^*$};
\draw[black,fill=black] (1,8) circle (1ex);
\node [left] at (1,8) {$r_1$};
\draw[black,fill=black] (8,1) circle (1ex);
\node [below] at (8,1) {$r_2$};
\node at (6,6) {\large$W'$};
\node at (2,3.5) {\large$Z'$};
\draw[black,fill=black] (4,4) circle (1ex);
\node [right] at (4,4) {\footnotesize$P$};
\draw[black,fill=black] (3.5,3.5) circle (1ex);
\node [below] at (3.5,3.5) {\footnotesize$Q$};
%\node [right] at (1.75,2) {\scriptsize Slope};
%\node [right] at (1.75,1.5) {\scriptsize$=-1$};
%\draw [axis,thick,->] (4.75,2)--(4,1.5);
%\node [right] at (5.5,3) {\scriptsize Slope$=$};
%\node [right] at (5.75,2.25) {\scriptsize$-1/a_2$};
%\node [right] at (6.75,1.5) {\scriptsize$<-1$};
%\draw [axis,thick,->] (5.25,2.5)--(5.75,3);
\node at (3,6.5) {\large$A'$};
\end{tikzpicture}}&
\subfloat[]{\label{fig:case-4-(c)}\begin{tikzpicture}[scale=0.32,font=\footnotesize,axis/.style={very thick, ->, >=stealth'}]
\draw [axis,thick,-] (1,1)--(9,1);
\draw [axis,thick,-] (9,1)--(9,9);
\draw [axis,thick,-] (9,9)--(1,9);
\draw [axis,thick,-] (1,9)--(1,1);
%\draw [thick,dotted] (1,8)--(3.5,3.5);
\draw [thick,dotted] (3.5,3.5)--(8,1);
\draw [thick,dotted] (4,4)--(4,9);
\draw [thick,dotted] (3.5,3.5)--(9,3.5);
\draw [thick,dotted] (1,7)--(4,4);
\draw [thick] (1,4.5)--(4.5,1);
\node [below] at (1,1) {\tiny$(c_1,c_2)$};
\node [above] at (7,9) {\tiny$(c_1+b_1,c_2+b_2)$};
\node [above] at (1,9) {\tiny$(c_1,c_2+b_2)$};
\draw[black,fill=black] (1,7) circle (1ex);
\node [left] at (1,7) {$p_{a_1}^*$};
\draw[black,fill=black] (1,4.5) circle (1ex);
\node [left] at (1,4.5) {$p^*$};
\draw[black,fill=black] (8,1) circle (1ex);
\node [below] at (8,1) {$r_2$};
\node at (6,6) {\large$W''$};
\node at (2,2) {\large$Z''$};
\draw[black,fill=black] (4,4) circle (1ex);
\node [right] at (4,4) {\footnotesize$P$};
\draw[black,fill=black] (3.5,3.5) circle (1ex);
\node [below] at (3.5,3.5) {\footnotesize$Q$};
%\node [right] at (1.75,2) {\scriptsize Slope};
%\node [right] at (1.75,1.5) {\scriptsize$=-1$};
%\draw [axis,thick,->] (4.75,2)--(4,1.5);
%\node [right] at (5.5,3) {\scriptsize Slope$=$};
%\node [right] at (5.75,2.25) {\scriptsize$-1/a_2$};
%\node [right] at (6.75,1.5) {\scriptsize$<-1$};
%\draw [axis,thick,->] (5.25,2.5)--(5.75,3);
%\node at (3,6.5) {\large$A'$};
\end{tikzpicture}}\\
\subfloat[]{\label{fig:case-4-(d)}\begin{tikzpicture}[scale=0.32,font=\footnotesize,axis/.style={very thick, ->, >=stealth'}]
\draw [axis,thick,-] (1,1)--(9,1);
\draw [axis,thick,-] (9,1)--(9,9);
\draw [axis,thick,-] (9,9)--(1,9);
\draw [axis,thick,-] (1,9)--(1,1);
%\draw [thick,dotted] (1,8)--(3.5,3.5);
\draw [thick,dotted] (3.5,3.5)--(8,1);
\draw [thick,dotted] (4,4)--(4,9);
\draw [thick,dotted] (3.5,3.5)--(9,3.5);
\draw [thick,dotted] (1,7)--(4,4);
\draw [thick] (1,6.5)--(6.5,1);
\node [below] at (1,1) {\tiny$(c_1,c_2)$};
\node [above] at (7,9) {\tiny$(c_1+b_1,c_2+b_2)$};
\node [above] at (1,9) {\tiny$(c_1,c_2+b_2)$};
\draw[black,fill=black] (1,7) circle (1ex);
\node [left] at (1,7.5) {$p_{a_1}^*$};
\draw[black,fill=black] (1,6.5) circle (1ex);
\node [left] at (1,6) {$p^*$};
\draw[black,fill=black] (8,1) circle (1ex);
\node [below] at (8,1) {$r_2$};
%\node at (6,6) {\large$W''$};
%\node at (2,2) {\large$Z''$};
\draw[black,fill=black] (4,4) circle (1ex);
\node [right] at (4,4) {\footnotesize$P$};
\draw[black,fill=black] (3.5,3.5) circle (1ex);
\node [below] at (3.5,3.5) {\footnotesize$Q$};
%\node [right] at (1.75,2) {\scriptsize Slope};
%\node [right] at (1.75,1.5) {\scriptsize$=-1$};
%\draw [axis,thick,->] (4.75,2)--(4,1.5);
%\node [right] at (5.5,3) {\scriptsize Slope$=$};
%\node [right] at (5.75,2.25) {\scriptsize$-1/a_2$};
%\node [right] at (6.75,1.5) {\scriptsize$<-1$};
%\draw [axis,thick,->] (5.25,2.5)--(5.75,3);
%\node at (3,6.5) {\large$A'$};
\end{tikzpicture}}&
\subfloat[]{\label{fig:case-4-(e)}\begin{tikzpicture}[scale=0.32,font=\footnotesize,axis/.style={very thick, ->, >=stealth'}]
\draw [axis,thick,-] (1,1)--(9,1);
\draw [axis,thick,-] (9,1)--(9,9);
\draw [axis,thick,-] (9,9)--(1,9);
\draw [axis,thick,-] (1,9)--(1,1);
%\draw [thick,dotted] (1,8)--(3.5,3.5);
%\draw [thick,dotted] (3.5,3.5)--(8,1);
\draw [thick,dotted] (3.5,5)--(7.5,1);
\draw [thick,dotted] (4,4)--(4,9);
\draw [thick,dotted] (3.5,5)--(9,5);
\draw [thick,dotted] (1,7)--(4,4);
\draw [thick] (1,6.5)--(6.5,1);
\node [below] at (1,1) {\tiny$(c_1,c_2)$};
\node [above] at (7,9) {\tiny$(c_1+b_1,c_2+b_2)$};
\node [above] at (1,9) {\tiny$(c_1,c_2+b_2)$};
\draw[black,fill=black] (1,7) circle (1ex);
\node [left] at (1,7.5) {$p_{a_1}^*$};
\draw[black,fill=black] (1,6.5) circle (1ex);
\node [left] at (1,6) {$p^*$};
\draw[black,fill=black] (7.5,1) circle (1ex);
\node [below] at (7.5,1) {$p_{a_2}^*$};
%\node at (6,6) {\large$W''$};
%\node at (2,2) {\large$Z''$};
\draw[black,fill=black] (4,4) circle (1ex);
\node [below] at (4,4) {\footnotesize$P$};
\draw[black,fill=black] (3.5,5) circle (1ex);
\node [above] at (3.5,5) {\footnotesize$Q$};
%\node [right] at (1.75,2) {\scriptsize Slope};
%\node [right] at (1.75,1.5) {\scriptsize$=-1$};
%\draw [axis,thick,->] (4.75,2)--(4,1.5);
%\node [right] at (5.5,3) {\scriptsize Slope$=$};
%\node [right] at (5.75,2.25) {\scriptsize$-1/a_2$};
%\node [right] at (6.75,1.5) {\scriptsize$<-1$};
%\draw [axis,thick,->] (5.25,2.5)--(5.75,3);
%\node at (3,6.5) {\large$A'$};
\end{tikzpicture}}&
\subfloat[]{\label{fig:case-4-(f)}\begin{tikzpicture}[scale=0.32,font=\footnotesize,axis/.style={very thick, ->, >=stealth'}]
\draw [axis,thick,-] (1,1)--(9,1);
\draw [axis,thick,-] (9,1)--(9,9);
\draw [axis,thick,-] (9,9)--(1,9);
\draw [axis,thick,-] (1,9)--(1,1);
%\draw [thick,dotted] (1,8)--(3.5,3.5);
%\draw [thick,dotted] (3.5,3.5)--(8,1);
\draw [thick,dotted] (4.25,4.25)--(7.5,1);
\draw [thick,dotted] (4,4)--(4,9);
\draw [thick,dotted] (4.25,4.25)--(9,4.25);
\draw [thick,dotted] (1,7)--(4,4);
\draw [thick] (1,6.5)--(6.5,1);
\node [below] at (1,1) {\tiny$(c_1,c_2)$};
\node [above] at (7,9) {\tiny$(c_1+b_1,c_2+b_2)$};
\node [above] at (1,9) {\tiny$(c_1,c_2+b_2)$};
\draw[black,fill=black] (1,7) circle (1ex);
\node [left] at (1,7.5) {$p_{a_1}^*$};
\draw[black,fill=black] (1,6.5) circle (1ex);
\node [left] at (1,6) {$p^*$};
\draw[black,fill=black] (7.5,1) circle (1ex);
\node [below] at (7.5,1) {$p_{a_2}^*$};
%\node at (6,6) {\large$W''$};
%\node at (2,2) {\large$Z''$};
\draw[black,fill=black] (4,4) circle (1ex);
\node [left] at (3.5,4) {\footnotesize$P$};
\draw[black,fill=black] (4.25,4.25) circle (1ex);
\node [above] at (4.75,4.25) {\footnotesize$Q$};
%\node [right] at (1.75,2) {\scriptsize Slope};
%\node [right] at (1.75,1.5) {\scriptsize$=-1$};
%\draw [axis,thick,->] (4.75,2)--(4,1.5);
%\node [right] at (5.5,3) {\scriptsize Slope$=$};
%\node [right] at (5.75,2.25) {\scriptsize$-1/a_2$};
%\node [right] at (6.75,1.5) {\scriptsize$<-1$};
%\draw [axis,thick,->] (5.25,2.5)--(5.75,3);
%\node at (3,6.5) {\large$A'$};
\end{tikzpicture}}
\end{tabular}
\caption{An illustration of the steps enumerated in Case 4. (a) We force the line joining $Q$ and $r_2$ to have a slope of $-1$; (b) Fixing $p_{a_2}=r_2$, we decrease $p_{a_1}$ from $r_1$ to $p_{a_1}^*$; (c) Fixing $p_{a_1}=r_1$, we decrease $p$ from $p_{a_1}^*$ to $p^*$; The point $Q$ may lie either outside $Z$ as in (c), or inside $Z$ as in (d). In case it is as in (d), we decrease $p_{a_2}$ from $r_2$ to $p_{a_2}^*$. When we do so, the lines $s_1(z_1)$ and $s_2(z_2)$ may either overlap as in (e), or, may remain separate as in (f).}\label{fig:case-4}
\end{figure}
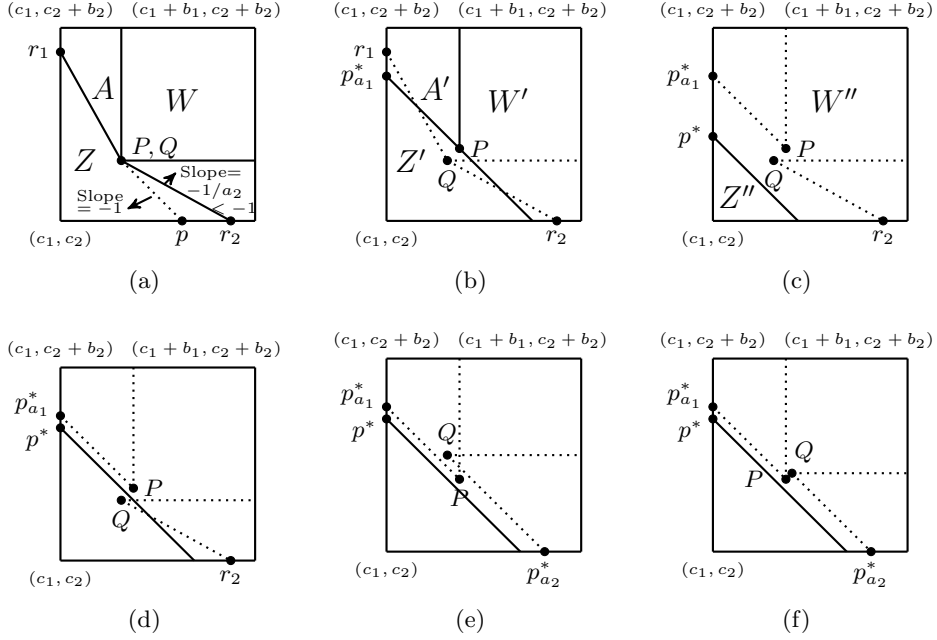

We thus obtain a structure which is as in Figure \ref{fig:c}. But to prove the optimality of the structure, we require the critical points $P$ and $Q$ to be above the line $z_1+z_2=c_1+c_2+p^*$. Observe that the point $Q$ may possibly lie below the line $z_1+z_2=c_1+c_2+p^*$, as illustrated in Figure \ref{fig:case-4-(d)}. In case that happens, we modify $Q$ and the corresponding $s_2(z_2)$ as follows. We decrease $p_{a_2}$ from $r_2$ to $p_{a_2}^*$. The point $Q$ thus moves towards north-east, and is on or above the line $z_1+z_2=c_1+c_2+p^*$, since $p^*\leq p_{a_1}^*\leq p_{a_2}^*$.

Now the partitions with respect to $p_{a_1}^*$ and $p_{a_2}^*$ may overlap (as illustrated in Figure \ref{fig:case-4-(e)}). The overlap occurs whenever $Q$ lies to the north-west of $P$. In case of no overlap (as illustrated in Figure \ref{fig:case-4-(f)}), it can be argued that the optimal mechanism is as in Figure \ref{fig:c}, by constructing $\gamma$ in the same way as in the proof of Proposition \ref{prop:known}. In case of an overlap, we construct $\gamma$ (on $W\times W$) as follows. We first fix $\theta:=\bar{\mu}_+^W+\bar{\alpha}$, $\bar{\alpha}$ constructed with $p_{a_i}=p_{a_i}^*$. The following lemma shows that if $\theta$ satisfies certain conditions, then there exists a $\gamma$ on $W\times W$ that satisfies the dual constraint and the conditions in Lemma \ref{lem:conditions}.
\begin{lemma}\label{lem:DDT-equiv}
Consider a measure $\theta$ on $W$ that satisfies the following conditions:
\begin{equation}\label{eqn:theta-measure}
  (i)\,\theta\succeq_1\bar{\mu}_-^W;\,(ii)\,\theta\succeq_{cvx}\bar{\mu}_+^W;\,(iii)\int_W\|x\|_1\,d\theta=\int_W\|x\|_1\,d\bar{\mu}_+^W.
\end{equation}
Then, there exists a measure $\gamma$ on $W\times W$ such that
\begin{multline*}
  (a)\,\gamma_1-\gamma_2\succeq_{cvx}\bar{\mu}^W;\,(b)\int_Wu\,d(\gamma_1-\gamma_2-\bar{\mu}^W)=0;\\(c)\,u(z)-u(z')=\|z-z'\|_1,\gamma-a.e.
\end{multline*}
\end{lemma}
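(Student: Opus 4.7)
The plan is to set $\gamma_1 := \theta$ and $\gamma_2 := \bar{\mu}_-^W$, so that the required $\gamma$ is a transport coupling between these two nonnegative measures on $W$. Under this choice, $\gamma_1 - \gamma_2 = \bar{\mu}^W + \nu$ with $\nu := \theta - \bar{\mu}_+^W$, and condition~(a) reduces to $\nu \succeq_{cvx} 0$, which is a verbatim restatement of hypothesis~(ii).

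To produce the coupling with the additional structure needed for~(c), I would invoke the multidimensional Strassen theorem (Kamae--Krengel--O'Brien) for componentwise first-order stochastic dominance. First I check that $\theta$, $\bar{\mu}_+^W$, and $\bar{\mu}_-^W$ have the same total mass: the construction of the canonical partition in Section~\ref{SUB:GC1} forces $\bar{\mu}(W) = 0$, so $\bar{\mu}_+^W(W) = \bar{\mu}_-^W(W)$; and in the context where Lemma~\ref{lem:DDT-equiv} is applied we have $\theta = \bar{\mu}_+^W + \bar{\alpha}$ with $\bar{\alpha}(W) = 0$ by Corollary~\ref{cor:cvx} (the support of $\bar{\alpha}$ lies on the top and right boundary segments which meet $W$ only at the critical points $P$, $Q$). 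After normalizing by this common mass, hypothesis~(i) becomes componentwise stochastic dominance $\theta \succeq_1 \bar{\mu}_-^W$, and Kamae--Krengel--O'Brien then yields a nonnegative Radon measure $\gamma$ on $W\times W$ with the prescribed marginals and satisfying $z \geq z'$ componentwise for $\gamma$-a.e.~$(z,z')$.

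Conditions (b) and (c) then follow cleanly. In $W$ the allocation is $(1,1)$, so $\nabla u = (1,1)$; combined with $u \equiv 0$ on the interface with the exclusion region $Z$ along $\{z_1+z_2 = c_1+c_2+p\}$, this gives $u(z) = z_1 + z_2 - (c_1+c_2+p)$ throughout $W$. The componentwise inequality $z \geq z'$ $\gamma$-a.e.\ upgrades the universal bound $u(z) - u(z') \leq \|z-z'\|_1$ to equality, settling~(c). For~(b), writing $\int u\,d\nu = \int \|z\|_1\,d\nu - (c_1+c_2+p)\,\nu(W)$, hypothesis~(iii) kills the first term and the mass identity $\theta(W) = \bar{\mu}_+^W(W)$ established above kills the second.

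The main obstacle is the careful invocation of Kamae--Krengel--O'Brien in our non-probability-measure setting. Two technical points must be addressed: normalization by the common total mass and verifying that hypothesis~(i) is preserved under this scaling; and the fact that $\bar{\mu}_-^W$ is a two-dimensional area measure whereas $\theta$ also carries one-dimensional boundary line components and possibly point masses, so the coupling must consistently transport mass between these mixed-dimensional supports. A minor secondary concern is that $\gamma$ stays inside $W\times W$ rather than leaking to $\partial W$, which is handled automatically by the marginal constraints once the supports of $\theta$ and $\bar{\mu}_-^W$ are identified.
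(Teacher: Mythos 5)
Your proposal is correct and takes essentially the same route as the paper: the paper also sets the marginals to $\theta$ and $\bar{\mu}_-^W$, invokes Strassen's theorem for first-order dominance (cited via \cite{DDT13}) to get a coupling with $z\geq z'$ componentwise $\gamma$-a.e., and then deduces (a) from hypothesis (ii), (b) from (iii) plus the equal-mass identity $\theta(W)=\bar{\mu}_+^W(W)$, and (c) from $u(z)=\|z\|_1-k$ on $W$ together with the monotone coupling. Your extra remarks on normalization and the mixed-dimensional supports are sound housekeeping but do not change the argument.
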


Let $\theta = \bar{\mu}_+^W + \bar{\alpha}$. We now show that $\theta$ satisfies the conditions in Lemma \ref{lem:DDT-equiv}. Conditions (ii) and (iii) only involve verifying if $\bar{\alpha}\succeq_{cvx}0$ and $\int_D\|x\|_1\,d\bar{\alpha}\geq 0$, both of which are established in Proposition \ref{prop:cvx}. The following lemma proves that $\theta$ satisfies condition (i).
\begin{lemma}\label{lem:gc1-W}
Consider the structure in Figure \ref{fig:c}. Let $s_i(z_i)$ with respect to $\bar{\mu}+\bar{\alpha}$ satisfy $s_i(z_i)>\max(c_{-i},c_1+c_2+p^*-z_i)$ for every $z_i\in[c_i,c_i+b_i)$. Then, the following hold:
\begin{enumerate}
 \item[(a)] $(\bar{\mu}+\bar{\alpha})(W\cap\{z_1\geq c_1+t_1\}\cap\{z_2\geq c_2+t_2\})\geq 0$ for every $t_i\in[0,b_i]$.
 \item[(b)] $\theta-\bar{\mu}_-^W=\bar{\mu}^W+\bar{\alpha}\succeq_1 0.$
\end{enumerate}
\end{lemma}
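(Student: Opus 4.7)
The key tool is the defining identity (\ref{eqn:si-zi}) for the outer boundary function $s_1$ with respect to the shuffled measure $\bar\mu+\bar\alpha$: for every $z_1\in[c_1,c_1+b_1]$, the column-integrated $(\bar\mu+\bar\alpha)$-mass (as a density in $z_1$) of the vertical interval $\{z_1\}\times[s_1(z_1),c_2+b_2]$, including the boundary atom at $z_2=c_2+b_2$, vanishes. Subtracting this identity from the same integral over $\{z_1\}\times[y,c_2+b_2]$ and using only the interior area density $-3/(b_1b_2)$ between $y$ and $s_1(z_1)$ yields the column-density formula $3(y-s_1(z_1))/(b_1b_2)$. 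Under the hypothesis $s_1(z_1)>\max(c_2,c_1+c_2+p^*-z_1)$, the curve $z_2=s_1(z_1)$ sits strictly inside $W$ above the bottom edge, and an analogous identity in the $z_2$ direction holds for $s_2$.

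For (a), fix $(t_1,t_2)\in[0,b_1]\times[0,b_2]$ and set $R:=W\cap[c_1+t_1,c_1+b_1]\times[c_2+t_2,c_2+b_2]$. For each $z_1\in[c_1+t_1,c_1+b_1]$ the bottom of the column of $R$ is $L(z_1):=\max(c_2+t_2,c_1+c_2+p^*-z_1)$, so the column density of $(\bar\mu+\bar\alpha)$ over $R$ equals $3(L(z_1)-s_1(z_1))/(b_1b_2)$. Integrating over $z_1$ and adding the singular right-edge contributions at $z_1=c_1+b_1$ (the line density $(c_1+b_1)/(b_1b_2)$, the restriction of $\alpha_s^{(2)}$ to $[c_2+t_2,c_2+m_2]$, and the atom $\alpha_p^{(2)}$ when $t_2=0$) gives a closed-form expression for $(\bar\mu+\bar\alpha)(R)$. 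At the corner $(t_1,t_2)=(0,0)$ this reduces to $(\bar\mu+\bar\alpha)(W)=0$, which holds because $\bar\mu(Z)=0$ with $p^*=(\sqrt{(c_1+c_2)^2+6b_1b_2}-c_1-c_2)/3$ and $\bar\alpha(D)=0$ by Corollary \ref{cor:cvx}. Monotonicity of the resulting closed form in each $t_i$, derived from the hypothesis on $s_i$, then propagates nonnegativity to the entire parameter box.

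For (b), the hypothesis $s_i(z_i)>c_1+c_2+p^*-z_i$ on the boundary $z_i\in\{c_i,c_i+b_i\}$ implies that the supports of $\bar\alpha^{(1)}$, $\bar\alpha^{(2)}$, $\alpha_p^{(1)}$, $\alpha_p^{(2)}$ all lie in $W$, so $\bar\mu^W+\bar\alpha=(\bar\mu+\bar\alpha)|_W$. To conclude $\succeq_1 0$, it suffices to verify $(\bar\mu+\bar\alpha)(W\cap U)\ge 0$ for every upper set $U\subseteq D$, described by a non-increasing lower boundary $z_2=h(z_1)$. Substituting $\max(h(z_1),c_1+c_2+p^*-z_1)$ in place of $L(z_1)$ in the column decomposition yields the integrand $3(\max(h(z_1),c_1+c_2+p^*-z_1)-s_1(z_1))/(b_1b_2)$, and a symmetric row-decomposition in $z_2$ using $s_2$ gives a second expression for the same quantity. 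Averaging the two and invoking the hypotheses on both $s_1$ and $s_2$ makes each piece collapse to a manifestly nonnegative quantity.

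The main obstacle is the bookkeeping of all the singular components of $\bar\mu+\bar\alpha$ -- the four edge-line densities of $\bar\mu$ (with left and bottom edges negative when $c_i>0$), the linear densities $\alpha_s^{(1)}$, $\alpha_s^{(2)}$ on the top and right boundaries, and the three atoms $\mu_p$, $\alpha_p^{(1)}$, $\alpha_p^{(2)}$ -- and verifying that in every configuration the positive right-edge and top-edge contributions dominate the negative interior contribution. The defining identity (\ref{eqn:si-zi}) is precisely what transfers the right amount of mass from the top-edge density $(c_2+b_2)/(b_1b_2)+\alpha_s^{(1)}$ into each column, so that the column-by-column accounting respects the sign structure imposed by the hypothesis on $s_i$.
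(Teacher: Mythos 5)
Your proposal follows the paper's general strategy (column decomposition via the defining identity of $s_i$, then reducing $\succeq_1$ to nonnegativity on upper sets), but it has a genuine gap exactly where the lemma is delicate. The claim that the closed form for $(\bar{\mu}+\bar{\alpha})(X(t_1,t_2))$, $X(t_1,t_2):=W\cap\{z_1\geq c_1+t_1,z_2\geq c_2+t_2\}$, is monotone in each $t_i$ and that nonnegativity therefore ``propagates from the corner $(0,0)$'' is false. The hypothesis only guarantees $s_1(z_1)>\max(c_2,c_1+c_2+p^*-z_1)$, not $s_1(z_1)>c_2+t_2$; for $z_1\geq P_1$ one has $s_1(z_1)=\tfrac{2}{3}(c_2+b_2)$, so once $c_2+t_2\in[Q_2,\tfrac{2}{3}(c_2+b_2)]$ the columns being trimmed carry nonpositive mass of either sign pattern, and in the overlap configuration of Case 4 (where $Q$ lies north-west of $P$, which is the very situation this lemma serves) every column of $X(P_1-c_1,Q_2-c_2)$ except the right edge and every row except the top edge has nonpositive integral, namely $3(Q_2-s_1(z_1))/(b_1b_2)\leq 0$. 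Consequently your fallback for part (b), ``averaging the column and row decompositions makes each piece manifestly nonnegative,'' also fails there: neither decomposition is termwise nonnegative, and no convex combination of them can be. What is actually needed at this corner is the quantitative estimate the paper carries out: the positive top/right boundary mass, at least $(c_1+b_1)(c_2+b_2)/3$, dominates the negative interior rectangle $[P_1,\tfrac{2}{3}(c_1+b_1)]\times[Q_2,\tfrac{2}{3}(c_2+b_2)]$, using $Q_1\leq P_1$, $P_2\leq Q_2$, the substitutions $P_2=(c_2+p_{a_1}^*)/2$, $Q_1=(c_1+p_{a_2}^*)/2$, and crucially $p_{a_i}^*\leq b_{-i}$, which yields $(c_1+b_1)(c_2+b_2)/3-3\cdot\tfrac{c_1+b_1}{6}\cdot\tfrac{c_2+b_2}{6}\geq 0$. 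Without this (or an equivalent) estimate the hard case is simply unproved.

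Two smaller points. First, your column formula $3(L(z_1)-s_1(z_1))/(b_1b_2)$ omits the negative line densities of $\mu_s$ on the bottom and left edges, which do lie inside $W$ for $z_1\geq c_1+p^*$ (resp.\ $z_2\geq c_2+p^*$) in Figure \ref{fig:c}; you flag the bookkeeping issue but the stated closed form is not yet correct. Second, for part (b) the paper passes from arbitrary increasing sets to the corner sets $X(t_1,t_2)$ by enclosing an increasing set in one with axis-parallel boundary and using negativity of the interior density; your direct treatment of a general lower boundary $z_2=h(z_1)$ would need its own argument, and as written it again leans on the invalid averaging step.
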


This shows that the dual variable $\gamma$ can be constructed even in the case of an overlap, and so the optimal mechanism is as in Figure \ref{fig:c}, whenever $r_i>p_{a_i}^*$. Thus, when we start from State 4 in the decision tree in Figure \ref{fig:dec-tree}, we obtain the optimal mechanism as in Figure \ref{fig:c}.

We have thus computed the optimal mechanism for all $c_i,b_i$ that satisfy (\ref{eqn:c1-c2-small}). The results are summarized by the following theorem.

\begin{theorem}\label{thm:str-1}
If $c_i, b_i$ satisfy (\ref{eqn:c1-c2-small}), then the optimal mechanism has one of the four structures depicted in Figure \ref{fig:gen-structure-1}. The exact structure and the values of $(p_{a_i},a_i,p)$ are given as follows.
\begin{enumerate}
 \item[(a)] The optimal mechanism is as in Figure \ref{fig:a}, if there exists $p_{a_i}\in[r_i,p_{a_i}^*]$, $i=1,2$, solving (\ref{eqn:pa1-pa2}) and (\ref{eqn:big-W}) simultaneously. The values of $(a_i,p)$ are found from (\ref{eqn:for-a1-a2}) and (\ref{eqn:for-p1}).
 \item[(b)] The optimal mechanism is as in Figure \ref{fig:b}, if (i) the condition in (a) does not hold, and (ii) there exists $p_{a_1}\in[r_1,p_{a_1}^*]$ that solves (\ref{eqn:pa1-(b)}). The values of $(a_1,p)$ are found from (\ref{eqn:for-a1-a2}) and (\ref{eqn:for-p1}). 
 
 The optimal mechanism is as in Figure \ref{fig:f}, if (i) and (ii) above hold with the roles of the indices 1 and 2 swapped. The values of $(a_2,p)$ are found analogously.
 \item[(c)] The optimal mechanism is as in Figure \ref{fig:c} with $p=p^*$, if conditions in $(a)-(b)$ do not hold.
\end{enumerate}
\end{theorem}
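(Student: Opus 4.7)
My plan is to prove optimality by exhibiting a feasible primal $u$ (defined via $\nabla u = q$ for the candidate allocation $q$ of the claimed structure) and a feasible dual $\gamma$ (a transport measure on $D \times D$) that together satisfy the two conditions of Lemma \ref{lem:conditions}. The central device is the shuffling measure $\bar{\alpha} = \bar{\alpha}^{(1)} + \bar{\alpha}^{(2)}$ introduced above, parametrized by $(p_{a_1}, p_{a_2})$ once $(a_i, m_i)$ are fixed by Corollary \ref{cor:cvx} so that $\bar{\alpha} \succeq_{cvx} 0$. I would first write down, in closed form, the canonical partition of $D$ with respect to $\bar{\mu} + \bar{\alpha}$, which yields $s_i(\cdot)$, critical points $P, Q$, and critical price $p$ all as explicit functions of $(p_{a_1}, p_{a_2})$.

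Second, I would impose two geometric constraints: $P_1 + P_2 = Q_1 + Q_2$ (so that $P$ and $Q$ lie on a common $135^\circ$ line) and $\bar{\mu}(W) = 0$. These are precisely equations (\ref{eqn:pa1-pa2})--(\ref{eqn:big-W}). Whenever this system admits a solution $(p_{a_1}, p_{a_2}) \in [r_1, p_{a_1}^*] \times [r_2, p_{a_2}^*]$, the resulting partition is valid (in particular $a_i \leq 1$ and the critical points lie inside $D$), and the dual $\gamma$ can be built region-by-region exactly as in the proof of Proposition \ref{prop:known}, with $\gamma_1^Z = \bar{\mu}^Z$ retained on $Z$ via $\delta_z$ and the boundary mass on $\partial D^+$ redistributed to match $\bar{\mu}_-^{D\setminus Z}$. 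Both conditions of Lemma \ref{lem:conditions} then follow, giving Figure \ref{fig:a} (conclusion (a) of the theorem).

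Third, I would handle the boundary cases where no such joint solution exists. When $r_2 > p_{a_2}^*$ (Case 2 in the discussion), forcing $a_2 = 1$ collapses regions $B$ and $W$, producing the structure in Figure \ref{fig:b}; the condition $\bar{\mu}(W) = 0$ then reduces to the cubic (\ref{eqn:pa1-(b)}) in the single parameter $p_{a_1}$. The symmetric case $r_1 > p_{a_1}^*$ (Case 3) gives Figure \ref{fig:f}. When both $r_i > p_{a_i}^*$ (Case 4), neither Figure \ref{fig:b} nor Figure \ref{fig:f} admits a valid root, and I descend in $p$ from $p_{a_1}^*$ (assuming without loss of generality $p_{a_1}^* \leq p_{a_2}^*$) down to $p^* = \bigl(\sqrt{(c_1+c_2)^2 + 6 b_1 b_2} - c_1 - c_2\bigr)/3$, which solves $\bar{\mu}(W) = 0$ for pure bundling (Figure \ref{fig:c}). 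In this sub-case the shuffled canonical partitions at $p_{a_1} = p_{a_1}^*$ and $p_{a_2} = p_{a_2}^*$ may overlap, so direct matching of $\gamma$ fails; I would then invoke Lemma \ref{lem:DDT-equiv} with $\theta = \bar{\mu}_+^W + \bar{\alpha}$, verifying condition (ii) of (\ref{eqn:theta-measure}) from $\bar{\alpha} \succeq_{cvx} 0$, condition (iii) from a direct line integral, and condition (i) from Lemma \ref{lem:gc1-W}.

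The main obstacle is establishing the three monotonicity-style claims that justify the case analysis: (1) at $p_{a_i} = r_i$ the critical points coincide ($P = Q$) and $\bar{\mu}(W) \geq 0$; (2) increasing $p_{a_1}$ along the implicit curve defined by (\ref{eqn:pa1-pa2}) also increases $p_{a_2}$ and keeps $P$ to the north-west of $Q$ throughout $[r_i, p_{a_i}^*]$; (3) $\bar{\mu}(W)$ is strictly decreasing along this trajectory until either $\bar{\mu}(W) = 0$ or some $p_{a_i}$ reaches $p_{a_i}^*$. These are not self-evident because the constraint curve is implicit, and the sign of $\bar{\mu}(W)$ at $p_{a_i} = r_i$ uses hypothesis (\ref{eqn:c1-c2-small}) in an essential way (as the factorization sketched in Case 2 already illustrates). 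I expect these claims to reduce, after substituting the closed forms (\ref{eqn:for-a1-a2})--(\ref{eqn:P-Q-new}), to rational inequalities in $(c_1, c_2, b_1, b_2, p_{a_1}, p_{a_2})$ whose signs are pinned down by (\ref{eqn:c1-c2-small}); the algebra is routine once set up correctly and is naturally deferred to \ref{app:a}. Once these claims are in hand, the case-by-case construction above together with Lemma \ref{lem:conditions} (and Lemma \ref{lem:DDT-equiv} for Case 4) closes the proof.
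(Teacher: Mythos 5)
Your overall strategy is the paper's own: the shuffling measure $\bar{\alpha}$ with $(a_i,m_i)$ fixed by Corollary \ref{cor:cvx}, the two constraints (\ref{eqn:pa1-pa2}) and (\ref{eqn:big-W}), a trajectory argument started at $p_{a_i}=r_i$, duals built as in Proposition \ref{prop:known}, and Lemma \ref{lem:DDT-equiv} with $\theta=\bar{\mu}_+^W+\bar{\alpha}$ plus Lemma \ref{lem:gc1-W} for the overlapping case. The genuine gap is in your case analysis, which ties each structure to the comparison $r_i\lessgtr p_{a_i}^*$: Figure \ref{fig:a} when a joint root exists, Figure \ref{fig:b} when $r_2>p_{a_2}^*$, Figure \ref{fig:f} when $r_1>p_{a_1}^*$, Figure \ref{fig:c} when both. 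That mapping matches neither the theorem statement nor the facts. Even when $r_i\leq p_{a_i}^*$ for both $i$, the system (\ref{eqn:pa1-pa2})--(\ref{eqn:big-W}) can fail to have a root in the box: your own monotonicity claim (3) allows the trajectory to reach $p_{a_2}=p_{a_2}^*$ (or $p_{a_1}=p_{a_1}^*$) while $\bar{\mu}(W)>0$, and your plan stops there. The missing step is exactly the paper's claims 5--8: freeze $p_{a_2}$ (hence $Q$ and $s_2$) at the value where $a_2=1$, drop the collinearity constraint and set $p=P_1+P_2$, show that a further increase of $p_{a_1}$ still moves $P$ south-west and decreases $\bar{\mu}(W)$, and that the resulting $P$ stays inside $D$ (i.e.\ the root of (\ref{eqn:pa1-(b)}) is at most $2b_2-c_2$, which again uses (\ref{eqn:c1-c2-small})), giving Figure \ref{fig:b}; and if $p_{a_1}$ also reaches $p_{a_1}^*$, lower $p$ to $p^*$ to get Figure \ref{fig:c}. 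Without these transitions, conclusion (b) of the theorem is established only for parameters with $r_2>p_{a_2}^*$, not whenever (a) fails and (\ref{eqn:pa1-(b)}) has a root in $[r_1,p_{a_1}^*]$.

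Symmetrically, in your Case 2 ($r_2>p_{a_2}^*$) Figure \ref{fig:b} is not automatic: if (\ref{eqn:pa1-(b)}) has no root in $[r_1,p_{a_1}^*]$, the descent again ends with $p_{a_1}=p_{a_1}^*$ and $\bar{\mu}(W)>0$, and the answer is Figure \ref{fig:c} after decreasing $p$ to $p^*$ --- so Figure \ref{fig:c} can be optimal in all four of your cases, not only when both $r_i>p_{a_i}^*$. Your assertion that in Case 4 ``neither (b) nor (f) admits a valid root'' is both unproved and unnecessary; the paper never argues this way, it directly constructs the bundling dual there, including the adjustment of $Q$ (decreasing $p_{a_2}$ from $r_2$ to $p_{a_2}^*$) when $Q$ falls below the line $z_1+z_2=c_1+c_2+p^*$, before invoking Lemmas \ref{lem:DDT-equiv} and \ref{lem:gc1-W}. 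In short, the missing idea is the family of intermediate transitions from Figure \ref{fig:a} to \ref{fig:b}/\ref{fig:f} and on to \ref{fig:c}, obtained by freezing one parameter at its cap and continuing the descent of $\bar{\mu}(W)$; this is precisely what makes the theorem's hypotheses root-existence conditions rather than comparisons of $r_i$ with $p_{a_i}^*$.
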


\subsection{Optimal mechanisms when $\frac{c_1}{b_1}$ is small but $\frac{c_2}{b_2}$ is large}\label{SUB:GC2}
Consider the case when $c_1\leq b_1$, but $c_2>2b_2(b_1+c_1)/(b_1+3c_1)$. Notice that $c_2>b_2$ for any $c_1\leq b_1$, and thus this case violates (\ref{eqn:c1-c2-small}). When $c_2$ crosses $2b_2(b_1+c_1)/(b_1+3c_1)$, we claim that the critical point $P$ moves outside the support $D$. From (\ref{eqn:P-Q-new}), we have $P_2\geq c_2$ iff $p_{a_1}\leq 2b_2-c_2$. So to prove the preceding claim, it suffices to prove that when $\{c_1\leq b_1,c_2>2b_2(b_1+c_1)/(b_1+3c_1)\}$, the solution in the interval $[(2b_2-c_2)/3,b_2]$ that solves (\ref{eqn:pa1-(b)}) is at least $2b_2-c_2$.

Substituting $p_{a_1}=2b_2-c_2$ in the left-hand side of (\ref{eqn:pa1-(b)}), we get $2b_2(2b_2(b_1+3c_1)-c_2(b_1+c_1))<0$ when $c_2>2b_2(b_1+c_1)/(b_1+3c_1)$. Substituting $p_{a_1}=(2b_2-c_2)/3$, we get $-\frac{3}{8}b_2c_1(b_2+c_2)<0$, and substituting $p_{a_1}=b_2$, we have $(b_2+c_2)(8b_2(b_1-c_1)+(c_2-b_2)(c_2+3b_2))/8>0$ when $c_1\leq b_1$ and $c_2\geq b_2$. So the solution for (\ref{eqn:pa1-(b)}) in the interval $[(2b_2-c_2)/3,b_2]$ is at least $2b_2-c_2$. We have proved our claim.

At $c_2=2b_2(b_1+c_1)/(b_1+3c_1)$, we have $P_2=c_2$, i.e., $P$ touches the bottom boundary of $D$. The structure in Figure \ref{fig:b} coincides with that in Figure \ref{fig:d}. We anticipate that adding on the top-left boundary a shuffling measure whose density is linear in $[c_1,c_1+p_{a_1}/a_1]$ and is a constant in $[c_1+p_{a_1}/a_1,c_1+p]$ would yield the structure in Figure \ref{fig:d}. We do the following.

\begin{itemize}
\item We construct a shuffling measure $\bar{\beta}$ with parameters $p_{a_1}$, $a_1$, and $p$, and find the conditions on the parameters for $\bar{\beta}\succeq_{cvx}0$ to hold.
\item We then identify the canonical partition associated with $\bar{\mu}+\bar{\beta}$. Observe that it only involves computing $s_1(\cdot)$, since we do not have region $B$ in Figure \ref{fig:d}.
\item To prove the validity of the partition, we (i) analyze how the parameters $a_1$ and $p$ vary when $p_{a_1}$ varies, and then (ii) fix $p_{a_1}=b_2$ and decrease it until we get $p=(b_1-c_1)/2$.
\item For $c_2\in[2b_2(b_1+c_1)/(b_1+3c_1),2b_2(b_1/(b_1-c_1))^2]$, we prove that if $a_1\leq 1$ at the stopping point, then the optimal mechanism is as in Figure \ref{fig:d}; otherwise it is as in Figure \ref{fig:c'}.
\item For $c_2\geq2b_2(b_1/(b_1-c_1))^2$, we proceed as follows. We first prove that if $c_2=2b_2(b_1/(b_1-c_1))^2$, then $p=(b_1-c_1)/2$ occurs when $p_{a_1}=a_1=0$. Then we proceed to prove that the optimal mechanism is as in Figure \ref{fig:e} whenever $c_2\geq2b_2(b_1/(b_1-c_1))^2$.
\end{itemize}

We now split this case into two subcases: (i) $c_2\in[2b_2(b_1+c_1)/(b_1+3c_1),2b_2(b_1/(b_1-c_1))^2]$, i.e., $\frac{c_2}{b_2}$ has large but not too large values, and (ii) $c_2\geq2b_2(b_1/(b_1-c_1))^2$, i.e., $\frac{c_2}{b_2}$ is very large.

\subsubsection{Optimal mechanisms when $\frac{c_1}{b_1}$ is small and $\frac{c_2}{b_2}$ has large but not too large values}
\label{subsubsec:small-intermediate}

Let $c_1\leq b_1$, $c_2\in[2b_2(b_1+c_1)/(b_1+3c_1),2b_2(b_1/(b_1-c_1))^2]$. We begin by constructing the shuffling measure $\bar{\beta}$ (see Figure \ref{fig:measure1}). Define the density $\beta_s$ in the interval $\tilde{D}^{(1)}:=[c_1,c_1+p]\times\{c_2+b_2\}$ as
\begin{equation}\label{eqn:beta}
  \beta_s(x,c_2+b_2):=(2b_2+(3a_1(x-c_1)-c_2-3p_{a_1})\mathbf{1}(x\leq c_1+p_{a_1}/a_1))/(b_1b_2).
\end{equation}

Define $\beta_p:=c_1(b_2-p_{a_1})/(b_1b_2)\delta_{(c_1,c_2+b_2)}$, a point measure at $(c_1,c_2+b_2)$ with mass $c_1(b_2-p_{a_1})/(b_1b_2)$. Define
$$
\bar{\beta}(A):=\int_{c_1}^{c_1+p}\mathbf{1}_A(x,c_2+b_2)\beta_s(x,c_2+b_2)+\beta_p(A\cap(c_1,c_2+b_2))
$$
for all measurable sets $A\subseteq\tilde{D}^{(1)}$. Observe that the densities of $\bar{\beta}$ and $\bar{\alpha}$ differ only in the interval $[c_1+p_{a_1}/a_1,c_1+p]\times\{c_2+b_2\}$. A jump occurs when $\beta_s$ reaches $2b_2-c_2$.

\begin{figure}
\centering
\begin{tikzpicture}[scale=0.4,font=\small,axis/.style={very thick, ->, >=stealth'}]
\draw[thick,<->] (0,-5)--(0,5);
\draw[thick,->] (-1,0)--(12,0);
\node[right] at (0,5) {$\bm{\bar{\beta}(x)}$};
\node[above] at (12,0) {$\bm{x}$};
\draw[axis,->] (2,0)--(2,1);
\node[left] at (0,1) {$c_1(b_2-p_{a_1})$};
\draw[black,fill=black] (0,1) circle (1ex);
\draw[axis,-] (2,-4)--(8,2);
\node[left] at (0,-4) {$2b_2-c_2-3p_{a_1}$};
\draw[black,fill=black] (0,-4) circle (1ex);
\node[left] at (0,2) {$2b_2-c_2$};
\draw[black,fill=black] (0,2) circle (1ex);
\draw[axis,-] (8,3)--(10,3);
\node[left] at (0,3) {$2b_2$};
\draw[black,fill=black] (0,3) circle (1ex);
\node[below] at (2,0) {$c_1$};
\draw[black,fill=black] (2,0) circle (1ex);
\node[below] at (7,0) {\scriptsize$c_1+p_{a_1}/a_1$};
\draw[black,fill=black] (8,0) circle (1ex);
\node[below] at (11,0) {\scriptsize$c_1+p$};
\draw[black,fill=black] (10,0) circle (1ex);
\end{tikzpicture}
\caption{The measure $\bar{\beta}$.}\label{fig:measure1}
\end{figure}
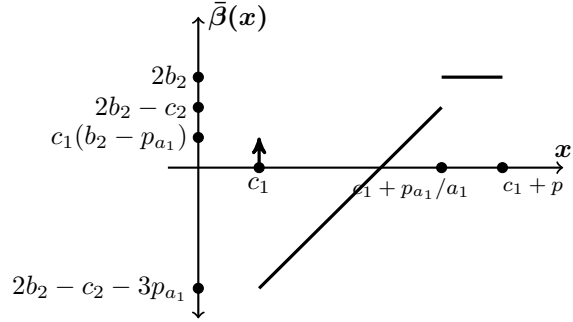

We now identify the canonical partition of $D$ with respect to $\bar{\mu}+\bar{\beta}$, along the same steps used in Section \ref{SUB:GC1}. The construction is illustrated in Figure \ref{fig:illust-sec5-2}.
\begin{figure}[h!]
\centering
\begin{tabular}{ccc}
\subfloat[]{\begin{tikzpicture}[scale=0.4,font=\small,axis/.style={very thick, ->, >=stealth'}]
\draw [axis,thick,-] (1,1)--(9,1);
\draw [axis,thick,-] (9,1)--(9,9);
\draw [axis,thick,-] (9,9)--(1,9);
\draw [axis,thick,-] (1,9)--(1,1);
\draw [thick,dotted] (1,3)--(4,1);
\node [right] at (2,2.5) {$s_1(z_1)$};
%\node [below,rotate=90] at (19/3,4) {$s_2(z_2)$};
\node [below] at (1,1) {\tiny$(c_1,c_2)$};
\node [above] at (7,9) {\tiny$(c_1+b_1,c_2+b_2)$};
\node [above] at (1,9) {\tiny$(c_1,c_2+b_2)$};
\draw[black,fill=black] (4,1) circle (1ex);
\node [below] at (5.5,1) {\tiny$(c_1+p_{a_1}/a_1,c_2)$};
\draw[black,fill=black] (1,3) circle (1ex);
\node [above,rotate=90] at (1,3.5) {\tiny$(c_1,c_2+p_{a_1})$};
\end{tikzpicture}}&
\subfloat[]{\begin{tikzpicture}[scale=0.4,font=\small,axis/.style={very thick, ->, >=stealth'}]
\draw [axis,thick,-] (1,1)--(9,1);
\draw [axis,thick,-] (9,1)--(9,9);
\draw [axis,thick,-] (9,9)--(1,9);
\draw [axis,thick,-] (1,9)--(1,1);
\draw [thick,dotted] (1,3)--(4,1);
\draw [thick] (5,1)--(5,9);
%\draw[black,fill=black] (1,3) circle (1ex);
%\node [left] at (1,3) {$p_{a_1}$};
\draw[black,fill=black] (5,1) circle (1ex);
\node [below] at (5.5,1) {\tiny$(c_1+p,c_2)$};
%\node [below] at (1,1) {\tiny$(c_1,c_2)$};
%\node [above] at (7,9) {\tiny$(c_1+b_1,c_2+b_2)$};
%\node [above] at (1,9) {\tiny$(c_1,c_2+b_2)$};
\end{tikzpicture}}&
\subfloat[]{\begin{tikzpicture}[scale=0.4,font=\small,axis/.style={very thick, ->, >=stealth'}]
\draw [axis,thick,-] (1,1)--(9,1);
\draw [axis,thick,-] (9,1)--(9,9);
\draw [axis,thick,-] (9,9)--(1,9);
\draw [axis,thick,-] (1,9)--(1,1);
\draw [thick,dotted] (1,3)--(4,1);
\draw [thick] (5,1)--(9,1);
\draw[black,fill=black] (5,1) circle (1ex);
\node [below] at (5,1) {\tiny$((c_1+b_1)/2,c_2)$};
\path[fill=gray!50,opacity=.9] (1,3) to (4,1) to (1,1) to (1,3);
\node at (1.5,1.5) {\normalsize$Z$};
\path[fill=gray!50,opacity=.6] (1,3) to (4,1) to (5,1) to (5,9) to (1,9) to (1,3);
\node at (3,5) {\huge$A$};
\path[fill=gray!50,opacity=.3] (5,1) to (9,1) to (9,9) to (5,9) to (5,1);
\node at (7,5) {\huge$W$};
\end{tikzpicture}}
\end{tabular}
\caption{An illustration of (a) the construction of $s_1(z_1)$, (b) the construction of $W$, and (c) the canonical partition.}\label{fig:illust-sec5-2}
\end{figure}
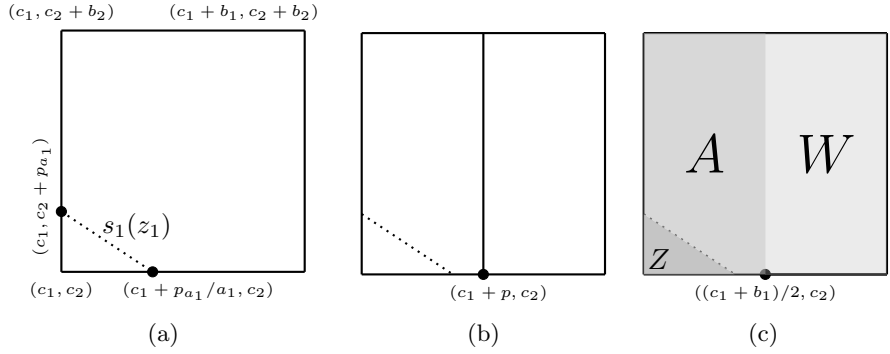

\begin{enumerate}
\item[(a)] The outer boundary function $s_1:[c_1,c_1+p]\rightarrow[c_2,c_2+b_2]$, with respect to $\bar{\mu}+\bar{\beta}$, is computed using the same expressions in (\ref{eqn:si-zi-small}) but with $\alpha_s^{(i)}$ and $\alpha_p^{(i)}$ replaced respectively by $\beta_s$ and $\beta_p$. It is given by
$$
  s_1(z_1)=c_2+(p_{a_1}-a_1(z_1-c_1))_+.
$$
\item[(b)] We now choose the critical price $p$ so that $\bar{\mu}(W)=0$. From an examination of Figure \ref{fig:d} and simple geometry considerations, it is immediate that $\bar{\mu}(W)=0$ iff $b_2(c_1+b_1)+(b_1-p)(-2b_2)=0$, which yields $p=(b_1-c_1)/2$.
\item[(c)] We thus have the following canonical partition of $D$.
\begin{align*}
Z&=\{(z_1,z_2):z_1\leq p_{a_1}/a_1,\,z_2\leq s_1(z_1)\},\\
W&=[c_1+p,c_1+b_1]\times[c_2,c_2+b_2],\\
A&=D\backslash (W\cup Z).
\end{align*}
\end{enumerate}

We now derive the parameters $p_{a_1}$ and $a_1$, by imposing $\bar{\beta}\succeq_{cvx}0$. All proofs of this subsection are relegated to \ref{app:b}.
\begin{proposition}\label{prop:gc2}
Consider the structure in Figure \ref{fig:d}. Then, $\bar{\beta}\succeq_{cvx}0$ is satisfied, if
\begin{equation}\label{eqn:p-a1-pa1}
  p=\frac{3p_{a_1}^2/(2a_1)+c_2p_{a_1}/a_1-c_1(b_2-p_{a_1})}{2b_2}=\frac{p_{a_1}}{a_1}\sqrt{\frac{p_{a_1}+c_2}{2b_2}}.
\end{equation}
If $p=(b_1-c_1)/2$, then $\bar{\beta}\succeq_{cvx}0$ holds when $a_1$ satisfies
\begin{equation}\label{eqn:a1-4d}
  a_1=p_{a_1}\left(\frac{\frac{3}{2}p_{a_1}+c_2}{b_1b_2-c_1p_{a_1}}\right)=\frac{p_{a_1}}{b_1-c_1}\sqrt{\frac{2(p_{a_1}+c_2)}{b_2}},
\end{equation}
and if $p_{a_1}$ is a solution to
\begin{multline}\label{eqn:pa1-4d}
  2b_1^2b_2^2c_2-c_2^2b_2(b_1-c_1)^2+p_{a_1}(2b_1^2b_2^2-4b_1b_2c_1c_2-3c_2b_2(b_1-c_1)^2)\\+p_{a_1}^2(2c_1^2c_2-4b_1b_2c_1-9b_2(b_1-c_1)^2/4)+p_{a_1}^3(2c_1^2)=0.
\end{multline}
\end{proposition}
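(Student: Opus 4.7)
The plan is to apply the sign-pattern/moment criterion of Proposition \ref{prop:cvx} to the measure $\bar{\beta}=\beta_s+\beta_p$, exactly in the spirit of how Corollary \ref{cor:cvx} was deduced from Proposition \ref{prop:cvx} for $\bar{\alpha}^{(1)}$. First, I would verify the positive–negative–positive sign pattern required by Proposition \ref{prop:cvx}. The point mass $\beta_p$ at $c_1$ carries positive weight $c_1(b_2-p_{a_1})/(b_1b_2)$ (since $p_{a_1}\le b_2$); on the interval $(c_1,c_1+p_{a_1}/a_1]$, the linear density $\beta_s$ starts at the negative value $(2b_2-c_2-3p_{a_1})/(b_1b_2)$ (in the regime $p_{a_1}>(2b_2-c_2)/3$ relevant here), increases with slope $3a_1/(b_1b_2)$, crosses zero at $c_1+(c_2+3p_{a_1}-2b_2)/(3a_1)$, and then, on $(c_1+p_{a_1}/a_1,c_1+p]$, jumps to the positive constant $2/b_1$.

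Next, I would establish the two equalities of (\ref{eqn:p-a1-pa1}) as the explicit restatements of the two moment conditions $\bar{\beta}([c_1,c_1+p])=0$ and $\int_{[c_1,c_1+p]}(x-c_1)\,d\bar{\beta}(x)=0$. Adding the point-mass, linear-density, and constant-density contributions to the total mass and clearing $b_1b_2$ gives
\[
 c_1(b_2-p_{a_1})+\frac{p_{a_1}(2b_2-c_2)-\tfrac{3}{2}p_{a_1}^2}{a_1}+2b_2\!\left(p-\frac{p_{a_1}}{a_1}\right)=0,
\]
which upon solving for $p$ produces the first equality in (\ref{eqn:p-a1-pa1}). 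The point mass contributes nothing to the first moment about $c_1$, so setting that moment to zero gives
\[
 \frac{p_{a_1}^2(2b_2-c_2-p_{a_1})}{2a_1^2 b_1b_2}+\frac{p^2-p_{a_1}^2/a_1^2}{b_1}=0,
\]
which rearranges to $p^2=(p_{a_1}/a_1)^2(p_{a_1}+c_2)/(2b_2)$, i.e., the second equality in (\ref{eqn:p-a1-pa1}). With the sign pattern verified and the two moment conditions satisfied, Proposition \ref{prop:cvx} delivers $\bar{\beta}\succeq_{cvx}0$.

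To obtain (\ref{eqn:a1-4d}) and (\ref{eqn:pa1-4d}), I then substitute $p=(b_1-c_1)/2$ into the two equalities of (\ref{eqn:p-a1-pa1}). Solving the first for $a_1$ gives $a_1=p_{a_1}(\tfrac{3}{2}p_{a_1}+c_2)/(b_1b_2-c_1p_{a_1})$; solving the second gives $a_1=(p_{a_1}/(b_1-c_1))\sqrt{2(p_{a_1}+c_2)/b_2}$, yielding (\ref{eqn:a1-4d}). Equating these two expressions, canceling $p_{a_1}$, cross-multiplying by $(b_1-c_1)(b_1b_2-c_1p_{a_1})$, squaring both sides, multiplying by $b_2$, and collecting powers of $p_{a_1}$ produces the cubic (\ref{eqn:pa1-4d}).

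The main technical obstacle will be confirming that the required sign pattern really holds throughout the parameter range of the present subsection, $c_2\in[\,2b_2(b_1+c_1)/(b_1+3c_1),\,2b_2(b_1/(b_1-c_1))^2\,]$, so that Proposition \ref{prop:cvx} is unambiguously applicable — in particular, checking that the solutions of the moment equations yield $p_{a_1}$ in a range where $p_{a_1}\le b_2$ and $p_{a_1}>(2b_2-c_2)/3$, and that the zero-crossing of the linear part of $\beta_s$ lies to the left of $c_1+p_{a_1}/a_1$. The final cubic-expansion step is lengthy but entirely routine.
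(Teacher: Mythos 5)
Your proposal matches the paper's proof essentially step for step: both apply Proposition \ref{prop:cvx} to $\bar{\beta}$, derive the two equalities of (\ref{eqn:p-a1-pa1}) as the zero-total-mass and zero-first-moment conditions (your intermediate expressions agree with the paper's after clearing $b_1b_2$), then substitute $p=(b_1-c_1)/2$ to get (\ref{eqn:a1-4d}) and equate and square to obtain the cubic (\ref{eqn:pa1-4d}). Your additional explicit check of the sign pattern needed for Proposition \ref{prop:cvx} is a sound (and slightly more careful) touch, but the argument is the same; the parameter-range issues you flag at the end are handled in the paper not in this proposition but in the subsequent claims leading to Theorem \ref{thm:gc4}.
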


We have thus identified the parameters of the canonical partition. We now verify if the values of $(p_{a_1},a_1,p)$ so obtained give rise to a valid partition. A partition is valid, if (i) $p_{a_1}\in[0,b_2]$, (ii) $p_{a_1}/a_1\leq p$, and (iii) $a_i\in[0,1]$.

To find the values of $c_i, b_i$ for which the partition is valid, we make a series of claims, proofs of which are in \ref{app:b}.
\begin{enumerate}
 \item At $p_{a_1}=b_2$, the values of $a_1$, $p_{a_1}/a_1$, and $p$ that satisfy (\ref{eqn:p-a1-pa1}) are given by $a_1=\infty$, and $p_{a_1}/a_1=p=0$. See Figure \ref{fig:5.2.1-1}.
 \item So we fix $p_{a_1}=b_2$ and decrease $p_{a_1}$. Suppose $a_1$ and $p$ are modified so as to satisfy (\ref{eqn:p-a1-pa1}). Then, decreasing $p_{a_1}$ from $b_2$ to $(2b_2-c_2)_+$ decreases $a_1$, but increases both $p_{a_1}/a_1$ and $p$. The change of $p_{a_1}/a_1$ and $a_1$ on decreasing $p_{a_1}$ is illustrated in Figure \ref{fig:5.2.1-2}.
 \item We decrease $p_{a_1}$ until $p=(b_1-c_1)/2$. Suppose $c_1\leq b_1$ and $c_2\in[2b_2(b_1+c_1)/(b_1+3c_1),2b_2(b_1/(b_1-c_1))^2]$. Then, the pair $(p_{a_1},a_1)$ that solves (\ref{eqn:p-a1-pa1}) when $p=(b_1-c_1)/2$ satisfies (i) $p_{a_1}\geq(2b_2-c_2)_+$ and (ii) $p_{a_1}/a_1\leq p$. See Figure \ref{fig:5.2.1-3}.
 \item If the value of $a_1$ so obtained is at most $1$, then the optimal mechanism is as in Figure \ref{fig:d}.
 \item In case $a_1>1$, then the optimal mechanism is as in Figure \ref{fig:c'}.
\end{enumerate}

\begin{figure}
\centering
\begin{tabular}{ccc}
\subfloat[]{\label{fig:5.2.1-1}\begin{tikzpicture}[scale=0.38,font=\small,axis/.style={very thick, ->, >=stealth'}]
\draw [axis,thick,-] (1,1)--(9,1);
\draw [axis,thick,-] (9,1)--(9,9);
\draw [axis,thick,-] (9,9)--(1,9);
\draw [axis,thick,-] (1,9)--(1,1);
\draw[black,fill=black] (1,1) circle (1ex);
\node [below] at (1,1) {$\frac{p_{a_1}}{a_1}$};
\node [left] at (1,1.5) {$p$};
\draw[black,fill=black] (1,9) circle (1ex);
\node [left] at (1,9) {$p_{a_1}$};
\node [right] at (1,5) {$a_1=\infty$};
\end{tikzpicture}}&
\subfloat[]{\label{fig:5.2.1-2}\begin{tikzpicture}[scale=0.38,font=\small,axis/.style={very thick, ->, >=stealth'}]
\draw [axis,thick,-] (1,1)--(9,1);
\draw [axis,thick,-] (9,1)--(9,9);
\draw [axis,thick,-] (9,9)--(1,9);
\draw [axis,thick,-] (1,9)--(1,1);
\draw [thick] (1,3.5)--(3.5,1);
\draw [thick] (4.5,1)--(4.5,9);
\draw [thick,dotted] (1,3)--(4,1);
\draw [thick,dotted] (5,1)--(5,9);
\draw[black,fill=black] (1,3) circle (1ex);
\node [left] at (1,3) {$p_{a_1}'$};
\draw[black,fill=black] (4,1) circle (1ex);
\node [below] at (4,1.25) {\scriptsize$\frac{p_{a_1}'}{a_1'}$};
\draw[black,fill=black] (5,1) circle (1ex);
\node [below] at (5.5,1) {\scriptsize$p'$};
\draw[black,fill=black] (1,3.5) circle (1ex);
\node [left] at (1,4) {$p_{a_1}$};
\draw[black,fill=black] (3.5,1) circle (1ex);
\node [below] at (2.75,1) {\scriptsize$\frac{p_{a_1}}{a_1}$};
\draw[black,fill=black] (4.5,1) circle (1ex);
\node [below] at (4.75,1) {\scriptsize$p$};
\node [right] at (1,5) {$a_1'<a_1$};
\end{tikzpicture}}&
\subfloat[]{\label{fig:5.2.1-3}\begin{tikzpicture}[scale=0.38,font=\small,axis/.style={very thick, ->, >=stealth'}]
\draw [axis,thick,-] (1,1)--(9,1);
\draw [axis,thick,-] (9,1)--(9,9);
\draw [axis,thick,-] (9,9)--(1,9);
\draw [axis,thick,-] (1,9)--(1,1);
\draw [thick] (1,3)--(4,1);
\draw [thick] (5,1)--(5,9);
\draw[black,fill=black] (1,3) circle (1ex);
\node [left] at (1,3) {$p_{a_1}$};
\draw[black,fill=black] (4,1) circle (1ex);
\node [below] at (3.5,1) {$\frac{p_{a_1}}{a_1}$};
\draw[black,fill=black] (5,1) circle (1ex);
\node [below] at (5.5,1) {$\frac{b_1-c_1}{2}$};
\node [right] at (1,5) {$p_{a_1}\geq(2b_2-c_2)_+$};
\end{tikzpicture}}
\end{tabular}
\caption{An illustration of the steps enumerated in Section 5.2.1. (a) If $p_{a_1}=b_2$, then $a_1=\infty$, and $p_{a_1}/a_1=p=0$; (b) If $p_{a_1}$ decreases to $p_{a_1}'$, then $a_1'<a_1$, but $p_{a_1}'/a_1'>p_{a_1}/a_1$ and $p'>p$; (c) If $p=(b_1-c_1)/2$, then $p_{a_2}\geq(2b_2-c_2)_+$.}\label{fig:illust-sec5.2.1}
\end{figure}
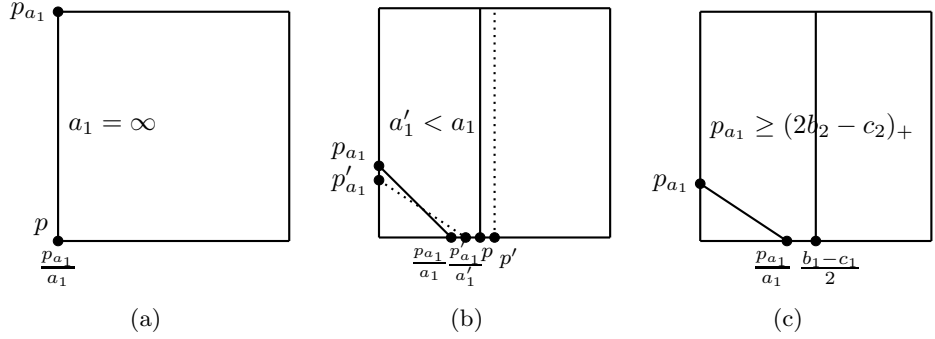

We now assert the following.
\begin{theorem}\label{thm:gc4}
Let $\{c_1\leq b_1$, $c_2\in[2b_2(b_1+c_1)/(b_1+3c_1),2b_2(b_1/(b_1-c_1))^2]\}$. Then, there exists $p_{a_1}\in[(2b_2-c_2)_+,b_2]$ that solves (\ref{eqn:pa1-4d}). If $a_1$ found from (\ref{eqn:a1-4d}) is at most $1$, then the optimal mechanism is as in Figure \ref{fig:d}, with $(p_{a_1},a_1)$ as found above, and $p=(b_1-c_1)/2$. Instead, if $a_1>1$, then the optimal mechanism is as in Figure \ref{fig:c'}, with $p=p^*$.
\end{theorem}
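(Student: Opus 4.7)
My plan is to (i) establish existence of the required root of~(\ref{eqn:pa1-4d}) by a continuity/monotonicity argument, (ii) in the $a_1\le 1$ branch, build the canonical partition and the dual transport plan and check the conditions of Lemma~\ref{lem:conditions}, and (iii) in the $a_1>1$ branch, fall back to the pure-bundling structure of Figure~\ref{fig:c'} and invoke Lemmas~\ref{lem:DDT-equiv} and~\ref{lem:gc1-W}.

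\textbf{Existence of $p_{a_1}$.} I would parametrize $(a_1,p)$ as functions of $p_{a_1}\in[(2b_2-c_2)_+,b_2]$ via the two equalities in~(\ref{eqn:p-a1-pa1}). At $p_{a_1}=b_2$ these force $p=0$ and $a_1=\infty$ (claim~1 of Section~\ref{subsubsec:small-intermediate}); as $p_{a_1}$ decreases, claim~2 says $a_1$ decreases while $p$ increases continuously. Hence by the intermediate value theorem there is some $p_{a_1}$ at which $p=(b_1-c_1)/2$, and eliminating $a_1$ between the two equalities of~(\ref{eqn:p-a1-pa1}) at $p=(b_1-c_1)/2$ yields exactly the cubic~(\ref{eqn:pa1-4d}). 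To place this root in $[(2b_2-c_2)_+,b_2]$ I would substitute $p_{a_1}=(2b_2-c_2)_+$ into~(\ref{eqn:p-a1-pa1}); the upper bound $c_2\le 2b_2(b_1/(b_1-c_1))^2$ forces the resulting $p$ to exceed $(b_1-c_1)/2$, so by the monotonicity of $p$ in $p_{a_1}$ the crossing occurs at a larger $p_{a_1}$. This is claim~3 of the same list.

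\textbf{The branch $a_1\le 1$.} With $(p_{a_1},a_1,p)$ as above, I would form the shuffling measure $\bar{\beta}=\beta_s+\beta_p$ of~(\ref{eqn:beta}) on $\tilde{D}^{(1)}=[c_1,c_1+p]\times\{c_2+b_2\}$, using Proposition~\ref{prop:gc2} to secure $\bar{\beta}\succeq_{\textup{cvx}}0$. The canonical partition of $D$ with respect to $\bar{\mu}+\bar{\beta}$ is $Z,A,W$ with $W=[c_1+p,c_1+b_1]\times[c_2,c_2+b_2]$, $Z$ bounded above by $s_1(z_1)=c_2+(p_{a_1}-a_1(z_1-c_1))_+$, and $A=D\setminus(Z\cup W)$. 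Define $u$ from the allocations $(0,0)$, $(a_1,1)$, $(1,1)$ on $Z,A,W$ as in~(\ref{eqn:q-full}); the condition $a_1\le 1$ makes $u$ feasible for~(\ref{eqn:primal}). For the dual I would mirror the construction in Proposition~\ref{prop:known}: keep mass in $Z$ fixed, transport the excess from $A\cap\partial D^+$ vertically down onto the segments $\{z_1\}\times[s_1(z_1),c_2+b_2]$, and spread the excess from $W\cap\partial D^+$ uniformly over $W$. The shuffling identities ensure $\bar{\mu}(A)=0$ and $(\bar{\mu}+\bar{\beta})(Z)=0$, while $p=(b_1-c_1)/2$ ensures $\bar{\mu}(W)=0$, so the same two-part argument as in Proposition~\ref{prop:known} (namely $0\succeq_{\textup{cvx}}\bar{\mu}^Z$ because $\bar{\mu}^Z$ has its only positive atom at the corner $(c_1,c_2)$ and total mass zero, and equality on $D\setminus Z$) gives $\gamma_1-\gamma_2\succeq_{\textup{cvx}}\bar{\mu}$. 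Both conditions of Lemma~\ref{lem:conditions} are then immediate: (i) from $u\equiv 0$ on $Z$, and (ii) because on each transport line $\nabla u$ is constant in $\{(0,0),(a_1,1),(1,1)\}$ and aligned with the $\|\cdot\|_1$-geodesic.

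\textbf{The branch $a_1>1$ and the main obstacle.} Since $a_1>1$ violates $\nabla u\in[0,1]^2$, Figure~\ref{fig:d} is infeasible and I would switch to pure bundling at threshold $p^*=(\sqrt{(c_1+c_2)^2+6b_1b_2}-c_1-c_2)/3$, reusing the machinery of claim~7 (Case~1) from Section~\ref{SUB:GC1}: set $p_{a_1}=p_{a_1}^*$ so that regions $A$ and $W$ merge into a single pure-bundle region, instantiate $\bar{\alpha}$ at $(p_{a_1}^*,1,m_1)$, define $\theta=\bar{\mu}_+^W+\bar{\alpha}$ on $W$, and apply Lemma~\ref{lem:DDT-equiv} to obtain the dual on $W\times W$; conditions (ii)--(iii) of~(\ref{eqn:theta-measure}) follow from $\bar{\alpha}\succeq_{\textup{cvx}}0$ (Proposition~\ref{prop:cvx}). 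The main obstacle will be condition~(i) of~(\ref{eqn:theta-measure}), i.e.\ verifying the hypothesis $s_1(z_1)>\max(c_2,c_1+c_2+p^*-z_1)$ of Lemma~\ref{lem:gc1-W} and thereby the first-order dominance $\theta\succeq_1\bar{\mu}_-^W$. This requires showing that $(\bar{\mu}+\bar{\alpha})(W\cap\{z_1\geq c_1+t_1\}\cap\{z_2\geq c_2+t_2\})\geq 0$ for every $(t_1,t_2)$, a bookkeeping exercise that splits on whether $(c_1+t_1,c_2+t_2)$ lies under the shuffling support, in the body of $W$, or above the bundle boundary $z_1+z_2=c_1+c_2+p^*$; the lower hypothesis $c_2\geq 2b_2(b_1+c_1)/(b_1+3c_1)$ is what is needed in the boundary subcases to keep the signed measure of each sub-rectangle nonnegative.
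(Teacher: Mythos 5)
Your existence argument and your $a_1\le 1$ branch follow the paper's route (claims 1--3 of Section~\ref{subsubsec:small-intermediate} plus a dual construction mirroring Proposition~\ref{prop:known}), but note two slips: at the endpoint $p_{a_1}=(2b_2-c_2)_+$ it is the \emph{lower} bound $c_2\ge 2b_2(b_1+c_1)/(b_1+3c_1)$ that forces $p\ge(b_1-c_1)/2$ when $c_2<2b_2$; the upper bound is used only when $(2b_2-c_2)_+=0$. And you cannot literally ``spread the excess from $W\cap\partial D^+$ uniformly over $W$'': unlike in Proposition~\ref{prop:known} (where $c_1=c_2=0$), the bottom edge of $W$ carries the negative line density $-c_2/(b_1b_2)$, and indiscriminate spreading pairs top-boundary points with points of larger $z_1$, violating condition (ii) of Lemma~\ref{lem:conditions}. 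The paper's decomposition of $W$ in (\ref{eqn:W-4d}), with same-column transport onto $\{c_2\}\cup[c_2+2b_2/3,c_2+b_2)$ and horizontal transport for the right-boundary excess, is what keeps every transport componentwise monotone; these details are fixable but not automatic.

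The genuine gap is the $a_1>1$ branch. You propose a single construction: instantiate $\bar\alpha$ at $(p_{a_1}^*,1,m_1)$, set $\theta=\bar\mu_+^W+\bar\alpha$, and verify the hypothesis $s_1(z_1)>\max(c_2,c_1+c_2+p^*-z_1)$ of Lemma~\ref{lem:gc1-W}. But from (\ref{eqn:P-Q-new}), the induced critical point satisfies $P_2=c_2+(2b_2-c_2-p_{a_1}^*)/2<c_2$ whenever $p_{a_1}^*>2b_2-c_2$ -- in particular whenever $c_2\ge 2b_2$, which the theorem's range of $c_2$ permits for any $c_1>0$ -- so $s_1$ dips below the bottom boundary and the very hypothesis you plan to verify is false; the verification cannot be carried out, and you never show that the branch condition $a_1>1$ forces $p_{a_1}^*\le 2b_2-c_2$. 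This is precisely why the paper splits the branch: only when the $\bar\beta$-deformation reaches $p_{a_1}=2b_2-c_2$ with $a_1$ still above $1$ (Case a) does it switch to $\bar\alpha^{(1)}$ at $p_{a_1}^*$ (and even then it must separately treat $P_1>c_1+b_1$); otherwise (Case b) it keeps the kinked measure $\bar\beta$ with $a_1=1$, and when the associated $p$ exceeds $b_1$ it needs Lemma~\ref{lem:p^*-mu-bar} to show that either $p^*\le p_{a_1}/a_1$ or $\bar\mu^W\succeq_1 0$ outright. Moreover Lemma~\ref{lem:gc1-W} is not the right tool even in Case a: its hypotheses and proof involve both $s_1$ and $s_2$ (the horizontal-slice estimates rely on $s_2$ lying to the right of the diagonal), whereas here there is no second shuffle; the paper instead proves Lemmas~\ref{lem:gc2-W-case1} and~\ref{lem:gc2-W-case2}, whose horizontal-slice step is a different computation resting on $c_2\ge b_2$. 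Without this case analysis and these substitute lemmas, your argument for the Figure~\ref{fig:c'} conclusion does not go through.
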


\subsubsection{Optimal mechanisms when $\frac{c_1}{b_1}$ is small but $\frac{c_2}{b_2}$ is very large}
Let $c_1\leq b_1$ and $c_2\geq 2b_2(b_1/(b_1-c_1))^2$. Consider $c_2=2b_2(b_1/(b_1-c_1))^2$. We start from $p_{a_1}=b_2$ yet again, decrease $p_{a_1}$, and modify $a_1$ and $p$ according to (\ref{eqn:p-a1-pa1}).
\begin{claim}\label{clm:beta-eh-1}
 At $p_{a_1}=0$, we claim that $a_1=0$ and $p=(b_1-c_1)/2$.
\end{claim}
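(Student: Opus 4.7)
The plan is to analyse the indeterminate $0/0$ limit of (\ref{eqn:p-a1-pa1}) at $(p_{a_1}, a_1) = (0,0)$ by introducing the auxiliary ratio $\lambda := p_{a_1}/a_1$, which turns out to remain finite, and then matching leading-order terms in the two expressions for $p$.

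Rewriting (\ref{eqn:p-a1-pa1}) in terms of $\lambda$ gives
\begin{align*}
p &= \frac{(3/2)p_{a_1}\lambda + c_2\lambda - c_1(b_2 - p_{a_1})}{2b_2},\\
p &= \lambda\sqrt{(p_{a_1}+c_2)/(2b_2)}.
\end{align*}
Passing to the limit $p_{a_1}\to 0^+$ yields $2b_2 p = c_2\lambda - c_1 b_2$ and $p = \lambda\sqrt{c_2/(2b_2)}$. Eliminating $p$ gives
$$\lambda = \frac{c_1 b_2}{c_2 - \sqrt{2b_2 c_2}}.$$
The hypothesis $c_2 \geq 2b_2\bigl(b_1/(b_1-c_1)\bigr)^2 > 2b_2$ (under $c_1>0$) makes the denominator strictly positive, so $\lambda$ is finite and positive. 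Consequently $a_1 = p_{a_1}/\lambda \to 0$, which establishes the first half of the claim.

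Back-substituting this $\lambda$ into $p = \lambda\sqrt{c_2/(2b_2)}$ yields the closed form
$$p = \frac{c_1\sqrt{b_2/2}}{\sqrt{c_2} - \sqrt{2b_2}}.$$
At $c_2 = 2b_2\bigl(b_1/(b_1-c_1)\bigr)^2$, one has $\sqrt{c_2} = \sqrt{2b_2}\cdot b_1/(b_1-c_1)$ and hence $\sqrt{c_2} - \sqrt{2b_2} = \sqrt{2b_2}\cdot c_1/(b_1-c_1)$, which reduces the expression for $p$ to $(b_1-c_1)/2$, proving the second half of the claim.

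The only delicate point is the $0/0$ behaviour of (\ref{eqn:p-a1-pa1}) at the target point; the $\lambda$-reparametrisation sidesteps it directly. No new structural or convex-dominance facts are needed beyond those already invoked in deriving (\ref{eqn:p-a1-pa1}) in Proposition \ref{prop:gc2}; the remainder is routine algebra.
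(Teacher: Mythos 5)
Your proposal is correct and is essentially the paper's own argument: the paper simply substitutes $p_{a_1}=0$ into the closed forms (\ref{eqn:a1-4d-ini}) and (\ref{eqn:p-4d}) (themselves obtained by solving (\ref{eqn:p-a1-pa1}) for $a_1$, $p_{a_1}/a_1$ and $p$), and your $\lambda=p_{a_1}/a_1$ limit rederives exactly those expressions, with your $p=c_1\sqrt{b_2/2}/(\sqrt{c_2}-\sqrt{2b_2})$ coinciding with the paper's $c_1b_2/(\sqrt{2b_2c_2}-2b_2)$. Both then reduce to $(b_1-c_1)/2$ at $c_2=2b_2\bigl(b_1/(b_1-c_1)\bigr)^2$, so the two proofs are the same computation in slightly different packaging.
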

The structure is now the same as depicted in Figure \ref{fig:e}.

We now construct a shuffling measure that is a variant of $\bar{\beta}$. Observe from (\ref{eqn:a1-4d}) that $\lim_{p_{a_1}\rightarrow 0}(p_{a_1}/a_1)=b_1b_2/c_2$. So, when $c_2=2b_2(b_1/(b_1-c_1))^2$, the density $\beta_s$ (in (\ref{eqn:beta})) becomes a step function with values $(2b_2-c_2)/(b_1b_2)$ and $2b_2/(b_1b_2)$ with the jump occurring at $(c_1+b_1b_2/c_2,c_2+b_2)$. We fix this density for every $c_2\geq 2b_2(b_1/(b_1-c_1))^2$, and call it $\beta_{es}$. We also define a point measure $\beta_{ep}:=c_1b_2/(b_1b_2)\delta_{(c_1,c_2+b_2)}$. The resulting measure is
$$
\bar{\beta_e}(A):=\int_{c_1}^{c_1+p}\mathbf{1}_A(x,c_2+b_2)\beta_{es}(x,c_2+b_2)\,dx+\beta_{ep}(A\cap(c_1,c_2+b_2))
$$
for all measurable sets $A\subseteq\tilde{D}^{(1)}$. The components of $\bar{\beta}_e$ are given by
\begin{align}
  \beta_{es}(x,c_2+b_2)&=\begin{cases}(2b_2-c_2)/(b_1b_2)&\mbox{if }x\in[c_1,c_1+b_1b_2/c_2]\\
   2b_2/(b_1b_2)&\mbox{if }x\in(c_1+b_1b_2/c_2,(c_1+b_1)/2],\end{cases}\nonumber\\
  \beta_{ep}(c_1,c_2+b_2)&=c_1b_2/(b_1b_2).\label{eqn:beta-eh}
\end{align}

The canonical partition of the support $D$ with respect to $\bar{\mu}+\bar{\beta}_e$ is the same as constructed in Section 5.2.1 (as illustrated in Figure \ref{fig:illust-sec5-2}), with $p_{a_1}=a_1=0$, and $p_{a_1}/a_1=b_1b_2/c_2$.

We now verify if $\bar{\beta}_e\succeq_{cvx}0$.
\begin{proposition}\label{prop:beta-eh}
Suppose $c_1\leq b_1$ and $c_2\geq 2b_2(b_1/(b_1-c_1))^2$. Then, $\bar{\beta}_e$ satisfies (i) $\bar{\beta}_e\succeq_{cvx}0$ and (ii) $\int_{\tilde{D}^{(1)}}u\,d\bar{\beta}_e=0$ for constant $u$.
\end{proposition}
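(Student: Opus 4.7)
My plan is to verify both assertions by direct computation. Part (ii) concerns only the mass of $\bar{\beta}_e$ on the first segment, which from (\ref{eqn:beta-eh}) equals $c_1/b_1 + (2b_2 - c_2)/c_2$; this is a one-line calculation from the explicit densities and the length $b_1b_2/c_2$ of the segment. The harder part is (i). A quick sanity check shows that the first moment of $\bar{\beta}_e$ is not zero in general, so Proposition \ref{prop:cvx} cannot be applied directly. However, the total mass of $\bar{\beta}_e$ on its full support $[c_1,(c_1+b_1)/2]$ is zero; this follows from the identity $\ell \cdot c_2 = b_1 b_2$ with $\ell = b_1 b_2/c_2$.

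For (i), I would use the integral representation of a convex, increasing $f$ on $[c_1,(c_1+b_1)/2]$:
\[
 f(x) = f(c_1) + f'(c_1)(x-c_1) + \int (x-t)_+\, f''(dt),
\]
with $f'(c_1) \ge 0$ and $f'' \ge 0$. Integrating against $\bar{\beta}_e$ and using zero total mass, the $f(c_1)$ term drops out, so it suffices to establish (a) $\int (x-c_1)\, d\bar{\beta}_e \ge 0$, and (b) $\int (x-t)_+\, d\bar{\beta}_e \ge 0$ for every $t\in\mathbb{R}$. For (a), a direct calculation yields
\[
 \int (x-c_1)\, d\bar{\beta}_e = \frac{(b_1-c_1)^2}{4b_1} - \frac{b_1 b_2}{2c_2},
\]
and this is nonnegative precisely when $c_2 \ge 2b_2\big(b_1/(b_1-c_1)\big)^2$, i.e.\ exactly the hypothesis.

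For (b), I split by cases. When $t \le c_1$ the integral equals $\int x\, d\bar{\beta}_e = \int (x-c_1)\, d\bar{\beta}_e$ by zero total mass, hence $\ge 0$ by (a). When $t$ lies to the right of the jump point $c_1 + b_1b_2/c_2$, only the positive-density segment contributes, and the integral is manifestly nonnegative. The delicate case is $t \in (c_1,\, c_1 + b_1b_2/c_2]$, where the negative and positive density segments compete; setting $A = (2b_2-c_2)/(b_1b_2)$, $B = 2/b_1$, $u = c_1 + b_1b_2/c_2 - t$, and $v = (c_1+b_1)/2 - t$, the integral evaluates to $\tfrac12\big(Au^2 + B(v^2 - u^2)\big)$, whose nonnegativity is equivalent to the ratio inequality $v/u \ge \sqrt{(B-A)/B} = \sqrt{c_2/(2b_2)}$. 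A short monotonicity check shows $v/u$ is increasing in $t$ on this range, so the extremum is at $t = c_1^+$, where $v/u = (b_1-c_1)c_2/(2b_1 b_2)$; the resulting tight inequality again collapses to exactly $c_2 \ge 2b_2\big(b_1/(b_1-c_1)\big)^2$. The main obstacle is that Proposition \ref{prop:cvx} is not applicable and that the tight condition for both (a) and the worst-case $t$ in (b) coincides precisely with the hypothesis, which is the structural reason the threshold $2b_2\big(b_1/(b_1-c_1)\big)^2$ appears.
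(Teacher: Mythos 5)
Your treatment of (i) is correct and follows a genuinely different route from the paper. The paper computes the total mass of $\bar{\beta}_e$ on the full support $[c_1,(c_1+b_1)/2]$ to be zero and the first moment about $c_1$ to be $\frac{(b_1-c_1)^2}{4b_1}-\frac{b_1b_2}{2c_2}\ge 0$ under the hypothesis, and then asserts convex dominance ``similar to the proof of Proposition~\ref{prop:cvx}''; the needed (unstated) extension is that the affine-shift argument there only requires the first moment to be nonnegative rather than zero, given the sign pattern (nonnegative point mass, then nonpositive, then positive density). You correctly observe that Proposition~\ref{prop:cvx} does not apply verbatim, and instead expand a continuous, increasing, convex $f$ as $f(c_1)+f'_+(c_1)(x-c_1)+\int(x-t)_+\,f''(dt)$, reducing the claim (via zero total mass) to nonnegativity of the first moment and of all hockey-stick integrals. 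Your computations check out: the first moment equals $\frac{(b_1-c_1)^2}{4b_1}-\frac{b_1b_2}{2c_2}$, the ratio $(B-A)/B$ equals $c_2/(2b_2)$, $v/u$ is increasing in $t$ (using $\ell=b_1b_2/c_2\le (b_1-c_1)/2$ under the hypothesis), and the worst case $t\to c_1^+$ collapses exactly to $c_2\ge 2b_2\,(b_1/(b_1-c_1))^2$. Your route is more self-contained and shows the hypothesis is tight; the paper's route is shorter but leans on an unproved variant of Proposition~\ref{prop:cvx}.

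Part (ii) is where your proposal has a gap. You compute the mass of $\bar{\beta}_e$ on $[c_1,c_1+b_1b_2/c_2]$ as $c_1/b_1+(2b_2-c_2)/c_2$ and present this one-line calculation as settling (ii), but you never verify that it equals zero --- and in fact it equals $c_1/b_1+2b_2/c_2-1\le c_1(c_1-b_1)/b_1^2<0$ whenever $0<c_1<b_1$, so (ii) as literally printed cannot be established this way. The paper's own proof (and the only place (ii) is used, in the proof of Theorem~\ref{thm:gc5}) makes clear that the intended interval is the full support $[c_1,(c_1+b_1)/2]$, over which the mass is indeed zero; that is precisely the total-mass identity you already establish en route to (i). So you have the substance of the intended claim, but you should state explicitly that the interval in the statement is a typo and prove the full-support version, rather than leaving the literal claim apparently confirmed by a calculation that in fact contradicts it.
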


We now assert that the following theorem holds, proof of which is relegated to \ref{app:b}.
\begin{theorem}\label{thm:gc5}
The optimal mechanism is as in Figure \ref{fig:e}, when $\{c_1\leq b_1, c_2\geq 2b_2(b_1/(b_1-c_1))^2\}$.
\end{theorem}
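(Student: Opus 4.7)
The plan is to exhibit a primal--dual pair satisfying the optimality conditions of Lemma \ref{lem:conditions}, using the shuffling measure $\bar{\beta}_e$ from (\ref{eqn:beta-eh}). For the primal I take
\[
u(z) := \max\bigl\{\,z_2 - c_2,\; z_1 + z_2 - c_1 - c_2 - p\,\bigr\},\qquad p := (b_1 - c_1)/2,
\]
which is continuous, convex, increasing, and $1$-Lipschitz in $\ell_1$. The induced allocation $\nabla u$ realises the mechanism of Figure \ref{fig:e}: sell item $2$ at price $c_2$ on $A := [c_1, c_1 + p] \times [c_2, c_2 + b_2]$, and sell the bundle at price $c_1 + c_2 + p$ on $W := [c_1 + p, c_1 + b_1] \times [c_2, c_2 + b_2]$. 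Under the degenerate parameter choices $p_{a_1} = a_1 = 0$ and $p_{a_1}/a_1 := b_1 b_2 / c_2$ from Claim \ref{clm:beta-eh-1}, this matches the canonical partition of $D$ with respect to $\bar{\mu} + \bar{\beta}_e$ from Section \ref{subsubsec:small-intermediate}; the nominal exclusion set $Z$ collapses to the $\bar{\mu}$-null segment $\{z_2 = c_2\}$, so the two allocation regions $A$ and $W$ together cover $D$.

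The dual $\gamma$ is constructed in direct analogy with the proof of Proposition \ref{prop:known}, with the outgoing boundary mass augmented by $\bar{\beta}_e$. Mass is transported from the (augmented) boundary of $D$ into the interior along $u$-isometric rays: from the top of $A$, spread vertically downward onto $\{z_1\} \times [c_2, c_2 + b_2]$; from the top and right of $W$, spread into the interior of $W$ exactly as in cases (b)--(d) of that earlier proof; the shuffling point mass at $(c_1, c_2 + b_2)$ transports down the left boundary $\{c_1\}\times[c_2, c_2 + b_2]$; and the $\bar{\mu}$ point mass at $(c_1, c_2)$ is handled by a retention pattern, which contributes nothing to $\int u\,d(\gamma_1 - \gamma_2)$ since $u(c_1, c_2) = 0$. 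Because $\nabla u \in \{(0,1), (1,1)\}$ on $A$ and $W$, we have $u(z) - u(z') = \|z - z'\|_1$ along each transport ray, so condition (ii) of Lemma \ref{lem:conditions} holds $\gamma$-a.e.; and condition (i) reduces to $\int u\,d\bar{\beta}_e = 0$, which is immediate because $u \equiv b_2$ on $\mathrm{supp}(\bar{\beta}_e)$ (the top of $A$) and $\bar{\beta}_e(D) = 0$.

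The main technical obstacle is the feasibility $\gamma_1 - \gamma_2 \succeq_{cvx} \bar{\mu}$. Following the template of Lemma \ref{lem:gc1-W}, I will reduce this to verifying $(\bar{\mu} + \bar{\beta}_e)(R(t_1, t_2)) \geq 0$ for every upper-right rectangle $R(t_1, t_2) := \{z \in D : z_1 \geq c_1 + t_1,\; z_2 \geq c_2 + t_2\}$, with $(t_1, t_2) \in [0, b_1] \times [0, b_2]$. The $\bar{\mu}$ contribution is a closed-form polynomial in $(t_1, t_2)$, while the $\bar{\beta}_e$ contribution is piecewise-affine in $t_1$ with a jump at $c_1 + b_1 b_2/c_2$; the inequality is slackest for large $t_2$ and tightest on the bottom edge $t_2 = 0$ and near the jump, where the hypothesis $c_2 \geq 2 b_2 (b_1/(b_1 - c_1))^2$ is precisely the threshold that forces the residual quadratic in $t_1$ to remain nonnegative throughout $[0, b_1]$. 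This agrees with the loss-versus-gain heuristic in Section \ref{sec:sol-space} and is the only algebraically delicate step; combined with Proposition \ref{prop:beta-eh} to absorb the shuffling in the convex-dominance bookkeeping, Lemma \ref{lem:conditions} then closes the proof.
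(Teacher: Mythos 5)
Your primal $u$, the partition, the prices, and the complementary-slackness bookkeeping (including $\int u\,d\bar{\beta}_e=0$ because $u\equiv b_2$ on $\mathrm{supp}(\bar\beta_e)$) all match the paper's proof, which likewise builds the explicit transport of the Figure \ref{fig:d} construction with $\bar\beta_e$ in place of $\bar\beta$. The gap is in your feasibility step, which is both misdirected and, as stated, false. Once $\gamma$ is the exact-match transport you describe, one has $(\gamma_1-\gamma_2)^{D\setminus Z}=\bar{\mu}^{D\setminus Z}+\bar{\beta}_e$ and $(\gamma_1-\gamma_2)^{Z}=0$, so $\gamma_1-\gamma_2\succeq_{cvx}\bar{\mu}$ reduces to two facts: $\bar{\beta}_e\succeq_{cvx}0$ (Proposition \ref{prop:beta-eh}) and $0\succeq_{cvx}\bar{\mu}^{Z}$ for the retained set $Z=[c_1,c_1+b_1b_2/c_2]\times\{c_2\}$. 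No rectangle-positivity check over $D$ is needed, and the one you propose fails: for $t_2=0$ and $0<t_1<b_1b_2/c_2$ one computes $\bar{\mu}(R(t_1,0))=(c_1-b_1+2t_1)/b_1$ and $\bar{\beta}_e((c_1,\,(c_1+b_1)/2])=-c_1/b_1-(2b_2-c_2)t_1/(b_1b_2)$, so $(\bar{\mu}+\bar{\beta}_e)(R(t_1,0))=(2t_1-b_1)/b_1+(c_2-2b_2)t_1/(b_1b_2)$, which tends to $-1$ as $t_1\downarrow 0$ and is negative precisely on $0<t_1<b_1b_2/c_2$. The failing rectangles are exactly those that slice through $Z$, which is not a $\bar{\mu}$-null segment as you assert: it carries the $+1$ point mass at $(c_1,c_2)$ offset by the negative bottom-boundary line density, has $\bar{\mu}(Z)=0$, and is handled by retention (giving $0\succeq_{cvx}\bar{\mu}^Z$ since the positive part sits at the minimal point), not by dominance of $\bar{\mu}+\bar{\beta}_e$. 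The increasing-set/Strassen machinery of Lemmas \ref{lem:DDT-equiv} and \ref{lem:gc1-W} is reserved in the paper for cases where an exact transport cannot be exhibited, and it would not even apply over $A$, where $u=z_2-c_2$ forces vertical transport rather than the $\|z\|_1$-isometric couplings that first-order dominance supplies.

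Relatedly, you misplace where the hypothesis $c_2\geq 2b_2\bigl(b_1/(b_1-c_1)\bigr)^2$ enters. It is not a threshold for any rectangle inequality; it is exactly what makes the first-moment condition $\int_{[c_1,(c_1+b_1)/2]}(x-c_1)\,d\bar{\beta}_e\geq 0$ hold in Proposition \ref{prop:beta-eh} (equivalently, via Claim \ref{clm:beta-eh-1}, what lets $p=(b_1-c_1)/2$ be reached at $p_{a_1}=a_1=0$), and hence what yields $\bar{\beta}_e\succeq_{cvx}0$. With that correction, and with $Z$ treated by retention as above, your construction closes along the paper's lines; as written, the step you call the ``main technical obstacle'' would not go through.
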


Put together, Theorem \ref{thm:gc4} of Section \ref{subsubsec:small-intermediate} and Theorem \ref{thm:gc5} characterize the optimal mechanisms for $c_i, b_i$ satisfying $\{c_1\leq b_1, c_2\geq 2b_2(b_1+c_1)/(b_1+3c_1)\}$.

\subsection{Optimal mechanisms when $\frac{c_2}{b_2}$ is small but $\frac{c_1}{b_1}$ is large}\label{SUB:GC3}

Consider the case when $c_2<b_2$ but $c_1>2b_1(b_2+c_2)/(b_2+3c_2)$. We define the density $\beta_s$ in the interval $\tilde{D}^{(2)}:\{c_1+b_1\}\times[c_2,c_2+p]$ as
\begin{equation}\label{eqn:beta-sym}
  \beta_s(c_1+b_1,x):=(2b_1+(3a_2(x-c_2)-c_1-3p_{a_2})\mathbf{1}(x\leq c_2+p_{a_2}/a_2))/(b_1b_2).
\end{equation}
We also define $\beta_p:=c_2(b_1-p_{a_2})/(b_1b_2)\delta_{(c_1+b_1,c_2)}$, a point measure at $(c_1+b_1,c_2)$ with mass $c_2(b_1-p_{a_2})/(b_1b_2)$. Define
$$
\bar{\beta}(A):=\int_{c_2}^{c_2+p}\mathbf{1}_A(c_1+b_1,x)\beta_s(c_1+b_1,x)+\beta_p(A\cap(c_1+b_1,c_2))
$$
for all measurable sets $A\subseteq\tilde{D}^{(2)}$. With symmetric arguments, we have the following assertions similar to Proposition \ref{prop:gc2},  Theorem \ref{thm:gc4}, and Theorem \ref{thm:gc5}.

\subsubsection{Optimal mechanisms when $\frac{c_2}{b_2}$ is small and $\frac{c_1}{b_1}$ has large but not too large values}

\vspace*{.1in}

The following is the analog of Proposition \ref{prop:gc2}.

\begin{proposition}\label{prop:gc2-sym}
Consider the structure in Figure \ref{fig:g}. If $p=(b_2-c_2)/2$, then $\bar{\beta}\succeq_{cvx}0$ is satisfied, if $a_2$ satisfies
\begin{equation}\label{eqn:a2-4d}
  a_2=p_{a_2}\left(\frac{\frac{3}{2}p_{a_2}+c_1}{b_1b_2-c_2p_{a_2}}\right)=\frac{p_{a_2}}{b_2-c_2}\sqrt{\frac{2(p_{a_2}+c_1)}{b_1}}
\end{equation}
and if $p_{a_2}$ is a solution to
\begin{multline}\label{eqn:pa2-4d}
  2b_1^2b_2^2c_1-c_1^2b_1(b_2-c_2)^2+p_{a_2}(2b_1^2b_2^2-4b_1b_2c_1c_2-3c_1b_1(b_2-c_2)^2)\\+p_{a_2}^2(2c_2^2c_1-4b_1b_2c_2-9b_1(b_2-c_2)^2/4)+p_{a_2}^3(2c_2^2)=0.
\end{multline}
\end{proposition}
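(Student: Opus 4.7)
\textbf{Proof proposal for Proposition \ref{prop:gc2-sym}.} The plan is to obtain Proposition~\ref{prop:gc2-sym} as the symmetric image of Proposition~\ref{prop:gc2} under the coordinate involution $(z_1,z_2)\mapsto(z_2,z_1)$ accompanied by the simultaneous relabelling $(c_1,b_1,p_{a_1},a_1)\leftrightarrow(c_2,b_2,p_{a_2},a_2)$. Under this involution the primal--dual pair \eqref{eqn:primal}--\eqref{eqn:dual} is invariant: the signed measure $\bar{\mu}$ (with its area, line, and point components) is mapped to itself, the cost $\|z-z'\|_1$ is preserved, and the rectangular support $D$ is mapped to the rectangle with swapped parameters. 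Under the same involution, the structure in Figure~\ref{fig:g} is sent to the structure in Figure~\ref{fig:d}, and the shuffling measure $\bar{\beta}$ of \eqref{eqn:beta-sym} (supported on the right edge $\tilde{D}^{(2)}$) is sent to the shuffling measure of \eqref{eqn:beta} (supported on the top edge $\tilde{D}^{(1)}$). Hence the conclusion of Proposition~\ref{prop:gc2} transfers verbatim after relabelling.

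For concreteness I would carry out the derivation directly, following the three-step recipe already used for Proposition~\ref{prop:gc2}. First, I would verify that the density of $\beta_s$ defined in \eqref{eqn:beta-sym}, together with the point mass $\beta_p$ at $c_2$, exhibits the positive--negative--positive sign pattern on $[c_2,c_2+p]$ demanded by Proposition~\ref{prop:cvx} (with the sign change occurring at $c_2+p_{a_2}/a_2$). Proposition~\ref{prop:cvx} then reduces the convex-domination requirement $\bar{\beta}\succeq_{cvx}0$ to the two moment identities
\begin{equation*}
\bar{\beta}([c_2,c_2+p])=0,\qquad \int_{[c_2,c_2+p]} x\,d\bar{\beta}(x)=0.
\end{equation*}
Writing these two identities explicitly in terms of $(p_{a_2},a_2,p)$ gives the two expressions for $a_2$ in \eqref{eqn:a2-4d}: the zeroth-moment condition contributes the first (rational) expression, and the first-moment condition contributes the second (square-root) expression.

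Finally, I would specialise to $p=(b_2-c_2)/2$, which is the value forced by the excess--deficit balance $\bar{\mu}(W)=0$ on the region $W=[c_1,c_1+b_1]\times[c_2+p,c_2+b_2]$ of the canonical partition reflected from Figure~\ref{fig:d}. Equating the two expressions in \eqref{eqn:a2-4d}, squaring and clearing denominators, yields a cubic polynomial identity in $p_{a_2}$, which a direct computation identifies with \eqref{eqn:pa2-4d}.

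I expect no serious obstacle: every step is the mirror image of a computation already completed for Proposition~\ref{prop:gc2}. The only item deserving care is confirming that the canonical-partition bookkeeping (in particular $\bar{\mu}(W)=0$ giving $p=(b_2-c_2)/2$ rather than $(b_1-c_1)/2$) really does swap cleanly under the involution. Once that is checked, the algebra producing \eqref{eqn:a2-4d} and \eqref{eqn:pa2-4d} is identical to that producing \eqref{eqn:a1-4d} and \eqref{eqn:pa1-4d}, with the indices $1$ and $2$ interchanged throughout.
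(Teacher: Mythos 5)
Your proposal is correct and matches the paper's treatment: the paper proves Proposition \ref{prop:gc2} by the sign-pattern/moment-condition argument via Proposition \ref{prop:cvx}, then imposes $\bar{\mu}(W)=0$ to fix $p$ and squares to get the cubic, and it obtains Proposition \ref{prop:gc2-sym} exactly as you do, by the $1\leftrightarrow 2$ symmetry with no separate computation. Your check that the reflected partition gives $p=(b_2-c_2)/2$ is the right point of care, and it works out as you state.
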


We also have the analog of Theorem \ref{thm:gc4}.

\begin{theorem}\label{thm:gc4-sym}
Let $\{c_2\leq b_2$, $c_1\in[2b_1(b_2+c_2)/(b_2+3c_2),2b_1(b_2/(b_2-c_2))^2]\}$. Then, there exists $p_{a_2}\in[(2b_1-c_1)_+,b_1]$ that solves (\ref{eqn:pa2-4d}). If $a_2$ found from (\ref{eqn:a2-4d}) is at most $1$, then the optimal mechanism is as in Figure \ref{fig:g}, with $(p_{a_2},a_2)$ as found above, and $p=(b_2-c_2)/2$. Instead, if $a_2>1$, then the optimal mechanism is as in Figure \ref{fig:c'}, with $p=p^*$.
\end{theorem}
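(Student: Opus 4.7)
My plan is to obtain Theorem \ref{thm:gc4-sym} as an immediate symmetric counterpart of Theorem \ref{thm:gc4}, by exploiting the invariance of the primal-dual pair (\ref{eqn:primal})--(\ref{eqn:dual}) under the coordinate swap $\sigma:(z_1,z_2)\mapsto(z_2,z_1)$ combined with the parameter swap $(c_1,c_2,b_1,b_2)\mapsto(c_2,c_1,b_2,b_1)$. Under this involution, the density $f(z)=1/(b_1b_2)$ is preserved, the support $D=[c_1,c_1+b_1]\times[c_2,c_2+b_2]$ is mapped to its transpose, and the $\ell_1$ transport cost $\|z-z'\|_1$ is invariant. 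Consequently, $\bar{\mu}=\mu+\mu_s+\mu_p$ is also invariant up to the swap: the area density $-3/(b_1b_2)$ is symmetric, the line densities on the top and right boundaries exchange roles, and the point mass at $(c_1,c_2)$ maps to the point mass at $(c_2,c_1)$. Therefore, any feasible $(u^*,\gamma^*)$ for one orientation gives, via $(z_1,z_2)\mapsto(z_2,z_1)$, a feasible pair of equal objective value for the swapped orientation, and an optimal solution in one orientation yields an optimal solution in the other.

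Applying this to the setting of Theorem \ref{thm:gc4}, the hypothesis $\{c_1\le b_1,\,c_2\in[2b_2(b_1+c_1)/(b_1+3c_1),2b_2(b_1/(b_1-c_1))^2]\}$ becomes, after swapping, exactly the hypothesis $\{c_2\le b_2,\,c_1\in[2b_1(b_2+c_2)/(b_2+3c_2),2b_1(b_2/(b_2-c_2))^2]\}$ of Theorem \ref{thm:gc4-sym}. The shuffling measure $\bar{\beta}$ defined in (\ref{eqn:beta-sym}) on the right boundary $\tilde{D}^{(2)}=\{c_1+b_1\}\times[c_2,c_2+p]$ is precisely the image of the measure (\ref{eqn:beta}) under the swap, so Proposition \ref{prop:gc2-sym}, including the pair of characterizing equations (\ref{eqn:a2-4d}) and (\ref{eqn:pa2-4d}), follows verbatim from Proposition \ref{prop:gc2} by relabeling. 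In particular, $\bar{\beta}\succeq_{cvx}0$ in the swapped setting is the same statement as the one already verified for $\bar{\beta}$ in Section \ref{SUB:GC2}.

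With the symmetry in place, the existence of $p_{a_2}\in[(2b_1-c_1)_+,b_1]$ solving (\ref{eqn:pa2-4d}), together with the critical price $p=(b_2-c_2)/2$ (Myerson's reserve for item $2$ alone, the mirror of $p=(b_1-c_1)/2$ in Theorem \ref{thm:gc4}), follows by applying the four-step bookkeeping from Section \ref{SUB:GC2} with indices $1$ and $2$ interchanged: start from $p_{a_2}=b_1$ (where $a_2=\infty$ and $p_{a_2}/a_2=p=0$), decrease $p_{a_2}$ while adjusting $a_2$ and $p$ along the symmetric analog of (\ref{eqn:p-a1-pa1}), and stop when $p=(b_2-c_2)/2$. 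The canonical partition associated with $\bar{\mu}+\bar{\beta}$ is the mirror of the one depicted in Figure \ref{fig:illust-sec5-2}, with outer boundary function $s_2(z_2)=c_1+(p_{a_2}-a_2(z_2-c_2))_+$, exclusion region in the bottom-left, region $B$ bordering the right boundary, and region $W$ above the horizontal line $z_2=c_2+p$. The dichotomy based on whether $a_2\le 1$ or $a_2>1$ is resolved exactly as in cases 4 and 5 of Section \ref{subsubsec:small-intermediate}: if $a_2\le 1$ the partition is valid and Figure \ref{fig:g} is optimal, while if $a_2>1$ the threshold $p_{a_2}^*$ is reached before $p=(b_2-c_2)/2$, and the argument from Case 4 of Section \ref{SUB:GC1} (together with Lemmas \ref{lem:DDT-equiv} and \ref{lem:gc1-W}) forces bundling as in Figure \ref{fig:c'} with $p=p^*$.

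Finally, I would construct the dual variable $\gamma$ on $D\times D$ mirroring step (ii)--(iii) of the proof of Proposition \ref{prop:known}: the mass on the right boundary and on $(c_2)$-component of the bottom boundary is transported along horizontal segments determined by $s_2(\cdot)$ into region $A$ (mirror of $A$ in Figure \ref{fig:d}/\ref{fig:g}), while the mass in the interior of $W$ is distributed according to the partition scheme. Checking that the utility function $u$ induced by the allocations in Figure \ref{fig:g} (or Figure \ref{fig:c'}) satisfies $u(z)-u(z')=\|z-z'\|_1$ on the support of $\gamma$, and that $\int_D u\,d(\gamma_1-\gamma_2)=\int_D u\,d\bar{\mu}$, is the literal transpose of the verification carried out in the proofs of Theorems \ref{thm:gc4} and the claims of Section \ref{SUB:GC1}; Lemma \ref{lem:conditions} then gives optimality. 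The only non-routine part of the argument is confirming that the involution genuinely exchanges all the data of the problem, and this is immediate because $f$, the $\ell_1$ cost, and the rectangular support are all invariant under the swap; no additional analytic work beyond Section \ref{SUB:GC2} is required.
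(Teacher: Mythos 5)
Your proposal is correct and is essentially the paper's own argument: the paper establishes Theorem \ref{thm:gc4-sym} precisely by "symmetric arguments" from Proposition \ref{prop:gc2} and Theorem \ref{thm:gc4}, i.e., swapping the roles of indices $1$ and $2$ (with the shuffling measure (\ref{eqn:beta-sym}) placed on $\tilde{D}^{(2)}$), which is exactly the coordinate-swap reduction you carry out. Your explicit verification that the primal-dual pair, $\bar{\mu}$, and the $\ell_1$ cost are invariant under the involution is a slightly more formal rendering of the same reduction, not a different route.
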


\subsubsection{Optimal mechanisms when $\frac{c_2}{b_2}$ is small but $\frac{c_1}{b_1}$ is very large}

\vspace*{.1in}
The next assertion is the analog of Theorem \ref{thm:gc5}.

\begin{theorem}\label{thm:gc5-sym}
The optimal mechanism is as in Figure \ref{fig:h} when $\{c_2\leq b_2, c_1\geq 2b_1(b_2/(b_2-c_2))^2\}$.
\end{theorem}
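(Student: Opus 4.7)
The plan is to mirror the argument of Theorem \ref{thm:gc5} with the roles of the indices $1$ and $2$ swapped throughout. Concretely, since Section \ref{SUB:GC3} has already set up $\bar{\beta}=\beta_s+\beta_p$ on the bottom boundary interval $\{c_1+b_1\}\times[c_2,c_2+p]$ using the symmetric density in (\ref{eqn:beta-sym}), the first step is to take $p_{a_2}\to 0$ in this family. Under the hypothesis $c_1=2b_1(b_2/(b_2-c_2))^2$, the symmetric analog of Claim \ref{clm:beta-eh-1} will yield $a_2\to 0$, $p_{a_2}/a_2\to b_1b_2/c_1$, and $p\to(b_2-c_2)/2$, so that $\beta_s$ degenerates to a step function with values $(2b_1-c_1)/(b_1b_2)$ on $[c_2,c_2+b_1b_2/c_1]$ and $2b_1/(b_1b_2)$ on $(c_2+b_1b_2/c_1,c_2+(b_2-c_2)/2]$, together with a point mass $c_2b_1/(b_1b_2)$ at $c_2$.

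Next, fix this step-function shuffling measure $\bar{\beta}_e$ for every $c_1\geq 2b_1(b_2/(b_2-c_2))^2$. Place $\bar{\beta}_e$ on $\{c_1+b_1\}\times[c_2,(c_2+b_2)/2]$, and compute the canonical partition of $D$ with respect to $\bar{\mu}+\bar{\beta}_e$ by the procedure in bullet (ii) of Section \ref{sec:gencase}: the outer boundary function $s_2(\cdot)$ is constant equal to $c_1$ on $[c_2,c_2+b_1b_2/c_1]$ and undefined thereafter (so $Z$ is null), and the critical price equals $p=(b_2-c_2)/2$ by direct area balance $\bar{\mu}(W)=0$. This yields the partition of $D$ into a region $B$ where item $1$ is offered at reserve price $c_1$ (while item $2$ is awarded with a probability strictly less than $1$ on a strip adjacent to the left boundary) and a region $W$ where the bundle is offered—precisely the structure of Figure \ref{fig:h}.

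The convex-dominance step is the direct symmetric analog of Proposition \ref{prop:beta-eh}: one verifies that under $c_2\leq b_2$ and $c_1\geq 2b_1(b_2/(b_2-c_2))^2$ both $\bar{\beta}_e([c_2,(c_2+b_2)/2])=0$ and $\int x\,d\bar{\beta}_e(x)\geq 0$ hold, after which Proposition \ref{prop:cvx} (applied to the interval on which $\bar{\beta}_e$ changes sign) gives $\bar{\beta}_e\succeq_{cvx}0$. The fact that the density has only two positive pieces surrounding a negative piece of constant mass is preserved by the swap, so Proposition \ref{prop:cvx} applies verbatim; the threshold $c_1\geq 2b_1(b_2/(b_2-c_2))^2$ is exactly what is needed to guarantee the mean condition $\int x\,d\bar{\beta}_e\geq 0$, mirroring the role of $c_2\geq 2b_2(b_1/(b_1-c_1))^2$ in the original.

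Finally, the dual variable $\gamma$ is constructed in the same fashion as in the proof of Proposition \ref{prop:known}: partition the interior of $W$ into the analogs of $W_1, W_2, W_3$, route mass from the top and right boundaries onto vertical/horizontal fibers (and diagonally in the bundle region), and retain mass on the null region. Feasibility $\gamma_1-\gamma_2\succeq_{cvx}\bar{\mu}$ follows from $\bar{\beta}_e\succeq_{cvx}0$ and an argument identical to that in Proposition \ref{prop:known}, while conditions (i) and (ii) of Lemma \ref{lem:conditions} follow because $u^*$ is constant on the retained-mass set and the transfers are purely horizontal, vertical, or along the $-1$-slope line, giving $u^*(z)-u^*(z')=\|z-z'\|_1$ on the support of $\gamma$. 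The main obstacle, as in Section \ref{SUB:GC2}, is to ensure that under the threshold hypothesis the first-order dominance $\theta-\bar{\mu}_-^W\succeq_1 0$ (needed when constructing $\gamma$ via Lemma \ref{lem:DDT-equiv}) still holds; this is the symmetric version of Lemma \ref{lem:gc1-W} and is expected to go through by the same monotonicity of $s_2(\cdot)$ with respect to $\bar{\mu}+\bar{\beta}_e$.
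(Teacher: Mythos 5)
Your proposal follows essentially the same route as the paper: Theorem \ref{thm:gc5-sym} is stated there as the index-swapped mirror of Theorem \ref{thm:gc5}, and your steps reproduce that mirror correctly — the limit $p_{a_2}\to 0$ of the Section \ref{SUB:GC3} family giving $a_2=0$, $p_{a_2}/a_2=b_1b_2/c_1$, $p=(b_2-c_2)/2$ (the symmetric Claim \ref{clm:beta-eh-1}), the step-function shuffling measure with densities $(2b_1-c_1)/(b_1b_2)$ and $2b_1/(b_1b_2)$ plus the point mass $c_2b_1/(b_1b_2)$ at $c_2$, the symmetric Proposition \ref{prop:beta-eh} in which the threshold $c_1\geq 2b_1(b_2/(b_2-c_2))^2$ delivers the nonnegative first moment, and the explicit dual construction.

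One clarification, because your closing sentence mislocates the remaining work: no first-order dominance $\theta-\bar{\mu}_-^W\succeq_1 0$, no Lemma \ref{lem:DDT-equiv}, and no symmetric Lemma \ref{lem:gc1-W} are needed for this structure — that machinery is used only for the Figure \ref{fig:c}-type (pure bundling/overlap) cases. The paper's proof of Theorem \ref{thm:gc5}, which is what you are mirroring, simply reuses the explicit kernel construction made for Figure \ref{fig:d}: feasibility comes from $(\gamma_1-\gamma_2)^{D\backslash Z}=\bar{\mu}^{D\backslash Z}+\bar{\beta}_e$ together with $\bar{\beta}_e\succeq_{cvx}0$ and the per-line mass balances already built into the outer boundary function and $p=(b_2-c_2)/2$, while condition (i) of Lemma \ref{lem:conditions} uses only that $u$ is constant on the right-boundary interval carrying $\bar{\beta}_e$ (because $q_2=0$ there) and that $\bar{\beta}_e$ has total mass zero. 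So the step you flag as the ``main obstacle'' and leave as ``expected to go through'' is superfluous; had it been needed, leaving it unproven would have been a gap. Two small slips worth fixing: $\{c_1+b_1\}\times[c_2,c_2+p]$ is the right boundary, not the bottom; and in the region below the threshold the allocation is exactly $(1,0)$ at price $c_1$ (item $2$ is awarded with probability $0$, not merely less than $1$), with $Z$ a zero-area segment of the left boundary on which mass is retained, exactly as the bottom-boundary segment is handled in Theorem \ref{thm:gc5}.
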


\subsection{Optimal mechanisms when $\frac{c_1}{b_1}$ and $\frac{c_2}{b_2}$ are large}\label{SUB:GC4}

\vspace*{.1in}

When $c_1\geq b_1$ and $c_2\geq b_2$, we have the following theorem.
\begin{theorem}\label{thm:figc}
 The optimal mechanism is as in Figure \ref{fig:c} when $\{c_1\geq b_1, c_2\geq b_2\}$.
\end{theorem}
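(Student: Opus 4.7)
The plan is to verify that pure bundling as in Figure \ref{fig:c}, at critical price $p^* = (\sqrt{(c_1+c_2)^2 + 6 b_1 b_2} - (c_1 + c_2))/3$, is optimal. By the computation already carried out in bullet 7 of Section \ref{SUB:GC1}, this $p^*$ is precisely the one making $\bar{\mu}(W) = \bar{\mu}(Z) = 0$ with $W = \{z \in D : z_1 + z_2 \geq c_1 + c_2 + p^*\}$, and the induced utility $u(z) = (z_1 + z_2 - (c_1 + c_2 + p^*))_+$ is primal feasible by Theorem \ref{thm:Rochet}. The real task is to exhibit a dual variable $\gamma$ certifying optimality via Lemma \ref{lem:conditions}.

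First I would record the geometric fact that $p^* \leq \min(b_1, b_2)$ in the regime $c_i \geq b_i$. Writing $s = c_1 + c_2 \geq b_1 + b_2$, one has $p^* = 2 b_1 b_2/(\sqrt{s^2 + 6 b_1 b_2} + s) \leq b_1 b_2/s \leq \min(b_1, b_2)$, so the top and right edges of $D$ lie entirely inside $W$, the left and bottom edges meet $W$ in their upper/right portions, and the positive point mass of $\bar{\mu}$ at $(c_1, c_2)$ lies in $Z$.

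Next I would reuse the shuffling measure $\bar{\alpha} = \bar{\alpha}^{(1)} + \bar{\alpha}^{(2)}$ of Section \ref{SUB:GC1} with the degenerate choice $p_{a_i} = p_{a_i}^*$, which by Corollary \ref{cor:cvx} forces $a_i = 1$; the would-be strips $A$ and $B$ collapse into $W$, matching the pure-bundling partition. To get a $\gamma$ on $W \times W$, I would invoke Lemma \ref{lem:DDT-equiv} with $\theta = \bar{\mu}_+^W + \bar{\alpha}$. Proposition \ref{prop:cvx} supplies conditions (ii)--(iii) of that lemma (convex dominance of $0$ and moment matching). The nontrivial condition (i), first-order dominance, is reduced by Lemma \ref{lem:gc1-W} to the inequality $s_i(z_i) > \max(c_{-i}, c_1 + c_2 + p^* - z_i)$ on $[c_i, c_i + b_i)$: on the linear piece of $s_i$ this becomes $p_{a_i}^* > p^*$, and on the flat piece $s_i = 2(c_{-i}+b_{-i})/3$ it becomes $2(c_{-i}+b_{-i})/3 + c_i + m_i > c_1 + c_2 + p^*$.

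Finally, I would extend the $W$-side measure to a full $\gamma$ on $D \times D$ by setting $\gamma(\cdot \mid z) = \delta_z$ for every $z \in Z$, so that $(\gamma_1 - \gamma_2)^Z = 0$. Because $\bar{\mu}(Z) = 0$ and the sole positive mass in $Z$ sits at its south-west corner $(c_1, c_2)$, every increasing $f$ satisfies $\int_Z f \, d\bar{\mu}^Z \leq 0$, hence $0 \succeq_{cvx} \bar{\mu}^Z$; combining with the $W$-component gives $\gamma_1 - \gamma_2 \succeq_{cvx} \bar{\mu}$. Condition (i) of Lemma \ref{lem:conditions} is immediate since $u \equiv 0$ on $Z$, and condition (ii) is built into both constructions. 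The hard part I expect is the algebraic verification of $p_{a_i}^* > p^*$ together with the flat-piece inequality uniformly over $\{c_i \geq b_i\}$: since $p^*$ and $p_{a_i}^*$ involve square roots with different radicands, the cleanest route is probably to square, cancel common factors, and reduce to a polynomial sign analysis in $(b_1, b_2, c_1, c_2)$, routine but delicate near the boundary $c_i = b_i$ where several structures coalesce.
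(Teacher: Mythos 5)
There is a genuine gap in the step where you reduce condition (i) of Lemma \ref{lem:DDT-equiv} via Lemma \ref{lem:gc1-W}. That lemma requires $s_i(z_i)>\max(c_{-i},\,c_1+c_2+p^*-z_i)$ on \emph{all} of $[c_i,c_i+b_i)$, and in your translation you kept only the part involving $p^*$ (``$p_{a_i}^*>p^*$ plus the flat-piece inequality'') and silently dropped the requirement $s_i(z_i)>c_{-i}$. With $p_{a_1}=p_{a_1}^*$ (so $a_1=1$), the linear piece of $s_1$ ends at height $P_2=c_2+(2b_2-c_2-p_{a_1}^*)/2$ by (\ref{eqn:P-Q-new}), so the requirement $s_1>c_2$ fails whenever $p_{a_1}^*>2b_2-c_2$, i.e.\ the boundary function dips below the bottom edge of $D$. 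This happens on a large part of the region $\{c_1\geq b_1,\,c_2\geq b_2\}$: for instance $b_1=b_2=1$, $c_1=1$, $c_2=2$ gives $p_{a_1}^*=-4/9+(2/9)\sqrt{22}\approx 0.60>0=2b_2-c_2$, and indeed the hypothesis fails for every $c_2\geq 2b_2$ (and even somewhat below). So the ``routine but delicate'' polynomial verification you defer is not merely delicate — the inequality you plan to verify is false in that subregion, and the construction with the $\bar{\alpha}$-shuffling at $p_{a_i}=p_{a_i}^*$ cannot be pushed through there without switching to a different shuffling measure (as the paper must do in Section \ref{SUB:GC2} precisely when $P$ exits $D$).

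The paper's argument avoids this entirely: for Theorem \ref{thm:figc} it takes $\theta=\bar{\mu}_+^W$ with \emph{no} shuffling measure, so conditions (ii)--(iii) of Lemma \ref{lem:DDT-equiv} are trivial, and condition (i) reduces to $\bar{\mu}^W\succeq_1 0$, which is checked directly from the sufficient condition (\ref{eqn:mu-inc-dominant-12}) established in the proof of Lemma \ref{lem:p^*-mu-bar} (the short algebra being that $c_2\geq b_2$ gives $b_1(c_2-b_2+2p^*)-((2b_1-c_1)_+)^2/6\geq 0$ whenever $p^*\geq(2b_1-c_1)/3$, and symmetrically in the other coordinate). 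The paper also notes an even shorter route: \citet[Prop.~3]{MHJ15}, since $c_i\geq b_i$ makes the virtual values $2z_i-c_i-b_i$ nonnegative on $D$. Your setup of the primal (pure bundling at $p^*$, $\bar{\mu}(W)=0$, $p^*\leq\min(b_1,b_2)$, retention of mass on $Z$ and $0\succeq_{cvx}\bar{\mu}^Z$) is fine; the fix is to replace the $\bar{\alpha}$-based dominance argument with a direct proof that $\bar{\mu}^W\succeq_1 0$ in this regime.
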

A simple way to prove this theorem is to use \citet[Prop.~3]{MHJ15}. The proposition asserts that for negative power rate distributions, if $z_i-(1-F_i(z_i))/f_i(z_i)\geq 0$ for all $z_i\in D$, $i=1,2$, then pure bundling is the optimal mechanism. The conditions of the proposition clearly hold when $c_i\geq b_i, i=1,2$, and thus the theorem is proved.

Alternatively, we can prove this theorem via the duality approach used in this paper. We fix $\theta=\bar{\mu}^W_+$ and show that $\theta$ satisfies all the conditions of Lemma \ref{lem:DDT-equiv}. This shows the existence of a dual measure $\gamma$ that satisfies the dual constraints and the complementary slackness constraints. The complete proof is in \ref{app:b}.

\section{Extension to Other Distributions}\label{sec:extension}
Optimal mechanism for uniform distribution over any rectangle was found only using $\bar{\alpha}$ and its variants as shuffling measures. We can now ask if there is a generalization of $\bar{\alpha}$ for other distributions. For distributions with constant negative power rate, we anticipate that the optimal mechanisms would be of the same form as in uniform distributions (based on the result of \citet{WT14}), and that they can be arrived at using similar $\bar{\alpha}$. We propose using a ``generalized'' $\bar{\alpha}$, whose components are
\begin{align}
  \alpha_s^{(1)}(z_1,c_2+b_2)&=f_1(z_1)(-\Delta(1-F_2(p_{a_1}-a_1(z_1-c_1))))\nonumber\\&\hspace*{1in}-f_1(z_1)(c_2+b_2)f_2(c_2+b_2),\,z\in[c_1,P_1],\nonumber\\
  \alpha_p^{(1)}(c_1,c_2+b_2)&=c_1f_1(c_1)(1-F_2(p_{a_1})),\label{eqn:gen-shuffle}
\end{align}
along with similarly defined $\alpha_s^{(2)}$ and $\alpha_p^{(2)}$. Recall that $\Delta$ is the power rate of the distribution defined as $-3-z_1f_1'(z_1)/f_1(z_1)-z_2f_2'(z_2)/f_2(z_2)$. Observe that the $\bar{\alpha}$ mentioned above reduces to the $\bar{\alpha}$ used in the case of uniform distributions, when we substitute $f_i(z_i)=1$, and $\Delta=-3$ (see (\ref{eqn:alpha_s})).

We have six parameters to be computed: $p_{a_1}$, $p_{a_2}$, $a_1$, $a_2$, $P_1$, and $Q_2$. Defining $\tilde{D}^{(1)}:=[c_1,P_1]\times\{c_2+b_1\}$ and $\tilde{D}^{(2)}:=\{c_1+b_1\}\times[c_2,Q_2]$, we use the following six equations for computing these parameters.
\begin{itemize}
\item We require $\alpha^{(1)}\succeq_{cvx}0$. Using Proposition \ref{prop:cvx}, we impose $\bar{\alpha}^{(1)}(\tilde{D}^{(1)})=\int_{[c_1,P_1]}x\,\bar{\alpha}^{(1)}(dx,c_2+b_2)=0$.
\item Similarly, we require $\alpha^{(2)}\succeq_{cvx}0$. We thus impose $\bar{\alpha}^{(2)}(\tilde{D}^{(2)})=\int_{[c_2,Q_2]}x\,\bar{\alpha}^{(2)}(c_1+b_1,dx)=0$.
\item We require that the critical points $P$ and $Q$ be connected by a line $z_1+z_2=p$. We thus impose $P_1+P_2=Q_1+Q_2$ or $P_1+c_2+p_{a_1}-a_1(P_1-c_1)=c_1+p_{a_2}-(Q_2-c_2)/a_2$.
\item We finally impose $\bar{\mu}(W)=0$.
\end{itemize}

If there exists a tuple $(p_{a_1},p_{a_2},a_1,a_2,P_1,Q_2)$ that solves these six equations simultaneously and forms a valid canonical partition, then we can assert that the optimal mechanism is as depicted in Figure \ref{fig:a}, using the same arguments in section \ref{SUB:GC1}.

We now consider an example of a distribution with negative constant power rate. Let $f_1(z)=f_2(z)=2z/(2c+1)$ when $z\in[c,c+1]$. The power rate $\Delta=-5$, a negative constant. We now have the following theorem.
\begin{theorem}\label{THM:EXTENSION}
 Let $f_i(z)=2z/(2c+1)$, $z\in[c,c+1]$. When $c=0$, the optimal mechanism is as in Figure \ref{fig:a}, with $p_{a_1}=\sqrt{0.6}$, $a_1=0$, and $p=1.09597$. When $c=0.1$, the optimal mechanism is the same, with $p_{a_1}=0.79615$, $a_1=0.23198$, and $P_1=0.364655$.
\end{theorem}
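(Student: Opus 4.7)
The approach is to specialize the generalized shuffling measure $\bar{\alpha}$ defined in (\ref{eqn:gen-shuffle}) to the density $f_i(z)=2z/(2c+1)$ on $[c,c+1]$, solve the six equations listed in Section \ref{sec:extension} for the six unknowns $(p_{a_1},p_{a_2},a_1,a_2,P_1,Q_2)$, and then invoke Lemma \ref{lem:conditions} to certify optimality of the structure in Figure \ref{fig:a}. Direct substitution gives $z_i f_i'(z_i)/f_i(z_i)=1$ and hence $\Delta=-5$, a negative constant, so one expects the mechanism of Figure \ref{fig:a} by analogy with Section \ref{SUB:GC1}. Since $f_1=f_2$ and $c_1=c_2=c$, the problem is symmetric under the swap $z_1\leftrightarrow z_2$, so I look for a symmetric solution with $p_{a_1}=p_{a_2}=:p_a$, $a_1=a_2=:a$, $P_1=c+(Q_2-c)/a$; this reduces the six-equation system to the two moment conditions on $\bar{\alpha}^{(1)}$ (namely $\bar{\alpha}^{(1)}[c,P_1]=0$ and $\int x\,d\bar{\alpha}^{(1)}=0$) together with $\bar{\mu}(W)=0$.

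For the case $c=0$: the point-mass component $\alpha_p^{(1)}(0)=c f_1(c)(1-F_2(p_a))$ vanishes, and $\alpha_s^{(1)}$ reduces to the line density $\alpha_s^{(1)}(z_1)=2z_1\bigl(5(1-(p_a-az_1)^2)-2\bigr)$ on $[0,P_1]$. The constraint $\bar{\alpha}^{(1)}[0,P_1]=0$ becomes an algebraic identity, and by symmetry $a=0$ solves it provided $5(1-p_a^2)=2$, giving $p_a=\sqrt{3/5}=\sqrt{0.6}$. The first-moment condition is then automatic, and $\bar{\mu}(W)=0$ collapses to a single polynomial equation in the critical price $p$ (computed from the geometry $P_1+P_2=Q_1+Q_2=p$), whose relevant root is $p=1.09597$. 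For the case $c=0.1$: the point-mass term is no longer trivial, and the three reduced equations form a coupled low-degree polynomial system in $(p_a,a,P_1)$. I would solve it numerically, certifying the stated values $p_a=0.79615$, $a=0.23198$, $P_1=0.364655$ by direct substitution (polynomial residuals to high precision).

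Having obtained the parameters, I then verify (a) that the canonical partition with respect to $\bar{\mu}+\bar{\alpha}$ is valid ($a\in[0,1]$, $P\in D$, and $P$ to the north-west of $Q$); (b) that $\bar{\alpha}\succeq_{cvx}0$; and (c) that the dual variable $\gamma$, constructed exactly as in the proof of Proposition \ref{prop:known} (with transport along horizontal segments in $A$, vertical segments in $B$, and $135^{\circ}$ segments in $W$), satisfies the two conditions of Lemma \ref{lem:conditions}. Condition (ii) of the lemma (complementary slackness for $\|z-z'\|_1$) holds by construction because the transport is axis-aligned inside $A$ and $B$ and orthogonal to the constant-$u$ line $z_1+z_2=p$ inside $W$. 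Condition (i) follows from the two moment conditions on $\bar{\alpha}^{(1)}$ (and the symmetric ones on $\bar{\alpha}^{(2)}$), exactly as in the closing step of Section \ref{SUB:GC1}.

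The main obstacle will be step (b): showing $\bar{\alpha}\succeq_{cvx}0$ when $\alpha_s^{(1)}$ is no longer linear. Unlike the uniform case, where Corollary \ref{cor:cvx} followed immediately from Proposition \ref{prop:cvx} because $\alpha_s^{(1)}$ was affine, here $\alpha_s^{(1)}(z_1)=2z_1\bigl(5(1-(p_a-a(z_1-c))^2)-2(c+1)^2\bigr)/(b_1 b_2)$ is the product of an increasing positive factor $2z_1$ with a bracket term that is a concave quadratic in $z_1$ (or monotone when $a=0$); I would verify the density has at most two sign changes on $[c,c+P_1]$ so that Proposition \ref{prop:cvx} applies. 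A secondary difficulty is the $c=0.1$ case, where closed-form parameters appear intractable and one must argue existence and uniqueness of a real solution in the admissible range via continuity and the two boundary sign evaluations ($p_a\to 0$ and $p_a\to 1$), supplemented by numerical certification.
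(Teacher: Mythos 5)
Your proposal follows essentially the same route as the paper's Appendix~C proof: specialize the generalized shuffling measure (\ref{eqn:gen-shuffle}) with $\Delta=-5$, use the $f_1=f_2$ symmetry to reduce the six conditions to the two moment conditions of Proposition \ref{prop:cvx} plus $\bar{\mu}(W)=0$ (the paper's (\ref{eqn:marginal})--(\ref{eqn:mu-W=0})), solve numerically for $c=0.1$ (for $c=0$ the shuffle degenerates to zero and $s_i\equiv\sqrt{0.6}$, exactly what the paper obtains without any shuffling measure), and certify optimality by constructing $\gamma$ as in Proposition \ref{prop:known} and verifying Lemma \ref{lem:conditions}; your explicit check of the sign pattern required for Proposition \ref{prop:cvx} is a detail the paper leaves implicit. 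Two small slips are harmless since you defer to the Proposition \ref{prop:known} construction: the symmetric relation is $P_1=Q_2$ (not $P_1=c+(Q_2-c)/a$), and the transport is vertical inside $A$ and horizontal inside $B$, not the reverse.
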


The detailed proof is in \ref{app:c}. We first characterize the optimal mechanism for $c=0$ using the method outlined in Section \ref{sec:zero}, without any shuffling measures. We then consider the case $c=0.1$ where we introduce the shuffling measure $\bar{\alpha}$ in (\ref{eqn:gen-shuffle}), compute the parameters by solving the above six equations, and prove that the canonical partition obtained is valid.

For other structures, one could follow a similar procedure as done for Figure \ref{fig:a} and one could obtain parameter ranges for which those structures are optimal. On the basis of the above development, we conjecture that the generalized shuffling measure in (\ref{eqn:gen-shuffle}) and its variants should suffice for the class of distributions having negative constant power rates.

\section{Conclusion}\label{sec:conclude}
We solved the problem of computing the optimal solution in the two-item one-buyer setting, when the valuations of the buyer are uniformly distributed in an arbitrary rectangle on the positive quadrant. Our results show that a wide range of structures arise out of different values of $(c_1,c_2)$. When the buyer guarantees that his valuations are at least $(c_1,c_2)$, the seller offers different menus based on the guaranteed $(c_1,c_2)$ and the upper bounds $(c_1+b_1,c_2+b_2)$. The seller also finds it optimal to sell the items individually for certain values of $(c_1,c_2)$, to sell the items as a bundle for certain other values, and to pose the menu as an interplay of individual sale and a bundled sale for the remaining values of $(c_1,c_2)$.

Our solutions used the duality approach in \cite{DDT17}. We constructed a ``shuffling measure'' for various distributions that are not bordered by the coordinate axes, thereby extending the pool of solvable problems using the dual method. Our results for the linear distribution function suggests that the method of constructing shuffling measures could be used to solve the problem for a wider class of distributions. We conjecture that our method works for all distributions with constant negative power rate.

Another approach based on the idea of {\em virtual valuation} was used by \citet{Pav11} to compute the optimal solution when the buyer's valuations are uniformly distributed in $[c,c+1]^2$, for arbitrary nonnegative values of $c$. In a companion paper where we consider the unit-demand setting \cite{TRN17b}, we highlight the sensitivity of the structure of the shuffling measure to the parameters. For that setting, we do not as yet fully understand what shuffling measures will work. In that paper, we follow Pavlov's approach to compute the optimal solution when $z\sim\mbox{Unif}[c,c+b_1]\times[c,c+b_2]$, for arbitrary nonnegative values of $(c,b_1,b_2)$. The two works, taken together, provide concrete examples of nontrivial optimal structures and thus provide insight on optimal mechanisms likely to occur under more general settings. They also help us understand the pros and cons of the dual and the virtual valuation methods under different situations.

Our results show the existence of multi-dimensional optimal mechanisms with no exclusion region. The optimal mechanism has no exclusion region when the minimum valuation for one of the items is small and that for the other item is very large. This is interesting since the results of \citet{Arm96} and \citet{PBBK14} state that the optimal mechanisms in multi-dimensional setting have a nontrivial exclusion region under some sufficient conditions. The setting in our paper does not satisfy their sufficient conditions.

The optimal mechanisms in the uniform case considered in this paper are stochastic when both $\frac{c_1}{b_1}$ and $\frac{c_2}{b_2}$ are small, and are deterministic when either $\frac{c_1}{b_1}$ or $\frac{c_2}{b_2}$ is sufficiently large. We can now ask the following question: can we characterize the set of distributions for which deterministic mechanisms are optimal? The work in \cite{MV06} answered this under the assumption that the support set is $[0,1]^2$. Characterizing them for general support sets is a direction of future work.

Another direction is to characterize those distributions for which deterministic mechanisms give a constant-factor approximation. The works in \cite{CHK07,CMS15} considered the unit-demand setting, and proved that if the distributions on the items are either uncorrelated or have a certain positive correlation, then deterministic mechanisms provide at least a constant-fraction of the optimal revenue. Characterizing such distributions for the setting we consider in this paper is another direction for future work.

\section*{Acknowledgements}
This work was supported by the Defence Research and Development Organisation [Grant no. DRDO0667] under the DRDO-IISc Frontiers Research Programme.

%Appendix
\appendix
\section*{APPENDIX}
\section{Proofs from Section \ref{SUB:GC1}}\label{app:a}

\textbf{Proof of Proposition \ref{prop:cvx}:} Consider $h$ to be the affine shift of any increasing convex function $g$ (i.e., $h=\beta_1g+\beta_2, \beta_1>0, \beta_2\in\mathbb{R}$) such that $h(c_1+l_1)=c_1+l_1$, and $h(c_1+l_2)=c_1+l_2$. Observe that $h(x)\leq x$ when the density of $\alpha$ is nonpositive, and $h(x)\geq x$ when the density is nonnegative. Now we have
\begin{align*}
  &\int_{[c_1,c_1+m_1]}g\,d\alpha\\ &= \frac{1}{\beta_1}\int_{[c_1,c_1+m_1]}h\,d\alpha\\
  & = \frac{1}{\beta_1}\left(\int_{[c_1,c_1+m_1]}(h(x)-x)\,\alpha(dx)+\int_{[c_1,c_1+m_1]}x\,\alpha(dx)\right)\\
  & =\frac{1}{\beta_1}\left(\int_{[c_1,c_1+m_1]}(h(x)-x)\,\alpha(dx)\right)\\
  &\geq 0
\end{align*}
where the first equality follows from $\alpha([c_1,c_1+m_1])=0$, the third equality follows from $\int_{[c_1,c_1+m_1]}x\,\alpha(dx)=0$, and the last inequality follows because $\sgn{(h(x)-x)}=\sgn{(\alpha(x))}$ for every $x\in[c_1,c_1+m_1]$. Hence the result.

\vspace*{10pt}
{\bf Proof of Corollary \ref{cor:cvx}:}  We now find the values of $m_1$ and $a_1$ for which $\bar{\alpha}^{(1)}(\tilde{D}^{(1)})=\int_{[c_1,c_1+m_1]}(x-c_1)\,\bar{\alpha}^{(1)}(dx,c_2+b_2)=0$ holds.
$$
  \bar{\alpha}^{(1)}(\tilde{D}^{(1)})=((2b_2-c_2-3p_{a_1})m_1+(3/2)a_1m_1^2+c_1(b_2-p_{a_1}))/(b_1b_2)=0
$$
\begin{multline}
  \int_{[c_1,c_1+m_1]}(x-c_1)\,\bar{\alpha}^{(1)}(dx,c_2+b_2)=((2b_2-c_2-3p_{a_1})m_1^2/2+a_1m_1^3)/(b_1b_2)=0
\end{multline}
Solving both of these equations simultaneously, we obtain $m_1=4c_1(b_2-p_{a_1})/(c_2-2b_2+3p_{a_1})$ and $a_1=(c_2-2b_2+3p_{a_1})^2/(8c_1(b_2-p_{a_1}))$. So $\bar{\alpha}^{(1)}(\tilde{D}^{(1)})=\int_{[c_1,c_1+m_1]}(x-c_1)\,\bar{\alpha}^{(1)}(dx,c_2+b_2)=0$ holds at these values of $m_1$ and $a_1$. We also have $\int_{\tilde{D}^{(1)}} f\,d\bar{\alpha}^{(1)}=0$ for any affine $f$, because $f(x)$ can be written as $\beta_1x+\beta_2$.

$\bar{\alpha}^{(1)}\succeq_{cvx}0$ follows from Proposition \ref{prop:cvx}.\qed

\vspace*{10pt}
{\bf Proofs of Claims for Case 1}

\begin{enumerate}
\item The lower bounds of $r_i$ are clear because for every $c_i,b_i\geq 0$, the inequalities $(2b_i+5c_i)\geq(2b_i+3c_i)$ and $(2b_i-3c_i)\leq(2b_i+3c_i)$ hold. Now we prove that if $c_i, b_i$ satisfy (\ref{eqn:c1-c2-small}), then $\frac{2b_2(2b_1+5c_1)-c_2(2b_1-3c_1)}{3(2b_1+3c_1)}\leq b_2$ holds. Rewriting this condition, we have $b_2(2b_1-c_1)+c_2(2b_1-3c_1)\geq 0$.

When $\{c_2\leq b_2, c_1\leq2b_1\frac{b_2+c_2}{b_2+3c_2}\}$ is satisfied, we have
\begin{multline*}
  b_2(2b_1-c_1)+c_2(2b_1-3c_1)\\\geq 2b_1b_2\left(1-\frac{b_2+c_2}{b_2+3c_2}\right)+2b_1c_2\left(1-3\frac{b_2+c_2}{b_2+3c_2}\right)\geq 0
\end{multline*}
where the first inequality follows from $c_1\leq2b_1\frac{b_2+c_2}{b_2+3c_2}$. When $\{c_1\leq b_1, c_2\leq 2b_2\frac{b_1+c_1}{b_1+3c_1}\}$ is satisfied, the condition trivially holds for $c_1\leq\frac{2b_1}{3}$. When $c_1\in[\frac{2b_1}{3},b_1]$, we have
$$
  b_2(2b_1-c_1)+c_2(2b_1-3c_1)\geq 3b_2(2b_1+3c_1)(b_1-c_1)/(b_1+3c_1) \geq 0,
$$
where the first inequality follows from $c_2\leq2b_2\frac{b_1+c_1}{b_1+3c_1}$, and the second inequality from $c_1\leq b_1$. This establishes $r_1\in[(2b_2-c_2)/3,b_2]$. Establishing $r_2\in[(2b_1-c_1)/3, b_1]$ is analogous.

Substituting $p_{a_i}=r_i$ in (\ref{eqn:P-Q-new}), we obtain
$$
  P_i=Q_i=c_i+\frac{4b_i(b_{-i}+c_{-i})-2c_i(b_{-i}+3c_{-i})}{3(2b_{-i}+3c_{-i})}.
$$
The points $P$ and $Q$ thus coincide, when $p_{a_i}=r_i$. Substituting $p_{a_i}=r_i$ in (\ref{eqn:big-W}), the LHS equals $-\frac{4}{3}c_1c_2(b_1b_2+b_1c_2+b_2c_1)\leq 0$. Recalling that (\ref{eqn:big-W}) was constructed by equating $-\bar{\mu}(W)=0$, we conclude that $\bar{\mu}(W)\geq 0$ when $p_{a_i}=r_i$.\qed
\item Starting from $p_{a_i}=r_i$, we increase $p_{a_1}$ and adjust $p_{a_2}$ so that $P_1+P_2=Q_1+Q_2$ holds (i.e., (\ref{eqn:pa1-pa2}) is satisfied). Assume $p_{a_2}$ decreases. By Observation \ref{obs:pa1-increase}, an increase in $p_{a_1}$ moves $P$ towards south-west, causing a decrease in $P_1+P_2$. Similarly, a decrease in $p_{a_2}$ moves $Q$ towards north-east, causing an increase in $Q_1+Q_2$. But when $p_{a_i}=r_i$, we have $P_1+P_2=Q_1+Q_2$, by the previous claim. So $P_1+P_2=Q_1+Q_2$ fails to satisfy when $p_{a_1}$ increases and $p_{a_2}$ decreases, leading to a contradiction. Thus $p_{a_2}$ must increase.

We proceed to prove that $P$ lies to the north-west of $Q$ when $p_{a_i}\in[r_i,p_{a_i}^*]$. We first note that $a_1\leq 1$ and $a_2\leq 1$ (from (\ref{eqn:for-a1-a2})) for all $p_{a_i}\in[r_i,p_{a_i}^*]$. To prove $P$ lies to the north-west of $Q$, we show that $P_1-Q_1$ decreases when both $p_{a_1}$ and $p_{a_2}$ are increased so as to satisfy (\ref{eqn:pa1-pa2}).

We first find the expression for $\frac{\partial p_{a_2}}{\partial p_{a_1}}$ when $P_1+P_2=Q_1+Q_2$ (i.e., equation (\ref{eqn:pa1-pa2})) is satisfied. Differentiating on both sides of $P_1+P_2=Q_1+Q_2$, we have
\begin{multline*}
  \left(\frac{4c_1(c_2+b_2)}{(c_2-2b_2+3p_{a_1})^2}+\frac{1}{2}\right)\partial p_{a_1}=\left(\frac{4c_2(c_1+b_1)}{(c_1-2b_1+3p_{a_2})^2}+\frac{1}{2}\right)\partial p_{a_2}\\\Rightarrow\frac{\partial p_{a_2}}{\partial p_{a_1}}=\frac{\frac{4c_1(c_2+b_2)}{(c_2-2b_2+3p_{a_1})^2}+\frac{1}{2}}{\frac{4c_2(c_1+b_1)}{(c_1-2b_1+3p_{a_2})^2}+\frac{1}{2}}.
\end{multline*}
Differentiating $P_1-Q_1$ w.r.t. $p_{a_1}$, we have
\begin{multline*}
  \frac{\partial}{\partial p_{a_1}}(P_1-Q_1)=-\frac{4c_1(c_2+b_2)}{(c_2-2b_2+3p_{a_1})^2}+\frac{1}{2}\frac{\partial p_{a_2}}{\partial p_{a_1}}\\=\frac{(1/2)}{\frac{8c_2(c_1+b_1)}{(c_1-2b_1+3p_{a_2})^2}+1}\left(1-\frac{64c_1c_2(c_1+b_1)(c_2+b_2)}{(c_2-2b_2+3p_{a_1})^2(c_1-2b_1+3p_{a_2})^2}\right)\leq 0
\end{multline*}
where the second equality follows by substitution of $\frac{\partial p_{a_2}}{\partial p_{a_1}}$, and the inequality follows because $a_1a_2\leq 1$ from (\ref{eqn:for-a1-a2}) implies $(c_2-2b_2+3p_{a_1})^2(c_1-2b_1+3p_{a_2})^2\leq 64c_1c_2(b_2-p_{a_1})(b_1-p_{a_2}) \leq 64c_1c_2(b_2+c_2)(b_1+c_1)$. With $P_1-Q_1=0$ at $p_{a_1}=r_1$, and the derivative w.r.t. $p_{a_1}$ is nonpositive whenever $a_1a_2\leq 1$, we have $P_1-Q_1\leq 0$ for all $p_{a_1}\in[r_1,p_{a_1}^*]$. The other condition $Q_2-P_2\leq 0$ automatically holds because $P_1+P_2=Q_1+Q_2$. Hence $P$ lies to the north-west of $Q$.\qed
\item If $P$ is to the north-west of $Q$, the illustration in Figure \ref{fig:illust1} shows that $\bar{\mu}(W)$ decreases when either $p_{a_1}$ or $p_{a_2}$ increases.\qed
\begin{figure}
\centering
\begin{minipage}{.46\textwidth}
\centering
\includegraphics[height=4.5cm,width=5cm]{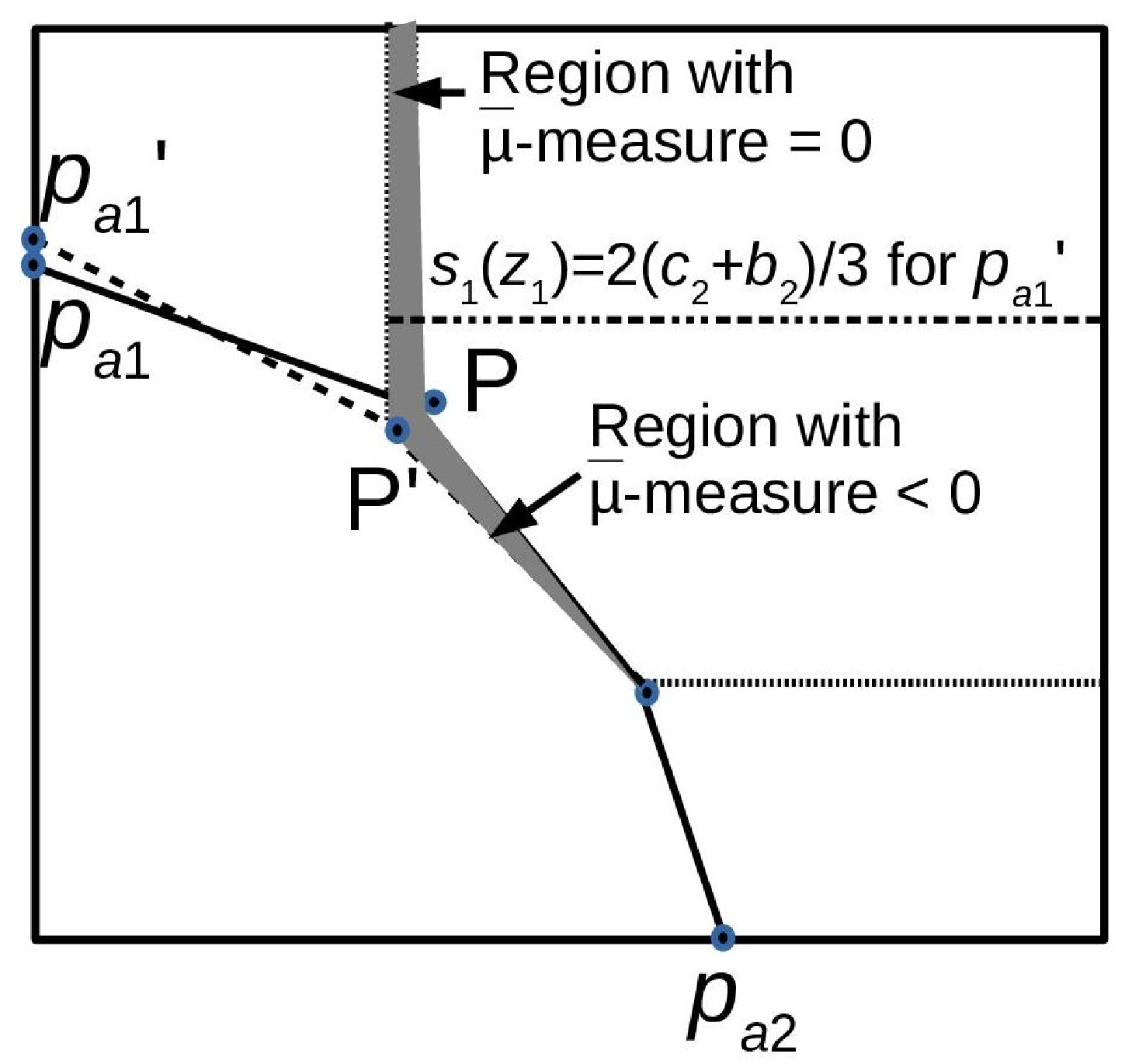}
\caption{An illustration to indicate $\bar{\mu}(W)$ decreases when $p_{a_1}$ increases to $p_{a_1}'$.}\label{fig:illust1}
\end{minipage}
\begin{minipage}{.03\textwidth}
\hspace*{.03\textwidth}
\end{minipage}
\begin{minipage}{.46\textwidth}
\centering
\includegraphics[height=4.5cm,width=5cm]{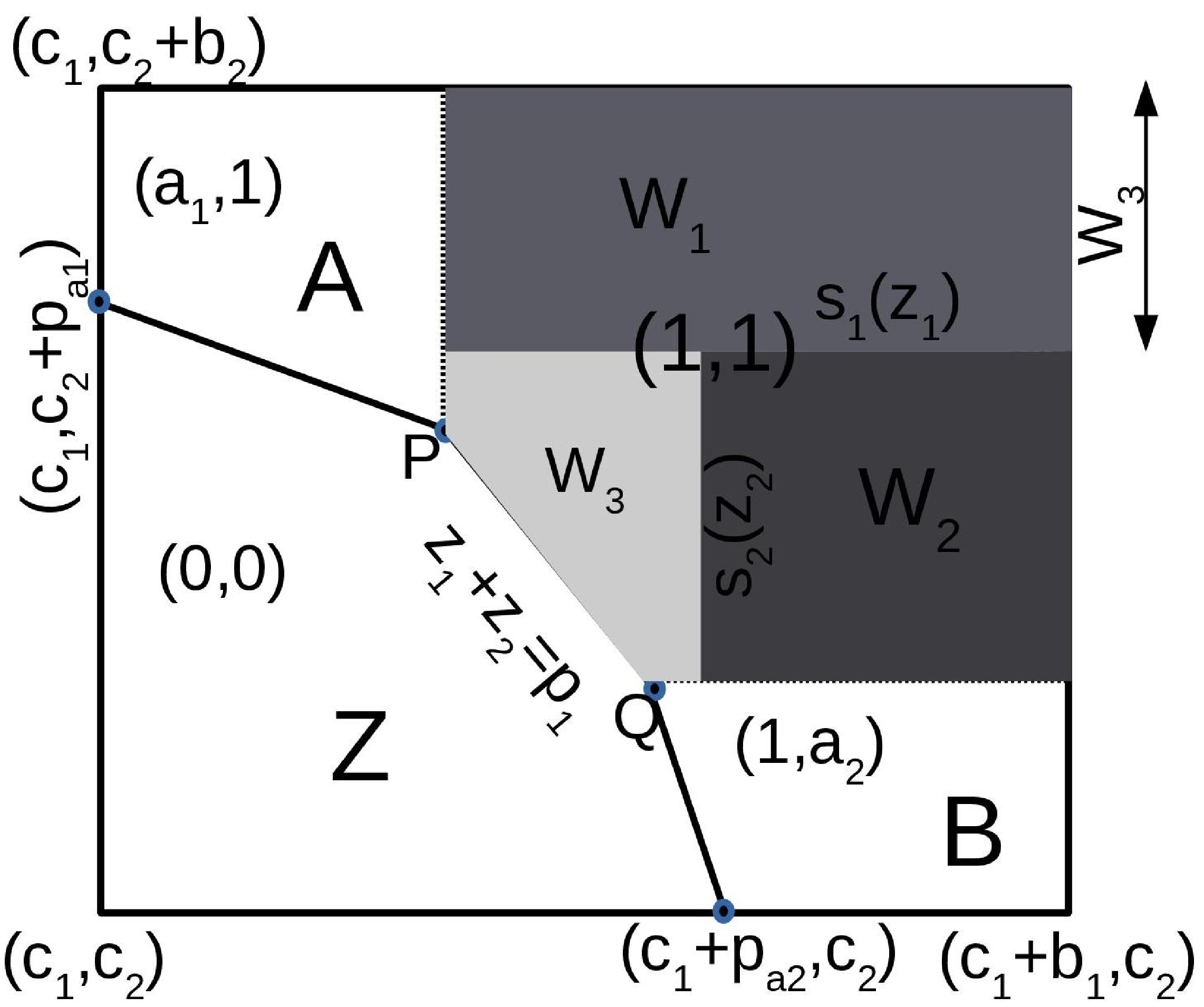}
\caption{The structure in Figure \ref{fig:a} shown with $D$ and $W$ partitioned.}\label{fig:W-partition}
\end{minipage}
\end{figure}
\item We first prove that $P$ and $Q$ in Figure \ref{fig:a} lies within $D$. This is true if (a) $p_{a_i}\leq b_{-i}$, (b) $a_i\in[0,1]$, and (c) $P$ is to the north-west of $Q$. Condition (a) holds because $p_{a_i}^*\leq b_{-i}$ and $p_{a_i}\leq p_{a_i}^*$. Condition (b) holds because $p_{a_i}\leq p_{a_i}^*\leq b_{-i}\Rightarrow a_i\in[0,1]$. Condition (c) holds by claim 2.

To prove that the optimal mechanism is as in Figure \ref{fig:a}, we first partition $W=D\backslash(Z\cup A\cup B)$ into three regions:
\begin{multline*}
 W_1:[P_1,c_1+b_1)\times[s_1(z_1),c_2+b_2],\,W_2:[s_2(z_2),c_1+b_1]\times[Q_2,s_1(z_1)],\\W_3:\mbox{ the remaining region}.
\end{multline*}
See Figure \ref{fig:W-partition}. We now construct $u$ and $\gamma$ in the same way as in the proof of Proposition \ref{prop:known}, with $\bar{\mu}$ replaced by $(\bar{\mu}+\bar{\alpha})$.\qed
\item By Observation \ref{obs:pa1-increase}, we know that increasing $p_{a_1}$ moves $P$ towards south-west. The claim that $\bar{\mu}(W)$ reduces when $p_{a_1}$ increases, can be shown by an illustration similar to that in Figure \ref{fig:illust1}.\qed
\item We first show that the point $P$ is within $D$. Observe that the value of $p$ reduces from $p_{a_2}^*$ by construction. So $p_{a_2}^*\leq b_1\Rightarrow p\leq b_1$. Further, $p_{a_1}\leq p_{a_1}^*$ implies $p_{a_1}\leq b_2$, both of which imply $a_1\in[0,1]$. All these imply that $P_1\in[c_1,c_1+b_1]$ and $P_2\leq c_2+b_2$. It remains to show that $P_2\geq c_2$. From (\ref{eqn:P-Q-new}), we have $P_2\leq c_2$ iff $p_{a_1}\leq 2b_2-c_2$. We now show that $p_{a_1}\leq 2b_2-c_2$.

We know that $\bar{\mu}(W)=0$ occurs when $p_{a_1}$ solves (\ref{eqn:pa1-(b)}). Substituting $p_{a_1}=(2b_2-c_2)/3$, we obtain LHS$=-\frac{8}{3}b_2c_1(b_2+c_2)<0$, and substituting $p_{a_1}=2b_2-c_2$, we obtain LHS$=2b_2(2b_2(b_1+c_1)-c_2(b_1+3c_1))$. Observe that $c_i,b_i$ that satisfy (\ref{eqn:c1-c2-small}) also satisfy $(2b_2(b_1+c_1)-c_2(b_1+3c_1))\geq 0$. Thus for the cases under consideration, the value of $p_{a_1}$ that satisfies $\bar{\mu}(W)=0$ is at most $2b_2-c_2$.

To prove that the optimal mechanism is as in Figure \ref{fig:b}, we construct the dual variable $\gamma$ similar to the proof of Proposition \ref{prop:known}.\qed
\item A decrease in $p$ continues to decrease $\bar{\mu}(W)$. This can be observed as follows. A decrease in $p$ only removes negative $\bar{\mu}$-measure from $Z$, and thus $\bar{\mu}(Z)$ increases. So $\bar{\mu}(W)$ must decrease to satisfy $\bar{\mu}(D)=0$.\qed
\item The claim holds by symmetry.\qed
\end{enumerate}

{\bf Proof of Lemma \ref{lem:DDT-equiv}:} Consider $\theta\succeq_{1}\bar{\mu}_-^W$ to hold. Strassen's theorem \cite[Thm.~6.3]{DDT13} states that under this condition, there exists a joint measure $\gamma$ such that its marginals  $\gamma_1$ and $\gamma_2$ equal $\theta$ and $\bar{\mu}_-^W$ respectively, and $z\geq z'$ (component-wise) holds $\gamma(z,z')$-a.e. We now verify if this $\gamma$ satisfies the duality conditions, and conditions in Lemma \ref{lem:conditions}.
\begin{enumerate}
 \item[(a)] $\theta\succeq_{cvx}\bar{\mu}_+^W$ holds by assumption (ii) in the lemma. Thus we have
 $$
   \gamma_1-\gamma_2=(\theta-\bar{\mu}_-^W)\succeq_{cvx}(\bar{\mu}_+^W-\bar{\mu}_-^W)=\bar{\mu}^W.
 $$
 \item[(b)] When $z\in W$, we have $u(z)=\|z\|_1-k$ for some constant $k$. So,
$$
  \int_Wu\,d(\gamma_1-\gamma_2-\bar{\mu}^W)=\int_Wu\,d(\theta-\bar{\mu}_-^W-\bar{\mu}^W)=\int_W(\|x\|_1-k)\,d(\theta-\bar{\mu}_+^W).
$$
 $\theta(W)=\bar{\mu}_+(W)$ is an easy consequence of the assumption $\theta\succeq_{cvx}\bar{\mu}_+^W$. So we have
$$
  \int_Wu\,d(\gamma_1-\gamma_2-\bar{\mu}^W)=\int_W\|z\|_1\,d(\theta-\bar{\mu}_+^W)-k(\theta(W)-\bar{\mu}_+(W))=0
$$
where the last equality holds by assumption (iii) in the lemma.
 \item[(c)] $u(z)=\|z\|_1-k$ for some constant $k$, if $z\in W$. So $u(z)-u(z')=\|z-z'\|_1$ for every $\{z,z'\in W:z\geq z'\}$. But $z\geq z'$ holds $\gamma(z,z')$-a.e. So, $u(z)-u(z')=(z_1-z_1')+(z_2-z_2')=\|z-z'\|_1$ holds $\gamma$-a.e.
\end{enumerate}\qed

{\bf Proof of Lemma \ref{lem:gc1-W}:}
\begin{figure}
\centering
\begin{minipage}{.46\textwidth}
\centering
\includegraphics[height=5cm,width=5.5cm]{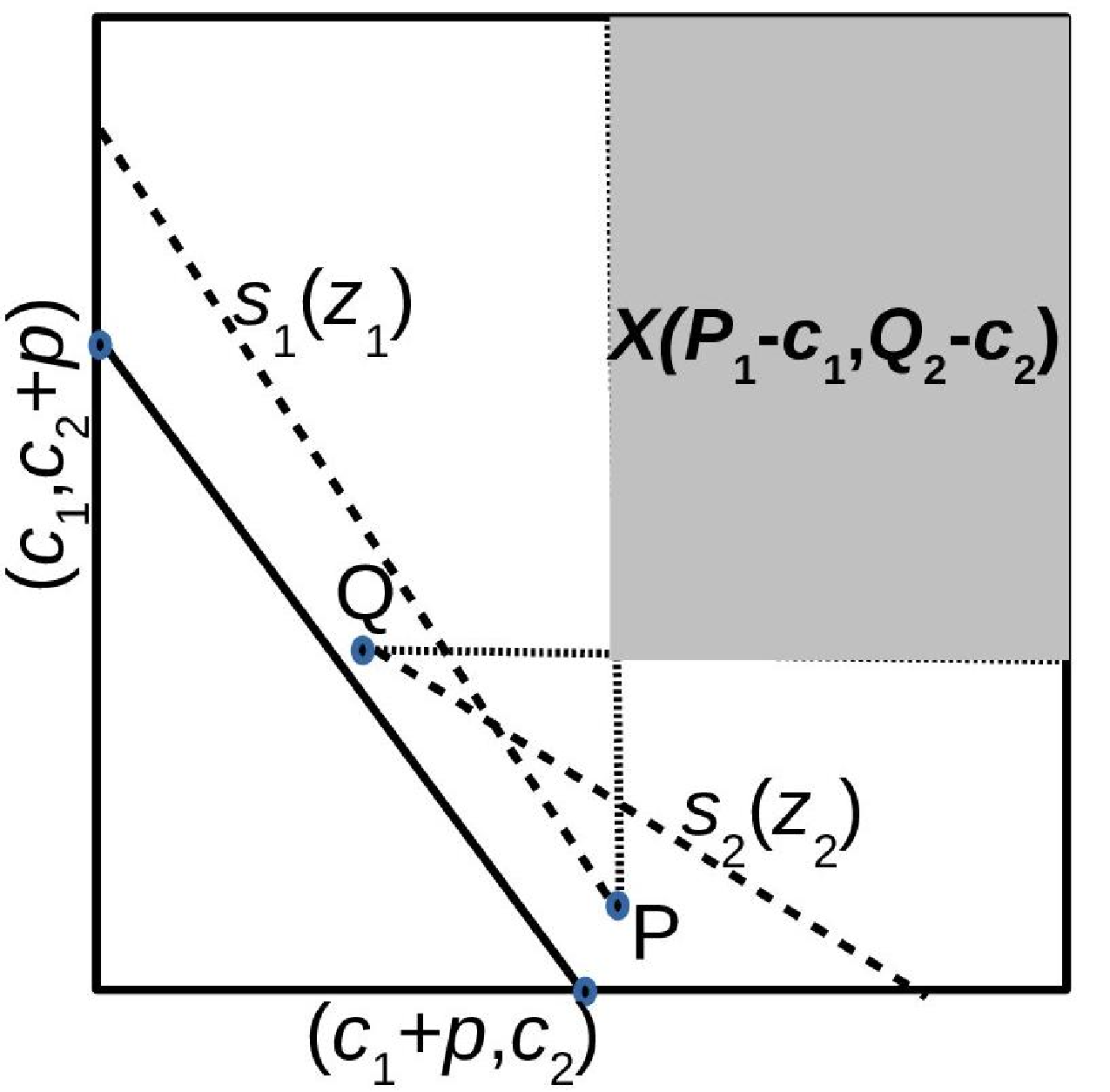}
\caption{An illustration of a possible $s_i(z_i)$ that satisfies the assumptions in Lemma \ref{lem:gc1-W}.}\label{fig:illust-gc1-W}
\end{minipage}
\begin{minipage}{.03\textwidth}
\hspace*{.03\textwidth}
\end{minipage}
\begin{minipage}{.46\textwidth}
\centering
\includegraphics[height=5cm,width=5.5cm]{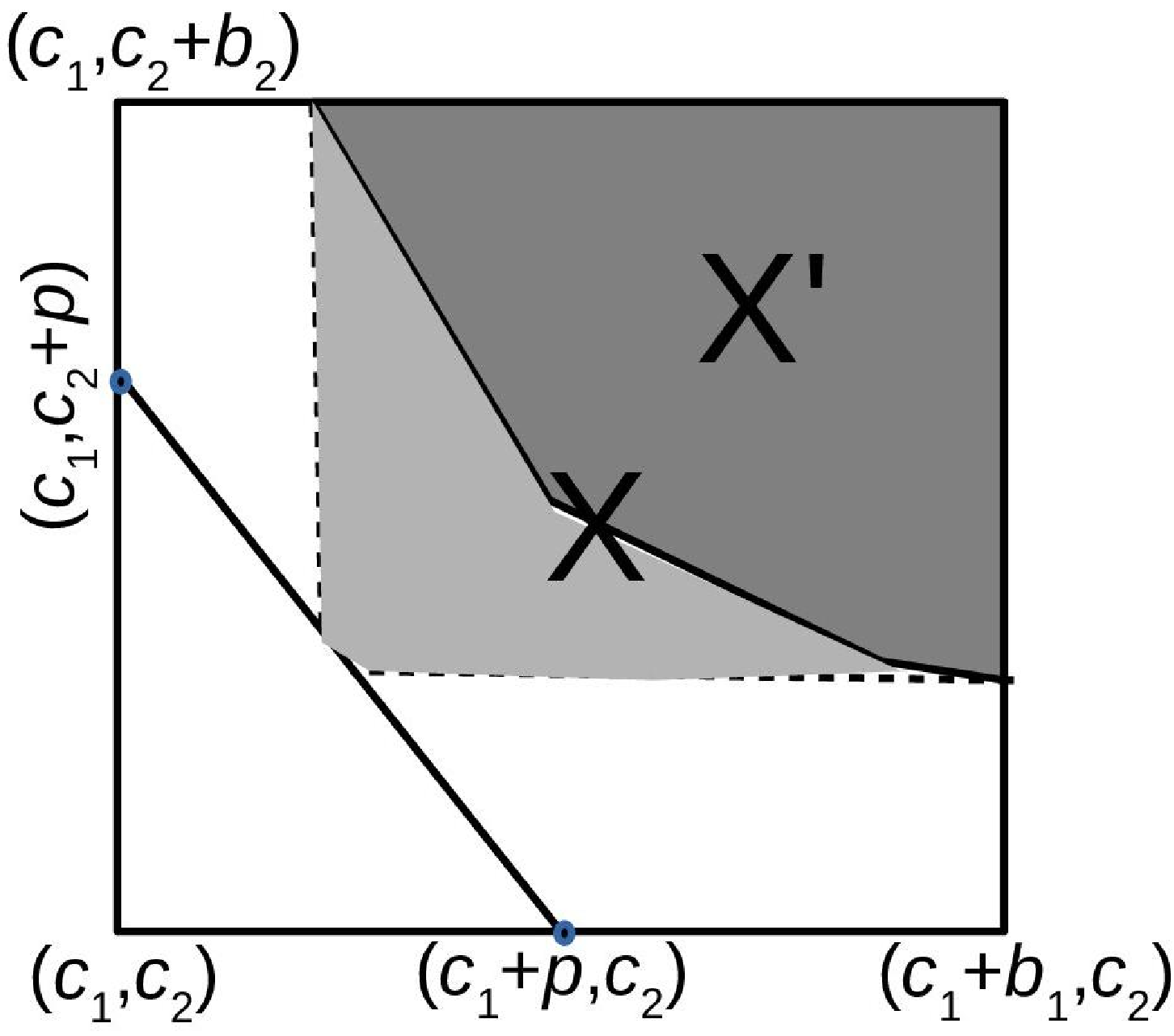}
\caption{Constructing an increasing set with boundaries parallel to the axes.}\label{fig:x-x'}
\end{minipage}
\end{figure}

The assumption in the lemma is that the function $s_1(z_1)$ is always above the lines $z_1+z_2=c_1+c_2+p$ and $z_2=c_2$, and the function $s_2(z_2)$ is always to the right of the lines $z_1+z_2=c_1+c_2+p$ and $z_1=c_1$. See Figure \ref{fig:illust-gc1-W} for an example. Defining $X(t_1,t_2):=W\cap\{z_1\geq c_1+t_1, z_2\geq c_2+t_2\}$, we now proceed to prove that $(\bar{\mu}+\bar{\alpha})(X(t_1,t_2))\geq 0$ for every $t_i\in[0,b_i]$.

We start with $t_2=0$, i.e., we first show that $(\bar{\mu}+\bar{\alpha})(X(t_1,0))\geq 0$ holds for every $t_1\in[0,b_1]$. We first note that $X(t_1,0)=W\backslash\{z_1<c_1+t_1\}$. Recall that $s_1(z_1)$ is defined as the point at which the integral of the densities of $\bar{\mu}+\bar{\alpha}$ (from $c_2+b_2$) vanishes. As $s_1(z_1)$ lies above the line $z_1+z_2=c_1+c_2+p$ for each $z_1$, we have $(\bar{\mu}+\bar{\alpha})^W(\{z_1<c_1+t_1\})$ to be an integral of negative numbers in each vertical line, when $t_1\in[0,b_1)$. So we have
$$
  (\bar{\mu}+\bar{\alpha})(X(t_1,0))=(\bar{\mu}+\bar{\alpha})(W\backslash\{z_1<c_1+t_1\})=0-(\mbox{a negative number})\geq 0
$$
for every $t_1\in[0,b_1)$. When $t_1=b_1$, we have $(\bar{\mu}+\bar{\alpha})(X(b_1,0))=\bar{\mu}_+(\{c_1+b_1\}\times[c_2,c_2+b_2])\geq 0$. Thus $(\bar{\mu}+\bar{\alpha})(X(t_1,t_2))\geq 0$ holds for every $(t_1,t_2)\in([0,b_1],0)$. The proof when $(t_1,t_2)\in(0,[0,b_2])$ is analogous.

We now show that $(\bar{\mu}+\bar{\alpha})(X(t_1,t_2))\geq 0$ holds for $t_1,t_2>0$ as well. We have four cases.
\begin{itemize}
 \item Consider $t_1+t_2\leq p^*$. Then, $X(t_1,t_2)$ contains portions of the line $z_1+z_2=c_1+c_2+p^*$, and we have
 \begin{align*}
   &(\bar{\mu}+\bar{\alpha})(X(t_1,t_2))\\&=(\bar{\mu}+\bar{\alpha})(X(t_1,0))+(\bar{\mu}+\bar{\alpha})(X(0,t_2))-(\bar{\mu}+\bar{\alpha})(W)\\&\geq 0.
 \end{align*}
 \item Consider $t_1+t_2>p^*$, and $(t_1,t_2)=(P_1-c_1,Q_2-c_2)$ This refers to the shaded region in Figure \ref{fig:illust-gc1-W}. Observe that when $t_1\geq(2b_2-c_2)/3$ or $t_2\geq(2b_1-c_1)/3$, the integral of the densities of $\bar{\mu}+\bar{\alpha}$ is nonnegative in each vertical (or horizontal) line of $X(t_1,t_2)$, and thus $(\bar{\mu}+\bar{\alpha})(X(t_1,t_2))\geq 0$ holds trivially. So we consider $P_1-c_1\leq(2b_1-c_1)/3$ and $Q_2-c_2\leq(2b_2-c_2)/3$. Now we have
 \begin{align*}
  &(\bar{\mu}+\bar{\alpha})(X(P_1-c_1,Q_2-c_2))\\
  &=\bar{\mu}([P_1,c_1+b_1)\times[\frac{2}{3}(c_2+b_2),c_2+b_2])
  \\&\hspace*{.5in}+\bar{\mu}([\frac{2}{3}(c_1+b_1),c_1+b_1]\times[Q_2,\frac{2}{3}(c_2+b_2)])\\
  &\hspace{1in}+\mu_s(\{c_1+b_1\}\times[\frac{2}{3}(c_2+b_2),c_2+b_2])
  \\&\hspace*{1.5in}+\bar{\mu}([P_1,\frac{2}{3}(c_1+b_1)]\times[Q_2,\frac{2}{3}(c_2+b_2)])\\
  &=\mu_s(\{c_1+b_1\}\times[2(c_2+b_2)/3,(c_2+b_2)])
  \\&\hspace*{1in}+\bar{\mu}([P_1,2(c_1+b_1)/3]\times[Q_2,2(c_2+b_2)/3])\\
  &\geq(c_1+b_1)(c_2+b_2)/3-3(2(c_1+b_1)/3-Q_1)(2(c_2+b_2)/3-P_2)\\
  &=(c_1+b_1)(c_2+b_2)/3-3((c_1+p_{a_2}^*)/2
  \\&\hspace*{1in}-(c_1+b_1)/3)((c_2+p_{a_1}^*)/2-(c_2+b_2)/3)\\
  &\geq(c_1+b_1)(c_2+b_2)/3-3(c_1+b_1)/6(c_2+b_2)/6\\
  &\geq0
 \end{align*}
 where the second equality here holds because $s_1(z_1)=2(c_2+b_2)/3$ in the interval $[P_1,c_1+b_1)$, and thus $\bar{\mu}([P_1,c_1+b_1]\times[2(c_2+b_2)/3,(c_2+b_2)])$ is an integral of zero measures in each vertical line; an analogous argument holds for $\bar{\mu}([2(c_1+b_2)/3,(c_2+b_1)]\times[Q_2,2(c_2+b_2)/3])$; the first inequality holds because we have $P_2\leq Q_2$ and $Q_1\leq P_1$ in the overlapping case; the third equality follows by substitution of $P_2$ and $Q_1$; and the second inequality follows from $p_{a_i}^*\leq b_{-i}$.
 
 We now consider the case when $t_1+t_2>p^*$ and $\{t_1\in[P_1-c_1,(2b_2-c_2)/3],t_2\in[Q_2-c_2,(2b_1-c_1)/3]\}$. We have
 \begin{align*}
   (\bar{\mu}+\bar{\alpha})(X(t_1,t_2))&=(\bar{\mu}+\bar{\alpha})(X(P_1-c_1,Q_2-c_2))\\&\hspace*{.2in}-\bar{\mu}([P_1,c_1+t_1]\times[2(c_2+b_2)/3,(c_2+b_2)])\\&\hspace*{.2in}-\bar{\mu}([2(c_1+b_1)/3,c_1+b_1]\times[Q_2,c_2+t_2])\\&\hspace*{1.5in}-(\mbox{a negative number})\geq 0.
 \end{align*}
 where the inequality holds because $s_1(z_1)=2(c_2+b_2)/3$ in the interval $[P_1,c_1+t_1]$, and thus $\bar{\mu}([P_1,c_1+t_1]\times[2(c_2+b_2)/3,(c_2+b_2)])$ is an integral of zero measures in each vertical line; an analogous argument holds for $\bar{\mu}([2(c_1+b_1)/3,(c_1+b_1)]\times[Q_2,c_2+t_2])$.
 \item Consider $t_1+t_2>p^*$, $t_1\geq P_1-c_1$ but $t_2\leq Q_2-c_2$. Let $c_1+t_1\geq s_2(c_2)$. Recall that the density of $\bar{\alpha}$ is the lowest at $z_2=c_2$, and thus $s_2(z_2)$ is maximized at $z_2=c_2$. So for any $t_2\in[0,Q_2-c_2]$, the integral of the densities of $\bar{\mu}+\bar{\alpha}$ is nonnegative in each horizontal line, and thus $(\bar{\mu}+\bar{\alpha})(X(t_1,t_2))\geq 0$.
 
 Let $c_1+t_1<s_2(c_2)$. When $c_2+t_2\in[s_2^{-1}(c_1+t_1),Q_2]$, we have
  \begin{multline*}
    (\bar{\mu}+\bar{\alpha})(X(t_1,t_2))=(\bar{\mu}+\bar{\alpha})(X(t_1,Q_2-c_2))\\+(\bar{\mu}+\bar{\alpha})([c_1+t_1,c_1+b_1]\times[c_2+t_2,Q_2])\geq 0
  \end{multline*}
  where the last inequality occurs because the integral of the densities of $(\bar{\mu}+\bar{\alpha})$ is nonnegative in each horizontal line for $z_2\in[c_2+t_2,Q_2]$. When $t_2\in((p^*-t_1)_+,s_2^{-1}(c_1+t_1)-c_2)$, the integral of the densities of $\bar{\mu}+\bar{\alpha}$ is nonpositive in each horizontal line for $z_2\leq c_2+t_2$, and thus
  $$
    (\bar{\mu}+\bar{\alpha})(X(t_1,t_2))=(\bar{\mu}+\bar{\alpha})(X(t_1,(p^*-t_1)_+))-(\mbox{a negative number})\geq 0.
  $$
 \item Consider $t_1+t_2>p^*$ and $t_1\leq P_1-c_1$. When $c_2+t_2\in[s_1(c_1+t_1),c_2+b_2]$, we have
  \begin{multline*}
    (\bar{\mu}+\bar{\alpha})(X(t_1,t_2))=(\bar{\mu}+\bar{\alpha})(X(P_1-c_1,t_2))\\+(\bar{\mu}+\bar{\alpha})([c_1+t_1,P_1]\times[c_2+t_2,c_2+b_2])\geq 0
  \end{multline*}
  where the last inequality occurs because the integral of the densities of $(\bar{\mu}+\bar{\alpha})$ is nonnegative in each vertical line for $z_1\in[c_1+t_1,P_1]$. When $t_2\in((p^*-t_1)_+,s_1(c_1+t_1)-c_2)$, the integral of the densities of $\bar{\mu}+\bar{\alpha}$ is nonpositive in each vertical line for $z_1\leq c_1+t_1$, and thus
  $$
    (\bar{\mu}+\bar{\alpha})(X(t_1,t_2))=(\bar{\mu}+\bar{\alpha})(X((p^*-t_2)_+,t_2))-(\mbox{a negative number})\geq 0.
  $$
\end{itemize}

We have thus shown that $(\bar{\mu}+\bar{\alpha})(X(t_1,t_2))\geq 0$ for every $t_i\in[0,b_i]$. We use this to show that $\bar{\mu}^W+\bar{\alpha}\succeq_1 0$. We first note that $\bar{\mu}^W+\bar{\alpha}\succeq_1 0$ holds if $(\bar{\mu}^W+\bar{\alpha})(X')\geq 0$ for every increasing set $X'$ \cite[Chap.~6]{SS07}. So we now show that if $(\bar{\mu}+\bar{\alpha})(X(t_1,t_2))\geq 0$ holds for every $t_i\in[0,b_i]$, then $(\bar{\mu}+\bar{\alpha})^W(X')\geq 0$ also holds for every increasing set $X'$.

Consider the increasing set $X'$ in Figure \ref{fig:x-x'}. We construct $X\supseteq X'$ such that the boundaries consist of lines parallel to the axes $z_1$ and $z_2$. Observe that such a construction is possible for any increasing $X'$. The measure $\bar{\mu}$ is negative at the interior, and thus we have $(\bar{\mu}^W+\bar{\alpha})(X')\geq(\bar{\mu}^W+\bar{\alpha})(X)$. So $(\bar{\mu}^W+\bar{\alpha})(X)\geq 0\Rightarrow(\bar{\mu}^W+\bar{\alpha})(X')\geq 0$.\qed

\section{Proofs from Section \ref{SUB:GC2}}\label{app:b}

\textbf{Proof of Proposition \ref{prop:gc2}:} By Proposition \ref{prop:cvx}, we know that $\bar{\beta}\succeq_{cvx}0$ if $\bar{\beta}(\tilde{D}^{(1)})=\int_{[c_1,c_1+p]}x\,\bar{\beta}(dx,c_2+b_2)=0$. We now find the conditions on $(p_{a_1},a_1,p)$ for both of these equations to hold.
\begin{align*}
  \bar{\beta}(\tilde{D}^{(1)})&=\frac{2b_2-c_2-3p_{a_1}}{b_1b_2}\frac{p_{a_1}}{a_1}\\&\hspace*{.5in}+\frac{3a_1}{b_1b_2}\frac{p_{a_1}^2}{2a_1^2}+\frac{2b_2}{b_1b_2}\left(p-\frac{p_{a_1}}{a_1}\right)+\frac{c_1}{b_1b_2}(b_2-p_{a_1})\\&=\frac{c_1(b_2-p_{a_1})}{b_1b_2}-\frac{c_2p_{a_1}}{a_1b_1b_2}-\frac{3p_{a_1}^2}{2a_1b_1b_2}+\frac{2b_2p}{b_1b_2}=0\\\Rightarrow p&=\frac{\frac{3p_{a_1}^2}{2a_1}+\frac{c_2p_{a_1}}{a_1}-c_1(b_2-p_{a_1})}{2b_2}.
\end{align*}
\begin{align*}
  &\int_{[c_1,c_1+p]}(x-c_1)\,\bar{\beta}(dx,c_2+b_2)\\&=\frac{1}{b_1b_2}\left((2b_2-c_2-3p_{a_1})\frac{p_{a_1}^2}{2a_1^2}+a_1\frac{p_{a_1}^3}{a_1^3}+b_2p^2-b_2\frac{p_{a_1}^2}{a_1^2}\right)\\&=\frac{1}{b_1b_2}\left(b_2p^2-(c_2+p_{a_1})\frac{p_{a_1}^2}{2a_1^2}\right)=0\Rightarrow p=\frac{p_{a_1}}{a_1}\sqrt{\frac{(p_{a_1}+c_2)}{2b_2}}.
\end{align*}
Thus if the parameters $(p_{a_1},p,a_1)$ satisfy
$$
  p=\frac{3p_{a_1}^2/(2a_1)+c_2p_{a_1}/a_1-c_1(b_2-p_{a_1})}{2b_2}=\frac{p_{a_1}}{a_1}\sqrt{\frac{p_{a_1}+c_2}{2b_2}},
$$
then, $\bar{\beta}$ satisfies $\bar{\beta}(\tilde{D}^{(1)})=\int_{[c_1,c_1+p]}x\,\bar{\beta}(dx,c_2+b_2)=0$. We also have $\int_{\tilde{D}^{(1)}}f\,d\bar{\beta}=0$ for any affine function $f(x)=\beta_1x+\beta_2$.

Now, imposing $\bar{\mu}(W)=0$, we easily derive $p=(b_1-c_1)/2$. We now substitute $p$ in (\ref{eqn:p-a1-pa1}), and obtain
$$
  a_1=p_{a_1}\left(\frac{\frac{3}{2}p_{a_1}+c_2}{b_1b_2-c_1p_{a_1}}\right)=\frac{p_{a_1}}{b_1-c_1}\sqrt{\frac{2(p_{a_1}+c_2)}{b_2}}.
$$
Rearranging, squaring, and simplifying, we have the following cubic equation in $p_{a_1}$.
\begin{multline*}
  p_{a_1}^3(2c_1^2)+p_{a_1}^2(2c_1^2c_2-4b_1b_2c_1-9b_2(b_1-c_1)^2/4)\\+p_{a_1}(2b_1^2b_2^2-4b_1b_2c_1c_2-3c_2b_2(b_1-c_1)^2)+2b_1^2b_2^2c_2-c_2^2b_2(b_1-c_1)^2=0.
\end{multline*}
So if $p_{a_1}$ is a solution to (\ref{eqn:pa1-4d}), and $a_1$ satisfies (\ref{eqn:a1-4d}), then the conditions $\bar{\beta}\succeq_{cvx}0$ and $\int_{\tilde{D}^{(1)}}f\,d\bar{\beta}=0$ for any affine $f$, are satisfied.\qed

\vspace*{10pt}
{\bf Proofs of Claims}
\begin{enumerate}
 \item We rearrange (\ref{eqn:p-a1-pa1}) as follows, to derive an expression for $a_1$.
\begin{multline}\label{eqn:a1-4d-ini}
  p_{a_1}\sqrt{2b_2(p_{a_1}+c_2)}=3p_{a_1}^2/2+c_2p_{a_1}-c_1a_1(b_2-p_{a_1})\\\Rightarrow a_1=\frac{p_{a_1}^2/2+p_{a_1}\left(c_2+p_{a_1}-\sqrt{2b_2(c_2+p_{a_1})}\right)}{c_1(b_2-p_{a_1})}.
\end{multline}
Using the expression for $a_1$, we write $p_{a_1}/a_1$ as follows.
\begin{equation}\label{eqn:pa1/a1-4d}
  \frac{p_{a_1}}{a_1}=\frac{c_1(b_2-p_{a_1})}{p_{a_1}/2+c_2+p_{a_1}-\sqrt{2b_2(c_2+p_{a_1})}}
\end{equation}
Using the expression for $p_{a_1}/a_1$, we write $p$ as follows.
\begin{equation}\label{eqn:p-4d}
  p=\frac{p_{a_1}}{a_1}\sqrt{\frac{p_{a_1}+c_2}{2b_2}}=\frac{c_1(b_2-p_{a_1})}{p_{a_1}\sqrt{\frac{b_2}{2(c_2+p_{a_1})}}+\sqrt{2b_2(c_2+p_{a_1})}-2b_2}.
\end{equation}
Substituting $p_{a_1}=b_2$ in (\ref{eqn:a1-4d-ini}), (\ref{eqn:pa1/a1-4d}), and (\ref{eqn:p-4d}), we obtain $a_1=\infty$, and $p_{a_1}/a_1=p=0$.\qed
 \item Consider (\ref{eqn:a1-4d-ini}). The term $\left(c_2+p_{a_1}-\sqrt{2b_2(c_2+p_{a_1})}\right)$, and its differential w.r.t. $p_{a_1}$, are both nonnegative when $p_{a_1}\in[(2b_2-c_2)_+,b_2]$. So, as $p_{a_1}$ decreases from $b_2$ to $(2b_2-c_2)_+$, the numerator decreases and the denominator increases, resulting in an decrease of $a_1$.
 
 Consider (\ref{eqn:pa1/a1-4d}). By the same reasoning as above, $p_{a_1}/a_1$ increases as $p_{a_1}$ decreases.
 
 Consider (\ref{eqn:p-4d}). The differential of the term $p_{a_1}\sqrt{\frac{b_2}{2(c_2+p_{a_1})}}$ w.r.t. $p_{a_1}$, and the term $\sqrt{2b_2(c_2+p_{a_1})}-2b_2$, are both nonnegative when $p_{a_1}=[(2b_2-c_2)_+,b_2]$. So, as $p_{a_1}$ decreases, the numerator increases and the denominator decreases, resulting in an increase of $p$. Hence the result.\qed
 \item We just showed that $p$ increases as $p_{a_1}$ decreases. So it suffices to prove that $p_{a_1}=(2b_2-c_2)_+$ results in $p\geq(b_1-c_1)/2$. We first consider the case when $2b_2-c_2\geq 0$. At $p_{a_1}=2b_2-c_2$, we have
$$
  a_1=(2b_2-c_2)^2/(2c_1(c_2-b_2));\,p_{a_1}/a_1=p=2c_1(c_2-b_2)/(2b_2-c_2).
$$
We now show that $2c_1\frac{c_2-b_2}{2b_2-c_2}\geq\frac{b_1-c_1}{2}$ is true. Rearranging, we get $c_2(b_1+3c_1)-2b_2(c_1+b_1)\geq 0$, which holds when $c_2\geq 2b_2(b_1+c_1)/(b_1+3c_1)$.

We now consider the case when $2b_2-c_2\leq 0$. At $p_{a_1}=(2b_2-c_2)_+=0$, we have
$$
  a_1=0;\,p_{a_1}/a_1=c_1b_2/(c_2-\sqrt{2b_2c_2});\,p=c_1b_2/(\sqrt{2b_2c_2}-2b_2).
$$
We now show that $\frac{c_1b_2}{\sqrt{2b_2c_2}-2b_2}\geq\frac{b_1-c_1}{2}$ is true. Rearranging and squaring on both sides, we get $c_2\leq 2b_2(b_1/(b_1-c_1))^2$, which holds by the assumption in the claim. This proves that there exist a $p_{a_1}\in[(2b_2-c_2)_+,b_2]$.

When $p_{a_1}\geq (2b_2-c_2)_+$, it is clear from (\ref{eqn:p-4d}) that $p_{a_1}/a_1\leq p$.\qed
 \item We first split $W$ into the following three regions:
\begin{align}
 W_1&:[(c_1+b_1)/2,c_1+b_1)\times\{\{c_2\}\cup[c_2+2b_2/3,c_2+b_2]\},\nonumber\\
 W_2&:[2(c_1+b_1)/3,c_1+b_1]\times(c_2,c_2+2b_2/3],\nonumber\\
 W_3&:\{[(c_1+b_1)/2,2(c_1+b_1)/3]\times(c_2,c_2+2b_2/3]\}\nonumber\\&\hspace*{1in}\cup\{\{c_1+b_1\}\times[c_2+2b_2/3,c_2+b_2]\}.\label{eqn:W-4d}
\end{align}

We now set up the (dual variable) $\gamma$ measure as follows. First, let $\gamma_1:=\gamma_1^Z+\gamma_1^{D\backslash Z}$, with $\gamma_1^Z=\bar{\mu}^Z$, and $\gamma_1^{D\backslash Z}=(\bar{\mu}^{D\backslash Z}+\bar{\beta})_+$. So $\gamma_1$ is supported on $(\partial D^+\cup Z)$. Next we specify the transition kernel $\gamma(\cdot~|~z)$ for $z\in(\partial D^+\cup Z)$.
\begin{enumerate}
 \item[(a)] For $z\in Z$, we define $\gamma(\cdot~|~z)=\delta_z(\cdot)$. We interpret this as the mass being retained at each $z\in Z$.
 \item[(b)] For $z\in[c_1,c_1+p_{a_1}/a_1]\times\{c_2+b_2\}$, $\gamma(\cdot~|~z)$ is defined by the uniform probability density on the line $\{z_1\}\times[s_1(z_1),c_2+b_2)$, and zero elsewhere. We interpret this as a transfer of $(\mu_s+\beta_s)(z)$ from the boundary to the above line when $z_1\in(c_1,c_1+p_{a_1}/a_1)$, and a transfer of $\beta_p(z)$ when $z_1=c_1$.
 \item[(c)] For $z\in[c_1+p_{a_1}/a_1,c_1+p]\times\{c_2+b_2\}$, we define $\gamma(x~|~z)=\mu(x)/(\mu_s+\beta_s)(z)$ when $x\in\{z_1\}\times(c_2,c_2+b_2)$, $\mu_s(x)/(\mu_s+\beta_s)(z)$ when $x=(z_1,c_2)$, and zero otherwise. We interpret this as a transfer of $(\mu_s+\beta_s)(z)$ from the boundary to the above line.
 \item[(d)] For $z\in W_1\cap\partial D^+$, we define $\gamma(x~|~z)=\mu(x)/\mu_s(z)$ when $x=\{z_1\}\times[c_2+2b_2/3,c_2+b_2)$, $\mu_s(x)/\mu_s(z)$ when $x=(z_1,c_2)$, and zero otherwise. Again, we interpret this as a transfer of $\mu_s(z)$ from the boundary to the above line.
 \item[(e)] For $z\in W_2\cap\partial D^+$, $\gamma(\cdot~|~z)$ is defined by the uniform probability density on the line $[2(c_1+b_1)/3,c_1+b_1)\times\{z_2\}$, and zero elsewhere. Again, we interpret this as a transfer of $\mu_s(z)$ from the boundary to the above line.
 \item[(f)] For $z\in W_3\cap\partial D^+$, $\gamma(\cdot~|~z)$ is defined as follows. The total mass $\mu_s(W_3\cap\partial D^+)$ is spread uniformly on $W_3\backslash\partial D^+$ with equal contribution from each $z$ in $W_3\cap\partial D^+$.
\end{enumerate}

We then define $\gamma(F)=\int_{(z,z')\in F}\gamma_1(dz)\gamma(dz'~|~z)$ for any measurable $F\in D\times D$. It is then easy to check, by virtue of the choices of $s_1(z_1)$, $Z$, and the matchings in (a)--(f), that $\gamma_2^Z=\gamma(Z,\cdot)=\bar{\mu}^Z$, and $\gamma_2^{D\backslash Z}=\gamma(D\backslash Z,\cdot)=(\bar{\mu}^{D\backslash Z}+\bar{\beta})_-$. Thus $(\gamma_1-\gamma_2)^Z=0$ and $(\gamma_1-\gamma_2)^{D\backslash Z}=\bar{\mu}^{D\backslash Z}+\bar{\beta}$.

We now define the allocation $q(z)$ to be
$$
  q(z)=\begin{cases}(0,0)&\mbox{if }z\in Z\\(a_1,1)&\mbox{if }z\in A\\(1,1)&\mbox{if }z\in W\end{cases}
$$
The primal variable $u(z)$ can be derived from this, since $\nabla u =q$. Verifying if $u$ and $\gamma$ satisfy all the required constraints is now similar to the verifications in the proof of Proposition \ref{prop:known}. The optimal mechanism thus is as in Figure \ref{fig:d}.\qed
\end{enumerate}

\begin{figure}
\centering
\begin{tabular}{cc}
\subfloat[]{\label{fig:5b-5c-case1}\includegraphics[height=5cm, width=5.5cm]{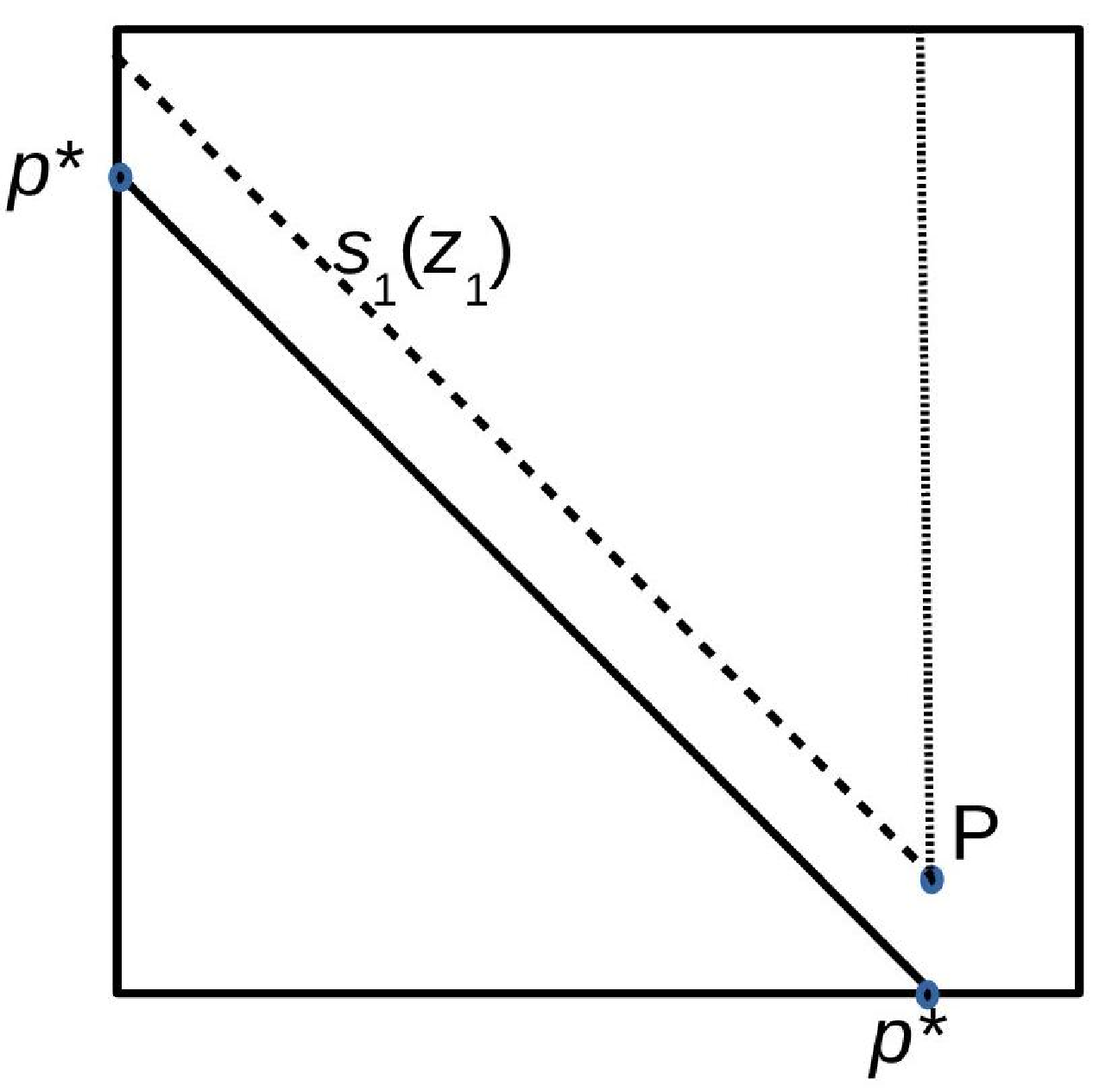}}&
\subfloat[]{\label{fig:5b-5c-case2}\includegraphics[height=5cm, width=5.5cm]{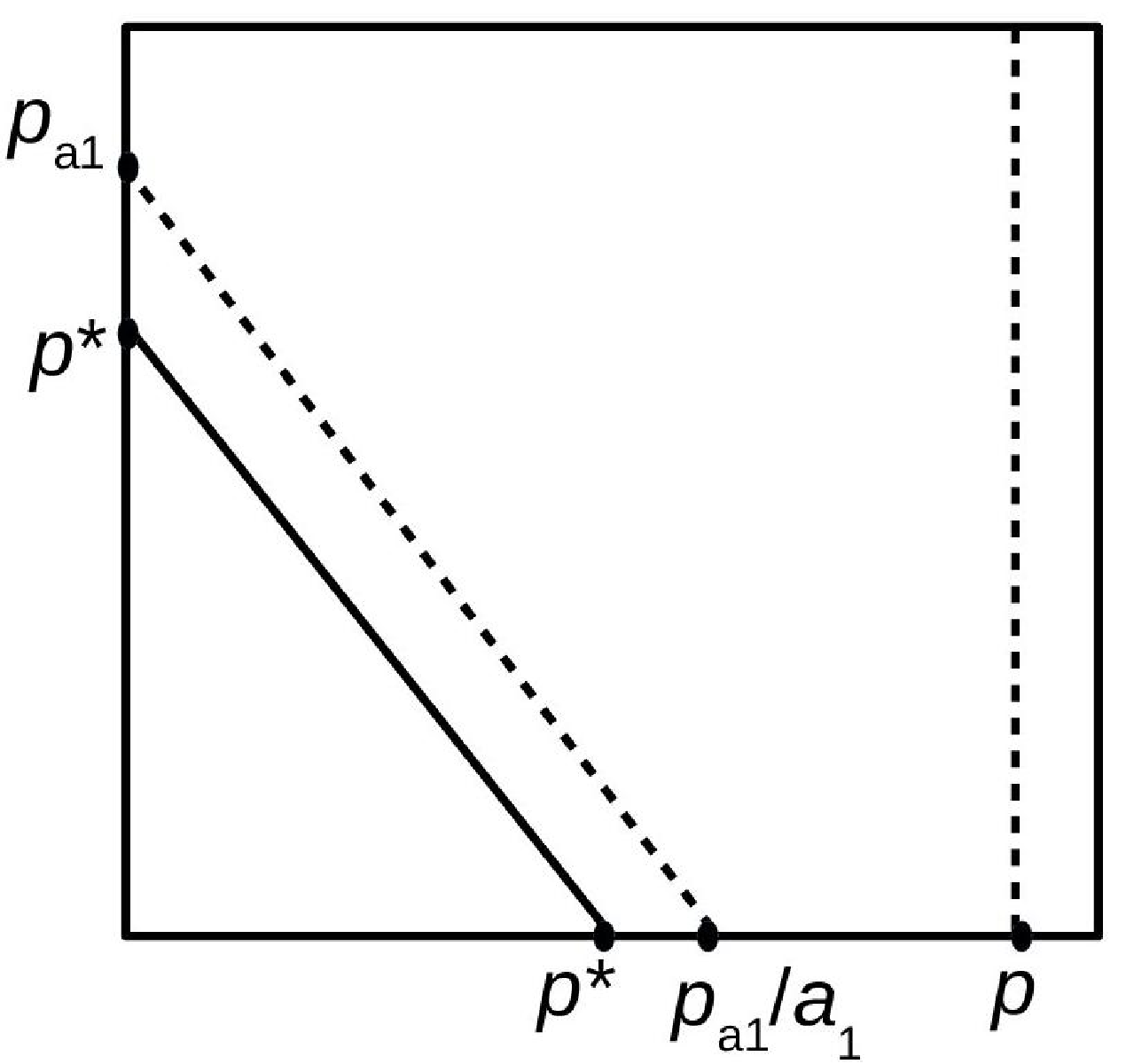}}
\end{tabular}
\caption{The function $s_1(z_1)$, when (a) $p_{a_1}$ reaches $(2b_2-c_2)$ at $a_1>1$ (b) $a_1$ reaches $1$ at $p_{a_1}>(2b_2-c_2)$.}\label{fig:5b-5c}
\end{figure}

\subsection*{Proof of Claim 5, Section \ref{SUB:GC2}:}
Consider the case where $a_1>1$ when $p=(b_1-c_1)/2$. Then starting from the $p_{a_1}$ that led to $p=(b_1-c_1)/2$, we continue to decrease $p_{a_1}$, and modify $a_1$ and $p$ according to (\ref{eqn:p-a1-pa1}). We stop either when $p_{a_1}$ reaches $2b_2-c_2$ or when $a_1$ reaches $1$. We thus have two cases.

\vspace*{10pt}
{\bf Case a: $p_{a_1}$ reaches $(2b_2-c_2)$ at $a_1>1$}

In this case, we have $p_{a_1}/a_1=p=2c_1(c_2-b_2)/(2b_2-c_2)$. Observe that the structure is then the same as depicted in Figure \ref{fig:b}, with $P_2=c_2$. We now decrease $p_{a_1}$ until $p_{a_1}=p_{a_1}^*$, but modify the point $P$ according to (\ref{eqn:P-Q-new}), i.e., we use $\bar{\alpha}^{(1)}$ as the shuffling measure. The structure is now the same as depicted in Figure \ref{fig:c'}. We fix $P$ and $s_1(z_1)$ corresponding to $p_{a_1}=p_{a_1}^*$, and decrease $p$ until $p=p^*$. Observe that the procedure is similar to the procedure adopted in the proof of claim in Case 4, Section \ref{SUB:GC1}.

Figure \ref{fig:5b-5c-case1} shows the structure in Figure \ref{fig:c'} with $P$ and $s_1(z_1)$ corresponding to $p_{a_1}^*$. Observe that $P_1$ constructed thus, may be more than $c_1+b_1$. We get around this issue by ignoring $P$ for time being, continue decreasing $p_{a_1}$ until it reaches $p_{a_1}^*$, and then further decrease $p$ to $p^*$. We fix $P$ and $s_1(z_1)$ corresponding to the $p_{a_1}$ for which $P_1=c_1+b_1$. We claim that the value of $p^*$ so obtained is at most $b_1$. This can be observed as follows. $p^*$ equals $b_1$ when $c_1+c_2=b_2-(3/2)b_1$. Also, $p^*$ decreases with increase in $(c_1+c_2)$. So $p^*\leq b_1$ if $c_1+c_2\geq b_2-(3/2)b_1$. But from the theorem statement, we have $c_2\geq b_2$. So $p^*\leq b_1$ holds.

We now argue that the optimal mechanism is as in Figure \ref{fig:c'}. We do so by fixing $\theta=\bar{\mu}_+^W+\bar{\alpha}^{(1)}$, and showing that this value of $\theta$ satisfies the assumptions in Lemma \ref{lem:DDT-equiv}. Conditions (ii) and (iii) of Lemma \ref{lem:DDT-equiv} only involve verifying $\bar{\alpha}^{(1)}\succeq_{cvx}0$ and $\int_D\|x\|_1\,d\bar{\alpha}^{(1)}\geq 0$, both of which are established in Proposition \ref{prop:cvx}. The following lemma proves that $\theta$ satisfies condition (i).
\begin{lemma}\label{lem:gc2-W-case1}
Consider the structure in Figure \ref{fig:c'}. Let $s_1(z_1)$ associated with $\bar{\mu}+\bar{\alpha}^{(1)}$ satisfy $s_1(z_1)>\max(c_2,c_1+c_2+p^*-z_1)$ for every $z_1\in[c_1,c_1+b_1)$. Further, let $c_2\geq b_2$. Then, the following hold:
\begin{enumerate}
 \item[(a)] $(\bar{\mu}+\bar{\alpha}^{(1)})(W\cap\{z_1\geq c_1+t_1\}\cap\{z_2\geq c_2+t_2\})\geq 0$ for every $t_i\in[0,b_i]$.
 \item[(b)] $\theta-\bar{\mu}_-^W=\bar{\mu}^W+\bar{\alpha}^{(1)}\succeq_1 0.$
\end{enumerate}
\end{lemma}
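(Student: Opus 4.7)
The plan is to follow the template of the proof of Lemma~\ref{lem:gc1-W}: first establish the rectangular inequality (a) and then deduce the increasing-set inequality (b) by the standard axis-parallel rectangular approximation. The setup is in some ways simpler than in Lemma~\ref{lem:gc1-W} because Figure~\ref{fig:c'} has only the two regions $Z$ and $W$ and the shuffling measure $\bar\alpha^{(1)}$ is supported only on the top edge $\{z_2=c_2+b_2\}$; in particular there is no analogue of $s_2(\cdot)$ to kill integrals along horizontal slices. I will begin with the baseline identity $(\bar\mu+\bar\alpha^{(1)})(W)=0$, which follows from $\bar\mu(D)=0$, the defining equation $\bar\mu(Z)=0$ for $p^*$ (as in claim 7 of Case~1), and $\int d\bar\alpha^{(1)}=0$ from Corollary~\ref{cor:cvx}. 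This baseline converts the problem of proving $(\bar\mu+\bar\alpha^{(1)})(X(t_1,t_2))\geq 0$ into that of showing the deleted strips have nonpositive mass.

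For $t_2=0$, the hypothesis $s_1(z_1)>\max(c_2,c_1+c_2+p^*-z_1)$ splits each vertical slice of $W$ at abscissa $z_1$ into a lower piece between the diagonal and $s_1(z_1)$ and an upper piece between $s_1(z_1)$ and $c_2+b_2$. By definition of $s_1$ the upper piece integrates to zero, while the lower piece carries only the negative interior density, so $(\bar\mu+\bar\alpha^{(1)})(W\cap\{z_1<c_1+t_1\})\leq 0$ and the baseline delivers the inequality. For $t_1=0$ I compute $\bar\mu(W\cap\{z_2<c_2+t_2\})$ directly by integrating the interior density $-3/(b_1b_2)$, the bottom-edge density $-c_2/(b_1b_2)$, and the right-edge density $(c_1+b_1)/(b_1b_2)$ over a trapezoid when $t_2\leq p^*$ and trapezoid-plus-rectangle when $t_2>p^*$. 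Substituting the identity $3p^{*2}/2=b_1b_2-(c_1+c_2)p^*$ from $\bar\mu(Z)=0$ collapses the value at $t_2=p^*$ to $b_1(b_2-c_2-2p^*)/(b_1b_2)$, which is nonpositive precisely when $c_2\geq b_2-2p^*$, hence under the hypothesis $c_2\geq b_2$. The rectangular top strip for $t_2>p^*$ contributes an incremental mass $-2b_1(t_2-p^*)/(b_1b_2)\leq 0$; and the concave quadratic in $t_2$ on $[0,p^*]$ is handled by bounding its interior maximum via $9p^{*2}\leq 6c_2(b_1-p^*)$, which rearranges to $b_1(b_2-c_2)\leq c_1p^*$ and holds since $c_2\geq b_2$.

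For general $t_1,t_2>0$ I write $X(t_1,t_2)=W\setminus[(W\cap\{z_1<c_1+t_1\})\cup(W\cap\{z_2<c_2+t_2\})]$ and combine the two previous cases by inclusion--exclusion, analyzing the overlap $W\cap\{z_1<c_1+t_1\}\cap\{z_2<c_2+t_2\}$ in three subcases (overlap meeting the diagonal, overlap under the sloped part of $s_1$, and overlap under the constant plateau of $s_1$), each reducing to a sign check on the same interior and boundary densities. Part (b) then follows from (a) by the argument given at the end of the proof of Lemma~\ref{lem:gc1-W}: given any increasing $X'\subseteq W$, inscribe it in a superset $X\supseteq X'$ whose boundary is a union of axis-parallel segments and hence expressible as a union of upper quadrants $X(t_1^{(i)},t_2^{(i)})$; part (a) and inclusion--exclusion make $(\bar\mu^W+\bar\alpha^{(1)})(X)\geq 0$, and since $\bar\mu$ has strictly negative interior density while $\bar\alpha^{(1)}$ lives only on the top edge (hence is untouched by enlarging $X'$), we obtain $(\bar\mu^W+\bar\alpha^{(1)})(X')\geq(\bar\mu^W+\bar\alpha^{(1)})(X)\geq 0$.

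The main obstacle is the $t_1=0$ case of (a). In Lemma~\ref{lem:gc1-W} the analogous step was painless because the $s_2$ function annihilated horizontal slices; here there is no such tool, and the proof must rest on the global hypothesis $c_2\geq b_2$ together with the quadratic identity supplied by $\bar\mu(Z)=0$. Checking that the concave quadratic arising on $[0,p^*]$ does not change sign in the interior—while simultaneously ensuring that the clean endpoint value $b_1(b_2-c_2-2p^*)/(b_1b_2)$ remains nonpositive—is the technical heart of the proof and explains why the hypothesis $c_2\geq b_2$ cannot be relaxed in this regime.
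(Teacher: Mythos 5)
Your proposal follows essentially the same route as the paper's proof: vertical-slice arguments (via the definition of $s_1$ and the hypothesis that $s_1$ lies above the diagonal) for the $t_2=0$ cut, an explicit computation of the horizontal strip $\bar{\mu}^W(\{z_2<c_2+t_2\})$ using the identity $\tfrac32 p^{*2}+(c_1+c_2)p^*=b_1b_2$ together with $c_2\geq b_2$ for the $t_1=0$ cut, a case split for general $(t_1,t_2)$, and the axis-parallel approximation of increasing sets for part (b). Your bound $(c_1-2b_1+3p^*)^2\leq 9p^{*2}\leq 6c_2(b_1-p^*)$ is a mild algebraic variant of the paper's treatment of the interior maximum, but it rests on the same ingredients and is correct in this regime.
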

\begin{proof}
Recall that $X(t_1,t_2)=W\cap\{z_1\geq t_1+c_1, z_2\geq t_2+c_2\}$ from the proof of Lemma \ref{lem:gc1-W}. So we must prove that $(\bar{\mu}+\bar{\alpha}^{(1)})(X(t_1,t_2))\geq 0$ for every $t_i\in[0,b_i]$. We start with $t_1=0$. Observe that the set $X(0,t_2)=W\backslash\{z_2<c_2+t_2\}$. We now prove that $(\bar{\mu}+\bar{\alpha}^{(1)})^W(\{z_2<c_2+t_2\})=\bar{\mu}^W(\{z_2<c_2+t_2\})\leq 0$ holds for every $t_2\in[0,b_2]$, whenever $c_2\geq b_2$.

Let $t_2\in[0,p^*]$. Then, $\bar{\mu}^W(\{z_2<c_2+t_2\})=-c_2(b_1-p^*)+t_2(c_1-2b_1+3p^*-3t_2/2)$. We will show that $\max_{t_2\in[0,p^*]}(-c_2(b_1-p^*)+t_2(c_1-2b_1+3p^*-3t_2/2))\leq 0$. We note that this proves the inequality for every $t_2\in[0,p^*]$. The maximum is attained at $t_2=t_2^*:=(c_1-2b_1+3p^*)/3$, provided $t_2^*\in[0,p^*]$. Assume $t_2^*\in[0,p^*]$. Then,
\begin{align*}
  \bar{\mu}^W(\{z_2<c_2+t_2^*\})&=-c_2(b_1-p^*)+(c_1-2b_1+3p^*)^2/6\\&=b_1(b_2-c_2-2p^*)+(2b_1-c_1)^2/6.
\end{align*}
The last equality uses $3(p^*)^2/2+p^*(c_1+c_2)-b_1b_2=0$. By our assumption that $t_2^*\geq 0$, we have $p^*\geq(2b_1-c_1)/3$. We also have $c_2\geq b_2$ by assumption in the lemma. Using both of these inequalities, we have
$$
  b_1(b_2-c_2-2p^*)+(2b_1-c_1)^2/6\leq b_1(b_2-c_2)-2b_1(2b_1-c_1)/3+(2b_1-c_1)^2/6\leq 0.
$$

Now consider $t_2^*\notin[0,p^*]$. Then the maximum of $(-c_2(b_1-p^*)+t_2(c_1-2b_1+3p^*-3t_2/2))$ is attained either at $t_2=0$ or $t_2=p$. It is attained at $t_2=0$ if $p^*<(2b_1-c_1)/3$, and at $t_2=p$ if $c_2>2b_2$. But $\bar{\mu}^W(\{z_2<c_2\})=0$, and
$$
  \bar{\mu}^W(\{z_2<c_2+p^*\})=-c_2(b_1-p^*)+p^*(c_1-2b_1+3p^*/2)=b_1(b_2-c_2-2p^*)\leq 0
$$
where the last equality follows from $3(p^*)^2/2+p^*(c_1+c_2)-b_1b_2=0$, and the last inequality follows from $c_2\geq b_2$. We have thus shown that $\bar{\mu}^W(\{z_2<c_2+t_2\})\leq 0$ for every $t_2\in[0,p^*]$.

Let $t_2\in[p^*,b_2]$. Then, $\bar{\mu}^W(\{z_2<c_2+t_2\})=\bar{\mu}^W(\{z_2<c_2+p^*\})-2b_1(t_2-p^*)\leq\bar{\mu}^W(\{z_2<c_2+p^*\})\leq 0$. So $(\bar{\mu}+\bar{\alpha}^{(1)})(X(t_1,t_2))\geq 0$ holds when $(t_1,t_2)\in(0,[0,b_2])$.

The steps for the remaining part of the proof is as follows.
\begin{itemize}
 \item Prove that $(\bar{\mu}+\bar{\alpha}^{(1)})(X(t_1,0))\geq 0$ holds for any $t_1\in[0,b_1]$. This is exactly the same as the proof in Lemma \ref{lem:gc1-W}.
 \item Prove that $(\bar{\mu}+\bar{\alpha}^{(1)})(X(t_1,t_2))\geq 0$ for $t_1,t_2>0$. This is done by considering three cases: $t_1+t_2\leq p^*$, and $\{t_1+t_2>p^*, t_1\geq P_1-c_1\}$ and $\{t_1+t_2>p^*, t_1\leq P_1-c_1\}$. The proof proceeds in the same way as the proof of Lemma \ref{lem:gc1-W}, but with fewer cases.
 \item Prove that $(\bar{\mu}^W+\bar{\alpha}^{(1)})\succeq_1 0$ holds when $(\bar{\mu}+\bar{\alpha})(X(t_1,t_2))\geq 0$ holds for all $t_i\in[0,b_i]$. This has been established in the proof of Lemma \ref{lem:gc1-W}.
\end{itemize}
\end{proof}

\begin{remark}\label{rem:1}
The steps of the proof indicate that $\bar{\mu}(X(0,t_2))\geq 0$ holds for all $t_2\in[0,b_2]$, if one of the following holds.
\begin{enumerate}
 \item[(a)] $p^*\leq(2b_1-c_1)/3$.
 \item[(b)] $b_1(c_2-b_2+2p^*)-((2b_1-c_1)_+)^2/6\geq 0$.
\end{enumerate}
An analogous statement is true for $\bar{\mu}(X(t_1,0))\geq 0$ to hold.
\end{remark}

{\bf Case b: $a_1$ reaches $1$ at $p_{a_1}>(2b_2-c_2)_+$}

In this case, we have $p_{a_1}>(2b_2-c_2)_+$, and thus $p_{a_1}/a_1<p$ from (\ref{eqn:p-a1-pa1}). Define $\hat{p}_{a_1}$ as the value of $p_{a_1}$ for which $a_1=1$ occurs. We fix the points $\hat{p}_{a_1}$, $\hat{p}_{a_1}/a_1(=\hat{p}_{a_1})$, and $p$ that corresponds to $a_1=1$, and then decrease $p_{a_1}$ until $p_{a_1}=p^*$. See Figure \ref{fig:5b-5c-case2}.

Observe that $p$ that corresponds to $a_1=1$, may be more than $b_1$. In such a case, we fix $p_{a_1}$ and $p_{a_1}/a_1$ at the point where $p=b_1$. This value of $p_{a_1}$ is more than $\hat{p}_{a_1}$, by Claim 2, Section \ref{SUB:GC2}. The following lemma shows that in such a case, either $p^*\leq p_{a_1}/a_1$ holds, or $p^*$ is such that $\bar{\mu}^W\succeq_1 0$ is true.
\begin{lemma}\label{lem:p^*-mu-bar}
Suppose $c_1\leq b_1$, and $c_2\in[2b_2(b_1+c_1)/(b_1+3c_1),2b_2(b_1/(b_1-c_1))^2]$. Let $p_{a_1}$ and $a_1$ be obtained by solving (\ref{eqn:p-a1-pa1}) for $p=b_1$. Then, $p^*$ is such that $\bar{\mu}^W\succeq_1 0$, whenever $p^*\geq p_{a_1}/a_1$.
\end{lemma}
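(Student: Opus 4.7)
The plan is to reduce the first-order stochastic dominance $\bar{\mu}^W \succeq_{1} 0$ to a pointwise nonnegativity check on rectangular increasing sets, following the final paragraph of the proof of Lemma \ref{lem:gc1-W}. For any increasing set $X' \subseteq W$, one can enclose $X'$ in a staircase set $X \supseteq X'$ whose boundary consists of segments parallel to the coordinate axes. Since $\bar{\mu}$ has negative area density on $D$, we have $\bar{\mu}(X \setminus X') \leq 0$ and hence $\bar{\mu}(X') \geq \bar{\mu}(X)$, so it suffices to establish $\bar{\mu}(X(t_1,t_2)) \geq 0$ for every rectangular increasing set $X(t_1,t_2) := W \cap \{z_1 \geq c_1+t_1,\ z_2 \geq c_2+t_2\}$ with $(t_1,t_2) \in [0,b_1] \times [0,b_2]$.

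I would then split the verification into two regimes according to the sign of $t_1+t_2-p^*$. When $t_1+t_2 \leq p^*$, every point of $W$ satisfies at least one of $z_1 \geq c_1+t_1$ or $z_2 \geq c_2+t_2$, so $X(t_1,0) \cup X(0,t_2) = W$, and using $\bar{\mu}(W)=\bar{\mu}(D)-\bar{\mu}(Z)=0$ together with inclusion-exclusion gives
\[
\bar{\mu}(X(t_1,t_2)) = \bar{\mu}(X(t_1,0)) + \bar{\mu}(X(0,t_2)) - \bar{\mu}(W) = \bar{\mu}(X(t_1,0)) + \bar{\mu}(X(0,t_2)),
\]
reducing the problem to the two one-dimensional inequalities $\bar{\mu}(X(t_1,0)) \geq 0$ for $t_1 \in [0,b_1]$ and $\bar{\mu}(X(0,t_2)) \geq 0$ for $t_2 \in [0,b_2]$. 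When $t_1+t_2 > p^*$, $X(t_1,t_2)$ is a closed rectangle of $D$ together with portions of the top/right boundaries, and the nonnegativity follows by direct area-versus-line-measure bookkeeping, much as in the corresponding cases in the proof of Lemma \ref{lem:gc1-W}.

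The two one-dimensional inequalities I would handle via Remark \ref{rem:1} and its symmetric counterpart: for the $X(0,t_2)$ side it suffices that either $p^* \leq (2b_1-c_1)/3$ or the polynomial inequality $6b_1(c_2-b_2+2p^*) \geq ((2b_1-c_1)_+)^2$ hold, and analogously with indices interchanged for the $X(t_1,0)$ side. The key step is to translate the hypothesis $p^* \geq p_{a_1}/a_1$ into one of these conditions. From (\ref{eqn:p-a1-pa1}) with $p=b_1$ we obtain $(p_{a_1}/a_1)^2(p_{a_1}+c_2) = 2b_1^2 b_2$; since in the construction of Case b the value of $p_{a_1}$ lies in $[\hat{p}_{a_1},b_2]$ by Claim 2 of Section \ref{SUB:GC2}, we have $p_{a_1} \leq b_2$, and hence $(p_{a_1}/a_1)^2 (b_2+c_2) \geq 2b_1^2 b_2$. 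Combined with $p^* \geq p_{a_1}/a_1$ this yields the a-priori bound $(p^*)^2(b_2+c_2) \geq 2b_1^2 b_2$, which together with the defining quadratic $3(p^*)^2/2 + p^*(c_1+c_2) = b_1 b_2$ of $p^*$ implies the required polynomial inequality for the $X(0,t_2)$ side. The $X(t_1,0)$ side will be handled by a parallel argument that exploits the symmetric shuffling measure on the right boundary of $D$ together with the constraint $p \leq b_1$.

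The main obstacle will be the final algebraic step, where the a-priori bound $(p^*)^2(b_2+c_2) \geq 2b_1^2 b_2$ has to be combined with the $p^*$-defining quadratic to extract $6b_1(c_2-b_2+2p^*) \geq (2b_1-c_1)^2$ (and its symmetric counterpart). I expect this will require a careful case split: when $p^* \leq (2b_1-c_1)/3$ the sufficient condition is trivially vacuous by Remark \ref{rem:1}(a), whereas in the complementary regime the bound must be derived by eliminating $p^*$ from the coupled polynomial system, possibly invoking the hypotheses $c_1 \leq b_1$ and $c_2 \in [2b_2(b_1+c_1)/(b_1+3c_1),\, 2b_2(b_1/(b_1-c_1))^2]$ to close the argument.
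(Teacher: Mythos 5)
Your reduction skeleton (enclose an arbitrary increasing set in a staircase set, use inclusion--exclusion for $t_1+t_2\leq p^*$, and reduce to the two one-dimensional conditions of Remark \ref{rem:1} and its analogue) is exactly the paper's route. The gap is in where you spend the hypothesis $p^*\geq p_{a_1}/a_1$. In the regime of the lemma one has $c_2\geq 2b_2(b_1+c_1)/(b_1+3c_1)\geq b_2$, so the $X(0,t_2)$-side condition, $b_1(c_2-b_2+2p^*)\geq((2b_1-c_1)_+)^2/6$ or $p^*\leq(2b_1-c_1)/3$, is automatic: if $p^*\geq(2b_1-c_1)/3$ then $b_1(c_2-b_2)+2b_1p^*\geq 2b_1(2b_1-c_1)/3\geq(2b_1-c_1)^2/6$. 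Your a-priori bound $(p^*)^2(b_2+c_2)\geq 2b_1^2b_2$ is therefore aimed at an inequality that needs no work. The genuinely hard condition is the $X(t_1,0)$ side, $b_2(c_1-b_1+2p^*)\geq((2b_2-c_2)_+)^2/6$ or $p^*\leq(2b_2-c_2)/3$, which because $c_1\leq b_1$ amounts to the lower bound $p^*\geq(b_1-c_1)/2+((2b_2-c_2)_+)^2/(12b_2)$. This is precisely where the hypothesis $p^*\geq p_{a_1}/a_1$ must be used, via $p_{a_1}/a_1=b_1\sqrt{2b_2/(p_{a_1}+c_2)}$ at $p=b_1$, $p_{a_1}\leq b_2$, and crucially the upper bound $c_2\leq 2b_2(b_1/(b_1-c_1))^2$; the paper's proof devotes its entire second half to this, with a case split on $c_2\geq 2b_2$ versus $c_2<2b_2$ and, in the latter case, on $b_2\leq 2b_1$ versus $b_2>2b_1$ (where $p^*>(2b_2-c_2)/3$ is re-expressed as $2b_1b_2/(2b_2+2c_1+c_2)>(2b_2-c_2)/3$ before bounding $p_{a_1}/a_1$).

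Your plan instead relegates this hard side to ``a parallel argument that exploits the symmetric shuffling measure on the right boundary of $D$ together with the constraint $p\leq b_1$,'' and this step would fail. The situation is not symmetric: $c_2$ is large while $c_1\leq b_1$, region $B$ is empty in this regime, there is no right-boundary shuffling measure and no analogue of the relation defining $p_{a_1}/a_1$ for the other coordinate; moreover the lemma concerns the bare measure $\bar{\mu}^W$, so shuffling measures play no role in it, and $p=b_1$ is an assumption rather than a constraint one can exploit. Without an explicit derivation of $p^*\geq(b_1-c_1)/2+((2b_2-c_2)_+)^2/(12b_2)$ (or $p^*\leq(2b_2-c_2)/3$) from $p^*\geq p_{a_1}/a_1$ and the stated range of $c_2$, the proof is incomplete at its central quantitative step.
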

\begin{proof} We know from Remark \ref{rem:1} that both $\bar{\mu}(X(0,t_2))$ and $\bar{\mu}(X(t_1,0))\geq 0$ hold for any $t_i\in[0,b_i]$, if $c_i, b_i$ belong to the following set.
\begin{multline}\label{eqn:mu-inc-dominant-12}
  \{((b_2(c_1-b_1+2p^*)-((2b_2-c_2)_+)^2/6\geq 0)\cup(p^*\leq(2b_2-c_2)/3))\\\cap((b_1(c_2-b_2+2p^*)-((2b_1-c_1)_+)^2/6\geq 0)\cup(p^*\leq(2b_1-c_1)/3))\}
\end{multline}
We now claim that $\bar{\mu}^W\succeq_1 0$ holds for all those $c_i, b_i$. To prove our claim, we first prove that $\bar{\mu}(X(t_1,t_2))\geq 0$ for all $t_1,t_2>0$, and then point to the proof of Lemma \ref{lem:gc1-W} that it is equivalent to $\bar{\mu}^W\succeq_1 0$. We consider the following cases.
\begin{itemize}
 \item Let $t_1+t_2\leq p^*$. Then, $X(t_1,t_2)$ contains a portion of the line $z_1+z_2=c_1+c_2+p^*$. So we have,
 $$
   \bar{\mu}(X(t_1,t_2))=\bar{\mu}(X(t_1,0))+\bar{\mu}(X(0,t_2))-\bar{\mu}(W)\geq 0.
 $$
 \item Let $t_1+t_2>p^*$. Observe that the integral of densities of $\bar{\mu}$ on each vertical line of $X(t_1,t_2)$ is a constant, except on the lines $z_1=c_1$ and $z_1=c_1+b_1$. If the constant is positive, then $\bar{\mu}(X(t_1,t_2))\geq 0$ holds because it is just an integral of positive numbers. If the constant is negative, then
 $$
   \bar{\mu}(X(t_1,t_2))=\bar{\mu}(X((p^*-t_2)_+,t_2))-(\mbox{a negative number})\geq 0
 $$
\end{itemize}
Thus $\bar{\mu}(X(t_1,t_2))\geq 0$ for all $t_i\in[0,b_i]$, and from the proof of Lemma \ref{lem:gc1-W}, we have $\bar{\mu}^W\succeq_1 0$. We have proved our claim.

We observe that the $c_i, b_i$ in the statement of the lemma satisfy $b_1(c_2-b_2+2p^*)-((2b_1-c_1)_+)^2/6\geq 0$, if $p^*\geq(2b_1-c_1)/3$. This is because every $c_2$ given in the lemma is at least $b_2$, and thus we have
\begin{multline*}
  b_1(c_2-b_2+2p^*)-((2b_1-c_1)_+)^2/6\\\geq b_1(c_2-b_2)+2b_1(2b_1-c_1)/3-((2b_1-c_1)_+)^2/6\geq 0.
\end{multline*}
So it suffices to prove that $c_i, b_i$ in the statement of the lemma belong to the set
\begin{equation}\label{eqn:mu-inc-dominant-1}
 \{(b_2(c_1-b_1+2p^*)-((2b_2-c_2)_+)^2/6\geq 0)\cup(p^*\leq(2b_2-c_2)/3)\},
\end{equation}
whenever $p^*\geq p_{a_1}/a_1$.

Consider $c_2\geq 2b_2$. Then, $c_i,b_i$ satisfy (\ref{eqn:mu-inc-dominant-1}) if $p^*\geq(b_1-c_1)/2$. We now prove that $p_{a_1}/a_1\geq(b_1-c_1)/2$ for all $c_i, b_i$ given in the lemma. At $p=b_1$, we have $p_{a_1}/a_1=b_1\sqrt{2b_2/(p_{a_1}+c_2)}$, from (\ref{eqn:p-a1-pa1}). So $p_{a_1}/a_1\geq(b_1-c_1)/2$ occurs iff $p_{a_1}+c_2\leq 8b_2(b_1/(b_1-c_1))^2$. But $c_2\leq 2b_2(b_1/(b_1-c_1))^2$ holds by the assumption in the lemma, and $p_{a_1}\leq b_2$. So $p_{a_1}/a_1\geq(b_1-c_1)/2$ holds for all $c_i, b_i$ given in the lemma, and since $p^*\geq p_{a_1}/a_1$ by assumption in the lemma, we have $p^*\geq(b_1-c_1)/2$.

When $c_2<2b_2$, we have two cases. Consider $b_2\leq 2b_1$. At $p=b_1$, we have $p_{a_1}/a_1=b_1\sqrt{2b_2/(p_{a_1}+c_2)}$, from (\ref{eqn:p-a1-pa1}). So $p_{a_1}/a_1\geq(b_1-c_1)/2+(2b_2-c_2)^2/(12b_2)$ occurs iff $p_{a_1}+c_2\leq 2b_2b_1^2/((b_1-c_1)/2+(2b_2-c_2)^2/(12b_2))^2$. But for the case under consideration, we have
$$
  \frac{2b_2b_1^2}{\left(\frac{b_1-c_1}{2}+\frac{(2b_2-c_2)^2}{12b_2}\right)^2}\geq\frac{2b_2b_1^2}{\left(\frac{b_1-c_1}{2}+\frac{b_2}{12}\right)^2}\geq\frac{2b_2b_1^2}{\left(\frac{b_1}{2}+\frac{b_1}{6}\right)^2}\geq\frac{9}{2}b_2
$$
where the first inequality follows from $c_2\geq b_2$, and the second inequality follows from $b_2\leq 2b_1$. But $p_{a_1}+c_2\leq(9/2)b_2$ holds trivially, since $c_2<2b_2$. So for this case, $p_{a_1}+c_2\leq 2b_2b_1^2/((b_1-c_1)/2+(2b_2-c_2)^2/(12b_2))^2$ holds, and hence $p_{a_1}/a_1\geq(b_1-c_1)/2+(2b_2-c_2)^2/(12b_2)$ holds as well. When $p^*\geq p_{a_1}/a_1$, we have $p^*\geq(b_1-c_1)/2+(2b_2-c_2)^2/(12b_2)$, which implies that $c_i, b_i$ satisfy (\ref{eqn:mu-inc-dominant-1}).

Consider $b_2>2b_1$. We know that $c_i, b_i$ satisfy (\ref{eqn:mu-inc-dominant-1}) if $p^*\leq(2b_2-c_2)/3$. So let $p^*>(2b_2-c_2)/3$. We now derive an equivalent condition for $p^*>(2b_2-c_2)/3$.
\begin{multline*}
  (\sqrt{(c_1+c_2)^2+6b_1b_2}-c_1-c_2)/3>(2b_2-c_2)/3\\\Leftrightarrow (c_1+c_2)^2+6b_1b_2>(2b_2+c_1)^2\\
  \Leftrightarrow 2b_1b_2/(2b_2+2c_1+c_2)>(2b_2-c_2)/3.
\end{multline*}
At $p=b_1$, we have $p_{a_1}/a_1=b_1\sqrt{2b_2/(p_{a_1}+c_2)}$, from (\ref{eqn:p-a1-pa1}). So $p_{a_1}/a_1\geq(b_1-c_1)/2+(2b_2-c_2)^2/(12b_2)$ occurs iff $p_{a_1}+c_2\leq 2b_2b_1^2/((b_1-c_1)/2+(2b_2-c_2)^2/(12b_2))^2$. But for the case under consideration, we have
$$
  \frac{2b_2b_1^2}{\left(\frac{b_1-c_1}{2}+\frac{(2b_2-c_2)^2}{12b_2}\right)^2}\geq\frac{2b_2b_1^2}{\left(\frac{b_1-c_1}{2}+\frac{3b_2b_1^2}{(2b_2+2c_1+c_2)^2}\right)^2}\geq\frac{2b_2b_1^2}{\left(\frac{b_1}{2}+\frac{b_1^2}{3b_2}\right)^2}\geq\frac{9}{2}b_2
$$
where the first inequality follows from $2b_1b_2/(2b_2+2c_1+c_2)>(2b_2-c_2)/3$, the second inequality from $(c_1,c_2)\geq(0,b_2)$, and the last inequality from $b_2\geq 2b_1$. But $p_{a_1}+c_2\leq(9/2)b_2$ holds trivially, since $c_2<2b_2$. So for this case, $p_{a_1}+c_2\leq 2b_2b_1^2/((b_1-c_1)/2+(2b_2-c_2)^2/(12b_2))^2$ holds, and hence $p_{a_1}/a_1\geq(b_1-c_1)/2+(2b_2-c_2)^2/(12b_2)$ holds as well. When $p^*\geq p_{a_1}/a_1$, we have $p^*\geq(b_1-c_1)/2+(2b_2-c_2)^2/(12b_2)$, which implies that $c_i, b_i$ satisfy (\ref{eqn:mu-inc-dominant-1}).
\end{proof}

We now argue that the optimal mechanism is as in Figure \ref{fig:c'}. We fix $\theta=\bar{\mu}_+^W+\bar{\beta}$ in case $p^*\leq p_{a_1}/a_1$, or in case $p$ corresponding to $a_1=1$ is less than $b_1$. We fix $\theta=\bar{\mu}_+^W$ in case $p^*$ is such that $\bar{\mu}^W\succeq_1 0$. We show that both of these $\theta$ satisfy the assumptions in Lemma \ref{lem:DDT-equiv}.

For $\theta=\bar{\mu}_+^W$, conditions (ii) and (iii) are trivially true, and condition (i) is true because $\bar{\mu}^W\succeq_1 0$. For $\theta=\bar{\mu}_+^W+\bar{\beta}$, conditions (ii) and (iii) of Lemma \ref{lem:DDT-equiv} only involve proving $\bar{\beta}\succeq_{cvx}0$ and $\int_D\|x\|_1\,d\bar{\beta}\geq 0$, both of which are established in Proposition \ref{prop:gc2}. The following lemma proves that $\theta$ satisfies condition (i).
\begin{lemma}\label{lem:gc2-W-case2}
Consider the structure in Figure \ref{fig:c'}. Let $s_1(z_1)$ associated with $\bar{\mu}+\bar{\beta}$ satisfy $s_1(z_1)>c_1+c_2+p^*-z_1$ for every $z_1\in[c_1,c_1+p]$. Further, let $c_2\geq b_2$. Then, the following hold:
\begin{enumerate}
 \item[(a)] $(\bar{\mu}+\bar{\beta})(W\cap\{z_1\geq c_1+t_1\}\cap\{z_2\geq c_2+t_2\})\geq 0$ for every $t_i\in[0,b_i]$.
 \item[(b)] $\theta-\bar{\mu}_-^W=\bar{\mu}^W+\bar{\beta}\succeq_1 0.$
\end{enumerate}
\end{lemma}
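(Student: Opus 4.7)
The plan is to mimic closely the proofs of Lemma \ref{lem:gc1-W} and Lemma \ref{lem:gc2-W-case1}, adapted to the present shuffling measure $\bar{\beta}$ supported on $\tilde{D}^{(1)}=[c_1,c_1+p]\times\{c_2+b_2\}$ and the Figure \ref{fig:c'} partition $D=Z\cup W$. As in those proofs, define
\[
X(t_1,t_2):=W\cap\{z_1\geq c_1+t_1\}\cap\{z_2\geq c_2+t_2\},\quad (t_1,t_2)\in[0,b_1]\times[0,b_2],
\]
and first establish (a), i.e.\ $(\bar{\mu}+\bar{\beta})(X(t_1,t_2))\geq 0$ for every such $(t_1,t_2)$. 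The passage from (a) to (b) is then identical to the last paragraph in the proof of Lemma \ref{lem:gc1-W}: any increasing subset $X'\subseteq W$ can be enclosed in an upper-rectangle $X\supseteq X'$ whose boundaries are parallel to the axes, and because $\bar{\mu}$ is negative in the interior of $W$ the inequality $(\bar{\mu}+\bar{\beta})(X')\geq(\bar{\mu}+\bar{\beta})(X)\geq 0$ is automatic.

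The boundary axes are handled separately. For $t_1=0$ I will reuse, verbatim, the argument in Lemma \ref{lem:gc2-W-case1} that uses $c_2\geq b_2$: write $X(0,t_2)=W\setminus\{z_2<c_2+t_2\}$, observe that $\bar{\beta}$ contributes only nonnegatively to $X(0,t_2)$ (its support on $\tilde{D}^{(1)}$ intersected with $W$ is in the set $\{z_2=c_2+b_2\}$, so adding it can only help), and then bound $\bar{\mu}^W(\{z_2<c_2+t_2\})\leq 0$ by maximising the quadratic $-c_2(b_1-p^*)+t_2(c_1-2b_1+3p^*-3t_2/2)$ over $t_2\in[0,p^*]$ and observing that $c_2\geq b_2$ together with $(p^*)^2\cdot 3/2+p^*(c_1+c_2)-b_1b_2=0$ makes the maximum nonpositive; for $t_2\in[p^*,b_2]$ the integrand is monotone. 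For $t_2=0$ I will use the standing hypothesis $s_1(z_1)>c_1+c_2+p^*-z_1$ on $[c_1,c_1+p]$: this guarantees, via the defining relation of $s_1$, that the vertical integral of $\bar{\mu}+\bar{\beta}$ over any slice $\{z_1\}\times[c_2,c_2+b_2]$ with $z_1\in[c_1,c_1+p]$ below $s_1(z_1)$ is negative, hence $(\bar{\mu}+\bar{\beta})^W(\{z_1<c_1+t_1\})\leq 0$ for every $t_1\in[0,b_1)$, and the endpoint $t_1=b_1$ reduces to $\mu_s$ on the right edge which is manifestly nonnegative.

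For the interior $t_1,t_2>0$ I will split on $t_1+t_2$ versus $p^*$. When $t_1+t_2\leq p^*$, the set $X(t_1,t_2)$ still meets the line $z_1+z_2=c_1+c_2+p^*$ and the inclusion--exclusion
\[
(\bar{\mu}+\bar{\beta})(X(t_1,t_2))=(\bar{\mu}+\bar{\beta})(X(t_1,0))+(\bar{\mu}+\bar{\beta})(X(0,t_2))-(\bar{\mu}+\bar{\beta})(W)
\]
gives nonnegativity because $(\bar{\mu}+\bar{\beta})(W)=0$ by construction of $p$ and the two marginal terms were just shown to be nonnegative. When $t_1+t_2>p^*$, on each vertical slice $\{z_1\}\times[c_2+t_2,c_2+b_2]$ the integrand of $\bar{\mu}+\bar{\beta}$ is either everywhere nonnegative (in which case the slice contributes $\geq 0$) or everywhere nonpositive, in which case I reduce to the boundary case via
\[
(\bar{\mu}+\bar{\beta})(X(t_1,t_2))=(\bar{\mu}+\bar{\beta})(X((p^{*}-t_2)_+,t_2))-(\text{nonpositive contribution})\geq 0,
\]
exactly as in the final bullet of the proof of Lemma \ref{lem:gc1-W}.

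The main obstacle I expect is the bookkeeping on the slices that straddle the jump in $\bar{\beta}_s$ at $z_1=c_1+p_{a_1}/a_1$: unlike the $\bar{\alpha}$-setting of Lemma \ref{lem:gc1-W}, the density of $\bar{\beta}$ has a discontinuity, so the sign of the vertical integral is not monotone in $z_1$ and I must verify the two cases $t_1\leq p_{a_1}/a_1$ and $t_1\in[p_{a_1}/a_1,p]$ separately. In both subcases the hypothesis $s_1(z_1)>c_1+c_2+p^*-z_1$ is what forces the overall sign, so essentially no new global computation is needed beyond what is already done in Lemma \ref{lem:gc1-W}; the assumption $c_2\geq b_2$ is only used in the $t_1=0$ slice as in Lemma \ref{lem:gc2-W-case1}.
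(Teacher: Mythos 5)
Your plan matches the paper's own (sketched) proof essentially step for step: the paper likewise establishes $(\bar{\mu}+\bar{\beta})(X(0,t_2))\geq 0$ by invoking the $c_2\geq b_2$ argument of Lemma \ref{lem:gc2-W-case1}, handles $X(t_1,0)$ by the nonpositive-vertical-slice argument of Lemma \ref{lem:gc1-W} (which your hypothesis $s_1(z_1)>c_1+c_2+p^*-z_1$ powers, and which extends past $c_1+p$ since those full columns have integral $-2b_2/(b_1b_2)<0$), treats the interior by splitting on $t_1+t_2$ versus $p^*$ with the same inclusion–exclusion and slice-removal reductions (the paper's split $t_1\gtrless p$ is exactly your bookkeeping around the jump of $\beta_s$ at $c_1+p_{a_1}/a_1$), and passes from (a) to (b) via the increasing-set/axis-parallel enclosure argument already proved in Lemma \ref{lem:gc1-W}. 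No substantive difference from the paper's route.
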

The steps of the proof are as follows.
\begin{itemize}
 \item Prove that $(\bar{\mu}+\bar{\beta})(X(0,t_2))\geq 0$ holds for any $t_2\in[0,b_2]$. This has been proved in Lemma \ref{lem:gc2-W-case1}.
 \item Prove that $(\bar{\mu}+\bar{\beta})(X(t_1,0))\geq 0$ holds for any $t_1\in[0,b_1]$. This is similar to the proof in Lemma \ref{lem:gc1-W}, where each vertical line has a nonpositive integral of $(\bar{\mu}+\bar{\beta})^W$.
 \item Prove that $(\bar{\mu}+\bar{\beta})(X(t_1,t_2))\geq 0$ for $t_1,t_2>0$. This is done by considering three cases: $t_1+t_2\leq p^*$, $\{t_1+t_2>p^*, t_1\geq p\}$ and $\{t_1+t_2>p^*, t_1\leq p\}$. The proof proceeds in the same way as the proof of Lemma \ref{lem:gc1-W}, but with fewer cases.
 \item Prove that $\bar{\mu}^W+\bar{\beta}\succeq_1 0$ holds when $(\bar{\mu}+\bar{\beta})(X(t_1,t_2))\geq 0$ holds for every $t_i\in[0,b_i]$. This is already established in the proof of Lemma \ref{lem:gc1-W}.
\end{itemize}

This completes the proof of Claim 5, Section \ref{SUB:GC2}.\qed

\vspace*{10pt}
{\bf Proof of Claim \ref{clm:beta-eh-1}:} Substituting $p_{a_1}=0$ in (\ref{eqn:a1-4d-ini}), we have $a_1=0$. Similarly, substituting $p_{a_1}=0$ in (\ref{eqn:p-4d}), we have $p=c_1b_2/(\sqrt{2b_2c_2}-2b_2)$, and on substituting $c_2=2b_2(b_1/(b_1-c_1))^2$, we have $p=(b_1-c_1)/2$. We have proved our claim.\qed

\vspace*{10pt}
{\bf Proof of Proposition \ref{prop:beta-eh}:}
We first verify if $\bar{\beta}_e(\tilde{D}^{(1)})=0$.
$$
 \bar{\beta}_e(\tilde{D}^{(1)})=\frac{1}{b_1b_2}\left(c_1b_2+(2b_2-c_2)\frac{b_1b_2}{c_2}+2b_2\left(\frac{b_1-c_1}{2}-\frac{b_1b_2}{c_2}\right)\right)=0.
$$
Thus we have $\int_{\tilde{D}^{(1)}}u\,d\bar{\beta}_e=u\bar{\beta}_e(\tilde{D}^{(1)})=0$ for any constant $u$. Now we verify if $\int_{[c_1,(c_1+b_1)/2]}(x-c_1)\,\bar{\beta}_e(dx,c_2+b_2)\geq 0$.
\begin{multline*}
 \int_{[c_1,(c_1+b_1)/2]}(x-c_1)\,\bar{\beta}_e(dx,c_2+b_2)\\=\frac{1}{b_1b_2}\left((2b_2-c_2)\frac{(b_1b_2)^2}{2c_2^2}+2b_2\left(\frac{(b_1-c_1)^2}{8}-\frac{(b_1b_2)^2}{2c_2^2}\right)\right)\geq 0
\end{multline*}
where the last inequality occurs since $c_2\geq 2b_2(b_1/(b_1-c_1))^2$. The proof of $\bar{\beta}_e\succeq_{cvx}0$ is now similar to the proof of Proposition \ref{prop:cvx}.\qed

\vspace*{10pt}
{\bf Proof of Theorem \ref{thm:gc5}:}
The function $s_1(z_1)$ associated with $\bar{\mu}+\bar{\beta}_e$ gives rise to the following partition of $D$:
\begin{multline*}
Z=([c_1,c_1+b_1b_2/c_2]\times\{c_2\}),\,A=([c_1,(c_1+b_1)/2]\times[c_2,c_2+b_2])\backslash Z,\\W=D\backslash A\backslash Z.
\end{multline*}
Observe that $Z$ consists only of a portion of the bottom boundary of $D$. We now construct the allocation function as follows.
$$
  q(z)=\begin{cases}(0,1)&\mbox{if }z\in Z\cup A\\(1,1)&\mbox{if }z\in W.\end{cases}
$$
We have fixed $q(\cdot)=(0,1)$ in the region $Z$ instead of $(0,0)$. We now partition $W$ exactly as in (\ref{eqn:W-4d}), and construct the $\gamma$ function the same way as we constructed for Figure \ref{fig:d}. The $\gamma$ so constructed is feasible because we have (i) $(\gamma_1-\gamma_2)^{D\backslash Z}=\bar{\mu}^{D\backslash Z}+\bar{\beta}_e$, and (ii) $\bar{\beta}_e\succeq_{cvx}0$ from Proposition \ref{prop:beta-eh}.

The utility function $u$ is a constant in the interval $\tilde{D}^{(1)}=[c_1,(c_1+b_1)/2]\times\{c_2+b_2\}$ because $q_1=0$. So we have $\int_{\tilde{D}^{(1)}}u\,d\bar{\beta}_e=u\bar{\beta}_e(\tilde{D}^{(1)})=0$. Thus the condition $\int_Du\,d(\gamma_1-\gamma_2-\bar{\mu})=0$ is satisfied. The proof that $u(z)-u(z')=\|z-z'\|_1$ holds $\gamma$-a.e., is the same as in the proof of Proposition \ref{prop:known}. The optimal mechanism thus is as in Figure \ref{fig:e} for $\{c_1\leq b_1, c_2\geq 2b_2(b_1/(b_1-c_1))^2\}$.\qed

\vspace*{10pt}
\textbf{Proof of Theorem \ref{thm:figc}:} We prove this theorem by fixing $\theta=\bar{\mu}_+^W$, and showing that $\theta$ satisfies all the three conditions in Lemma \ref{lem:DDT-equiv}. Conditions (ii) and (iii) are trivially true. Condition (i) involves proving that $\bar{\mu}^W\succeq_1 0$. From the proof of Lemma \ref{lem:p^*-mu-bar}, we know that $\bar{\mu}^W\succeq_1 0$ holds when $c_i, b_i$ satisfy (\ref{eqn:mu-inc-dominant-12}). So it suffices to prove that $c_i\geq b_i$ satisfies (\ref{eqn:mu-inc-dominant-12}).

Suppose $p^*\geq(2b_1-c_1)/3$. Then we have
\begin{multline*}
  b_1(c_2-b_2+2p^*)-((2b_1-c_1)_+)^2/6\\\geq b_1(c_2-b_2)+2b_1(2b_1-c_1)/3-((2b_1-c_1)_+)^2/6\geq 0
\end{multline*}
where the last inequality follows from $c_2\geq b_2$. So when $c_2\geq b_2$, either $b_1(c_2-b_2+2p^*)-((2b_1-c_1)_+)^2/6\geq 0$ or $p^*\leq(2b_2-c_2)/3$ is satisfied. Similarly when $c_1\geq b_1$, either $b_2(c_1-b_1+2p^*)-((2b_2-c_2)_+)^2/6\geq 0$ or $p^*\leq(2b_2-c_2)/3$ is satisfied.\qed

\section{Proof of Theorem \ref{THM:EXTENSION}}\label{app:c}

Given $f_1(z)=f_2(z)=2z/(c+1),\,z\in[c,c+1]$, we compute the components of $\bar{\mu}$-measure as
\begin{align*}
  \mu(z_1,z_2)&=-5f_1(z_1)f_2(z_2)\mbox{ when }(z_1,z_2)\in [c,c+1]^2,\\
  \mu_s(c,z_2)&=-\frac{2c^2}{2c+1}f_2(z_2),\quad\mu_s(c+1,z_2)=\frac{2(c+1)^2}{2c+1}f_2(z_2),
  \\&\hspace*{2.5in}\mbox{ when }z_2\in[c,c+1],\\
  \mu_s(z_1,c)&=-\frac{2c^2}{2c+1}f_1(z_1),\quad\mu_s(z_1,c+1)=\frac{2(c+1)^2}{2c+1}f_1(z_1),
  \\&\hspace*{2.5in}\mbox{ when }z_1\in[c,c+1].\\
    \mu_p(c,c)&=1.
\end{align*}
We note that $1-F_i(c+z_i)=((c+1)^2-(c+z_i)^2)/(2c+1)$.

Let $c=0$. We now identify the canonical partition of $D$ using the same steps enumerated in Section \ref{sec:zero}.
\begin{enumerate}
\item[(a)] We compute the outer boundary functions $s_i(z_i)$ using (\ref{eqn:si-zi}), which is given by $s_i(z_i)=\sqrt{0.6}=0.7746$ for all $z_i\in[0,1)$.
\item[(b)] We now compute the critical price $p$ so that $\bar{\mu}(Z)=0$. This is equivalent to having $\bar{\mu}(W)=0$. From Figure \ref{fig:a}, we have $\bar{\mu}(W)=0$ iff
\begin{multline*}
  2(1-F_1(p)+1-F_2(p))-5(1-F_2(p))(1-F_1(p))\\+5\int_{p}^{\sqrt{0.6}}\int_{p}^{p+\sqrt{0.6}-z_2}f_1(z_1)f_2(z_2)\,dz_1\,dz_2=0.
\end{multline*}
Substituting the values of $\mu_s$, $f_1$, and $f_2$, we arrive at $-(5/6)p^4-(10/3)\sqrt{0.6}p^3+3p^2+2\sqrt{0.6}p-0.7=0$. Solving this bi-quadratic equation, we get $p=1.09597$ to be the only solution that is greater than $\sqrt{0.6}$.
\item[(c)] The critical points $P$ and $Q$ are given by $P=(0.32137,0.7746)$ and $Q=(0.7746,0.32137)$. We thus have the canonical partition of $D\backslash Z$:
\begin{multline*}
  A=[0,0.32137]\times[0.7746,1],\,B=[0.7746,1]\times[0,0.32137],\\W=D\backslash(A\cup B\cup Z).
\end{multline*}
\end{enumerate}

So the structure is now as in Figure \ref{fig:a}, with $p_{a_i}=\sqrt{0.6}$, $a_i=0$, and $p=1.09597$. To prove that this is indeed the optimal mechanism, we construct $\gamma(\cdot,\cdot)$ and $q(\cdot)$ as in the proof of Proposition \ref{prop:known}. Verifying that $\gamma$ and $u$ satisfy the primal constraints, dual constraints, and the conditions in Lemma \ref{lem:conditions}, is also the same as in proof of Proposition \ref{prop:known}.

For $c>0$, the outer boundary function $s_i(z_i)$ is not defined at $z_i=c$. We thus add the shuffling measure suggested in (\ref{eqn:gen-shuffle}), which after substitution of $\Delta$ and $f$, becomes
\begin{align*}
  \alpha_p(c,c+1)&=2c^2((c+1)^2-(c+p_{a_1})^2)/(2c+1)^2\\
  \alpha_s(z_1,c+1)&=2z_1(3(c+1)^2-5(c+p_{a_1}-a_1(z_1-c))^2)/(2c+1)^2,\\&\hspace*{3in}z_1\in[c,P_1].
\end{align*}
We add the shuffling measure $\bar{\alpha}$ at the interval $\tilde{D}^{(1)}$. Also, we fix $p_{a_1}=p_{a_2}$, $a_1=a_2$, $P_1=Q_2$, and $P_2=Q_1$, since the case we have considered is symmetric (i.e., $f_1=f_2$). So we add the same shuffling measure at the interval $\tilde{D}^{(2)}$.

We now identify the canonical partition of $D$ as follows: (a) Computing the outer boundary function $s_1(z_1)$, (ii) Imposing the constraint $\bar{\alpha}\succeq_{cvx}0$, and (iii) Imposing the constraint $\bar{\mu}(W)=0$. While constructing the canonical partition, we will use the fact that $s_1(z_1)=s_2(z_2)$, $p_{a_1}=p_{a_2}$, $a_1=a_2$, $P_1=Q_2$, and $P_2=Q_1$.
\begin{enumerate}
\item[(a)] The outer boundary function $s_1(z_1)$ with respect to $\bar{\mu}+\bar{\alpha}$ is computed using (\ref{eqn:si-zi-small}), and are given by
$$
  s_1(z_1)=\begin{cases}c+p_{a_1}-a_1(z_1-c)&\mbox{if }z_1\in[c,P_1]\\\sqrt{0.6}(c+1)&\mbox{if }z_1\in(P_1,c+1]\end{cases}
$$
 \item[(b)] We now impose the constraint $\bar{\alpha}\succeq_{cvx}0$. From Proposition \ref{prop:cvx}, we know that this constraint is satisfied if $\bar{\alpha}([c_1,P_1]\times\{c_2+b_2\})=\int_c^{P_1}(x-c_1)\,\bar{\alpha}(dx,c_2+b_2)=0$. We now find the conditions for both of these equations to hold.
\begin{equation}\label{eqn:marginal}
  2c^2((c+1)^2-(c+p_{a_1})^2)+\int_c^{P_1}2z_1(3(c+1)^2-5(c+p_{a_1}-a_1(z_1-c))^2)\,dz_1=0
\end{equation}
\begin{equation}\label{eqn:expectation}
  \int_c^{P_1}2z_1(z_1-c)(3(c+1)^2-5(c+p_{a_1}-a_1(z_1-c))^2)\,dz_1=0
\end{equation}
 \item[(c)] We now impose the constraint $\bar{\mu}(W)=0$.
\begin{multline}\label{eqn:mu-W=0}
  2\frac{2(c+1)^2}{(2c+1)^2}((c+1)^2-P_1^2)-5\left(\frac{(c+1)^2-P_1^2}{2c+1}\right)^2\\+5\int_{P_1}^{c+p_{a_1}-a_1(P_1-c)}\int_{P_1}^{2c+p_{a_1}+(P_1-c)(1-a_1)-z_2}f_1(z_1)f_2(z_2)\,dz_1\,dz_2=0.
\end{multline}
The parameters are found by solving (\ref{eqn:marginal}), (\ref{eqn:expectation}), and (\ref{eqn:mu-W=0}) simultaneously in $(p_{a_1}, a_1, P_1)$.
 \item[(d)] For c=0.1, we get the solution as $p_{a_1}=0.796151$, $a_1=0.231984$, and $P_1=c+0.264655=0.364655$. We also have $P_2=c+p_{a_1}-a_1(P_1-c)=c+0.734755=0.834755$, and the critical price $p=P_1+P_2=1.19941$. The canonical partition of $D$ is thus given by
\begin{align*}
Z&=\{(z_1,z_2):z_2\leq s_1(z_1)\}\cap\{(z_1,z_2):z_1\leq s_2(z_2)\}\\
 &\hspace*{1in}\cap\{(z_1,z_2):z_1+z_2\leq 2c+p\},\\
A&=([c_1,P_1]\times[s_1(z_1),c_2+b_2])\backslash Z,\\
B&=([s_2(z_2),c_1+b_1]\times[c_2,Q_2])\backslash Z,\\
W&=D\backslash(A\cup B\cup Z).
\end{align*}
\end{enumerate}

The allocation and the payment functions $(q(z),t(z))$ are given by (\ref{eqn:q-full}).

So for $c=0.1$, the canonical partition results in the structure given in Figure \ref{fig:a}. By constructing $\gamma(\cdot,\cdot)$ and $q(\cdot)$ the same way as in the proof of Proposition \ref{prop:known}, we prove that this is indeed the optimal mechanism.

One could verify that the structure of the optimal mechanism is the same for $c\in[0,0.250116]$.\qed

\section{The Weak Duality Result}\label{app:d}
In this section, we explain the weak duality relation between (\ref{eqn:dual}) and (\ref{eqn:primal}). Consider the primal problem
$$
\max_{\substack{{u(z)-u(z')\leq\|z-z'\|_1}\\{u\mbox{ cont, conv, inc}}}}\int_Du(z)\,d\bar{\mu}(z).
$$
The problem can be rewritten as
$$
\max_{(u\mbox{ cont, conv, inc})}\min_{\gamma\geq 0}\int_Du(z)\,d\bar{\mu}(z)+\int_{D\times D}(\|z-z'\|_1-u(z)+u(z'))\,d\gamma(z,z').
$$
This is because if $u(z)-u(z')>\|z-z'\|_1$, then the minimizer can choose $\gamma$ to make the second integral $-\infty$. Hence the maximizer would choose $u$ so that $u(z)-u(z')\leq\|z-z'\|_1$ for all $z,z' \in D$. The quantity $d\gamma(z,z')$ can be viewed as the price (or price measure) for violating the constraint $u(z)-u(z')\leq\|z-z'\|_1$.

The dual of this problem is
$$
\min_{\gamma\geq 0}\max_{(u\mbox{ cont, conv, inc})}\int_Du(z)\,d\bar{\mu}(z)+\int_{D\times D}(\|z-z'\|_1-u(z)+u(z'))\,d\gamma(z,z').
$$
Defining $\gamma_1(z)=\int_D\gamma(z,dz')$ and $\gamma_2(z')=\int_D\gamma(dz,z')$, we rewrite the dual problem as
$$
\min_{\gamma\geq 0}\max_{(u\mbox{ cont, conv, inc})}\int_Du(z)\,d(\bar{\mu}(z)-(\gamma_1(z)-\gamma_2(z)))+\int_{D\times D}\|z-z'\|_1\,d\gamma(z,z').
$$
We now claim that the problem above is the same as
$$
\min_{\gamma:\gamma_1-\gamma_2\succeq_{cvx}\bar{\mu}}\int_{D\times D}\|z-z'\|_1\,d\gamma(z,z').
$$
This is because if $\gamma$ is such that $\gamma_1-\gamma_2\nsucceq_{cvx}\bar{\mu}$, then the maximizer can choose $u$ such that $\int_Du(z)\,d(\bar{\mu}(z)-(\gamma_1(z)-\gamma_2(z)))>0$, and thus drive the first integral to $\infty$.

This weak duality result provides us with an understanding of how the dual arises and why $\gamma$ may be interpreted as prices for violating the primal constraint.

\section{An Alternate View on the Construction of Optimal Mechanism}\label{app:e}
Consider the structure of the mechanism to be as depicted in Figure \ref{fig:a}. Let the price for the menu items $(a_1,1)$, $(1,a_2)$, and $(1,1)$ be given by $t_{a_1}$, $t_{a_2}$ and $t_1$. Then, the expected revenue generated from this mechanism, $R$, is given by
$$
  R:=t_1\cdot\int_Wf(x)\,dx+t_{a_1}\cdot\int_Af(x)\,dx+t_{a_2}\cdot\int_Bf(x)\,dx.
$$
The parameters $t_{a_1}$, $t_{a_2}$, and $t_1$ can be written using the parameters $p_{a_1}$, $p_{a_2}$, and $p$ in Figure \ref{fig:a} as
$$
  t_{a_1}=a_1c_1+c_2+p_{a_1},\, t_{a_2}=a_2c_2+c_1+p_{a_2},\,t_1=c_1+c_2+p.
$$
Similarly, the points $P$ and $Q$ can be rewritten using the price parameters as
$$
  P=\left(\frac{t_1-t_{a_1}}{1-a_1},\frac{t_{a_1}-a_1t_1}{1-a_1}\right),\,Q=\left(\frac{t_{a_2}-a_2t_1}{1-a_2},\frac{t_1-a_2t_{a_2}}{1-a_2}\right).
$$
Substituting these values in the expression of $R$, we get
\begin{multline}\label{eqn:R-left}
  R=t_1\left(\left(c_1+b_1-\frac{t_1-t_{a_1}}{1-a_1}\right)\left(c_2+b_2-\frac{t_1-t_{a_2}}{1-a_2}\right)\right.\\
  \left.-\frac{1}{2}\left(\frac{(1-a_1)t_{a_2}+(1-a_2)t_{a_1}-(1-a_1a_2)t_1}{(1-a_1)(1-a_2)}\right)^2\right)\\
  +t_{a_1}\left(\left(c_2+b_2-\frac{t_{a_1}-a_1t_1}{1-a_1}\right)\left(\frac{t_1-t_{a_1}}{1-a_1}-c_1\right)-\frac{a_1}{2}\left(\frac{t_1-t_{a_1}}{1-a_1}-c_1\right)^2\right)\\
  +t_{a_2}\left(\left(c_1+b_1-\frac{t_{a_2}-a_2t_1}{1-a_2}\right)\left(\frac{t_1-t_{a_2}}{1-a_2}-c_2\right)-\frac{a_2}{2}\left(\frac{t_1-t_{a_2}}{1-a_2}-c_2\right)^2\right).
\end{multline}
Observe that the expression is not jointly concave in $(a_1,a_2,t_{a_1},t_{a_2},t_1)$. This is evident from the presence of product terms such as $(1-a_2)t_{a_1}$, $a_1t_1$, and $(1-a_1a_2)$. First order conditions will then only ensure local optimality.

It turns out that the first order conditions to maximize $R$ are exactly the conditions derived using the dual approach. We now have the following result.

\begin{proposition}\label{prop:opt-specific}
Consider the problem of maximizing $R$ over the allocation parameters $(a_1,a_2)$, and the price parameters $(t_{a_1},t_{a_2},t_1)$. The first order conditions turn out to be (\ref{eqn:pa1-pa2}) and (\ref{eqn:big-W}), with the parameters $(p_{a_1},p_{a_2})$ replaced by the respective allocation and price parameters used in this problem.
\end{proposition}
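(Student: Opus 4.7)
The plan is to parametrize the four regions of Figure~\ref{fig:a} directly from the buyer's best response to the posted menu, write $R$ as a rational expression in $(a_1,a_2,t_{a_1},t_{a_2},t_1)$ using the uniform density, and verify that under the change of variables $t_1=c_1+c_2+p$, $t_{a_i}=c_{-i}+p_{a_i}+a_i c_i$ the five stationarity conditions $\partial R/\partial\theta=0$ collapse to (\ref{eqn:pa1-pa2}) and (\ref{eqn:big-W}), while three of them serve to identify $a_1,a_2,p$ through (\ref{eqn:for-a1-a2}) and (\ref{eqn:for-p1}). Incentive compatibility of a quasilinear buyer makes the partition transparent: allocation $(1,1)$ is picked on $\{z_1>P_1,z_2>Q_2\}$ with $P_1=(t_1-t_{a_1})/(1-a_1)$ and $Q_2=(t_1-t_{a_2})/(1-a_2)$; allocation $(a_1,1)$ is picked on $\{z_1\le P_1,\,z_2> s_1(z_1)\}$ with $s_1(z_1)=t_{a_1}-a_1 z_1$; allocation $(1,a_2)$ is picked symmetrically; and the exclusion region is everything else. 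Uniform $f=1/(b_1 b_2)$ then turns $|A|,|B|,|W|$ into explicit polynomials in the parameters, the only nontrivial feature being the triangular sliver separating $A\cup B$ from $W$ along the line $z_1+z_2=t_1$.

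Given these descriptions, I would compute the five FOCs one by one. The FOC in $t_1$ is, after cancellation of the boundary contributions from $\partial P_1/\partial t_1$ and $\partial Q_2/\partial t_1$ against those from the price-weighted area terms, precisely the area-balance relation $\bar\mu(W)=0$; substituting the critical-point coordinates from (\ref{eqn:P-Q-new}) gives (\ref{eqn:big-W}). The two FOCs in $(t_{a_1},a_1)$ are, respectively, a zeroth-moment and a first-moment condition against a signed line measure supported on $[c_1,P_1]$; after matching coefficients this measure is exactly $\bar\alpha^{(1)}$ defined in (\ref{eqn:alpha_s}), and the two conditions therefore coincide with $\bar\alpha^{(1)}([c_1,c_1+m_1])=0$ and $\int x\,d\bar\alpha^{(1)}=0$ of Corollary~\ref{cor:cvx}, which determine $a_1$ and $P_1-c_1=m_1$ in terms of $p_{a_1}$ via (\ref{eqn:for-a1-a2}). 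The FOCs in $(t_{a_2},a_2)$ play the symmetric role. Finally, the fact that each critical point is the simultaneous meeting point of three menu items forces $P_1+P_2=Q_1+Q_2=t_1$; substituting the resulting expressions for $P,Q$ from (\ref{eqn:P-Q-new}) into this identity yields (\ref{eqn:pa1-pa2}).

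The main obstacle is the algebraic identification of the FOCs in $(t_{a_1},a_1)$ with moments of $\bar\alpha^{(1)}$. I expect it to go through cleanly because differentiating an expression of the form $\int_{c_1}^{P_1}\bigl(c_2+b_2-s_1(z_1)\bigr)\,dz_1$ with respect to $t_{a_1}$ and $a_1$, and assembling the pieces with the correct boundary term at $z_1=P_1$, produces integrands that reconstruct the linear density $\alpha_s^{(1)}$ together with the point mass $\alpha_p^{(1)}$ at $c_1$ coming from the $-c_1/(b_1 b_2)$ piece of $\mu_s$ on the left boundary. That point-mass contribution is the same left-boundary effect that necessitated the shuffling measure in the dual route of Section~\ref{sec:gencase}, so tracking it carefully in the primal calculation is the one step that demands care; once handled, reducing the five FOCs to the advertised two equations is a routine substitution.
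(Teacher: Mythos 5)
Your proposal is correct and lands on the same identifications as the paper, but it runs the computation through a different engine. The paper never differentiates the price-weighted areas directly: it rewrites $R=\int_D u\,d\bar{\mu}$ and uses the envelope-type computation of \cite[Thm.~1]{MV06}, so that (up to sign) $\partial R/\partial t_1=\bar{\mu}(W)$, $\partial R/\partial t_{a_1}=\bar{\mu}(A)$, $\partial R/\partial a_1=\int_A z_1\,d\bar{\mu}$, with no boundary bookkeeping because $u$ is continuous across the region boundaries; it then converts $\bar{\mu}(A)$ and $\int_A z_1\,d\bar{\mu}$ into $-\bar{\alpha}^{(1)}([c_1,P_1])$ and $-\int z_1\,d\bar{\alpha}^{(1)}$ via the column-balancing property defining $s_1$, and invokes Proposition~\ref{prop:cvx}, finally adding the $P_1+P_2=Q_1+Q_2$ requirement to reach (\ref{eqn:pa1-pa2}) and (\ref{eqn:big-W}). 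Your route differentiates $t_1\Pr(W)+t_{a_1}\Pr(A)+t_{a_2}\Pr(B)$ directly and tracks the moving boundaries by hand; this is heavier algebra but sound, since it is the same function $R$ and the sliver terms do assemble correctly (for the $t_{a_1}$ derivative, the contributions from the shift of $s_1$ and of $P_1=(t_1-t_{a_1})/(1-a_1)$ combine with the $|A|$ term into exactly $-b_1b_2\bar{\mu}(A)$, whose vanishing is the zeroth-moment condition for $\bar{\alpha}^{(1)}$ including the left-boundary point mass you flag). You could spare yourself that bookkeeping by using the identity $R=\int_D u\,d\bar{\mu}$ that the paper has already established. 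One point where your write-up is actually cleaner: you observe that $P_1+P_2=Q_1+Q_2=t_1$ is automatic because each critical point is a triple-indifference point of the posted menu, whereas the paper presents this as an extra imposed condition; either way it is what turns (\ref{eqn:for-p1}) into (\ref{eqn:pa1-pa2}). A minor bookkeeping slip, immaterial to the argument: the four FOCs in $(t_{a_i},a_i)$ give the moment conditions that fix $a_i$ and $m_i$ via (\ref{eqn:for-a1-a2}); $p$ is then determined by the geometric relation (\ref{eqn:for-p1}) rather than by a third stationarity condition, and the fifth FOC (in $t_1$) is (\ref{eqn:big-W}).
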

\begin{remark}
Note that the dual approach does provide a certificate for global optimality, but the first order approach does not.
\end{remark}

\begin{proof}
We know that the expression for the expected revenue is equal to $\int_Du\,d\bar{\mu}$, the objective function of the primal problem (\ref{eqn:primal}). We thus rewrite $R$ as
$$
  R=\int_A(a_1z_1+z_2-t_{a_1})\,\bar{\mu}(dz)+\int_B(z_1+a_2z_2-t_{a_2})\,\bar{\mu}(dz)+\int_W(z_1+z_2-t_1)\,\bar{\mu}(dz).
$$

Given that the mechanism is as in Figure \ref{fig:a}, the utility of the buyer is given by $u(z)=\max(0,a_1z_1+z_2-t_{a_1},z_1+a_2z_2-t_{a_2},z_1+z_2-t_1)$. The valuations $z\in A$ iff $a_1z_1+z_2-t_{a_1}\geq\max(0,z_1+a_2z_2-t_{a_2},z_1+z_2-t_1)$. Similarly, $z$ belongs to the respective region $Z$, $B$, or $W$, iff the expression relating to the allocation and price schedules of the region turns out to be the maximum.

We now use the same steps in the proof of \cite[Thm.~1]{MV06}, and obtain $\frac{\partial R}{\partial t_1}=\bar{\mu}(W)$, $\frac{\partial R}{\partial t_{a_1}}=\bar{\mu}(A)$, $\frac{\partial R}{\partial t_{a_2}}=\bar{\mu}(B)$, $\frac{\partial R}{\partial a_1}=\int_Az_1\,\bar{\mu}(dz)$, and $\frac{\partial R}{\partial a_2}=\int_Bz_2\,\bar{\mu}(dz)$.

We now claim that $\bar{\mu}(A)=-\bar{\alpha}^{(1)}([c_1,P_1]\times\{c_2+b_2\})$. Recall from (\ref{eqn:si-zi-small}) that the outer boundary function $s_1(\cdot)$ with respect to $\bar{\mu}+\bar{\alpha}^{(1)}$ satisfies $\int_{s_1(z_1)}^{c_2+b_2}\mu(z_1,dz_2)+(\mu_s+\alpha^{(1)}_s)(z_1,c_2+b_2)=0$ for all $z_1\in(c_1,P_1]$, and $\int_{s_1(c_1)}^{c_2+b_2}\mu_s(c_1,dz_2)+\alpha^{(1)}_p(c_1,c_2+b_2)=0$. From Figure \ref{fig:a}, we have
\begin{align*}
  \bar{\mu}(A)&=\int_{c_1}^{P_1}\left(\int_{s_1(z_1)}^{c_2+b_2}\mu(z_1,dz_2)+\mu_s(z_1,c_2+b_2)\right)+\int_{s_1(z_1)}^{c_2+b_2}\mu_s(z_1,dz_2)\\
  &=-\int_{c_1}^{P_1}\alpha_s^{(1)}(dz_1,\{c_2+b_2\})-\alpha_p^{(1)}(c_1,c_2+b_2)\\
  &=-\bar{\alpha}^{(1)}([c_1,P_1]\times\{c_2+b_2\})
\end{align*}
which is what we claimed. By the same argument, we have $\int_Az_1\,\bar{\mu}(dz)=-\int_{[c_1,P_1]}z_1\bar{\alpha}^{(1)}(dz_1,c_2+b_2)$. In a similar way, we can show that $\bar{\mu}(B)=-\bar{\alpha}^{(2)}(\{c_1+b_1\}\times[c_2,Q_2])$, and $\int_Bz_1\,\bar{\mu}(dz)=-\int_{[c_2,Q_2]}z_2\bar{\alpha}^{(2)}(c_1+b_1,dz_2)$. By Proposition \ref{prop:cvx}, the conditions $\frac{\partial R}{\partial t_{a_i}}=\frac{\partial R}{\partial a_i}=0$ can be replaced by $\bar{\alpha}^{(i)}\succeq_{cvx}0$.

Thus the first order conditions to find the local maximum of $R$ are as follows: $\bar{\alpha}^{(1)}\succeq_{cvx}0$, $\bar{\alpha}^{(2)}\succeq_{cvx}0$, and $\bar{\mu}(W)=0$. In addition, we also impose the condition $P_1+P_2=Q_1+Q_2$, because the line separating the regions with allocations $(0,0)$ and $(1,1)$ should be a line of slope $-135^\circ$ in order to have a utility function $u=\nabla q$. Observe that these are exactly the conditions that we imposed in our original dual problem, to reduce it to solving (\ref{eqn:pa1-pa2}) and (\ref{eqn:big-W}) simultaneously.
\end{proof}

This shows that if the structure of the mechanism is given to be as in Figure \ref{fig:a}, the first order conditions to find the local maximum of the expected revenue is the same as the conditions we derived using the dual method. This is not unexpected, because the method of constructing the canonical partition in Section \ref{sec:zero} was designed by \citet{DDT13} for optimal mechanisms that have the allocations and payments $(q,t)$ defined according to (\ref{eqn:q-full}). We also expect the first order conditions to be the same as those we have derived using the dual method, for all possible structures of the mechanism.

In addition to the limitation that the first order conditions derived in the proof of Proposition \ref{prop:opt-specific} can only guarantee local optimality, another limitation is that the local optimality is specific to the structure in Figure \ref{fig:a}. Given that there are more structures possible with at most $4$ menu items (see Figure \ref{fig:anomaly}) beyond the structures depicted in Figure \ref{fig:gen-structure}, the computation of best mechanisms for a given structure does not give a proof of its global optimality over all structures.

\section*{References}

\bibliographystyle{elsarticle-harv} \bibliography{two_item_uniform_cases}

\end{document}